\newcommand{\remove}[1]{}
\definecolor{cadmiumgreen}{rgb}{0.0, 0.42, 0.24}
\definecolor{gainsboro}{rgb}{0.86,0.86,0.86}
\definecolor{dandelion}{rgb}{0.94,0.88,0.19}
\colorlet{Mycolor1}{green!40!orange!60!}  
\definecolor{light-gray}{gray}{0.92}  
\definecolor{dark-gray}{gray}{0.20}  
\definecolor{shadecolor}{gray}{0.90}
\title{
The Contest Game
for Crowdsourcing Reviews}
\titlerunning{The Contest Game for
Crowdsourcing Reviews}
\author{Marios Mavronicolas}{Department of Computer Science, University of Cyprus, Cyprus}{mavronic@ucy.ac.cy}{}{Supported by research funds at the University of Cyprus.}
\author{Paul G.~Spirakis}{Department of Computer Science, University of Liverpool, UK}{p.spirakis@liverpool.ac.uk}
{}
{Supported by the EPSRC grant EP/P02002X/1.}
\authorrunning{M.~Mavronicolas and P.~G.~Spirakis}
\keywords{\textcolor{black}{Contests, Crowdsourcing Reviews,
Payment function,
Skill-Effort Function,
Pure Nash Equilibrium, 
Potential Function,
Finite Improvement Property,
Contiguous Equilibrium}} 
\newtheorem{open problem}[theorem]{{\bf Open Problem}}
\begin{document}

\nolinenumbers
\hideLIPIcs

\remove{\title{{\bf \textcolor{red}{The Impact of Efforts
on Pure Nash Equilibria in the Contest Game for Crowdsourcing Reviews:
Existence and Computation}}}}

\remove{\author{
{\sl Marios Mavronicolas}\thanks{Department of Computer Science,
University of Cyprus.}
\and
{\sl Paul Spirakis}\thanks{Department of Computer Science,
University of Liverpool.} 
}}


\maketitle

\pagenumbering{arabic}

\begin{abstract}
We consider a {\it contest game}
modelling a contest where
reviews for a {\it proposal} 
are crowdsourced from $n$ 
{\it players}.
Player $i$
has a {\it skill} $s_{i}$, 
strategically chooses a {\em quality} 
$q \in \{ 1, 2, \ldots, Q \}$
for her review
and pays an {\it effort} 
${\sf f}_{q} \geq 0$,
strictly increasing with $q$.
Under {\it voluntary participation,}
a player may opt to not write a review,
paying zero effort;
{\it mandatory participation}
does not provide this option.
For her effort,
she is awarded a 
{\it payment} per her {\it payment function},
which is either {\it player-invariant},
like, e.g., 
the popular {\it proportional allocation}, 
or {\it player-specific};
it is {\it oblivious}
when it does not depend
on the numbers of players
choosing a different quality.
The {\it utility} 
to player $i$ 
is the difference 
between her payment
and her {\it cost,}
calculated by a {\it skill-effort} function 
$\Lambda (s_{i}, {\sf f}_{q})$.
Skills may vary for {\it arbitrary players};
{\it anonymous players} means
$s_{i} = 1$
for all players $i$.
In a {\it pure Nash equilibrium,} 
no player could unilaterally increase her utility
by switching to a different quality.
We show the following results
about the existence and the computation
of a pure Nash equilibrium:
\begin{itemize}

\item
We present 
an exact potential
to show the existence
of a pure Nash equilibrium
for the contest game
with arbitrary players 
and player-invariant and oblivious payments. 
A particular case of this result
provides an answer to an open question 
from~\cite{BKLO22}.
In contrast, 
a pure Nash equilibrium
might not exist
{\it (i)}
for player-invariant payments,
even if players are anonymous,
{\it (ii)}
for proportional allocation payments
and arbitrary players,
and {\it (iii)}
for player-specific payments,
even if players are anonymous;
in the last case,
it is ${\cal NP}$-hard to tell.
These counterexamples
prove the tightness of our existence result.

\item
We show that
the contest game
with proportional allocation,
voluntary participation
and anonymous players
has the {\it Finite Improvement Property}, or {\it FIP};
this yields two pure Nash equilibria.
The {\it FIP} carries over
to mandatory participation,
except that there is now a single
pure Nash equilibrium.
For arbitrary players,
we determine a simple sufficient
condition for the {\it FIP}
in the special case where
the skill-effort function
has the product form
$\Lambda (s_{i}, {\sf f}_{q})
= s_{i}\, {\sf f}_{q}$.

\item
We introduce a novel, discrete concavity property
of player-specific payments,
namely {\it three-discrete-concavity,}
which we exploit to devise,
for constant $Q$,
a polynomial-time $\Theta (n^{Q})$ algorithm
to compute a pure Nash equilibrium
in the contest game with arbitrary players;
it is a special case of a
$\Theta \left( n\, Q^{2}\,
               \binom{\textstyle n+Q-1}{\textstyle Q-1} 
        \right)$
algorithm
for arbitrary $Q$ that we present.
This settles the parameterized complexity of the problem
with respect to the parameter $Q$. 
The computed equilibrium is {\it contiguous}:
players with higher skills
are contiguously assigned to lower qualities.
Both three-discrete-concavity
and the algorithm extend naturally
to player-invariant payments.

\remove{
\item
\textcolor{blue}{
We show that a pure Nash equilibrium
exists
for the contest game with proportional allocation
and anonymous players,
for the product skill-effort function
and a particular scenario of mandatory participation.
We present a constructive proof
via a $\Theta (\max \{ Q, n \})$ algorithm.
Starting with the two highest qualities,
we greedily proceed to the lowest,
focusing each time on a pair of qualities:
maintaining players previously assigned
to higher qualities, 
we split the players
assigned
to the higher of the two
between the two qualities currently considered
so as to enforce equilibrium.}
}

\end{itemize}
\noindent
\remove{These results are complemented with extentions 
in various directions;
for example,
we devise simple $\Theta (1)$ algorithms
under proportional allocation,
taking $\Lambda (s_{i}, {\sf f}_{q})
= s_{i}\, {\sf f}_{q}$
and making stronger assumptions on skills and efforts
for both arbitrary and proposal-indifferent and 
anonymous players.}

\remove{\item
A very simple $\Theta (1)$ algorithm,
which assigns all players to the lowest quality.
The algorithm works for arbitrary players
under the assumption that
skills are lower-bounded:
$\min_{i \in [n]} 
   s_{i}
 \geq
 \frac{\textstyle {\sf f}_{2}}
      {\textstyle {\sf f}_{2} - {\sf f}_{1}}$;
it also works for anonymous players
under the assumption
${\sf f}_{2} - {\sf f}_{1} \geq 1$.

\end{itemize}

\end{itemize} 

\end{itemize}}
\end{abstract}

\newpage

\section{Introduction}
\label{introduction}



{\it Contests}~\cite{V2015} 
are modelled as games
where strategic contestants,
or {\it players},
invest efforts in competitions
to win valuable prizes,
such as monetary awards, 
scientific credit or social reputation.
Such competitions are ubiquitous
in contexts such as 
promotion tournaments in organizations,
allocation of campaign resources,
content curation and selection in online platforms,
financial support of scientific research
by governmental institutions
and
question-and-answer forums.
This work joins an active research thread
on the existence, computation and efficiency
of (pure) Nash equilibria
in games 
for crowdsourcing, content curation,
information aggregation and other relative tasks ~\cite{AHM14,AW22,BNPS21,BGM23,BKLO22,DLLQ22,EG16,EGG21,EGG22,EGG22a,FLZ09,GM11,MCKL14,XQYL14}.

In a {\it crowdsourcing contest} (see, e.g.,~\cite{CHS19,DV09,S20}),
solutions to a certain task are solicited.
When the task is the evaluation of proposals requesting funding,
a set of expert advisors,
or {\it reviewers,}
file peer-reviews of the proposals.
We shall consider a contest game for crowdsourcing reviews,
embracing and wide-extending a corresponding game
from~\cite[Section 2]{BKLO22}
that was motivated by issues 
in the design of blockchains and cryptocurrencies.
In the contest game,
funding agencies wish to collect 
peer-reviews of esteem {\it quality}.
{\it Costs} are incurred to reviewers;
they reflect various overheads, 
such as time, participation cost 
or reputational loss,
and are supposed to increase with the
reviewers' {\it skills} 
and {\it efforts}.\footnote{One might argue that
the cost of a reviewer for writing a review
of a given quality decreases with her skill
and claim that skill is a misnomer;
however,
it can also be argued that skilled players
are incurred higher costs
upon drawing more skills than necessary
for writing a decent review.
For consistency,
we chose to keep using skills
in the same way as in~\cite{BKLO22}.} 
Both skills and efforts are modelled as discrete;
such modelling is natural since,
for example,
monetary expenditure,
the time to spend on projects,
and man-power are usually measured in discrete units.
Naturally,
efforts increase with the achieved qualities of the reviews.  
Efforts map collectively into {\it payments} 
rewarded to the reviewers
to counterbalance their costs.
We proceed to formalize these considerations.

\subsection{The Contest Game for Crowdsourcing Reviews}
\label{the contest game for crowdsourcing reviews}

\noindent
We assume familiarity 
with the basics of finite games,
as articulated, e.g.,
in~\cite{KP17};
we shall restrict attention to finite games.
In the {\it contest game for crowdsourcing reviews,} 
henceforth abbreviated as the {\it contest game,}
there are $n$ {\it players} $1, 2, \ldots, n$,
with $n \geq 2$, 
simultaneously writing 
reviews for a {\it proposal}.
Each player 
$i \in [n]$ 
has a {\it skill} $s_{i} > 0$.
Players are {\it anonymous}
if their skills are the same;
then, take $s_{i} = 1$ for all $i \in [n]$.
Else they are {\it arbitrary}.

The {\it strategy} $q_{i}$ of a player $i \in [n]$
is the {\it quality}
of the review she writes;
she chooses it
from a finite set
$\{ 1, 2, \ldots, Q \}$, with $Q \geq 2$.
For a given {\it quality vector}
${\bf q}
= \langle q_{1}, \ldots, q_{n} \rangle$,
the {\it load} on quality $q$,
denoted as ${\sf N}_{{\bf q}}(q)$,
is the number of players choosing quality $q$;
so
$\sum_{q \in [Q]}
  {\sf N}_{{\bf q}}(q)
 = n$.
A partial quality vector
${\bf q}_{-i}$ results by excluding $q_{i}$
from ${\bf q}$,
for some player $i \in [n]$. 
${\sf Players}_{{\bf q}}(q)$
is the set of players choosing quality $q$
in 
${\bf q}$.
${\sf f}_{q}$ is
the {\it effort} paid by a player 
writing a review of quality $q$; 
it is an increasing function of $q$
with
${\sf f}_1 < {\sf f}_2 < \ldots < {\sf f}_Q$.
{\it Mandatory participation} is modeled by
setting ${\sf f}_{1} > 0$;
under {\it voluntary participation,}
modeled by setting ${\sf f}_{1} = 0$,
a player may choose not to write a review
and save effort.

Given a quality vector ${\bf q}$
and a player $i \in [n]$,
the {\it payment} awarded to player $i \in [n]$
for her review
is the value 
${\sf P}_{i}({\bf q})$
determined by her {\it payment function}
${\sf P}_{i}$,
obeying the {\it normalization condition}
$\sum_{k \in [n]}
   {\sf P}_{k}({\bf q}) 
   \leq 
   1$.       
Payments are 
{\it oblivious} if
for any player $i \in [n]$
and quality vector ${\bf q}$,
${\sf P}_{i}({\bf q})
 = {\sf P}_{i}({\sf N}_{{\bf q}}(q_{i}), {\sf f}_{q_{i}})$;
that is,
${\sf P}_{i}({\bf q})$
depends only on the quality $q_{i}$ chosen by player $i$
and the load on it.
Note that oblivious payments
are not necessarily player-invariant
as for different players $i, k \in [n]$,
it is not necessary that ${\sf P}_{i} = {\sf P}_{k}$.
Payments are 
{\it player-invariant}
if for every quality vector ${\bf q}$,
for any players $i, k \in [n]$
with $q_{i} = q_{k}$,
${\sf P}_{i}({\bf q}) 
 = 
 {\sf P}_{k}({\bf q})$;
thus, players choosing the same quality
are awarded the same payment.
A player-invariant payment function ${\sf P}_{i}({\bf q})$
can be represented by a
two-argument payment function
${\sf P}_{i}(q, {\bf q}_{-i})$,
for a quality $q \in [Q]$
and a partial quality vector ${\bf q}_{-i}$,
for a player $i \in [n]$.
We consider the following player-invariant payments:
\begin{itemize}

\item
The {\it proportional allocation}
${\sf PA}_{i}({\bf q}) 
= \frac{\textstyle {\sf f}_{q_{i}}}    	    
       {\textstyle 
                    \sum_{k \in [n]}
                    {\sf f}_{q_{k}}}$;
thus,
$\sum_{i \in [n]}
  {\sf PA}_{i}({\sf q}) =
 \frac{\textstyle \sum_{i \in [n]}
                   {\sf f}_{q_{i}}}
      {\textstyle 
                  \sum_{i \in [n]}
                   {\sf f}_{q_{i}}} = 1$.
Proportional allocation
is widely studied in the context of contests
with smooth allocation of prizes (cf.~\cite[Section 4.4]{V2015}).
For proportional allocation
with voluntary participation (by which ${\sf f}_{1} = 0$),
in the scenario where all players choose quality 1,
the payment to any player 
becomes $\frac{0}{0}$, 
so it is indeterminate.\footnote{
This means that all values $c$ satisfy
$0 = 0 \cdot c$.}
To remove indeterminacy and make payments well-defined,
we define
the payment to any player choosing quality $1$
in the case where all players choose $1$
to be $0$.
Note that proportional allocation
is not oblivious.

\item
The {\it equal sharing per quality} 
${\sf ES}_{i}({\bf q})
 = 
  {\sf C}_{{\sf ES}}
  \cdot
  \frac{\textstyle {\sf f}_{q_{i}}}
       {\textstyle {\sf N}_{{\bf q}}(q_{i})}$;
so ${\sf f}_{q_{i}}$ is shared evenly
by players choosing $q_{i}$.                         
Since $\sum_{i \in [n]}
        {\sf ES}_{i}
             ({\bf q})
       =
       {\sf C}_{{\sf ES}}
       \cdot
       \sum_{i \in [n]}
       \frac{\textstyle {\sf f}_{q_{i}}}
            {\textstyle {\sf N}_{{\bf q}}(q_{i})}$,
we take
${\sf C}_{{\sf ES}}
 =
 \left(
 \max_{{\bf q}}
   \sum_{i \in [n]}
 \frac{\textstyle {\sf f}_{q_{i}}}
      {\textstyle {\sf N}_{{\bf q}}(q_{i})}
 \right)^{-1}$.
Note that the equal sharing per quality
is different from the standard equal sharing,
by which {\em all} players choosing quality at least some $q \in [Q]$
share ${\sf f}_{q}$ equally.
Thus,
standard equal sharing
is not oblivious,
while the equal sharing per quality is.
Both the equal sharing per quality and the equal sharing
allow for a player's payment to decrease
with an increase in quality;
this happens, for example,
in standard equal sharing
when a player switches from a lower quality with very high load
to a higher quality with a significantly smaller total load
on qualities at least the higher quality.  

\item
The {\it $K{\sf Top}$ allocation}
$K{\sf Top}_{i}({\bf q})
 = 
 {\sf C}_{K{\sf Top}}
 \cdot
 \left\{ \begin{array}{ll}
            0\, , & \mbox{if $q_{i} \leq Q-K$} \\
            \frac{\textstyle {\sf f}_{q_{i}}}
                 {\textstyle {\sf N}_{{\bf q}}(q_{i})}\, ,
           & \mbox{if $q_{i} > Q-K$}
          \end{array}                 
  \right.$;
so players choosing a quality $q$ higher than
a certain quality $Q - K$
share ${\sf f}_{q}$ evenly.  
Since $\sum_{i \in [n]}
        K{\sf Top}_{i}({\bf q}^{\ell})
       =
       {\sf C}_{K{\sf Top}}
       \sum_{q_{i} > Q-K}
         \frac{\textstyle {\sf f}_{q_{i}}}
              {\textstyle {\sf N}_{{\bf q}}(q_{i})}$, 
we take
${\sf C}_{K{\sf Top}}
 =
 \left( 
 \max_{{\bf q}^{\ell}}
   \sum_{q_{i} > Q-K}
     \frac{\textstyle {\sf f}_{q_{i}}}
          {\textstyle {\sf N}_{{\bf q}}(q_{i})}
 \right)^{-1}$.
Note that the $K{\sf Top}$ allocation
is different from
the standard $K{\sf Top}$ allocation,
considered in, e.g.,~\cite{EG16,JCP14,XQYL14},
by which {\it all} players choosing quality higher
than $Q - K$
share ${\sf f}_{q}$ equally;
so the utility of a player $i$ 
choosing a quality $q_{i} > Q-K$
in ${\bf q}$
is
$\frac{\textstyle {\sf f}_{q_{i}}}
      {\textstyle \sum_{q > Q -K}
                    {\sf N}_{{\bf q}}(q)}$.
Thus,
the standard $K{\sf Top}$ allocation is not oblivious,
while the $K{\sf Top}$ allocation is. 

\end{itemize}
\remove{
Listed in~\cite[Section 6.1.3]{V2015}
are more examples of player-invariant payment functions,
including {\it proportional-to-marginal contribution}
(motivated by the marginal contribution condition
in {\it (monotone) valid utility games}~\cite{V02})
and {\it Shapley-Shubick}~\cite{S53,S62}.                       
Proportional allocation
(resp, Equal sharing)
is considered 
in 
the related works~\cite{BGM23,BKLO22}
(resp.,~\cite{EGG22}).}

A generalization of a player-invariant payment function 
results by allowing the payment to player $i \in [n]$
to be a function 
${\sf P}_{i}(i, {\bf q})$
of both $i$ and ${\bf q}$;
it is called a {\it player-specific} payment function.
The {\it cost} or {\it skill-effort function}
${\sf \Lambda}: 
{\mathbb{R}}_{\geq 1} \times {\mathbb{R}}_{\geq 0}
\rightarrow {\mathbb{R}}_{\geq 0}$,
with ${\sf \Lambda} (\cdot,0) = 0$, 
is a monotonically increasing, polynomial-time computable
function
in both skill and effort.

\remove{The {\it maximum effort constraint}
restricts the review of multiple proposals 
by requiring that in a matrix strategy ${\bf Q}$,
for every player $i \in [n]$,
$\sum_{\ell \in [m]}
  {\sf \Lambda} (s_{i\ell}, {\sf f}_{i\ell})
 \leq {\sf T}$,
for some 
${\sf T} > 0$,
where} 
\remove{Clearly,
the maximum effort constraint is satisfied
for {\em every} matrix strategy ${\bf Q}$ 
when
$\max_{i \in [n], \ell \in [n]}
   {\sf \Lambda} (s_{i \ell}, {\sf f}_{Q}) 
 \leq
 \frac{\textstyle {\sf T}}
      {\textstyle m}$;
henceforth,
we shall assume this inequality
so as to factor out  
this constraint
in the analysis.
On the other hand,
no matrix strategy satisfies 
the maximum effort constraint when
$\min_{i \in [n], \ell \in [n]}
   {\sf \Lambda} (s_{i \ell}, {\sf f}_{1}) 
 >
 \frac{\textstyle {\sf T}}
      {\textstyle m}$.   
Given a strategy matrix ${\bf Q}$,
the {\it total payment} ${\sf TP}_{i}({\bf Q})$ 
awarded to player $i \in [n]$ 
is ${\sf TP}_{i}({\bf Q})
 =  \sum_{\ell \in [m]}
      {\sf P}_{i\ell}({\bf q}^{\ell})$.
Since for each proposal $\ell \in [m]$,
$\sum_{k \in [n]}
  {\sf P}_{k \ell}({\bf q}^{\ell})
   \leq 1$,
it follows that
$\sum_{k \in [n]}
   {\sf TP}_{k}({\bf Q}) \leq m$.}
For a quality vector ${\bf q}$,
the {\it utility function} 
is assumed to be of
quasi-linear form with respect to payment and cost
and is defined as
${\sf U}_{i}({\bf q})
= 
    {\sf P}_{i}({\bf q})
    -
    {\sf \Lambda} (s_{i}, 
                   {\sf f}_{q_{i}})$,
for each player $i \in [n]$.
In a {\it pure Nash equilibrium} ${\bf q}$,
for every player $i \in [n]$
and deviation of her 
to strategy
$q \in [Q]$, $q \neq q_{i}$,
${\sf U}_{i}({\bf q}) \geq
 {\sf U}_{i}(q, {\bf q}_{-i})$;
so
no player could increase her utility
by unilaterally
switching to a different quality. 
We consider the following problems
for deciding the existence
of a pure Nash equilibrium
and computing one if there is one:
\begin{itemize}

\item
{\sc $\exists$PNE with Player-Invariant and Oblivious Payments}

\item
{\sc $\exists$PNE with Player-Invariant Payments}

\item
{\sc $\exists$PNE with Proportional Allocation and Arbitrary Players}

\item
{\sc $\exists$PNE with Proportional Allocation and Anonymous Players}

\item
{\sc $\exists$PNE with Player-Specific Payments}

\end{itemize}
\remove{Clearly,
the assumption
$\min_{i \in [n], \ell \in [n]}
   {\sf \Lambda} (s_{i \ell}, {\sf f}_{1}) 
 >
 \frac{\textstyle {\sf T}}
      {\textstyle m}$
excludes the existence
of a pure Nash equilibrium.
So we assume that
$\min_{i \in [n], \ell \in [n]}
   {\sf \Lambda} (s_{i \ell}, {\sf f}_{1}) 
 \leq
 \frac{\textstyle {\sf T}}
      {\textstyle m}$
in order to render
{\sc PNE in Contest Game}
non-trivial.}

\noindent
The most significant difference between
the contest game and
the contest games traditionally considered 
in Contest Theory~\cite{V2015}
is that the it adopts 
players with a {\it discrete} action space,
choosing over a finite number of qualities,
while the latter focus on players 
with a {\it continuous} one.
(See~\cite{CS07} for an exception.)
Alas,
the contest game
is comparable to classes of contests
studied in Contest Theory~\cite{V2015}
with respect to several characteristics:
\begin{itemize}

\item
Casting qualities as individual contests,
the contest game 
resembles
{\it simultaneous contests} 
(cf.~\cite[Section 5]{V2015}),
in which players simultaneously invest efforts
across the set of contests.

\item 
While in an {\it all-pay contest} 
(cf.~\cite[Chapter 2]{V2015})
all players competing for a non-splittable prize
must pay for their bid
and the winner takes all of it,
all players are awarded payments,
summing up to at most 1,
in the contest game.

\item
The utility 
${\sf U}_{i}({\bf q}) = {\sf P}_{i}({\bf q}) 
- {\sf \Lambda}(s_{i}, {\sf f}_{q_{i}})$
in the contest game
can be cast as {\it smooth} 
(cf.~\cite[Chapter 4]{V2015}):
{\it (i)} 
each player receives a portion 
${\sf P}_{i}({\bf q})$ of the prize
according to an allocation mechanism
that is a smooth function of the invested efforts
$\{ {\sf f}_{q} \}_{q \in [Q]}$
(except when all players invest zero effort 
(cf.~\cite[start of Section 4]{V2015},
which may happen under proportional allocation
with voluntary participation)
and {\it (ii)}
utilities are quasilinear in payment and cost;
in this respect,
${\sf U}_{i}$ corresponds to a 
{\it contest success function}~\cite{S96}.

\end{itemize}

\begin{shaded}
We shall need some
definitions from Game Theory,
applying to finite games with players $i$
maximizing utility ${\sf U}_{i}$.
All types of potentials map profiles to numbers.
A game is an {\it (exact) potential game}~\cite{MS96}
if it admits a {\it exact potential} $\Phi$:
for each player $i \in [n]$,
for any pair $q_{i}$ and $q'_{i}$ of her strategies
and for any partial profile ${\bf q}_{-i}$,
${\sf U}_{i}(q'_{i}, {\bf q}_{-i}) - 
 {\sf U}_{i}(q_{i}, {\bf q}_{-i})
 =
 \Phi(q'_{i}, {\bf q}_{-i}) 
 - 
 \Phi(q_{i}, {\bf q}_{-i})$.
A game is an {\it ordinal potential game}~\cite{MS96}
if it admits a {\it ordinal potential} $\Phi$:
for each player $i \in [n]$,
for any pair $q_{i}$ and $q'_{i}$ of her strategies
and for any partial profile ${\bf q}_{-i}$,
${\sf U}_{i}(q'_{i}, {\bf q}_{-i}) > 
 {\sf U}_{i}(q_{i}, {\bf q}_{-i})$
if and only if 
$\Phi(q'_{i}, {\bf q}_{-i}) >
 \Phi(q_{i}, {\bf q}_{-i})$.
A game is a {\it generalized ordinal potential game}~\cite{MS96}
if it admits a {\it generalized ordinal potential} $\Phi$:
for each player $i \in [n]$,
for any pair $q_i$ and $q'_{i}$ of her strategies,
and for any partial profile ${\bf q}_{-i}$,
${\sf U}_{i}(q_{i}, {\bf q}_{-i}) 
 > 
 {\sf U}_{i}(q_{i}', {\bf q}_{-i})$
implies
$\Phi (q_{i}, {\bf q}_{-i})
 >
 \Phi (q_{i}', {\bf q}_{-i})$.
So a potential game
is a strengthening of an ordinal potential game,
which is a strengthening of a 
generalized ordinal potential game.
Every generalized ordinal potential game
has at least one pure Nash equilibrium~\cite[Corollary 2.2]{MS96}.

We recast some definitions from Game Theory
in the context of the contest game.
An {\it improvement step} 
out of the quality vector ${\bf q}$ and 
into the ${\bf q'}$
occurs when there is a unique player $i \in [n]$
with
$q_{i} \neq q_{i}'$
such that
${\sf U}_{i}({\bf q})
 <
 {\sf U}_{i}({\bf q'})$;
so it is profitable for player $i$ to switch
from $q_{i}$ to $q_{i}'$.
An {\it improvement path}
is a sequence 
${\bf q}^{(1)},
 {\bf q}^{(2)}, \ldots, 
 $
such that for each quality vector
${\bf q}^{(\rho)}$ in the sequence,
where $\rho \geq 1$,
there occurs
an improvement step out of ${\bf q}^{\rho}$
and into ${\bf q}^{(\rho + 1)}$.
A {\it finite} improvement path   
has finite length.
The {\it Finite Improvement Property,} 
abbreviated as {\it FIP,}
requires that
all improvement paths
are finite;
that is,
there are no cycles
in the directed {\it quality improvement graph,} 
whose vertices are the quality vectors
and there is an edge from quality vector
${\bf q}^{(1)}$ to ${\bf q}^{(2)}$
if and only if
an improvement step occurs from ${\bf q}^{(1)}$
to ${\bf q}^{(2)}$.
Every game with the {\it FIP} has
a pure Nash equilibrium:
a {\it sink} in the quality improvement graph; 
there are games without the {\it FIP} that also have~\cite{MS96}.
By~\cite[Lemma 2.5]{MS96},
a game has a generalized ordinal potential
if and only if
it has the {\it FIP}.

\end{shaded}

\subsection{Results}

We study
the existence and the computation of
pure Nash equilibria for the contest game.
{\it When do pure Nash equilibria exist for 
arbitrary players,
player-invariant or player-specific payments
and for arbitrary $n$
and $Q$?}
For the special case of the contest game
with proportional allocation payments
and a skill-effort function
$\Lambda (s_{i}, {\sf f}_{q})
 =
 s_{i} {\sf f}_{q}$,
this has been advocated
as a significant open problem
in~\cite[Section 6]{BKLO22}.
{\it What is the time complexity of
deciding the existence of a pure Nash equilibrium
and computing one in case there exists one?}
{\it Is this complexity affected by
properties of the payment or the skill-effort function,
or by numerical properties of skills and efforts, and how?}
We shall present three major results:
\begin{itemize}

\item
Every contest game 
with arbitrary players 
and player-invariant and oblivious payments
has a pure Nash equilibrium,
for any values of $n$
and $Q$
and any skill-effort function ${\sf \Lambda}$
(Theorem~\ref{pure existence}).
We devise an {\it exact potential}~\cite{MS96}
for the contest game
and resort to the fact that every {\it potential game}
has a pure Nash equilibrium~\cite[Corollary 2.2]{MS96}. 
By Theorem~\ref{pure existence},
the contest game
with equal sharing per quality
and {\sf $K$Top} allocation
has a pure Nash equilibrium.
However,
existence does not extend beyond player-invariant
{\em and} oblivious payments:
We prove the tightness of our existence result
(Theorem~\ref{pure existence})
by exhibiting simple contest games
with no pure Nash equilibrium
when:
\begin{itemize}

\item
Payments are player-invariant but not oblivious,
even if players are anonymous
(Proposition~\ref{icalp counterexample 1}).

\item
Payments are proportionally allocated
and players are arbitrary
(Proposition~\ref{icalp counterexample 2}).

\item
Payments are player-specific,
even if players are anonymous
(Proposition~\ref{counter example}).
The ${\mathcal{NP}}$-completeness of
deciding the existence of a pure Nash equilibrium
follows by a simple reduction from the problem
of deciding the existence of a pure Nash equilibrium
in a succinctly represented 
strategic game~\cite[Theorem 2.4.1]{S74}
(Theorem~\ref{np completeness}).

\end{itemize}

\item
We show that the contest game
with proportional allocation,
voluntary participation
and anonymous players
has the {\it FIP}
(Theorem~\ref{sto paderborn}).
The contest game is found to have
two pure Nash equilibria
in this case.
A simplification of the proof
for voluntary participation establishes the {\it FIP}
for mandatory participation
(Theorem~\ref{wow paderborn});
the number of pure Nash equilibria drops to one.
As the key to establish these results,
we show the {\it No Switch from Lower Quality to Higher Quality} Lemma:
in an improvement step,
a player necessarily switches from a higher quality
to a lower quality
(Lemma~\ref{kyriakatiko}).

These results are complemented
with a very simple, $\Theta (1)$ 
algorithm that works under proportional allocation,
for arbitrary players,
with ${\sf \Lambda} (s_{i}, {\sf f}_{q}) 
= s_{i}\, {\sf f}_{q}$
and making stronger assumptions on skills and efforts
to compute a pure Nash equilibrium
(Theorem~\ref{natasa}).
The algorithm simply assigns all players to quality $1$;
so it runs in optimal time ${\sf \Theta}(1)$.

\item
Finally, we consider 
a player-specific payment function
that is also 
{\it three-discrete-concave}:
for any triple of qualities $q_{i}$, $q_{k}$ and $q$,
the difference between the payments 
when incrementing the load on $q$
and decrementing the load on $q_{i}$
is at most the difference between the payments
when incrementing the load on $q_{k}$
and decrementing the load on $q$.
Three-discrete-concave functions
make a new class of discrete-concave functions
that we introduce;
similar classes of discrete-concave functions,
such as {\it $L$-concave},
are extensively discussed in the excellent monograph
by Murota~\cite{M03}.
We present a
$\Theta \left( n
               \cdot
               Q^{2}\,
               \binom{\textstyle n+Q-1}{\textstyle Q-1} 
        \right)$
algorithm
to decide the existence of and compute
a pure Nash equilibrium
for 
three-discrete-concave player-specific payments
and arbitrary players
(Theorem~\ref{the most arbitrary}).

Exhaustive enumeration of {\em all} quality vector
incurs an {\em exponential} $\Theta (Q^{n})$ time complexity.
To bypass the intractability,
we focus on {\em contiguous} profiles,
where any players $i$ and $k$,
with $s_{i} \geq s_{k}$,
are assigned to qualities $q$ and $q'$,
respectively,
with $q \leq q'$;
they offer a significant advantage:
the cost for their exhaustive enumeration
drops to 
$\Theta \left( \binom{\textstyle n+Q-1}{\textstyle Q-1}
        \right)$.
We prove the {\it Contigufication Lemma}: 
any pure Nash equilibrium
for the contest game
can be transformed into a contiguous one
(Proposition~\ref{transformation into contiguous}).
So,
it suffices to search for a 
contiguous, pure Nash equilibrium.
The algorithm is polynomial-time $\Theta (n^{Q})$
for {\it constant} $Q$,
settling the parameterised complexity of the problem
when payments are
player-specific.

We extend the algorithm for three-discrete-concave
player-specific payments
to obtain a
$\Theta \left( \max
               \{ n,
               Q^{2}
               \}
               \cdot 
               \binom{\textstyle n+Q-1}{\textstyle Q-1} 
        \right)$
algorithm
for three-discrete-concave player-invariant payments
(Theorem~\ref{the most most arbitrary}).
The improved time complexity
for arbitrary $Q$
in comparison to the case of three-discrete-concave 
player-specific payments
is due to the fact that
the player-invariant property
allows dealing with the payment of only one,
instead of all,
of the players
choosing the same quality.

\remove{
\item
\textcolor{blue}{
with proportional allocation,
anonymous players
and skill-effort functions
${\sf \Lambda} (s_{i}, {\sf f}_{q})
     =
     s_{i}\, {\sf f}_{q}$
has a pure Nash equilibrium.
The proof is constructive
via a $\Theta (\max \{ Q, n \})$ algorithm
(Theorem~\ref{marina theorem}).
We resort to two mild technical assumptions
on efforts:
\begin{itemize}

\item
${\sf f}_{1} > \frac{\textstyle 1}
                    {\textstyle n}$,
as a technical quantification of mandatory participation;
since efforts increase with qualities,
it implies that
${\sf f}_{q} > \frac{\textstyle 1}
                    {\textstyle n}$
for all qualities $q \in [Q]$,
and all efforts should be large enough.  

\item                    
For any quality $q \leq Q-1$,
${\sf f}_{q} > \frac{\textstyle {\sf f}_{Q}} 
                    {\textstyle (n-1)^2}$.

\end{itemize}
}
\textcolor{blue}{The algorithm proceeds
by adding one quality at a time,
starting with $Q$ and going down.
In each iteration,
we consider one pair of qualities
at a time:
the currently added $q$
with the immediately lower $q-1$,
where $Q \geq q \geq 2$.
We compute a pure Nash equilibrium {\em as if}
the contest had only two qualities, $q$ and $q-1$; 
we neglect other qualities
(higher or lower)
and do not touch players assigned previously
to qualities higher than $q$. 
Instead, we split,
into $q$ and $q-1$,
the players assigned to $q$ 
immediately before;
we prove "no-interference"
with previously assigned players.
With a challenging analysis,
we prove inductively that
the newly computed assignment
is a pure Nash equilibrium
{\it as if} the contest only had qualities from $Q$ down to $q-1$.
So
at termination,
a pure Nash equilibrium is computed.
Since, in each iteration,
we do not touch players assigned to qualities than $q$,
the number of players not permanently assigned
(those assigned to $q-1$) 
may either "shrink" or remain constant 
till we run out of qualities.}
}

\end{itemize}

\subsection{Related Work and Comparison}

The contest game studied here 
is inspired by, embraces and extends
in two significant ways
an interesting contest game introduced in~\cite{BKLO22}.
First, we consider an arbitrary payment function,
whereas~\cite{BKLO22} focuses on proportional allocation.
Second,
we consider a cost function 
that is an arbitrary function of skill and effort,
whereas~\cite{BKLO22} focuses on
the product of skill and effort.
Although we have considered
a single proposal in our contest game,
multiple proposals can also be accommodated, as in~\cite{BKLO22}.

Casting qualities as {\it resources,}
the contest game resembles {\it unweighted congestion games}~\cite{R73};
adopting their original definition in~\cite{R73},
there are, though, two significant differences:
{\it (i)} players choose sets of resources in a 
(weighted or unweighted) congestion game
while they choose a single quality in a contest game,
and {\it (ii)}
the utilities (specifically, their payment part)
depend on the loads on {\it all} qualities
in a contest game,
while costs on a resource depend only
on the load on the resource
in an congestion game.
However,
their dissimilarity is trimmed down
when restricting the comparison
to contest games with an oblivious payment function,
where a payment depends only on
the load on the particular quality, 
and to {\it singleton} (unweighted) congestion games,
first introduced in~\cite{QS94},
where each player chooses a single resource.
Note that
the payments in a contest game with an oblivious payment function
may be player-specific,
while, in general, costs in a singleton congestion game are not.

Congestion games with player-specific payoffs
were introduced by Milchtaich~\cite{M96}
as singleton congestion games
where the payoff to a player
choosing a resource is given by a player-specific payoff function.
(In fact,
player-specific payments in this paper
have been inspired by player-specific payoffs in~\cite{M96}.)
In~\cite[Theorem 2]{M96},
it is shown that,
under a standard monotonicity assumption on the payoff function,
these games 
always have a pure Nash equilibrium.
An example is provided in~\cite[Section 5]{M96}
of a congestion game with player-specific payoffs
that lacks the {\it Finite Improvement Property (FIP)}.
In contrast,
Theorem~\ref{pure existence} shows that
the contest game with a player-invariant 
and oblivious payment function,
a special case
of a congestion game with player-specific payoffs,
has a potential function;
thus,
it identifies a subclass of congestion games 
with player-specific payoffs
that does have the stronger {\it FIP}.

\remove{\textcolor{blue}{
Defined by Milchtaich~\cite[Section 3]{M96},
a {\it symmetric} congestion game with player-specific payoffs
on $r$ resources
has each player sharing the same set of payoff functions
$S_{1}, S_{2}, \ldots, S_{r}$;
thus,
players choosing the same resource
have the same payoff.
In contrast,
in a contest game with a player-invariant payment function,
players choosing the same quality
need not get the same utility
as the utility for a player $i$
includes also the skill-effort cost,
which depends on the skill $s_{i}$;
they get the same utility only if players are anonymous.}}

Gairing {\em et al.}~\cite{GMT11}
consider cost-minimizing players
and {\it non-singleton} congestion games 
with player-specific costs;
~\cite[Theorem 3.1]{GMT11},
shows that there is a potential 
for the strict subclass of congestion games
with {\it linear} player-specific costs 
of the form
${\sf f}_{ie}(\delta) = \alpha_{ie} \cdot \delta$,
where $\alpha_{ie} \geq 0$,
for a player $i$ and a resource $e$;
$\delta$ is the number of players choosing resource $e$.
For the potential function result (Theorem~\ref{pure existence})
for the contest game 
with a player-invariant and oblivious payment function,
we consider {\em general} player-specific utilities
of the form
${\sf U}_{i}(q) = 
{\sf P}_{i}(i, {\sf N}(q)) - {\sf \Lambda}(s_{i}, {\sf f}_{q})$,
where ${\sf P}_{i}(i, {\sf N}(q)) \geq 0$
is not necessarily linear
and ${\sf \Lambda}$ is an arbitrary non-negative function,
which is independent of ${\sf N}(q)$
and could be non-monotone.
Theorem~\ref{pure existence} is 
a significant generalization of~\cite[Theorem 3.1]{GMT11},
which assumed {\it linear} player-specific costs,
and an extention of it,
due to the subtracted term
${\sf \Lambda}(s_{i}, {\sf f}_{q})$.
however,
it is also a restriction of ~\cite[Theorem 3.1]{GMT11}, 
since the contest game is singleton
and ${\sf P}_{i}$
is assumed player-invariant.

The contest games considered in the proofs
of the existence of pure Nash equilibria
for ~\cite[Theorems 1 and 3]{BKLO22} 
assume $Q=3$ and $Q=2$,
respectively,
and deal with proportional allocation, 
voluntary participation
and a skill-effort function
${\sf \Lambda}(s_{i}, {\sf f}_{q})
 =
 s_{i} {\sf f}_{q}$,
for any player $i \in [n]$ and quality $q \in [Q]$. 
Pure Nash equilibria are ill-defined
in all considered cases of voluntary participation
as they ignore
the indeterminacy arising in case
all players choose quality 1.
Putting aside this correctness issue,
Theorem~\ref{pure existence}
generalizes the context of~\cite[Theorem 3]{BKLO22} 
from the case $Q=2$ to arbitrary $Q$,
for {\it any} player-invariant and oblivious payment function
and {\it any} skill-effort function;
Theorems~\ref{sto paderborn} and~\ref{wow paderborn} 
generalize the context of ~\cite[Theorem 1]{BKLO22}
from $Q=3$ to arbitrary $Q$,
while they significantly strengthen the claimed results
for these ill-defined cases,
since 
{\it (i)}
they establish the {\it FIP,}
which is a property stronger than the existence of
a pure Nash equilibrium,
{\it (ii)}
they cover together both voluntary and mandatory participation,
and {\it (iii)}
they explicitly determine the pure Nash equilibria and their number,
while the outlined convergence arguments for claiming
~\cite[Theorem 1]{BKLO22} do not.

The contest game is related to {\em project games}~\cite{BGM23},
where each {\it weighted} player $i$ selects 
a single {\it project} $\sigma_{i} \in S_{i}$ 
among those available to him, 
where several players may select the same project.
Weights $w_{i, \sigma_{i}}$ are {\it project-specific};
they are called {\em universal}
when they are fixed for the same project
and {\it identical} when the fixed weights
are the same over all projects.
The utility of player $i$ is a fraction $r_{\sigma_{i}}$
of the proportional allocation of weights
on the project $\sigma_{i}$. 
Projects 
can be considered to correspond 
to qualities 
in the contest game,
which,
in contrast,
has, in general,
neither weights nor fractions
but has the extra term ${\sf \Lambda} (s_{i}, {\sf f}_{q})$
for the cost.

For the contest game in~\cite{EGG22},
there are $m$ {\it activities}
and player $i \in [n]$ 
chooses an {\it output vector}
${\bf b}_{i} = \langle b_{i1}, \ldots, b_{im} \rangle$,
with $b_{i \ell} \in {\mathbb{R}}_{\geq 0}$, $\ell \in [m]$;
the case $b_{i \ell} = 0$
corresponds to voluntary participation.
In contrast,
there are no activities
in the contest game;
but
one may view the single proposal and quality vectors
in it
(as well as in the contest game in~\cite{BKLO22})
as an activity and output vectors, respectively.
There are $C \geq 1$ {\it contests}
awarding prizes to the players based on their output vectors;
allocation
is equal sharing in~\cite{EGG22},
by which players receiving a prize share
are "filtered" using a function $f_{c}$ associated with contest $c$.
The special case of the contest game in~\cite{EGG22}
with $C = 1$
can be seen to correspond to a
contest game in our context;
nevertheless,
to the best of our understanding,
no results transfer between the contest games
in~\cite{EGG22} and in this paper,
as their definitions are different;
for example,
we do not see how to embed output vectors 
in our contest game,
or skill-effort costs
in the contest game in~\cite{EGG22}.

Listed in~\cite[Section 6.1.3]{V2015}
are more examples of player-invariant payments,
including {\it proportional-to-marginal contribution}
(motivated by the marginal contribution condition
in {\it (monotone) valid utility games}~\cite{V02})
and {\it Shapley-Shubick}~\cite{S53,S62}.                       
Games employing
proportional allocation,
equal sharing and $K$-{\sf Top} allocation
have been studied,
for example,
in~\cite{BGM23,C16,FLZ09,PPV14,Z05},
in~\cite{EGG22,MCKL14}
and in~\cite{EG16,JCP14,XQYL14},
respectively.
Accounts on
proportional allocation and equal sharing
in simultaneous contests
appear in~\cite[Section 5.4 \& Section 5.5]{V2015},
respectively.
Player-invariant payments
enhance
{\it Anonymous Independent Reward Schemes (AIRS)}~\cite{CTWXY22},
where payments, 
termed as {\it rewards,}
are only allowed to depend
on the quality of the individual review,
or {\it content}
in the context of user-generated content platforms.

A plethora of results in Contest Theory
establish the inexistence of pure Nash equilibria
in contests with continuous strategy spaces;
see, e.g,~\cite{BKdV96} or~\cite[Example 1.1]{S20}.
Still for continuous strategy spaces,
for proportional allocation,
existence, uniqueness and characterization of pure Nash equilibria
is established in~\cite[Theorem 4.9]{V2015}
for two-player contests
and in~\cite{JT04}
for contests with an arbitrary number of players,
assuming additional conditions
on the utility functions.
All-pay contests with discrete action spaces
were considered in~\cite{CS07}. 
In our view,
the analysis of contest games
with discrete action spaces
is more challenging;
it requires combinatorial arguments,
instead of concavity and continuity arguments,
typically employed for contests with continuous action spaces.  

\remove{
For an integer $n \geq 2$,
an {\it $n$-players game}
${\mathsf{G}}$,
or {\it game,}
consists of
{\it (i)}
$n$ finite sets
$\left\{ S_{i}
\right\}_{i \in [n]}$
of {\it strategies,}
and
{\it (ii)}
$n$ {\it cost functions}
$\left\{ {\mathsf{\mu}}_{i}
\right\}_{i \in [n]}$,
each mapping
${\cal S} = \times_{i \in [n]}
              S_{i}$
to the reals.
So,
${\mathsf{\mu}}_{i}({\bf s})$
is the {\it cost}
of player $i \in [n]$
on the {\it profile}
${\bf s}
=
\langle s_{1},
    \ldots,
    s_{n}
\rangle$
of strategies,
one per player.}

\remove{
A {\it mixed strategy}
for player $i \in [n]$
is a probability distribution $p_{i}$
on $S_{i}$;
the {\it support}
of player $i$
in $p_{i}$
is the set
$\sigma (p_{i})
=
\left\{ \ell \in S_{i}\
    \mid\
    p_{i}(\ell) > 0
\right\}$.
Denote as
${\mathsf{\Delta}}_{i}
=
{\mathsf{\Delta}}(S_{i})$
the set of
mixed strategies
for player $i$.
Player $i$
is {\it pure}
if for each strategy $s_{i} \in S_{i}$,
$p_{i}(s_{i}) \in \{ 0, 1 \}$;
else she is {\it non-pure}.
Denote as $p_{i}^{\ell}$
the {\it pure strategy}
of player $i$
choosing the strategy $\ell$
with probability $1$.}

\remove{
A {\it mixed profile}
is a tuple
${\bf p} = \langle p_{1},
         \ldots,
         p_{n}
     \rangle$
of $n$ mixed strategies,
one per player;
denote as
${\mathsf{\Delta}}
=
{\mathsf{\Delta}}(S)
=
\times_{i \in [n]}
 {\mathsf \Delta}_{i}$
the set of mixed profiles.
The mixed profile ${\bf p}$
induces
probabilities
${\bf p}({\bf s})$
for each profile
${\bf s} \in {\cal S}$
with
${\bf p}({\bf s})
=
\prod_{i' \in [n]}
  p_{i'} (s_{i'})$.
For a player $i \in [n]$,
the {\it partial profile}
${\bf s}_{-i}$
(resp.,
{\it partial mixed profile}
${\bf p}_{-i}$)
results by eliminating
the strategy $s_{i}$
(resp.,
the mixed strategy $p_{i}$)
from ${\bf s}$
(resp.,
${\bf p}$).
The partial mixed profile ${\bf p}_{-i}$
induces
probabilities
${\bf p}({\bf s}_{-i})$
for each partial profile
${\bf s}_{-i}
 \in
 {\cal S}_{-i}
 :=
 \times_{i' \in [n] \setminus \{ i \}}
  S_{i'}$
with
${\bf p}({\bf s}_{-i})
=
\prod_{i' \in [n] \setminus \{ i \}}
  p_{i'} (s_{i'})$.}

\section{(In)Existence of a Pure Nash Equilibrium}
\label{many proposals}

We show:

\begin{theorem}
\label{pure existence}
The contest game with arbitrary players
and player-invariant and oblivious payments
has an exact potential
and a pure Nash equilibrium.
\end{theorem}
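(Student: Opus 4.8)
The plan is to exhibit an explicit \emph{exact} potential $\Phi$ and then invoke the fact that every finite potential game has a pure Nash equilibrium, namely any maximizer of $\Phi$~\cite[Corollary 2.2]{MS96}. First I would rewrite the utility in a form that isolates the load dependence. Because payments are player-invariant and oblivious, the payment received by any player who chooses quality $q$ when the load on $q$ equals $\ell$ is a single, player-independent quantity; write it as $P(q,\ell)$. Then for every quality vector ${\bf q}$ and every player $i \in [n]$,
\[
{\sf U}_{i}({\bf q}) = P\bigl(q_{i}, {\sf N}_{{\bf q}}(q_{i})\bigr) - {\sf \Lambda}(s_{i}, {\sf f}_{q_{i}}),
\]
so the only term coupling player $i$ to the rest of the profile is the payment $P(q_{i}, {\sf N}_{{\bf q}}(q_{i}))$, which depends on the other players solely through the load ${\sf N}_{{\bf q}}(q_{i})$; the cost term depends only on $i$'s own strategy.

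This decomposition is exactly the structure that admits a Rosenthal-type potential for the load-dependent payment together with a separable, player-specific additive correction for the cost. Concretely I would set
\[
\Phi({\bf q}) = \sum_{q \in [Q]} \sum_{\ell=1}^{{\sf N}_{{\bf q}}(q)} P(q,\ell) \; - \; \sum_{i \in [n]} {\sf \Lambda}(s_{i}, {\sf f}_{q_{i}}),
\]
with the convention that the inner sum is empty (hence $0$) when ${\sf N}_{{\bf q}}(q) = 0$.

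To verify that $\Phi$ is an exact potential, I would fix a player $i$, a partial profile ${\bf q}_{-i}$, and a deviation from $q_{i}=a$ to $q_{i}'=b$ with $a \neq b$. Writing $\ell_{a}$ and $\ell_{b}$ for the loads on $a$ and $b$ before the move (so player $i$ is counted in $\ell_{a}$ but not $\ell_{b}$), the deviation decrements the load on $a$ to $\ell_{a}-1$ and increments the load on $b$ to $\ell_{b}+1$. In the double sum only the columns for $a$ and $b$ change: the term $P(a,\ell_{a})$ is dropped and the term $P(b,\ell_{b}+1)$ is added, so the first summand changes by $P(b,\ell_{b}+1) - P(a,\ell_{a})$; in the second summand only player $i$'s term changes, contributing ${\sf \Lambda}(s_{i},{\sf f}_{a}) - {\sf \Lambda}(s_{i},{\sf f}_{b})$. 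Adding these yields exactly ${\sf U}_{i}(b,{\bf q}_{-i}) - {\sf U}_{i}(a,{\bf q}_{-i})$, as required. Since the strategy space is finite, $\Phi$ attains a maximum, and that profile is a pure Nash equilibrium, because any profitable unilateral deviation would strictly increase $\Phi$, contradicting maximality.

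I do not expect a serious obstacle; the content lies in recognizing the right potential. The point deserving care — and the one that marks the boundary of the result — is that $P(q,\ell)$ must be a \emph{common} function across players: this is precisely where player-invariance (beyond mere obliviousness) is used, and it is the feature separating this tractable subclass from general singleton congestion games with player-specific payoffs~\cite{M96}, which need not even have the \emph{FIP}. The telescoping must also be checked at the boundary loads (when $i$ vacates $a$ with $\ell_{a}=1$, or when $i$ lands alone on $b$ with $\ell_{b}=0$), but the empty-sum convention makes these cases go through unchanged.
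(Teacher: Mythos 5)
Your proposal is correct and is essentially the paper's own proof: the paper defines a function ${\sf \Gamma}$ by the recurrence ${\sf \Gamma}(\ell)-{\sf \Gamma}(\ell-1)=P(q,\ell)$ with ${\sf \Gamma}(0)=0$ and subtracts $\sum_{k}{\sf \Lambda}(s_{k},{\sf f}_{q_{k}})$, which, once the recurrence is unrolled, is exactly your Rosenthal-type double sum $\sum_{q}\sum_{\ell=1}^{{\sf N}_{{\bf q}}(q)}P(q,\ell)$ minus the separable cost term. You also correctly identify the same point the paper flags for well-definedness of $P(q,\ell)$ — player-invariance kills the dependence on $i$ and obliviousness kills the dependence on the other loads — so nothing is missing, and your explicit notation (making the dependence of $P$ on $q$ visible) is if anything cleaner than the paper's one-argument ${\sf \Gamma}$.
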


\begin{proof}
Define the function
$\Phi : \{ {\bf q} \} \rightarrow {\mathbb{R}}$ 
as
\begin{eqnarray*}
        \Phi({\bf q}) 
& = & \sum_{q \in [Q]}   
        {\sf \Gamma}({\sf N}_{{\bf q}}(q)) 
        -
        \sum_{k \in [n]}
        \Lambda (s_{k}, 
                 {\sf f}_{q_{k}})\, ,
\end{eqnarray*}
where the function
${\sf \Gamma}: 
 {\mathbb{N}} \cup \{ 0 \}
 \rightarrow 
 {\mathbb{R}}$  
will be defined later.  
We prove that $\Phi$ is an exact potential.

Consider a player $i \in [n]$
switching from strategy $q_{i}$,
to strategy $\widehat{q}_{i}$,
while other players do not change strategies.
So the quality vector
${\bf q}
 = \langle 
   q_{1}, \ldots, 
   q_{(i-1)}, q_{i}, q_{i+1}, 
   \ldots, q_{n}
   \rangle$
is transformed into
$\widehat{{\bf q}} :=  
\langle q_{1}, 
        \ldots, 
        q_{i-1}, 
        \widehat{q}_{i}, 
        q_{i+1}, 
        \ldots, 
        q_{n} 
\rangle$;
thus,
${\sf N}_{\widehat{{\bf q}}}(q_{i})
 =
 {\sf N}_{{\bf q}}(q_{i}) - 1$,
${\sf N}_{\widehat{{\bf q}}}(\widehat{q}_{i})
 =
 {\sf N}_{{\bf q}}(\widehat{q}_{i}) + 1$
and
${\sf N}_{\widehat{{\bf q}}}(\widetilde{q})
 =
 {\sf N}_{{\bf q}}(\widetilde{q})$
for each quality
$\widetilde{q} \neq q_{i}, \widehat{q}_{i}$.
To simplify notation,
denote $q_{i}$ and $\widehat{q}_{i}$ as
$q$ and $\widehat{q}$,
respectively.
So,
{\small
\begin{eqnarray*}
      {\sf U}_{i}({\bf q})
      -
      {\sf U}_{i}(\widehat{{\bf q}})
& = & \left[ {\sf P}_{i}({\bf q})
      \right]	_{\left[ {\sf N}_{{\bf q}}(q),
                  {\sf N}_{{\bf q}}(\widehat{q})
               \right]}
      -
    \left[ {\sf P}_{i}(\widehat{{\bf q}})
    \right]_{\left[ 
              {\sf N}_{{\bf q}}(q) - 1,
              {\sf N}_{{\bf q}}(\widehat{q}) + 1
             \right]}
      + 
        \Lambda (s_{i}, {\sf f}_{\widehat{q}})
        -
        \Lambda (s_{i}, {\sf f}_{q})\, ,
\end{eqnarray*}
}
where 
$\left[ {\sf P}_{i}({\bf q})
 \right]_{\left[ 
           {\sf N}_{{\bf q}}(q),
           {\sf N}_{{\bf q}}(\widehat{q})
          \right]}$
and     
$\left[ {\sf P}_{i}({\bf q})
 \right]_{\left[ 
          {\sf N}_{{\bf q}}(q)-1,
          {\sf N}_{{\bf q}}(\widehat{q})+1
         \right]}$
denote the payments awarded to $i$
when the loads on qualities $q$ and $\widehat{q}$
are 
$\left( {\sf N}_{{\bf q}}(q),
        {\sf N}_{{\bf q}}(\widehat{q})
 \right)$
and
$\left( {\sf N}_{{\bf q}}(q) - 1,
        {\sf N}_{{\bf q}}(\widehat{q}) + 1
 \right)$,
respectively,
while loads on other qualities
remain unchanged. 
So
$\left[ {\sf P}_{i}({\bf q})
 \right]_{\left[ 
           {\sf N}_{{\bf q}}(q),
           {\sf N}_{{\bf q}}(\widehat{q})
         \right]}
 =
 {\sf P}_{i}({\bf q})$
and
$\left[ {\sf P}_{i}({\bf q})
 \right]_{\left[ 
           {\sf N}_{{\bf q}}(q) - 1,
           {\sf N}_{{\bf q}}(\widehat{q}) + 1
         \right]}
 =
 {\sf P}_{i}(\widehat{{\bf q}})$.
Clearly,
{\small
\begin{eqnarray*}
      \Phi({\bf q}) - \Phi(\widehat{{\bf q}})   
& = & {\sf \Gamma}({\sf N}_{{\bf q}}(q)) 
      +
      {\sf \Gamma}({\sf N}_{{\bf q}}(\widehat{q}))
      -        
      \Lambda (s_{i}, {\sf f}_{q})                                
- 
      \left( {\sf \Gamma}({\sf N}_{{\bf q}}(q) -1) 
             + 
             {\sf \Gamma}
             ({\sf N}_{{\bf q}}(\widehat{q}) + 1)
             - 
             \Lambda (s_{i}, {\sf f}_{\widehat{q}}) 
      \right)                                             \\    
& = & 
      {\sf \Gamma}({\sf N}_{{\bf q}}(q))
      - 
      {\sf \Gamma}({\sf N}_{{\bf q}}(q)-1)
- 
      \left(
      {\sf \Gamma}({\sf N}_{{\bf q}}(\widehat{q}) +1)
      - 
      {\sf \Gamma}({\sf N}_{{\bf q}}(\widehat{q}))
      \right)
      + 
         \Lambda (s_{i}, {\sf f}_{\widehat{q}})
         - 
         \Lambda (s_{i}, {\sf f}_{q})\, .                                  
\end{eqnarray*}
}
Now define the function ${\sf \Gamma}$
such that
for a quality vector ${\bf q}$,
for each quality $q \in [Q]$,
{\small
\begin{eqnarray*}
      {\sf \Gamma}({\sf N}_{{\bf q}}(q))
      - 
      {\sf \Gamma}({\sf N}_{{\bf q}}(q)-1)
& = & \left[ {\sf P}_{i}({\bf q})
      \right]_{\left[ {\sf N}_{{\bf q}}(q),
                      {\sf N}_{{\bf q}}(\widehat{q})
              \right]}\, ,
\end{eqnarray*}
}
We set $\widehat{q}$ for $q$
and ${\sf N}_{{\bf q}}(\widehat{q}) + 1$
for ${\sf N}_{{\bf q}}(q)$
to obtain
{\small
\begin{eqnarray*}
      {\sf \Gamma}({\sf N}_{\widehat{{\bf q}}}(\widehat{q}) + 1)       
      - 
      {\sf \Gamma}({\sf N}_{\widehat{{\bf q}}}(\widehat{q}))
& = & \left[ {\sf P}_{i}(\widehat{{\bf q}})
      \right]_{\left[ {\sf N}_{{\bf q}}(q) -1,
                      {\sf N}_{{\bf q}}(\widehat{q}) + 1
              \right]}\, ,             
\end{eqnarray*}
}
\remove{{\small
\begin{eqnarray*}}
      {\sf \Gamma}_{\ell}({\sf N}_{{\bf q}^{\ell}}(q))
      {\sf H}_{{\sf N}_{{\bf q}^{\ell}}(q)}       
      - 
      {\sf \Gamma}_{\ell}({\sf N}_{{\bf q}^{\ell}}(q)-1)
      {\sf H}_{{\sf N}_{{\bf q}^{\ell}}(q) - 1}
& = & \left[ {\sf P}_{i\ell}({\bf q}^{\ell})
      \right]_{\left[ {\sf N}_{{\bf q}^{\ell}}(q),
                      {\sf N}_{{\bf q}^{\ell}}(\widehat{q})
              \right]}
      \remove{-
      \left[ {\sf P}_{i\ell}({\bf q}^{\ell})
      \right]_{\left[ {\sf N}_{{\bf q}^{\ell}}(q) -1,
                       {\sf N}_{{\bf q}^{\ell}}(\widehat{q})+1
               \right]}\,} .              
\end{eqnarray*}
} 
if ${\sf N}_{{\bf q}}(q) \geq 1$,
and ${\sf \Gamma}(0) = 0$. 
Note that ${\sf \Gamma}$
is well-defined:
the left-hand side is a function of ${\sf N}_{{\bf q}}$ only,
as also is the right-hand side
since
${\sf P}_{i}({\bf q})$
is independent of 
{\it (i)}
$i$, since ${\sf P}$ is player-invariant,
and {\it (ii)}
the loads on qualities 
other than $q$, 
since ${\sf P}$ is oblivious.
An explicit formula for 
${\sf \Gamma}({\sf N}_{{\bf q}}(q)$
follows from its definition: 
\remove{Finally, note that
\textcolor{red}{
\begin{eqnarray*}
{\sf \Gamma}_{\ell}(1)
& = &
 \left[ {\sf P}_{i\ell}({\bf q}^{\ell})
 \right]_{\left[ 
          1, {\sf N}_{{\bf q}^{\ell}}(\widehat{q})
         \right]}
 -
 \left[ {\sf P}_{i\ell}({\bf q}^{\ell})
 \right]_{\left[ 
          0, {\sf N}_{{\bf q}^{\ell}}(\widehat{q})
             + 1
         \right]}\, .
\end{eqnarray*}         
}  
}
{\small
\begin{eqnarray*}
{\sf \Gamma}({\sf N}_{{\bf q}}(q)) 
\remove{\frac{\textstyle 
            {\sf H}_{{\sf N}_{{\bf q}}(q) - 1}}
           {\textstyle 
            {\sf H}_{{\sf N}_{{\bf q}}(q)}}
     {\sf \Gamma}({\sf N}_{{\bf q}}(q) - 1)
     +
     \frac{\textstyle \left[ {\sf P}_{i}({\bf q})
                      \right]_{\left[ {\sf N}_{{\bf q}}(q),
                                      {\sf N}_{{\bf q}}(\widehat{q})
                      \right]}}
         {\textstyle {\sf H}_{{\sf N}_{{\bf q}^{\ell}}(q)}} 
     \remove{\textcolor{red}{-
      \frac{\textstyle \left[ {\sf P}_{i\ell}({\bf q}^{\ell})
                      \right]_{\left[ {\sf N}_{{\bf q}^{\ell}}(q) - 1
                                      {\sf N}_{{\bf q}^{\ell}}(\widehat{q}) + 1
                      \right]}}
         {\textstyle {\sf H}_{{\sf N}_{{\bf q}^{\ell}}(q)}}}}
     \\}                      
& = & 
     \left( 
           {\sf \Gamma}({\sf N}_{{\bf q}}(q)-2)
           +
                 \left[ {\sf P}_{i}({\bf q})
                 \right]_{\left[ {\sf N}_{{\bf q}}(q)-1,
                                 {\sf N}_{{\bf q}}(\widehat{q}) + 1
                          \right]}
     \right)
     +
           \left[ {\sf P}_{i}({\bf q})
           \right]_{\left[ {\sf N}_{{\bf q}}(q),
                           {\sf N}_{{\bf q}}(\widehat{q})
                   \right]}\ \
=\ \ \ldots                                                            \\
& = & 
      \left[ {\sf P}_{i}({\bf q})
            \right]_{\left[ 
                     1, {\sf N}_{{\bf q}}(q)
                        +
                        {\sf N}_{{\bf q}}(\widehat{q})
                        - 1
                    \right]}                
      +
            \left[ {\sf P}_{i}({\bf q})
            \right]_{\left[ 
                     2, {\sf N}_{{\bf q}}(q)
                        +
                        {\sf N}_{{\bf q}}(\widehat{q})
                        - 2
                    \right]}
            +
            \ldots
           +    
           \left[ {\sf P}_{i}({\bf q})
           \right]_{\left[ {\sf N}_{{\bf q}}(q),
                           {\sf N}_{{\bf q}}(\widehat{q})
                   \right]} 
\end{eqnarray*}
}
Hence,
by definition of ${\sf \Gamma}$,
{\small
\begin{eqnarray*}
      \Phi({\bf q}) - \Phi(\widehat{{\bf q}})   
& = & \left[ {\sf P}_{i}({\bf q})
      \right]	_{\left[ {\sf N}_{{\bf q}}(q),
                  {\sf N}_{{\bf q}}(\widehat{q})
               \right]}
      -
    \left[ {\sf P}_{i}(\widehat{{\bf q}})
    \right]_{\left[ {\sf N}_{{\bf q}}(q) - 1,
                    {\sf N}_{{\bf q}}(\widehat{q}) + 1
             \right]}
      + \Lambda (s_{i},
                        {\sf f}_{\widehat{q}})
      -
      \Lambda (s_{i},
                        {\sf f}_{q})\, .
\end{eqnarray*}
}
Hence,
$\Phi({\bf q}) - \Phi(\widehat{{\bf q}})   
 = {\sf U}_{i}({\bf q})
   -
   {\sf U}_{i}(\widehat{{\bf q}})$,
$\Phi$ is an exact potential 
and
a pure Nash equilibrium exists.
\end{proof}

\noindent
Since
${\sf \Gamma}$, ${\sf P}$ and ${\sf \Lambda}$
are poly-time computable,
so is also the exact potential $\Phi$ used 
for the proof of Theorem~\ref{pure existence}
since it involves summations of values
of ${\sf \Gamma}$, ${\sf P}$ and ${\sf \Lambda}$.
Hence,
{\sc $\exists$PNE with Player-Invariant and Oblivious Payments} 
$\in {\mathcal{PLS}}$.

\begin{framed}
\begin{open problem}
Determine the precise complexity of 
{\sc $\exists$PNE with Player-Invariant and Oblivious Payments}.
We remark that no ${\mathcal{PLS}}$-hardness results
for computing pure Nash equilibria
are known for either singleton congestion games~\cite{M96}
or for project games~\cite{BGM23}, 
which, in some sense,
are also singleton as the contest game is;
moreover,
all known ${\mathcal{PLS}}$-hardness results for computing pure Nash equilibria in congestion games apply to congestion games
that are not singleton.
These remarks appear to speak against
${\mathcal{PLS}}$-hardness.
\end{open problem}
\end{framed}

\noindent
We next show that existence of pure Nash equilibria
is not guaranteed 
if ${\sf P}$ is not
player-invariant and oblivious simultaneously.
We start by showing:

\begin{proposition}
\label{icalp counterexample 1}
There is a contest game
with mandatory participation,
player-invariant payments
and anonymous players
that has neither the {\it FIP}
nor a pure Nash equilibrium.
\end{proposition}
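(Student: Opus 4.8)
The plan is to build one explicit small instance whose improvement graph contains a cycle (ruling out the \emph{FIP}) and has no sink (ruling out a pure Nash equilibrium). First I would record that for anonymous players the skill-effort cost of a quality $q$ is common to all players, namely $c_q := \Lambda(1,{\sf f}_q)$ with $0 < c_1 < c_2 < c_3$ under mandatory participation; together with player-invariance this renders the game fully symmetric, so I am really hunting for a symmetric game with no pure equilibrium---a Rock--Paper--Scissors-type phenomenon. Since for $Q=2$ the load on the unique other quality is determined by the load on one's own quality, \emph{every} player-invariant payment is then oblivious, and Theorem~\ref{pure existence} would force a pure equilibrium; hence the smallest instance with any chance of working has $Q=3$, and I would also take $n=2$.

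Write $p_{ab}$ for the (player-invariant) payment to a player who chooses quality $a$ while the single other player chooses $b$, and $p_{aa}$ when both choose $a$; non-obliviousness is exactly the freedom to let $p_{ab}$ depend on $b$ and not only on $a$ and its load. I would route an improvement cycle through the three mixed profiles $\{1,2\},\{2,3\},\{1,3\}$ via the six single-player moves $(1,3)\to(2,3)\to(2,1)\to(3,1)\to(3,2)\to(1,2)\to(1,3)$. By the symmetry of this cycle the six improvement inequalities collapse to just three: $p_{23}-c_2 > p_{13}-c_1$, $p_{12}-c_1 > p_{32}-c_3$, and $p_{31}-c_3 > p_{21}-c_2$. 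These constrain pairwise-disjoint payment values, so they are simultaneously satisfiable together with nonnegativity and the per-profile normalization $p_{ab}+p_{ba}\le 1$; a concrete witness is $c_q = q/10$, $p_{23}=p_{31}=3/5$, $p_{13}=p_{21}=2/5$ and $p_{12}=p_{32}=3/10$, for which the three strict inequalities read $0.4>0.3$, $0.2>0$ and $0.3>0.2$, while $p_{13}+p_{31}=1$, $p_{23}+p_{32}=9/10$ and $p_{12}+p_{21}=7/10$ all respect the budget.

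To finish I would kill the three coincident profiles by setting $p_{qq}=0$, so that ${\sf U}$ on $\{q,q\}$ equals $-c_q<0$ whereas a unilateral move to a mixed profile is strictly better; e.g.\ from $\{1,1\}$ a switch to quality $2$ gives $p_{21}-c_2 = 2/5-1/5 > -c_1$, and the analogous deviations work from $\{2,2\}$ and $\{3,3\}$. Consequently all nine profiles admit a profitable unilateral deviation: no profile is a sink, so there is no pure Nash equilibrium, and the exhibited $6$-cycle directly witnesses the failure of the \emph{FIP}. The main obstacle, and the conceptual heart of the argument, is to realize the cyclic (Rock--Paper--Scissors) preference pattern within the normalization budget: this is precisely what obliviousness forbids---by Theorem~\ref{pure existence} oblivious player-invariant payments can never cycle---and what non-obliviousness, by decoupling $p_{ab}$ across different opponent qualities, makes possible; since the variables entering the three strict inequalities are disjoint, meeting them without violating $p_{ab}+p_{ba}\le 1$ leaves ample slack, so this is the only quantitative check required.
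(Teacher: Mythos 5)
Your proposal is correct and takes essentially the same route as the paper: the paper's proof is likewise a two-player, three-quality instance with anonymous players and a player-invariant, non-oblivious payment function realizing a cyclic (Rock--Paper--Scissors) dominance pattern, yielding a six-step improvement cycle through the six off-diagonal profiles while the diagonal profiles admit profitable deviations. The only difference is cosmetic: the paper instantiates the payments via a natural rank-order rule (payment $1$ to the strictly highest quality, $\frac{1}{2}$ each on a tie, with costs $\Lambda(\frac{1}{3},{\sf f}_q)=\frac{q}{3}$), whereas you reverse-engineer abstract values $p_{ab}$ from the three cycle inequalities, which your concrete numbers do satisfy together with nonnegativity and the normalization $\sum_k {\sf P}_k({\bf q})\leq 1$.
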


\begin{proof}
Consider the contest game
with two players 1 and 2 with skill $\frac{\textstyle 1}
                                          {\textstyle 3}$
and three qualities 1, 2 and 3,
with ${\sf f}_{q} = q$
for $q \in [3]$.
So participation is mandatory.
Assume a product skill-effort function
${\sf \Lambda}(\frac{\textstyle 1}
                    {\textstyle 3}, {\sf f}_{q})
 =
 \frac{\textstyle 1}
      {\textstyle 3}
 {\sf f}_{q}$,
$q \in [3]$;
so ${\sf \Lambda}(\frac{\textstyle 1}
                       {\textstyle 3},
                  {\sf f}_{1}))
    =\frac{\textstyle 1}
          {\textstyle 3}$,
${\sf \Lambda}(\frac{\textstyle 1}
                    {\textstyle 3},
               {\sf f}_{2})
    =\frac{\textstyle 2}
          {\textstyle 3}$
and          
${\sf \Lambda}(\frac{\textstyle 1}
                    {\textstyle 3},
           {\sf f}_{3})
    = 1$.       
The payment function ${\sf P}$
gives payment $1$
to the player, if any, choosing the strictly highest quality, 
or gives payment $\frac{\textstyle 1}
                       {\textstyle 2}$
to each player in case of a tie;
so
${\sf P}_{i}(1,1) 
 = 
 {\sf P}_{i}(2,2) 
 = 
 {\sf P}_{i}(3,3)
 = 
 \frac{\textstyle 1}
      {\textstyle 2}$
for each player $i \in [2]$,      
${\sf P}_{1}(2,1) 
 = 
 {\sf P}_{2}(1,2)
 =
 {\sf P}_{1}(3,1)
 =
 {\sf P}_{2}(1,3)
 =
 {\sf P}_{1}(3, 1)
 =
 {\sf P}_{2}(1,3)
 = 
 1$
and
${\sf P}_{2}(2,1) 
 = 
 {\sf P}_{1}(1,2)
 =
 {\sf P}_{2}(3,1)
 =
 {\sf P}_{1}(1,3)
 =
 {\sf P}_{2}(3, 1)
 =
 {\sf P}_{1}(1,3)
 = 
 0$.
 Note that these payment functions are not oblivious
 as the payment to a player choosing a particular quality
 depends on the numbers of players
 choosing higher qualities. 
 We check that the game
 neither has the {\it FIP}
 nor a pure Nash equilibrium:
 \begin{itemize}

\item
If player 1 chooses 1, 
then player 2 gets utility 
$\frac{\textstyle 1}
      {\textstyle 2} - \frac{\textstyle 1}
                            {\textstyle 3}
 = 
 \frac{\textstyle 1}
      {\textstyle 6}$ 
when choosing 1,
$1 - \frac{\textstyle 2}
          {\textstyle 3}
= \frac{\textstyle 1}
       {\textstyle 3}$
when choosing 2,
and $1 - 1 = 0$
when choosing 3. 
So
player 2 chooses 2.

\item
If player 1 chooses 2, 
then player 2 gets utility 
$0 - \frac{\textstyle 1}
          {\textstyle 3}
 = - \frac{\textstyle 1}
          {\textstyle 3}$
when choosing 1, 
$\frac{\textstyle 1}
      {\textstyle 2}
 -
 \frac{\textstyle 2}     
      {\textstyle 3}
 =
 - \frac{\textstyle 1}
        {\textstyle 6}$
when choosing 2, 
and $1 - 1 = 0$ 
when choosing 3. 
So
player 2 chooses 3.

\item
If player 1 chooses 3, 
then player 2 gets utility 
$0 - \frac{\textstyle 1}
          {\textstyle 3}
 = - \frac{\textstyle 1}
          {\textstyle 3}$
when choosing 1,
$0 - \frac{\textstyle 2}
          {\textstyle 3}
 = 
 - \frac{\textstyle 2}
       {\textstyle 3}$
when choosing 2, 
and 
$\frac{\textstyle 1}
      {\textstyle 2}
 - 1 = 
 - \frac{\textstyle 1}
        {\textstyle 2}$
when choosing 3. 
So player 2 chooses 1.

\end{itemize}
Since players are anonymous
and payments are player-invariant,
player 1 best-responds to player 2 
in an identical way. 
Now note that the best-responses form the cycle
$\langle 1, 2 \rangle \leadsto
 \langle 3, 2 \rangle \leadsto
 \langle 3, 1 \rangle \leadsto
 \langle 2, 1 \rangle \leadsto
 \langle 2, 3 \rangle \leadsto
 \langle 1, 3 \rangle \leadsto
 \langle 1, 2 \rangle$,
while quality vectors outside the cycle
are not pure Nash equilibria. 
Hence, there is no pure Nash equilibrium.
\end{proof}

\noindent
We continue to prove:

\begin{proposition}
\label{icalp counterexample 2}
There is a contest game
with mandatory participation,
proportional allocation
and arbitrary players
that has neither the {\it FIP}
nor
a pure Nash equilibrium.
\end{proposition}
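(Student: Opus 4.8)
The plan is to settle the proposition by exhibiting a single explicit two-player instance and reading a best-response cycle off its best-response structure. First I would fix $n=2$, $Q=3$, mandatory participation with efforts ${\sf f}_{1}=1<{\sf f}_{2}=2<{\sf f}_{3}=6$, the product cost ${\sf \Lambda}(s_{i},{\sf f}_{q})=s_{i}\,{\sf f}_{q}$, and two distinct skills $s_{1}=1/20$ and $s_{2}=1/8$, so that the players are indeed arbitrary. Since there are only two players, proportional allocation gives a player facing an opponent of effort $e$ and choosing effort ${\sf f}_{q}$ the payment ${\sf f}_{q}/({\sf f}_{q}+e)$, hence the utility ${\sf f}_{q}/({\sf f}_{q}+e)-s\,{\sf f}_{q}$; mandatory participation (${\sf f}_{1}=1>0$) avoids the $\tfrac{0}{0}$ indeterminacy.

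The key preliminary computation is a clean threshold description of best responses. Comparing the three qualities pairwise, a player of skill $s$ facing opponent effort $e$ strictly prefers the higher of two qualities exactly when $s$ lies below the corresponding threshold; collecting these, her unique best response is quality $3$ if $s< e/(({\sf f}_{3}+e)({\sf f}_{2}+e))$, quality $1$ if $s> e/(({\sf f}_{2}+e)({\sf f}_{1}+e))$, and quality $2$ otherwise. Substituting $e\in\{{\sf f}_{1},{\sf f}_{2},{\sf f}_{3}\}=\{1,2,6\}$ and the two skills, I would obtain the best-response maps $B_{1}=(2,3,3)$ for player $1$ and $B_{2}=(2,2,1)$ for player $2$, where $B_{i}(j)$ denotes $i$'s best reply when the opponent plays quality $j$.

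From these two maps the conclusion is immediate. For the failure of the {\it FIP} I would trace the best-response cycle $(2,1)\to(2,2)\to(3,2)\to(3,1)\to(2,1)$, each arrow being the strict improving deviation dictated by the maps ($B_{2}(2)=2$, then $B_{1}(2)=3$, then $B_{2}(3)=1$, then $B_{1}(1)=2$). For the absence of a pure Nash equilibrium I would check that no profile is a mutual best response: the only candidates are the cells $i=B_{1}(j)$, namely $(2,1),(3,2),(3,3)$, and in each case $B_{2}(i)\neq j$, so none is stable. (Non-existence of a pure Nash equilibrium already forces the failure of the {\it FIP}, but the explicit cycle makes the latter transparent.)

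The main obstacle, and the reason the hypothesis of \emph{arbitrary} players is essential, is precisely that the two skills must be distinct. Indeed, for identical skills the two best-response maps coincide, so a pure Nash equilibrium would be absent only if that single self-map of $\{1,2,3\}$ had neither a fixed point nor a $2$-cycle, i.e. were a $3$-cycle; a short check of the threshold inequalities shows that neither $3$-cycle is realizable for any choice of efforts (the upper and lower bounds on $s$ that the pattern demands are contradictory), so identical players always admit a diagonal equilibrium. Moreover, restricting to any two fixed qualities yields dominant strategies, so a cycle must genuinely engage all three qualities. The crux of the construction is therefore to place $s_{1}$ and $s_{2}$ in different threshold bands, so that the resulting maps $B_{1}$ and $B_{2}$ share no mutual best response; once the bands are identified as above, verifying the handful of inequalities is routine.
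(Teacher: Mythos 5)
Your proof is correct, and it follows the same overall strategy as the paper's: an explicit two-player contest under proportional allocation with two distinct skills, a four-step improvement cycle in which the players alternate between two adjacent qualities each, and an exhaustive check that no profile is a mutual best response. I verified your numbers: the pairwise comparison gives ${\sf U}(q)-{\sf U}(q') = ({\sf f}_q - {\sf f}_{q'})\bigl(e/(({\sf f}_q+e)({\sf f}_{q'}+e)) - s\bigr)$, so with ${\sf f}=(1,2,6)$ your band endpoints evaluate to $(1/21,\,1/6)$, $(1/16,\,1/6)$, $(1/16,\,3/28)$ for $e=1,2,6$; hence indeed $B_1=(2,3,3)$ and $B_2=(2,2,1)$ for $s_1=1/20$, $s_2=1/8$, the cycle $(2,1)\to(2,2)\to(3,2)\to(3,1)\to(2,1)$ consists of strict improvements (e.g.\ the last step is $17/30 > 39/70$), and none of the candidates $(2,1),(3,2),(3,3)$ is stable. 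Where you differ from the paper is in execution rather than idea: the paper works with a family parameterized by $k\geq 2$ (namely $Q=k+1$, ${\sf f}_q=q$, skills $s_1 = 1/(4k-2+\frac{1}{k+1})$, $s_2 = 1/(4k+2+\frac{1}{k+1})$) and verifies each step of its cycle $(k,k+1)\to(k-1,k+1)\to(k-1,k)\to(k,k)\to(k,k+1)$, together with a ladder of improvements covering all other profiles, by direct manipulation of utility differences; you instead first prove a reusable structural lemma — in a two-player proportional-allocation contest, best responses are determined by where the skill sits relative to the thresholds $e/(({\sf f}_q+e)({\sf f}_{q'}+e))$ — and then read everything off the two best-response maps. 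Your route buys a minimal, instantly checkable instance plus two worthwhile side observations absent from the paper: with only two qualities the threshold is the same for both opponent efforts, $1/(2({\sf f}_q+{\sf f}_{q'}))$, so each player has a dominant strategy and no two-player counterexample exists at $Q=2$; and identical skills force the two maps to coincide, where your (correct) computation shows neither $3$-cycle is realizable, explaining why arbitrary players are essential. The paper's route buys a counterexample for every $Q\geq 3$, showing the failure is not an artifact of small $Q$.
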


\begin{proof}
Fix an integer parameter $k \geq 2$. 
Consider the contest game with players 1 and 2
qualities $1, 2, \ldots, Q$
with ${\sf f}_{q} = q$ for each $q \in [Q]$,
where $Q = k + 1$,
and $s_{1} = \frac{\textstyle 1}
                  {\textstyle 4k - 2 + \frac{\textstyle 1}
                                            {\textstyle k+1}}$
and $s_{2} = \frac{\textstyle 1}
                  {\textstyle 4k + 2 + \frac{\textstyle 1}
                                            {\textstyle k+1}}$.
Consider a quality vector $(q_{1}, q_{2})$.
Then,
{\small
\begin{eqnarray*}
      {\sf U}_{1}(q_{1}, q_{2})
& = & \frac{\textstyle q_{1}}
           {\textstyle q_{1} + q_{2}}
      -
      \frac{\textstyle 1}
           {\textstyle 4k-2 + \frac{\textstyle 1}
                                   {\textstyle k+1}}
      q_{1}
\end{eqnarray*}
}
and
{\small
\begin{eqnarray*}
      {\sf U}_{2}(q_{1}, q_{2})
& = & \frac{\textstyle q_{2}}
           {\textstyle q_{1} + q_{2}}
      -
      \frac{\textstyle 1}
           {\textstyle 4k + 2 + \frac{\textstyle 1}
                                   {\textstyle k+1}}
      q_{2}\, .       
\end{eqnarray*}
}
We check that
a best-response cycle is possible.
Consider a unilateral deviation of player $1$ to quality $q_{1}' > q_{1}$.
Then,
{
\small
\begin{eqnarray*}
      {\sf U}_{1}(q_{1}', q_{2})
      -
      {\sf U}_{1}(q_{1}, q_{2})
& = & \frac{\textstyle q_{1}'}
           {\textstyle q_{1}' + q_{2}}
      -
      \frac{\textstyle q_{1}}
           {\textstyle q_{1} + q_{2}}
     -
     (q_{1}' - q_{1})\,
     \frac{\textstyle 1}
           {\textstyle 4k-2 + \frac{\textstyle 1}
                                   {\textstyle k+1}} \\
& = & \frac{\textstyle (q_{1}' - q_{1})q_{2}}
           {\textstyle (q_{1}' + q_{2}) (q_{1} + q_{2})}
      -
      (q_{1}' - q_{1})\,
      \frac{\textstyle 1}
           {\textstyle 4k-2 + \frac{\textstyle 1}
                                   {\textstyle k+1}} \\
& = & (q_{1}' - q_{1})
      \left( \frac{\textstyle q_{2}}
                  {\textstyle (q_{1}' + q_{2}) (q_{1} + q_{2})}
             -
             \frac{\textstyle 1}
                  {\textstyle 4k-2 + \frac{\textstyle 1}
                                          {\textstyle k+1}}
      \right)\, .                                                                                                                           
\end{eqnarray*}     
}
Similarly,
for a unilateral deviation of player 2
to quality $q_{2}'$,
{
\small
\begin{eqnarray*}
      {\sf U}_{2}(q_{1}, q_{2}')
      -
      {\sf U}_{1}(q_{1}, q_{2})
& = & (q_{2}' - q_{2})
      \left( \frac{\textstyle q_{1}}
                  {\textstyle (q_{1} + q_{2}') (q_{1} + q_{2})}
             -
             \frac{\textstyle 1}
                  {\textstyle 4k+2 + \frac{\textstyle 1}
                                          {\textstyle k+1}}
      \right)\, . 
\end{eqnarray*}     
}
Consider the sequence of deviations
$(1, 1) \leadsto
 (1, 2) \leadsto
 (2, 2) \leadsto
 \ldots
 \leadsto
 (k-1, k) \leadsto
 (k, k) \leadsto
 (k, k+1) 
 $,
where players $2$ and $1$
alternate in taking steps.
We prove that these steps
are improvements:
\begin{itemize}

\item
Consider first the step
$(\kappa, \kappa) \leadsto
 (\kappa, \kappa + 1)$,
taken by player $2$,
where $1 \leq \kappa \leq k$.
Then,
{
\small
\begin{eqnarray*}
      {\sf U}_{2}(\kappa, \kappa + 1)
      -
      {\sf U}_{2}(\kappa, \kappa)
& = & 	\frac{\textstyle \kappa}
            {\textstyle (\kappa + (\kappa + 1))(\kappa + \kappa)}
      -
      \frac{\textstyle 1}
           {\textstyle 4k+2 + \frac{\textstyle 1}
                                   {\textstyle k+1}} \\
& = & \frac{\textstyle 1}
           {\textstyle 2(2\kappa + 1)}
      -
      \frac{\textstyle 1}
           {\textstyle 2 (2k+1) + \frac{\textstyle 1}
                                       {\textstyle k+1}} \\
& \geq & \frac{\textstyle 1}
           {\textstyle 2(2k + 1)}
      -
      \frac{\textstyle 1}
           {\textstyle 2 (2k+1) + \frac{\textstyle 1}
                                       {\textstyle k+1}} \\
& > & 0\, .          
\end{eqnarray*}
} 
So the step 
$(\kappa, \kappa) \leadsto
 (\kappa, \kappa + 1)$ is an improvement for player 2.

\item
Consider now the step
$(\kappa-1, \kappa) \leadsto
 (\kappa, \kappa)$,
taken by player $1$,
where $1 \leq \kappa \leq k$.
Then,
{
\small
\begin{eqnarray*}
      {\sf U}_{1}(\kappa, \kappa)
      -
      {\sf U}_{1}(\kappa - 1, \kappa)
& = & 	\frac{\textstyle \kappa}
            {\textstyle (\kappa + \kappa)((\kappa-1 + \kappa)}
      -
      \frac{\textstyle 1}
           {\textstyle 2 (2k-1) + \frac{\textstyle 1}
                                       {\textstyle k+1}} \\
& = & \frac{\textstyle 1}
           {\textstyle 2(2\kappa - 1)}
      -
      \frac{\textstyle 1}
           {\textstyle 2 (2k-1) + \frac{\textstyle 1}
                                       {\textstyle k+1}} \\
& \geq & \frac{\textstyle 1}
           {\textstyle 2(2k - 1)}
      -
      \frac{\textstyle 1}
           {\textstyle 2 (2k-1) + \frac{\textstyle 1}
                                       {\textstyle k+1}} \\
& > & 0\, .          
\end{eqnarray*}
} 
So the step 
$(\kappa - 1, \kappa) \leadsto
 (\kappa, \kappa + 1)$ is an improvement for player 1.

\end{itemize} 
So a unilateral deviation
to the immediately higher quality by a player is an improvement.
We can similarly prove that a unilateral deviation
to a higher quality by either player is an improvement.
In particular, 
no quality vector $(q_{1}, q_{2})$
with $q_{1} \leq k$ and $q_{2} \leq k+1$
is a pure Nash equilibrium.
We will prove that
there is an improvement cycle
starting with the quality vector $(k, k+1)$.
\begin{itemize}

\item
Consider first the unilateral deviation 
$(k, k+1) \leadsto (k-1, k+1)$
by player $1$
to quality $k-1$.
Then,
{  
\begin{eqnarray*}
&   & {\sf U}_{1}(k-1, k+1)
      -
      {\sf U}_{1}(k, k+1) \\
& = & - 
      \left(
      \frac{\textstyle k+1}
           {\textstyle ((k-1) + (k+1)) (k + k+1)}
      -
     \frac{\textstyle 1}
           {\textstyle 2(2k-1) + \frac{\textstyle 1}
                                   {\textstyle k+1}}
     \right) \\
& = & - \frac{\textstyle k+1}
           {\textstyle 2k (2k+1)}
      +
      \frac{\textstyle 1}
           {\textstyle 2(2k-1) + \frac{\textstyle 1}
                                      {\textstyle k+1}}\, .                                                       
\end{eqnarray*}     
}
Thus,
${\sf U}_{1}(k-1, k+1) > {\sf U}_{1}(k, k+1) > 0$
if and only if
{
\small
\begin{eqnarray*}
      (k+1) \left[ 2 (2k-1) + \frac{\textstyle 1}
                              {\textstyle k+1}
            \right]
& < & 2k (2k+1)
\end{eqnarray*}
}
or
{
\small
\begin{eqnarray*}
      2 (k+1) (2k-1) + 1
& < & 2k (2k+1)
\end{eqnarray*}
}
which is verified directly.
Hence,
the unilateral deviation
$(k, k+1) \leadsto (k-1, k+1)$
by player $1$
is an improvement.

\item
Consider now the unilateral deviation 
$(k-1, k+1) \leadsto (k-1, k)$
by player $2$
to quality $k$.
Then,
{  
\begin{eqnarray*}
&   & {\sf U}_{2}(k-1, k)
      -
      {\sf U}_{2}(k-1, k+1) \\
& = & - 
      \left(
      \frac{\textstyle k-1}
           {\textstyle ((k-1) + (k+1)) (k + (k-1))}
      -
     \frac{\textstyle 1}
           {\textstyle 2(2k+1) + \frac{\textstyle 1}
                                   {\textstyle k+1}}
     \right) \\
& = & - \frac{\textstyle k-1}
           {\textstyle 2k (2k-1)}
      +
      \frac{\textstyle 1}
           {\textstyle 2(2k+1) + \frac{\textstyle 1}
                                      {\textstyle k+1}}\, .                                                       
\end{eqnarray*}     
}
Thus,
${\sf U}_{2}(k-1, k) > {\sf U}_{2}(k-1, k+1) > 0$
if and only if
{
\small
\begin{eqnarray*}
      (k-1) \left[ 2 (2k+1) + \frac{\textstyle 1}
                                   {\textstyle k+1}
            \right]
& < & 2k (2k-1)
\end{eqnarray*}
}
or
{
\small
\begin{eqnarray*}
      2 (k-1) (2k+1) + \frac{\textstyle k-1}
                            {\textstyle k+1}
& < & 2k (2k-1)
\end{eqnarray*}
}
which is verified directly.
Hence,
the unilateral deviation
$(k-1, k+1) \leadsto (k-1, k)$
by player $2$
is an improvement.

\item
Now the unilateral deviation $(k-1, k) \leadsto (k, k)$ by player 1
is an improvement
as it is a deviation from a lower quality
to a higher.
The unilateral deviation $(k, k) \leadsto (k, k+1)$ by player 1
is an improvement for the same reason.
Thus,
we get the improvement cycle
$(k, k+1) \leadsto
 (k-1, k+1) \leadsto
 (k-1, k) \leadsto
 (k, k) \leadsto
 (k, k+1)$.

\end{itemize}
Finally, 
note that $(k+1, k+1)$
is not a pure Nash equilibrium
since the unilateral deviation of player $1$ to strategy $k$
is an improvement:
\begin{eqnarray*}
      {\sf U}_{1}(k, k+1) -
      {\sf U}_{1}(k+1, k+1)
& = & -
      \left( \frac{\textstyle k+1}
                  {\textstyle (k + k+1) 2\, (k+1)}
             -
             \frac{\textstyle 1}
                  {\textstyle 2(2k-1) + \frac{\textstyle 1}
                                             {\textstyle k+1}} 
      \right) \\
& = & -
      \left( \frac{\textstyle 1}
                  {\textstyle 2\, (2k+1)}
             -
             \frac{\textstyle 1}
                  {\textstyle 2(2k-1) + \frac{\textstyle 1}
                                             {\textstyle k+1}} 
      \right) \\
& = & \frac{\textstyle 1}
           {\textstyle 2(2k-1) + \frac{\textstyle 1}
                                             {\textstyle k+1}}
      -
      \frac{\textstyle 1}
                  {\textstyle 2\, (2k+1)} \\
& > & 0\, ,                                                                                                                                                                                                                                                                                 
\end{eqnarray*}
since
$2(2k-1) + \frac{\textstyle 1}
                {\textstyle k+1}
 <
 2 (2k+1)$.
The claim follows.
\end{proof}

\begin{framed}
\begin{open problem}
Determine the precise complexity
of {\sc $\exists$PNE with Player-Invariant Payments}
and {\sc $\exists$PNE with Proportional Allocation and Arbitrary Players}.
We are tempted to conjecture that
both are ${\mathcal{NP}}$-complete.	
\end{open problem}	
\end{framed}

\noindent
We now turn to player-specific payments.
We show:

\begin{proposition}
\label{counter example}
There is a contest game
with player-specific payments 
and anonymous players
that has neither the {\it FIP}
nor a pure Nash equilibrium.
\end{proposition}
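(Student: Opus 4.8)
The plan is to exploit the extra freedom of player-specific payments to embed a \emph{Matching Pennies} game, which famously has no pure Nash equilibrium, into an anonymous contest game. I would take $n=2$ players, $Q=2$ qualities, skills $s_1=s_2=1$, the product skill-effort function $\Lambda(1,{\sf f}_q)={\sf f}_q$, and efforts ${\sf f}_1=1/4$, ${\sf f}_2=1/2$, so that the two cost terms are $c_1:=\Lambda(1,{\sf f}_1)=1/4$ and $c_2:=\Lambda(1,{\sf f}_2)=1/2$. The design goal is to engineer the payments so that player $1$ strictly prefers to \emph{match} player $2$'s quality while player $2$ strictly prefers to \emph{mismatch}; this classical incentive structure forces a best-response $4$-cycle through $(1,1),(1,2),(2,2),(2,1)$ with no sink, and hence no equilibrium.

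Concretely, I would set ${\sf P}_1(1,1)={\sf P}_1(2,1)={\sf P}_1(2,2)=1/2$ and ${\sf P}_1(1,2)=0$, and ${\sf P}_2(1,2)=1/2$ with ${\sf P}_2(1,1)={\sf P}_2(2,1)={\sf P}_2(2,2)=0$. These payments are genuinely player-specific (and thus escape Theorem~\ref{pure existence}): on $(2,2)$ both players choose quality $2$ yet ${\sf P}_1=1/2\neq 0={\sf P}_2$. Using ${\sf U}_i({\bf q})={\sf P}_i({\bf q})-c_{q_i}$, the utilities on $(1,1),(1,2),(2,1),(2,2)$ are ${\sf U}_1=1/4,-1/4,0,0$ and ${\sf U}_2=-1/4,0,-1/4,-1/2$. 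Reading off best responses confirms that player $1$ matches and player $2$ mismatches, yielding the improvement cycle $(1,1)\leadsto(1,2)\leadsto(2,2)\leadsto(2,1)\leadsto(1,1)$, where each arrow is a strict unilateral improvement for the deviating player. Since all four profiles therefore admit a profitable deviation, none is a pure Nash equilibrium, and the cycle witnesses the absence of the {\it FIP}.

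The single point requiring care is the interplay between the required payment gaps and the normalization condition $\sum_k{\sf P}_k({\bf q})\le 1$ with ${\sf P}_i\ge 0$. Making player $1$ want to match and player $2$ want to mismatch forces two payment differences (namely ${\sf P}_1(2,2)-{\sf P}_1(1,2)$ and ${\sf P}_2(1,2)-{\sf P}_2(1,1)$) to strictly exceed the cost gap $c_2-c_1$; but differences of non-negative payments bounded by the unit budget cannot exceed $1$. This is exactly why the efforts must be chosen with $c_2-c_1<1$ (here $c_2-c_1=1/4$), leaving room for the gaps of size $1/2$ used above. Once this compatibility is arranged, the rest is routine: verifying the four best-response inequalities, checking that each profile's payments sum to $1/2\le 1$, and confirming that every profile has a profitable deviation.
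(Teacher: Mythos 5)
Your proposal is correct and follows essentially the same route as the paper: both embed \emph{Matching Pennies} via player-specific payments for two anonymous players with two qualities, then exhibit the four-profile best-response cycle $(1,1)\leadsto(1,2)\leadsto(2,2)\leadsto(2,1)\leadsto(1,1)$ that rules out both the {\it FIP} and a pure Nash equilibrium. If anything, your version is tighter: your payments satisfy the model's normalization condition $\sum_{k \in [n]} {\sf P}_{k}({\bf q}) \leq 1$, whereas the paper's own payment values ($10^{3}$ and $10$) do not, a point your third paragraph handles explicitly by keeping the cost gap ${\sf f}_{2}-{\sf f}_{1}$ below the feasible payment gaps.
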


\begin{proof}
Consider
the contest game
with two players $1$ and $2$,
and two qualities $1$ and $2$
with ${\sf f}_{1} = 1$ and
${\sf f}_{2} = 2$.
Assume a skill-effort function
${\sf \Lambda}(1, {\sf f}_{q})
 = {\sf f}_{q}$
for all qualities $q \in [Q]$;
so
${\sf \Lambda}(1, {\sf f}_{1}) = 1$ and
${\sf \Lambda}(1, {\sf f}_{2}) = 2$. 
Similarly to {\it Matching Pennies,} 
player 1 has big payment when alone on a quality, 
else very small,
and player 2 has big payment when not alone,
else very small.
Formally,
define
${\sf P}_{1}(1, 1)
 =
 {\sf P}_{1}(2, 2)
 = 10^{3}$
${\sf P}_{1}(1,2)
 =
 {\sf P}_{1}(2,1)
 =
 10$,
${\sf P}_{2}(1,2)
 =
 {\sf P}_{2}(2,1)
 = 10^{3}$	
and
${\sf P}_{2}(1, 1)
 =
 {\sf P}_{2}(2,2)
 = 10$.
We check that there is no pure Nash equilibrium:
\begin{itemize}

\item
If player 1 chooses 1,
then player 2 gets utility
$10^3 - 1$ when choosing 1,
and $10 - 2 = 8$
when choosing 2.
So player 2 chooses 2.

\item
If player 1 chooses 2,
then player 2 gets utility 
$10^{3} - 1$ when choosing 1,
and $10 - 1 = 9$
when choosing 2.
So player 2 chooses 1.

\item
If player 2 chooses 1,
then player 1 gets utility
$10 - 1 = 9$
when choosing 1,
and $10^{3} - 2$
when choosing 2.
So player 1 chooses 2.

\item
If player 2 chooses 2,
then player 1 gets utility
$10^{3} - 1$
when choosing 1,
and $10 - 2 = 8$
when choosing 2.
So player 1 chooses 1.

\end{itemize}
Now note that the best-responses form the cycle
$\langle 1, 2 \rangle \leadsto
 \langle 1, 1 \rangle \leadsto
 \langle 2, 1 \rangle \leadsto
 \langle 2, 2 \rangle \leadsto
 \langle 1, 2 \rangle$,
while quality vectors outside the cycle
are not Nash equilibria. 
Hence, there is no pure Nash equilibrium.
\end{proof}

\noindent
We continue to show:

\begin{theorem}
\label{np completeness}
{\sc $\exists$PNE with Player-Specific Payments}
is ${\cal NP}$-complete,
even if players are anonymous.
\end{theorem}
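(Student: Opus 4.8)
The plan is to establish both membership in $\mathcal{NP}$ and $\mathcal{NP}$-hardness. Membership is routine: given a quality vector ${\bf q}$ as a certificate, we can verify in polynomial time that it is a pure Nash equilibrium by checking, for each player $i \in [n]$ and each alternative quality $q \in [Q]$, that ${\sf U}_{i}({\bf q}) \geq {\sf U}_{i}(q, {\bf q}_{-i})$; this requires only $O(nQ)$ payment evaluations, each polynomial-time computable since ${\sf P}_{i}$ and ${\sf \Lambda}$ are. So the crux is hardness.

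For $\mathcal{NP}$-hardness I would reduce from the problem of deciding the existence of a pure Nash equilibrium in a succinctly represented strategic game, which is $\mathcal{NP}$-complete by~\cite[Theorem 2.4.1]{S74} as already flagged in the excerpt. The key observation is that a contest game with player-specific payments is extremely expressive: since the payment function ${\sf P}_{i}(i, {\bf q})$ may depend arbitrarily on the identity of the player and on the full quality vector, and since the skill-effort term ${\sf \Lambda}(s_{i}, {\sf f}_{q_{i}})$ is a fixed additive correction depending only on $i$ and $q_{i}$, the utility ${\sf U}_{i}({\bf q}) = {\sf P}_{i}(i, {\bf q}) - {\sf \Lambda}(s_{i}, {\sf f}_{q_{i}})$ can be made to realize an essentially arbitrary payoff table. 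First I would take an arbitrary succinctly represented game on $n$ players each with strategy set $[Q]$ (identifying quality $q$ with strategy $q$) and target utility ${\sf u}_{i}({\bf q})$. I then simply define ${\sf P}_{i}(i, {\bf q}) := {\sf u}_{i}({\bf q}) + {\sf \Lambda}(s_{i}, {\sf f}_{q_{i}})$, so that ${\sf U}_{i}({\bf q}) = {\sf u}_{i}({\bf q})$ exactly, making pure Nash equilibria coincide. Anonymity is imposed for free by setting $s_{i} = 1$ for all $i$, which the statement explicitly allows — the player-specific payment already carries all the player-dependence, so no heterogeneity of skills is needed.

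The only real technical wrinkle is the normalization condition $\sum_{k \in [n]} {\sf P}_{k}({\bf q}) \leq 1$, which the straightforward construction above need not satisfy since the ${\sf u}_{i}$ could be large, negative, or sum above $1$. The fix is an affine rescaling of the target game's utilities: since equilibria are invariant under a common positive scaling and under per-player additive shifts, I would first normalize the given payoffs so that each ${\sf P}_{i}(i, {\bf q}) \geq 0$ and the per-profile sum stays within $[0,1]$, choosing the scaling factor and offsets from the (polynomially bounded) magnitude of the succinct payoffs. One must also pick concrete efforts $0 < {\sf f}_{1} < \cdots < {\sf f}_{Q}$ and a valid ${\sf \Lambda}$ (for instance ${\sf \Lambda}(1, {\sf f}_{q}) = {\sf f}_{q}$) and absorb these fixed nonnegative costs into the additive offset so that the resulting ${\sf P}_{i}$ are still nonnegative and poly-time computable from the succinct representation.

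I expect the main obstacle to be exactly this normalization bookkeeping: verifying that after rescaling and shifting, the payment function both respects $\sum_{k} {\sf P}_{k}({\bf q}) \leq 1$ for every ${\bf q}$ simultaneously and remains efficiently evaluable from the succinct input, while the induced utilities still differ from the original ${\sf u}_{i}$ only by a player-specific positive scaling and a player-specific additive constant — the two transformations under which the pure-equilibrium set provably does not change. Once that invariance is argued cleanly, the reduction is polynomial-time and preserves the yes/no answer, giving $\mathcal{NP}$-hardness, and combined with membership it yields $\mathcal{NP}$-completeness even in the anonymous case. Proposition~\ref{counter example} already exhibits a concrete anonymous player-specific instance with no equilibrium, confirming that the construction genuinely inhabits the regime where existence can fail.
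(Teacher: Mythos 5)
Your reduction is exactly the paper's: it reduces from pure-equilibrium existence in a succinctly represented strategic game via~\cite[Theorem 2.4.1]{S74}, identifies qualities with strategies ($Q = m$), sets $s_{i} = 1$ for anonymity, and defines ${\sf P}_{i}(i, {\bf q}) := {\sf F}_{i}(i, {\bf s}) + \Lambda(s_{i}, {\sf f}_{q_{i}})$ so that ${\sf U}_{i}({\bf q}) = {\sf F}_{i}(i, {\bf s})$ exactly, with the same guess-and-verify argument for membership in ${\cal NP}$. The one place you go beyond the paper is the normalization bookkeeping: the published one-line construction silently ignores the condition $\sum_{k \in [n]} {\sf P}_{k}({\bf q}) \leq 1$ (and nonnegativity), whereas your fix --- a common positive scaling by roughly $1/(2Mn)$ with $M$ a polynomially-representable bound on the succinct payoffs, per-player constant shifts, and efforts chosen small enough to absorb the $\Lambda$ term, all of which provably preserve the pure-equilibrium set --- is sound, so your write-up is if anything more complete than the paper's own proof.
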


\begin{proof}
{\sc $\exists$PNE with Player-Specific Payments}
$ \in {\cal NP}$
since one
can guess a quality vector
and verify the conditions
for a pure Nash equilibrium.
To prove ${\cal NP}$-hardness,
we reduce from the ${\cal NP}$-complete problem
of deciding the existence of a pure Nash equilibrium 
in a (finite) succinctly represented
strategic game~\cite[Theorem 2.4.1]{S74}.
So consider such a game
with $n$ players, $m$ strategies and
payoff functions
$\{ {\sf F}_{i} \}_{i \in [n]}$ 
represented by a poly-time algorithm
computing, 
for a pair
of a profile ${\bf s}$ and a player $i \in [n]$,
the payoff ${\sf F}(i, {\bf s})$ of player $i$ in ${\bf s}$.
Construct a contest game with
$n$ players, $Q=m$,
so that the quality vectors 
coincide with pure profiles
of the strategic game.
Define the payment function
as ${\sf P}_{i}(i, {\bf q})
    =
    {\sf F}_{i}(i, {\bf s}) + \Lambda (s_{i}, {\sf f}_{q})$ 
for a player $i$ and a strategy vector ${\bf q}$; 
thus,
${\sf U}_{i}({\bf q}) = {\sf F}_{i}(i, {\bf s})$.
${\cal NP}$-hardness follows.
\end{proof}

\section{
Proportional Allocation
}
\label{voluntary participation proportional allocation}

\subsection{
Anonymous Players
}

\noindent
We show:    

\begin{theorem}
\label{sto paderborn}
The contest game
with proportional allocation,
voluntary participation
and anonymous players
has the {\it FIP}
and two 
pure Nash equilibria.
\end{theorem}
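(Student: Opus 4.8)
The plan is to derive everything from the \emph{No Switch from Lower Quality to Higher Quality} Lemma (Lemma~\ref{kyriakatiko}): once we know that every improvement step moves the deviating player to a strictly lower quality, both the \emph{FIP} and the count of equilibria fall out quickly. So I would first establish that lemma, then read off the \emph{FIP} from a monotone integer potential, and finally identify the sinks of the quality improvement graph.

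For the lemma, fix a quality vector ${\bf q}$ and a player $i$, and abbreviate by $T = \sum_{k \neq i}{\sf f}_{q_{k}}$ the total effort of the other players, which $i$'s move does not affect. Writing $a = {\sf f}_{q_{i}}$ and $b = {\sf f}_{\widehat q}$ for a contemplated deviation to $\widehat q$, the change in $i$'s utility is $\frac{b}{T+b} - \frac{a}{T+a} - \bigl(\Lambda(1,b)-\Lambda(1,a)\bigr)$. The proportional share $x \mapsto \frac{x}{T+x}$ is increasing and concave in the own effort $x$, so for an \emph{upward} deviation ($b > a$) with $T>0$ the payment gain equals exactly $\frac{(b-a)\,T}{(T+a)(T+b)}$, while the corner $T=0$ (all other players opting out) is governed by the $0/0$ convention and must be treated on its own. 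I would then show that this gain is at most the cost increment $\Lambda(1,b)-\Lambda(1,a)$, so that no upward deviation is ever profitable; equivalently, for every fixed environment $T$ the map $q \mapsto \frac{{\sf f}_{q}}{T+{\sf f}_{q}} - \Lambda(1,{\sf f}_{q})$ is non-increasing in $q$. I expect this inequality to be \textbf{the main obstacle}: it is exactly where the concavity of the share meets the cost structure, and the degenerate "lone participant" case $T=0$ (where the gain is $1$ and the cheapest real review's cost must absorb it) has to be checked separately.

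Granting the lemma, the \emph{FIP} is immediate. Define $\Psi({\bf q}) = \sum_{i \in [n]} q_{i}$, an integer lying in $[n,\,nQ]$. By Lemma~\ref{kyriakatiko} every improvement step lowers the moving player's quality by at least one and leaves the others fixed, so $\Psi$ strictly decreases by at least $1$ at each step. Hence every improvement path has length at most $n(Q-1)$ and is finite, the \emph{FIP} holds, and by the equivalence recalled in the preliminaries (a game has the \emph{FIP} iff it admits a generalized ordinal potential) a pure Nash equilibrium exists as a sink of the quality improvement graph.

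It remains to pin down exactly two equilibria. Because, by the lemma, an upward move is never improving, a quality vector is a pure Nash equilibrium precisely when no player has a profitable \emph{downward} move. The all-opt-out vector $\langle 1,\ldots,1\rangle$ has no downward move at all and, by the lemma applied at $T=0$, no profitable upward move either, so it is always one equilibrium. For the second, I would characterize all sinks that have at least one participating player: passing to the load profile $({\sf N}_{{\bf q}}(1),\ldots,{\sf N}_{{\bf q}}(Q))$ (legitimate since players are anonymous) and using monotonicity of the proportional share in the loads, I expect the downward-only dynamics to collapse every participating configuration to a single stable load profile, which yields the second equilibrium and shows there are no others. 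Separating the two equilibria and proving uniqueness of the active one is the second place where the detailed monotone structure of proportional allocation, together with the effort/cost condition used in the lemma, is needed; the mandatory-participation variant (Theorem~\ref{wow paderborn}) then follows by the same argument with the opt-out equilibrium deleted, leaving a single one.
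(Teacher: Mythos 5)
Your architecture is the paper's: establish Lemma~\ref{kyriakatiko}, deduce the \emph{FIP} from the fact that every improvement step strictly lowers the deviator's quality (your integer potential $\Psi({\bf q})=\sum_{i} q_{i}$ is a clean packaging of the paper's acyclicity claim and that part is fine), then count the sinks. But there are two genuine gaps, and they sit exactly where the paper does its actual work. First, you defer the entire content of the lemma: you reduce it to the claim that $q \mapsto \frac{{\sf f}_{q}}{T+{\sf f}_{q}} - \Lambda(1,{\sf f}_{q})$ is non-increasing in $q$ for every fixed environment $T$, and say you ``would then show'' it. That inequality is where everything lives, and it is not a routine consequence of concavity: for an upward move from effort $a$ to effort $b>a$ with product costs it amounts to $(T+a)(T+b)\ge T$, i.e.\ to $T+b\ge 1$ when $a={\sf f}_{1}=0$, which fails for small efforts. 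Concretely, with $T=1/2$, $a=0$, $b=1/10$, the payment gain $\frac{(b-a)T}{(T+a)(T+b)}=1/6$ exceeds the cost increment $1/10$, so entering at a low quality \emph{is} an improvement. Your plan therefore cannot go through as stated without the quantitative lower bounds on efforts that the paper's own Case~2 computation (the case $\widetilde{q}=1$, where a denominator can degenerate under the $0/0$ convention) invokes; flagging $T=0$ as the only delicate corner understates where the difficulty is.

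Second, and more structurally, you aim only at the weak direction --- upward moves never improve --- whereas the paper proves the strict direction: every available \emph{downward} switch is a strict improvement, with the single exception of the boundary profile $(x,y)=(n-1,1)$. The weak direction suffices for the \emph{FIP} via your $\Psi$, but not for the count of equilibria. With only ``no upward move improves,'' a sink is any profile in which no downward move is profitable, and nothing you establish rules out many such profiles; your ``I expect the downward-only dynamics to collapse every participating configuration to a single stable load profile'' is precisely the unproved content of the theorem. The paper gets the count immediately from the strict statement: any profile other than all-players-at-$1$ and ($n-1$ players at quality $1$, one player at quality $2$) admits a strictly improving downward step --- in particular a lone participant at a quality $q>2$ strictly gains by sliding down to quality $2$, which is the paper's Case~1 (moves between two positive-effort qualities, needing no effort bounds), so no such profile can be a sink. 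You also never identify the active equilibrium concretely; the paper pins it down as ${\sf N}_{{\bf q}}(1)=n-1$, ${\sf N}_{{\bf q}}(2)=1$, with the all-at-$1$ vector the second, improvement-unreachable, sink. To complete your route you would have to upgrade your deferred inequality to the strict reverse inequality for downward deviations and treat the $(n-1,1)$ exception explicitly.
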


\begin{proof}
It suffices to prove that there is no cycle
in the quality improvement graph.
Recall that voluntary participation means
${\sf f}_{1} = 0$.
We prove that improvement is possible
only if,
subject to an exception,
the deviating player is switching
from a higher quality to a lower quality:

\begin{lemma}[
No Switch from Lower Quality to Higher Quality
]
\label{kyriakatiko}
Fix a quality vector ${\bf q}$
and two distinct qualities
$\widetilde{q}, \widehat{q} \in [Q]$
with $\widetilde{q} < \widehat{q}$.
In an improvement step of a player out of ${\bf q}$,
${\sf N}_{{\bf q}}(\widetilde{q})$ increases and
and ${\sf N}_{{\bf q}}(\widehat{q})$ decreases.
\end{lemma}

\begin{proof}
\noindent
Denote ${\sf f}_{\widetilde{q}} = \beta$, 
${\sf f}_{\widehat{q}} = \gamma > \beta$,
$\chi =
 \sum_{q \in [Q] \setminus \{ \widetilde{q}, \widehat{q} \} }
  {\sf N}_{{\bf q}}(q) \geq 0$
and
${\sf A} =
 \sum_{q \in [Q] \setminus \{ \widetilde{q}, \widehat{q} \} }
  {\sf N}_{{\bf q}}(q)\,
  {\sf f}_{q} \geq 0$.
Denote the loads
on qualities $\widetilde{q}$ and $\widehat{q}$
as $x$ and $y$, respectively;
thus,
$y = n - \chi - x$
We shall abuse notation to denote 
the quality vector ${\bf q}$
as $(x, y)$.
\begin{itemize}

\item[{\sf (D1)}]
A deviation of a player from $\widehat{q}$
to $\widetilde{q}$
will be depicted as
\textcolor{red}{\Shortstack{$(x, y)$ \\ 
                             \rotatebox{315}{$\leadsto$} \\ 
                            $(x+1, y-1)$}}
with $x \geq 0$ and $y \geq 1$,
so as to guarantee the existence of at least one player
$i \in {\sf Players}_{{\bf q}}(\widehat{q})$.
Call such a deviation
{\it rightward\&downward}.

\item[{\sf (D2)}]
A deviation of a player
from $\widetilde{q}$ to $\widehat{q}$
will be depicted as
\textcolor{red}{\Shortstack{$(x-1, y+1)$ \\ 
                             \rotatebox{135}{$\leadsto$} \\ 
                            $(x, y)$}}
with $y \geq 0$ and $x \geq 1$,
so as to guarantee the existence of at least one player
$i \in {\sf Players}_{{\bf q}}(\widetilde{q})$.
Call such a deviation
{\it leftward\&upward}.

\end{itemize}
\noindent
Note that a rightward\&downward deviation
is an improvement for the deviating player
if and only if
the reverse leftward\&upward improvement step
is not an improvement for her.
We shall prove that a rightward\&downward deviation
\textcolor{red}{\Shortstack{$(x, y)$ \\ 
                             \rotatebox{315}{$\leadsto$} \\ 
                            $(x+1, y-1)$,}}
with $x \geq 0$ and $y \geq 1$,
is an improvement
unless $(x, y) = (n-1, 1)$.
Consider a player $i \in {\sf Players}_{(x,y)}(\widehat{q})$.
We proceed by case analysis.

\begin{enumerate}

\item
Assume first that $\widetilde{q} \neq 1$,
so that ${\sf f}_{\widetilde{q}} > 0$,
implying that ${\sf f}_{\widehat{q}} > 0$ as well.
So, in this case,
denominators in proportional allocation fractions 
are always strictly positive;
as we shall see in the analysis
for the case
$\widetilde{q}$, 
this is a crucial property. 
We have that
{
\small
\begin{eqnarray*}
      {\sf U}_{i}((x,y))
& = & \frac{\textstyle \gamma}
           {\textstyle {\sf A} +
                       x \beta +
                       (n- \chi - x) \gamma} - \gamma 
\end{eqnarray*}
}
and
{
\small
\begin{eqnarray*}
      {\sf U}_{i}((x+1,y-1))
& = & \frac{\textstyle \beta}
           {\textstyle {\sf A} +
                       (x+1) \beta +
                       (n - \chi - x - 1) \cdot \gamma} - \beta\, .
\end{eqnarray*}
}
\textcolor{red}{\Shortstack{$(x, y)$ \\ 
                             \rotatebox{315}{$\leadsto$} \\ 
                            $(x+1, y-1)$}}
is an improvement when
${\sf U}_{i}((x+1, y-1))
 >
 {\sf U}_{i}((x,y))$, or
\begin{eqnarray*}
      \frac{\textstyle \beta}
           {\textstyle {\sf A} + (x+1) \beta + (n- \chi-x-1) \gamma} - \beta
& > & \frac{\textstyle \gamma}
           {\textstyle {\sf A} + x \beta + (n- \chi - x) \gamma} - \gamma\, ,
\end{eqnarray*} 
or
\begin{eqnarray*}
      - \beta\,
      \frac{\textstyle {\sf A} + x \beta + (n-\chi -x-1) \gamma}
             {\textstyle {\sf A} + (x + 1) \beta + (n-\chi -x-1) \gamma}
& > & - \gamma\,
      \frac{\textstyle {\sf A} + (x-1) \beta + (n-\chi-x) \gamma}
           {\textstyle {\sf A} + x \beta + (n-\chi-x) \gamma}\, .
\end{eqnarray*}
Since both denominators are strictly positive
for every quality vector $(x,y)$,
the last is equivalent to
{
\small
\begin{eqnarray*}
&   & \beta\,
      [ {\sf A} + x \beta + (n-\chi-x) \gamma - \gamma]
      [ {\sf A} + x \beta + (n- \chi - x) \gamma ] \\
& < & \gamma\,
      [ {\sf A} + (x-1) \beta + (n-x-y) \gamma ]
      [ {\sf A} + (x+1) \beta + (n-\chi -x-1) \gamma ]  	
\end{eqnarray*}
}
or
{
\small
\begin{eqnarray*}
&   & \beta\,
      [ {\sf A} + x \beta + (n-\chi -x) \gamma ]^{2}
      -
      \beta \gamma
      [ {\sf A} + x \beta + (n-\chi-x) \gamma ]      \\
& < & 	\gamma\,
       [ {\sf A} + x \beta + (n-\chi - x) \gamma - 1]
       [ {\sf A} + x \beta + (n-\chi -x) \gamma + \beta - \gamma ] \\
& = & \gamma
      \left( [ {\sf A} + x \beta + (n-\chi - x) \gamma ]^{2}      
             -
             [ {\sf A} + x \beta + (n- \chi - x) \gamma ]
             +
             (\beta - \gamma)
             [ {\sf A} + x \beta + (n- \chi - x) \gamma ]
             -
             (\beta - \gamma) 
      \right)                            \\
& = & \gamma
      \left( [ {\sf A} + x \beta + (n- \chi - x) \gamma ]^{2}
             -
             \gamma
             [ {\sf A} + x \beta + (n- \chi -x) \gamma ]
             +
             (\gamma - \beta)
      \right)                           \\
& = & \gamma
      [ {\sf A} + x \beta + (n- \chi -x) \gamma ]^{2}
      -
      \gamma^{2}
      [ {\sf A} + x \beta + (n- \chi -x) \gamma ]
      +
      \gamma (\gamma - \beta) 
\end{eqnarray*}
}
or
{
\small
\begin{eqnarray*}
      (\gamma - \beta)
      [{\sf A} + x \beta + (n-\chi - x) \gamma]^{2}
      -
      \gamma (\gamma  - \beta )
      [{\sf A} + x \beta + (n-\chi -x) \gamma ]
      +
      \gamma (\gamma - \beta)
& > & 0\, .   	
\end{eqnarray*}
}
Since $\gamma > \beta$,
the last inequality is equivalent to
{
\small
\begin{eqnarray*}
      [{\sf A} + x \beta + (n- \chi -x) \gamma ]^{2}
      -
      \gamma
      [{\sf A} + x \beta + (n- \chi - x) \gamma ]
      +
      \gamma
& > & 0\,    	
\end{eqnarray*}
}
or 
{
\small
\begin{eqnarray*}
      [{\sf A} + x \beta + (n-\chi - x) \gamma ]^{2}
& > & \gamma
      \left( [{\sf A} + x \beta + (n- \chi - x) \gamma ]
             -
             \beta
      \right)\, .
\end{eqnarray*}
}
Since $n- \chi - x \geq 1$,
it follows that
${\sf A} + x \beta + (n- \chi - x) \gamma \geq \gamma$,
which implies
\begin{eqnarray*}
[{\sf A} + x \beta +(n- \chi - x) \gamma]^2 
& \geq & \gamma [{\sf A} + x \beta + (n- \chi- x) \gamma ] \\
& > &    \gamma [{\sf A} + x \beta + (n-\chi - x) \gamma - \beta ]\, ,
\end{eqnarray*}                     
since $\beta > 0$.
It follows that
\textcolor{red}{\Shortstack{$(x, y)$ \\ 
                             \rotatebox{315}{$\leadsto$} \\ 
                            $(x+1, y-1)$}}
is an improvement,
implying that
\textcolor{red}{\Shortstack{$(x-1, y+1)$ \\ 
                             \rotatebox{135}{$\leadsto$} \\ 
                            $(x, y)$,}}
with $x \geq 0$ and $y > 0$,
is not.

\item
Assume now that $\widetilde{q} = 1$,
so that ${\sf f}_{\widetilde{q}} = 0$.
Then,
it is no longer the case
that denominators in proportional allocation fractions
are always strictly positive.
Specifically,
when $x = n-1$ and $y = 1$,
some denominator becomes $0$
as we shall see.
So the case $x=n-1$ and $y=1$
will require special handling.
We proceed with the details.
In all cases, we have that
{
\small
\begin{eqnarray*}
      {\sf U}_{i}((x,y))
& = & \frac{\textstyle \beta}
           {\textstyle {\sf A} + x \cdot 0 +
                       +
                       (n- \chi - x) \cdot \gamma} - \beta\ \
      =\ \
      \beta\,
      \left( \frac{\textstyle 1}
           {\textstyle {\sf A} + (n-\chi - x) \gamma} - 1
      \right)\, ,                 
\end{eqnarray*}
}
and
{
\small
\begin{eqnarray*}
      {\sf U}_{i}((x+1,y-1))
& = & \frac{\textstyle 0}
           {\textstyle {\sf A} + (x+1) \cdot 0 +
                       (n-\chi - x-1) \cdot \gamma} - 0\ \
      =\ \
      \frac{\textstyle 0}
           {\textstyle {\sf A} + (n-\chi -x - 1) \gamma}                        
\end{eqnarray*}
}
Note that if $y = 1$ and $x=n-1$,
then in ${\sf U}_{i}(x+1, y-1)$,
${\sf A} = 0$, $\chi = 0$ and $n-\chi - x - 1 =0$,
so that the denominator in the fraction of
${\sf U}_{i}(x+1, y-1)$ becomes also $0$,
making the fraction indeterminate;
in this case,
${\sf U}_{i}((x+1,y-1))$ is $0$ by the way 
indeterminacy is removed.
In all other cases,
the denominator is strictly positive,
which results again in
${\sf U}_{i}((x+1, y-1)) = 0$.
So, 
${\sf U}_{i}((x+1, y-1)) = 0$ in every case.
\textcolor{red}{\Shortstack{$(x, y)$ \\ 
                             \rotatebox{315}{$\leadsto$} \\ 
                            $(x+1, y-1)$}}
is an improvement
when
${\sf U}_{i}((x+1, y-1)) > {\sf U}_{i}((x, y))$
or
\begin{eqnarray*}
       \frac{\textstyle 1}
            {\textstyle {\sf A} +
                        (n- \chi - x) \gamma}      
& < & 1\, .
\end{eqnarray*}
\begin{itemize}

\item
\underline{$(x, y) = (n-1, 1)$:}
Then,
the denominator in ${\sf U}_{i}((x, y)$ becomes $1$,
resulting to ${\sf U}_{i}((x, y))$ is also $0$,
implying that neither the rightward\&downward deviation
\textcolor{red}{\Shortstack{$(x, y)$ \\ 
                             \rotatebox{315}{$\leadsto$} \\ 
                            $(x+1, y-1)$}}
nor the leftward\&upward deviation
\textcolor{red}{\Shortstack{$(x-1, y+1)$ \\ 
                             \rotatebox{135}{$\leadsto$} \\ 
                            $(x, y)$}}
is an improvement.

\item
\underline{$(x, y) \neq (n-1, 1)$:}
Thus, either $x = n$ or $x \leq n-2$.
We proceed by case analysis.

\begin{itemize}

\item
\underline{$x=n$:}
Then,
$y=0$ and
there can be no 
\textcolor{red}{\Shortstack{$(x, y)$ \\ 
                             \rotatebox{315}{$\leadsto$} \\ 
                            $(x+1, y-1)$}}
deviation out of $(n, 0)$. 

\item
\underline{$x \leq n-2$:}
Then,
$n - \chi - x \geq 2$
and
${\sf A} + (n- \chi - x) \gamma \geq 2 \gamma > 2$.
It follows that
the necessary and sufficient condition for an improvement holds.

\noindent
It follows that, unless $(x, y) = (n-1, 1)$,
the rightward\&downward deviation
\textcolor{red}{\Shortstack{$(x, y)$ \\ 
                             \rotatebox{315}{$\leadsto$} \\ 
                            $(x+1, y-1)$}}
is an improvement,
implying that
the leftward\&upward deviation
\textcolor{red}{\Shortstack{$(x-1, y+1)$ \\ 
                             \rotatebox{135}{$\leadsto$} \\ 
                            $(x, y)$}}
is not.

\end{itemize}

\end{itemize}
Hence,
rightward\&downward deviations are improvements
except when $(x,y) = (n-1, 1)$.

\end{enumerate}

\remove{
\textcolor{blue}{
\underline{{\it Horizontal improvement steps:}}
They correspond to changes of ${\sf N}_{{\bf q}}(1)$,
either increases or decreases,
while ${\sf N}_{{\bf q}}(2)$ does not change.
Such changes are accompanied by
decreases or increases, respectively,
of ${\sf N}_{{\bf q}}(3)$.
They occur when either 
{\sf (1)}
a player $i \in {\sf Players}_{{\bf q}}(3)$
switches to quality $1$
or {\sf (2)}
a player $i \in {\sf Players}_{{\bf q}}(1)$
switches to quality $3$.
They are termed as
a {\it rightward horizontal improvement} and
a {\it leftward horizontal improvement,}
respectively;
they will be depicted as
{\sf (D1)}
\textcolor{red}{$(x, y) \leadsto (x+1, y)$}
with $x \geq 0$, $y \geq 0$ and $x+y < n$
to guarantee the existence of at least one player 
$i \in {\sf Players}_{{\bf q}}(3)$,
and
{\sf (D2)}
\textcolor{red}{$(x-1, y) \reflectbox{$\leadsto$} (x, y)$},
with $y \geq 0$ and $x > 0$
to guarantee the existence of at least one player
$i \in {\sf Players}_{{\bf q}}(1)$,
respectively.
A rightward improvement step is beneficial
for the deviating player
if and only if
the reverse leftward improvement step would not be beneficial
for her. 
}

\textcolor{blue}{
\underline{{\it Vertical improvement steps:}}
They correspond to changes of ${\sf N}_{{\bf q}}(2)$,
either decreases or increases,
while ${\sf N}_{{\bf q}}(1)$ does not change.
Such changes are accompanied by corresponding changes,
either increases or decreases.
They occur when either
{\sf (3)}
a player $i \in {\sf Players}_{{\bf q}}(2)$
switches to quality $3$
or 
{\sf (4)}
a player $i \in {\sf Players}_{{\bf q}}(3)$
switches to quality $2$.
They are termed as a {\it downward improvement step}
and an {\it upward improvement step,}
respectively;
they will be depicted as
{\sf (D3)} 
\textcolor{red}{\Shortstack{$(x, y)$ \\ 
                             \rotatebox{270}{$\leadsto$} \\ 
                            $(x, y-1)$}}
with $x \geq 0$ and $y > 0$
to guarantee the existence of at least one player 
$i \in {\sf Players}_{{\bf q}}(2)$,
and
{\sf (D4)}
\textcolor{red}{\Shortstack{$(x, y+1)$ \\ 
                             \rotatebox{90}{$\leadsto$} \\ 
                            $(x, y)$}},
with $x \geq 0$, $y \geq 0$ and $x+y < n$
to guarantee the existence of at least one player
$i \in {\sf Players}_{{\bf q}}(3)$,                            
respectively.
A downward improvement step is beneficial for
the deviating player
if and only if
the reverse upward improvement step
would not be beneficial for her.                            
}

\textcolor{blue}{{\it Diagonal improvement steps:}
They correspond to simultaneous changes
of ${\sf N}_{{\bf q}}(1)$ and ${\sf N}_{{\bf q}}(2)$,
either a increase of ${\sf N}_{{\bf q}}(1)$
and a decrease of ${\sf N}_{{\bf q}}(2)$,
or a decrease of ${\sf N}_{{\bf q}}(1)$
and an increase of ${\sf N}_{{\bf q}}(2)$.
They occur when either
{\sf (5)}
a player $i \in {\sf N}_{{\bf q}}(2)$
switches to quality 1,
or {\sf (6)}
a player $i \in {\sf N}_{{\bf q}}(1)$
switches to quality $2$.
They are termed as a
{\it rightward\&downward improvement step}
and {\it leftward\&upward improvement step,}
respectively;
they will be depicted as
{\sf (D5)}
\textcolor{red}{\Shortstack{$(x, y)$ \\ 
                             \rotatebox{315}{$\leadsto$} \\ 
                            $(x+1, y-1)$}}
with $x \geq 0$ and $y \geq 1$
to guarantee the existence of at least one player
$i \in {\sf Players}_{{\bf q}}(2)$
and {\sf (D6)}
\textcolor{red}{\Shortstack{$(x-1, y+1)$ \\ 
                             \rotatebox{135}{$\leadsto$} \\ 
                            $(x, y)$}}
with $y \geq 0$ and $x \geq 1$
to guarantee the existence of at least one player
$i \in {\sf Players}_{{\bf q}}(1)$,
respectively.
A rightward\&downward improvement step
is beneficial for the deviating player
if and only if
the reverse leftward\&upward improvement step
would not be beneficial for her.
}
}

\remove{
\noindent
\textcolor{blue}{
We prove:}

\begin{lemma}
\label{horizontal improvements}
\textcolor{blue}{Horizontal improvements
are rightward.}
\end{lemma}

\begin{proof}
\textcolor{blue}{Assume, by way of contradiction,
that a leftward horizontal improvement step
\textcolor{red}{$(x-1, y) \reflectbox{$\leadsto$} (x, y)$},
with $y \geq 0$ and $x > 0$,
is beneficial.
Consider the player $i \in {\sf Players}_{(x, y)}(0)$
switching to quality $3$.
Then,
{
\small
\begin{eqnarray*}
{\sf U}_{i}((x,y))
& = & \frac{\textstyle 0}
           {\textstyle x \cdot 0 + y \cdot \beta + (n-x-y) \cdot \gamma}
      -
      0\ \ =\ \ 0\, .
\end{eqnarray*}
}
When she switches to quality $3$,
{
\small
\begin{eqnarray*}
      {\sf U}_{i}((x-1, y))
& = & \frac{\textstyle \gamma}
           {\textstyle (x-1) \cdot 0 + y \cdot \beta + (n-x-y+1) \cdot \gamma}	   -
      \gamma\ \ =\ \
      \frac{\textstyle \gamma}
           {\textstyle y \beta + (n-x-y+ 1) \cdot \gamma}	   -
      \gamma\, .     
\end{eqnarray*}
}
Since
\textcolor{red}{$(x-1, y) \reflectbox{$\leadsto$} (x, y)$}
is an improvement step,
${\sf U}_{i}((x-1, y)) > {\sf U}_{i}((x, y))$.
Since $\gamma > 0$,
it follows that
$y \beta + (n-x-y+1) \gamma < \gamma$.
Since $n > x+y$,
it follows that $y \beta + (n-x-y+1) \gamma > \gamma$.
Hence, $\gamma < 1$.
A contradiction,
implying that
\textcolor{red}{$(x, y) \leadsto (x+1, y)$}
is an improvement step.
Hence,
horizontal improvements are rightward. 
}      
\end{proof}
}

\remove{
\begin{lemma}
\label{vertical improvements}
\textcolor{blue}{Vertical improvements are upward.}
\end{lemma}

\begin{proof}
\textcolor{blue}{Consider an upward vertical deviation
\textcolor{red}{\Shortstack{$(x, y+1)$ \\ 
             \rotatebox{90}{$\leadsto$} \\ 
             $(x, y)$}},             
with $x \geq 0$, $y \geq 0$ and $x+y < n$,
by player $i \in {\sf Players}_{(x,y)}(3)$
to quality $2$.
Then,
{
\small
\begin{eqnarray*}
      {\sf U}_{i}((x,y))
& = & \frac{\textstyle \gamma}
           {\textstyle x \cdot 0 +
                       y \cdot \beta +
                       (n-x-y) \cdot \gamma} - \gamma\ \
      =\ \
      \frac{\textstyle \gamma}
           {\textstyle y \beta + (n-x-y) \gamma} - \gamma\, ,                 
\end{eqnarray*}
}
and
{
\small
\begin{eqnarray*}
      {\sf U}_{i}((x,y+1))
& = & \frac{\textstyle \beta}
           {\textstyle x \cdot 0 +
                       (y+1) \cdot \beta +
                       (n-x-y - 1) \cdot \gamma} - \beta\ \
      =\ \
      \frac{\textstyle \beta}
           {\textstyle (y+1) \beta + (n-x-y-1) \gamma} - \beta\, .                 
\end{eqnarray*}
}
\textcolor{red}{\Shortstack{$(x, y+1)$ \\ 
             \rotatebox{90}{$\leadsto$} \\ 
             $(x, y)$}}
is an improvement if 
${\sf U}_{i}((x, y+1))
 >
 {\sf U}_{i}((x,y))$, or
\begin{eqnarray*}
      \frac{\textstyle \beta}
           {\textstyle (y+1) \beta + (n-x-y-1) \cdot \gamma} - \beta
& > & \frac{\textstyle \gamma}
           {\textstyle y \beta + (n-x-y) \cdot \gamma} - \gamma\, ,
\end{eqnarray*} 
or
\begin{eqnarray*}
      - \beta\,
      \frac{\textstyle y \beta + (n-x-y-1) \gamma}
             {\textstyle (y + 1) \beta + (n-x-y-1) \gamma}
& > & - \alpha\,
      \frac{\textstyle (y-1) \beta + (n-x-y) \gamma}
           {\textstyle y \beta + (n-x-y) \gamma}\, .
\end{eqnarray*}
Since $n > x+y$ and $\gamma > 0$
$y \beta + (n-x-y) \gamma > \gamma > 0$.
Since $n-x-y-1 \geq 0$ and $\gamma > 0$,
$(y+1) \beta +(n-x-y-1) \gamma \geq (y+1) \beta \geq 1$.
So both denominators are strictly positive
for every quality vector $(x,y)$.
}
\textcolor{blue}{It follows that
{
\small
\begin{eqnarray*}
      \beta\,
      [ y \beta + (n-x-y) \gamma - \gamma]
      [ y \beta + (n-x-y) \gamma ]
& < & \alpha\,
      [ (y-1) \beta + (n-x-y) \gamma ]
      [ (y+1) \beta + (n-x-y-1) \gamma ]  	
\end{eqnarray*}
}
or
{
\small
\begin{eqnarray*}
&   & \beta\,
      [ y \beta + (n-x-y) \gamma ]^{2}
      -
      \beta \gamma
      [ y + (n-x-y) \alpha ]      \\
& < & 	\gamma\,
       [ y \beta + (n-x-y) \gamma - 1]
       [ y \beta + (n-x-y) \gamma + \beta - \gamma ] = \\
& = & \gamma
      \left( [ y \beta + (n-x-y) \gamma ]^{2}      
             -
             [ y \beta + (n-x-y) \gamma ]
             +
             (\beta - \alpha)
             [ y \beta + (n-x-y) \gamma ]
             -
             (\beta - \gamma) 
      \right)                            \\
& = & \gamma
      \left( [ y \beta + (n-x-y) \gamma ]^{2}
             -
             \gamma
             [ y \beta + (n-x-y) \gamma ]
             +
             (\gamma - \beta)
      \right)                           \\
& = & \gamma
      [ y \beta + (n-x-y) \gamma ]^{2}
      -
      \gamma^{2}
      [ y \beta + (n-x-y) \gamma ]
      +
      \gamma (\gamma - 1) 
\end{eqnarray*}
}
or
{
\small
\begin{eqnarray*}
      (\gamma - \beta)
      [y \beta + (n-x-y) \gamma]^{2}
      -
      \gamma (\gamma  - \beta )
      [y \beta + (n-x-y) \gamma ]
      +
      \gamma (\gamma - \beta)
& > & 0\,    	
\end{eqnarray*}
}
or (since $\gamma > \beta$)
{
\small
\begin{eqnarray*}
      [y \beta + (n-x-y) \gamma ]^{2}
      -
      \gamma
      [y \beta + (n-x-y) \gamma ]
      +
      \gamma
& > & 0\,    	
\end{eqnarray*}
}
or 
{
\small
\begin{eqnarray*}
      [y \beta + (n-x-y) \gamma ]^{2}
& > & \gamma
      \left( [y \beta + (n-x-y) \gamma ]
             -
             \beta
      \right)\, .
\end{eqnarray*}
}
Since $n-x-y > 0$,
it follows that
$y \beta + (n-x-y) \gamma > \gamma$,
implying that
\begin{eqnarray*}
[y \beta +(n-x-y) \gamma]^2 
& > & \gamma [y \beta + (n-x-y) \gamma ] \\
& > & \alpha [y \beta + (n-x-y) \gamma - 1 ]\, .
\end{eqnarray*}                     
It follows that
\textcolor{red}{\Shortstack{$(x, y+1)$ \\ 
             \rotatebox{90}{$\leadsto$} \\ 
             $(x, y)$}}
is an improvement step,
implying that
\textcolor{red}{\Shortstack{$(x, y)$ \\ 
                             \rotatebox{270}{$\leadsto$} \\ 
                            $(x, y-1)$}}
with $x \geq 0$ and $y > 0$
is not.
Hence,
vertical improvements are upward.
}
\end{proof}
}

\remove{
\begin{lemma}
\label{diagonal improvements}
\textcolor{blue}{Diagonal improvements are rightward\&downward.}
\end{lemma}

\begin{proof}
\textcolor{blue}{Consider a diagonal 
rightward\&downward deviation
\textcolor{red}{\Shortstack{$(x, y)$ \\ 
                             \rotatebox{315}{$\leadsto$} \\ 
                            $(x+1, y-1)$}}
with $x \geq 0$ and $y \geq 1$,
by player $i \in {\sf Players}_{((x,y))}(2)$
to quality $1$.
Then,
{
\small
\begin{eqnarray*}
      {\sf U}_{i}((x,y))
& = & \frac{\textstyle \beta}
           {\textstyle x \cdot 0 +
                       y \cdot \beta +
                       (n-x-y) \cdot \gamma} - \beta\ \
      =\ \
      \frac{\textstyle \beta}
           {\textstyle y \beta + (n-x-y) \gamma} - \beta\, ,                 
\end{eqnarray*}
}
and
{
\small
\begin{eqnarray*}
      {\sf U}_{i}((x+1,y-1))
& = & \frac{\textstyle 0}
           {\textstyle (x+1) \cdot 0 +
                       (y-1) \cdot \beta +
                       (n-x-y) \cdot \gamma} - 0\, .                 
\end{eqnarray*}
}
Note that if $y = 1$ and $x=n-1$,
then the denominator becomes also $0$,
making the fraction indeterminate;
in this case,
${\sf U}_{i}((x+1,y-1))$ is $0$ by definition of payment.
In all other cases,
the denominator is strictly positive and
${\sf U}_{i}((x+1, y-1)) = 0$.
\textcolor{red}{\Shortstack{$(x, y)$ \\ 
                             \rotatebox{315}{$\leadsto$} \\ 
                            $(x+1, y-1)$}}
is an improvement
if
${\sf U}_{i}((x+1, y-1)) > {\sf U}_{i}((x, y))$
or
\begin{eqnarray*}
       \frac{\textstyle \beta}
            {\textstyle y \beta +
                        (n-x-y) \gamma}      
& < & \beta\, ,
\end{eqnarray*}
or
$y \beta + (n-x-y) \gamma > \beta$.
We proceed by case analysis.}
\begin{itemize}

\item
\textcolor{blue}{\underline{$(x, y) = (n-1, 1)$:}
Then,
${\sf U}_{i}((x, y)$ is also $0$,
implying that neither the rightward\&downward deviation
\textcolor{red}{\Shortstack{$(x, y)$ \\ 
                             \rotatebox{315}{$\leadsto$} \\ 
                            $(x+1, y-1)$}}
nor the leftward\&upward deviation
\textcolor{red}{\Shortstack{$(x-1, y+1)$ \\ 
                             \rotatebox{135}{$\leadsto$} \\ 
                            $(x, y)$}}
is an improvement.
}

\item
\textcolor{blue}{\underline{$(x, y) \neq (n-1, 1)$:}
Then,
$n-x-y > 0$.
Since $y \geq 1$.
it follows that
$y \beta + (n-x-y) \gamma > \beta$.
It follows that the rightward\&downward deviation
\textcolor{red}{\Shortstack{$(x, y)$ \\ 
                             \rotatebox{315}{$\leadsto$} \\ 
                            $(x+1, y-1)$}}
is an improvement,
implying that
the leftward\&upward deviation
\textcolor{red}{\Shortstack{$(x-1, y+1)$ \\ 
                             \rotatebox{135}{$\leadsto$} \\ 
                            $(x, y)$}}
is not.
}

\end{itemize}
\textcolor{blue}{Hence,
diagonal improvements are rightward\&downward.}
\end{proof}
}

\remove{
\noindent
\textcolor{blue}{Lemmas~\ref{horizontal improvements},
~\ref{vertical improvements}
and~\ref{diagonal improvements} imply together
the claims for the basis case.}
}

\remove{
\noindent
\textcolor{magenta}{
\underline{{\it Induction hypothesis:}}
For some $Q \geq 3$,
the following hold for the quality improvement graph ${\sf G}_{Q}$:}
\begin{enumerate}
	
\item[{\sf (1)}]
\textcolor{magenta}{For a quality improvement step out of a quality vector
${\bf q}$,
exactly one of the following holds:}
\begin{enumerate}

\item[\textcolor{magenta}{{\sf (1/a)}}]
\textcolor{magenta}{
\underline{{\it Diagonal Improvement:}} 
${\sf N}_{{\bf q}}(1)$ 
increases (by 1)
and ${\sf N}_{{\bf q}}(q)$ decreases (by 1),
for some quality $q \in [Q] \setminus \{ 1 \}$.}

\item[\textcolor{magenta}{{\sf (1/b)}}]
\textcolor{magenta}{\underline{{\it Vertical Improvement}}
${\sf N}_{{\bf q}}(1)$
remains constant and
${\sf N}_{{\bf q}}(q)$ increases
and ${\sf N}_{{\bf q}}(q')$ decreases,
for a pair of qualities $q, q' \in [Q] \setminus \{ 1 \}$.}

\end{enumerate}

\item[{\sf (2)}]
\textcolor{magenta}{The quality improvement graph ${\sf G}_{Q}$ is acyclic.}

\end{enumerate}

\noindent
\textcolor{magenta}{
\underline{{\it Induction Step:}}
Consider now the contest game
with $Q+1$ qualities.
Assume, by way of contradiction,
that there is a directed cycle ${\mathcal{C}}$ in the
quality improvement graph ${\sf G}_{Q+1}$.
Define the property
${\sf \Pi}$ as follows:
\begin{quote}
There is a quality $q \in [Q+1] \setminus \{ 1 \}$
such that
${\sf N}_{{\bf q}}(q) = 0$
for all quality vectors ${\bf q}$ 
on ${\mathcal{C}}$.
\end{quote}
There are two cases:}
\begin{itemize}

\item
\textcolor{magenta}{
${\sf \Pi}$ holds.
We distinguish two cases with respect
to the changes to ${\sf N}_{{\bf q}}(1)$
for all quality vectors ${\bf q}$ on ${\mathcal{C}}$:}
\begin{itemize}

\item
\textcolor{magenta}{${\sf N}_{{\bf q}}(1)$
remains constant for quality vectors
${\bf q}$ on ${\mathcal{C}}$.
Then,
in each improvement step out of the quality vector
${\sf q}$ on ${\mathcal{C}}$,
${\sf N}_{{q}}(q_{1})$ increases and ${\sf N}_{{\bf q}}(q_{2})$ decreases,
for some qualities $q_{1}, q_{2} \in [Q+1] \setminus \{ 1, q \}$.
}

\item
\textcolor{magenta}{${\sf N}_{{\bf q}}(1)$
does not remain constant
for quality vectors ${\bf q}$ on ${\mathcal{C}}$.}

\end{itemize}

\item
\textcolor{magenta}{${\sf \Pi}$ does not hold.}
	
\end{itemize}
}
\end{proof}

\noindent
It follows
that the quality improvement graph
has two sinks,
representing two pure Nash equilibria:
\begin{itemize}

\item
The node $(n-1, 1)$,
corresponding to 
${\sf N}_{{\bf q}}(1) = n-1$,
${\sf N}_{{\bf q}}(2) = 1$
and ${\sf N}_{{\bf q}}(q) = 0$
for each quality $q \in [Q]$ with $q > 2$.

\item
The node $(n, 0)$,
corresponding to 
${\sf N}_{{\bf q}}(1) = n$
and ${\sf N}_{{\bf q}}(q) = 0$
for each quality $q \in [Q]$ with $q > 1$.
This node is unreachable by improvement steps.

\end{itemize}

\remove{\textcolor{magenta}{Note that there is
{\it (i)}
a rightward horizontal improvement step
out of nodes $(x, y)$ 
with $0 \leq x \leq n-1$ and $x+y < n$,
{\it (ii)}
an upward vertical improvement step
out of nodes $(x,y)$ with 
$0 \leq x \leq n-1$ and $x+y < n$,
{\it (iii)}
a rightward\&downward  improvement step
out of nodes $(x,y)$ with
$x > 0$ and $x+y \leq n$,
and
there is no improvement step out of $(n,0)$ and $(n-1, 1)$.
It follows that
nodes $(n,0)$ and $(n-1, 1)$ are the only sinks,
corresponding to two pure Nash equilibria.}
\textcolor{blue}{{\bf CHECK VERY CAREFULLY.}}}
\end{proof}

\noindent
Under mandatory participation,
it no longer holds that
${\sf f}_{1} = 0$,
and Case {\bf 2.} in the proof
of Lemma~\ref{kyriakatiko} does not arise;
as a result,
the node $(n-1, 1)$,
corresponding to
${\sf N}_{{\bf q}}(1) = n-1$,
${\sf N}_{{\bf q}}(2) = 1$
and ${\sf N}_{{\bf q}}(q) = 0$
for each quality $q \in [Q]$ with $q > 2$,
is not a sink anymore
since the unilateral deviation of a player from quality 2
to quality 1 is now an improvement
since ${\sf f}_{1} > 0$.
So we have now a unique pure Nash equilibrium,
where all players choose quality $1$.
The rest of the proof of Theorem~\ref{sto paderborn}
transfers over.  
Hence,
we have:

\begin{theorem}
\label{wow paderborn}
The contest game
with proportional allocation,
mandatory participation
and anonymous players
has the {\it FIP}
and a unique pure Nash equilibrium.
\end{theorem}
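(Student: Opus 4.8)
The plan is to show that mandatory participation removes precisely the one obstruction that created a second equilibrium in Theorem~\ref{sto paderborn}, and then to reuse the machinery already built. First I would observe that, since ${\sf f}_{1} > 0$ under mandatory participation, we always have ${\sf f}_{\widetilde{q}} \geq {\sf f}_{1} > 0$ for the lower of any two qualities $\widetilde{q} < \widehat{q}$. Consequently, in the proof of Lemma~\ref{kyriakatiko} only Case~1 ever applies: Case~2 was invoked solely to handle ${\sf f}_{\widetilde{q}} = 0$, i.e.\ $\widetilde{q} = 1$ with voluntary participation, and the troublesome exception $(x,y) = (n-1,1)$ arose only there. Hence the No Switch from Lower Quality to Higher Quality Lemma now holds \emph{without exception}: in every improvement step the deviating player moves from a strictly higher quality to a strictly lower one.

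From this strengthened lemma the {\it FIP} is immediate. I would take the quantity $\sum_{i \in [n]} q_{i}$, a positive integer bounded below by $n$, as a potential: each improvement step replaces some $q_{i} = \widehat{q}$ by $\widetilde{q} < \widehat{q}$ and therefore strictly decreases this sum by $\widehat{q} - \widetilde{q} \geq 1$. No infinite improvement path can exist, so all improvement paths are finite and the quality improvement graph is acyclic. Equivalently, the acyclicity argument in the proof of Theorem~\ref{sto paderborn} transfers verbatim, now with the exception deleted.

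It remains to identify the sinks. I would argue that the all-ones profile ${\bf q}$ with ${\sf N}_{{\bf q}}(1) = n$ is a pure Nash equilibrium: any unilateral deviation of a player to a quality $q > 1$ is a leftward\&upward move, which is not an improvement by the strengthened lemma. Conversely, any profile in which some player sits on a quality $q > 1$ is \emph{not} an equilibrium, since that player's move down to quality $1$ is a rightward\&downward deviation, which is now always an improvement (Case~1 applies because ${\sf f}_{1} > 0$); in particular the node $(n-1,1)$, a sink in the voluntary case, is no longer a sink, as the single player on quality $2$ strictly gains by switching to quality $1$. Therefore $(n,0,\dots,0)$ is the unique sink and hence the unique pure Nash equilibrium.

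I do not expect a genuine obstacle here, since the heavy computation was already discharged in Lemma~\ref{kyriakatiko}; the only points demanding care are the bookkeeping that Case~2 truly plays no role once ${\sf f}_{1} > 0$, so that all proportional-allocation denominators stay strictly positive, and the verification that the destabilising deviation from quality $2$ to quality $1$ at $(n-1,1)$ is correctly an improvement under these now-positive denominators.
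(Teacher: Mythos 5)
Your proposal is correct and takes essentially the same route as the paper: the paper likewise observes that mandatory participation forces ${\sf f}_{1} > 0$, so Case~2 of the proof of Lemma~\ref{kyriakatiko} never arises, the downward-switch lemma holds without the $(x,y)=(n-1,1)$ exception, the node $(n-1,1)$ ceases to be a sink (the player on quality $2$ profitably moves to quality $1$), and the rest of the proof of Theorem~\ref{sto paderborn} transfers, leaving $(n,0,\ldots,0)$ as the unique pure Nash equilibrium. Your explicit observation that $\sum_{i \in [n]} q_{i}$ strictly decreases along every improvement step is just a clean formalization of the acyclicity the paper inherits from the exception-free lemma.
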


\noindent
Given the counter-example contest game
in Proposition~\ref{icalp counterexample 2},
Theorem~\ref{wow paderborn}
establishes a {\it separation} with respect to the {\it FIP} property
and the existence of a pure Nash equilibrium
between arbitrary players and anonymous players,
under mandatory participation and proportional allocation.
Theorems~\ref{sto paderborn} and~\ref{wow paderborn}
imply:

\remove{
\noindent
\textcolor{blue}{
A {\it generalized ordinal potential}~\cite{MS96}
for a finite game
is a function $\Phi$,
mapping profiles
(that is,
quality vectors for the special case of the contest game) 
to numbers, such that
for every player $i \in [n]$,
for every pair of strategies $q_i$ and $q'_{i}$
and for every partial profile ${\bf q}_{-i}$,
${\sf U}_{i}(q_{i}, {\bf q}_{-i}) 
 > 
 {\sf U}_{i}(q_{i}', {\bf q}_{-i})$
implies
$\Phi (q_{i}, {\bf q}_{-i})
 >
 \Phi (q_{i}', {\bf q}_{-i})$.
So a potential
is a strengthening of a generalized ordinal potential;
nevertheless,  
every game with a generalized ordinal potential
still has a pure Nash equilibrium. 
By~\cite[Lemma 2.5]{MS96},
a finite game has a generalized ordinal potential
if and only if
it has the {\it FIP}.
Hence,}}

\begin{corollary}
\label{sto paderborn corollary}
The contest game
with
proportional allocation
and
anonymous players has 
a generalized ordinal potential.
\end{corollary}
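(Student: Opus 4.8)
The plan is to derive this corollary directly from the two {\it FIP} results already established, namely Theorem~\ref{sto paderborn} and Theorem~\ref{wow paderborn}, combined with the characterization of the {\it FIP} via generalized ordinal potentials quoted earlier from~\cite[Lemma 2.5]{MS96}. First I would observe that the phrase ``proportional allocation and anonymous players'' in the corollary does not fix a participation mode, so the argument must partition into the two exhaustive cases determined by the value of ${\sf f}_{1}$: voluntary participation, where ${\sf f}_{1} = 0$, and mandatory participation, where ${\sf f}_{1} > 0$. These two cases together cover every contest game with proportional allocation and anonymous players.

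Next I would invoke the two theorems to dispatch each case. For voluntary participation, Theorem~\ref{sto paderborn} asserts that the contest game has the {\it FIP}; for mandatory participation, Theorem~\ref{wow paderborn} asserts the same. Hence, in both regimes, the quality improvement graph contains no cycle, so every improvement path is finite. Since the union of the two regimes exhausts all instances of the game type named in the corollary, the {\it FIP} holds unconditionally for the contest game with proportional allocation and anonymous players.

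Finally I would apply the equivalence recorded in the shaded preliminaries: by~\cite[Lemma 2.5]{MS96}, a finite game has a generalized ordinal potential if and only if it has the {\it FIP}. Because the contest game here is finite (players choose from the finite quality set $[Q]$) and has just been shown to possess the {\it FIP}, it therefore admits a generalized ordinal potential, which is exactly the assertion of the corollary.

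I expect no genuine obstacle in this proof, since it is an immediate consequence of results assumed available. The only point requiring care is confirming that the case split into voluntary and mandatory participation is exhaustive and that each case is indeed covered by one of the two cited theorems; once that is checked, the conclusion follows by a single application of the {\it FIP}-to-potential equivalence. It is worth noting explicitly that one cannot in general strengthen ``generalized ordinal'' to ``exact'' or even ``ordinal'' potential here, so the corollary is phrased at precisely the level that Theorems~\ref{sto paderborn} and~\ref{wow paderborn} support.
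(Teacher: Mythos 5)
Your proof is correct and follows exactly the paper's route: the corollary is stated there as an immediate consequence of Theorems~\ref{sto paderborn} and~\ref{wow paderborn} (covering voluntary and mandatory participation, which exhaust the possibilities for ${\sf f}_{1}$), combined with the equivalence of the {\it FIP} and the existence of a generalized ordinal potential for finite games from~\cite[Lemma 2.5]{MS96}. Your explicit check that the two participation regimes partition all instances is the only substantive verification needed, and you handled it correctly.
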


\subsection{
Mandatory Participation
}

We show:

\begin{theorem}
\label{natasa}
There is a ${\sf \Theta} (1)$ algorithm
that solves 
{\sc $\exists$PNE with Proportional Allocation and Arbitrary Players}
with
lower-bounded skills 
$\min_{i \in [n]} 
   s_{i} 
 \geq  
 \frac{\textstyle {\sf f}_{2}}
      {\textstyle {\sf f}_{2} - {\sf f}_{1}}$
and
skill-effort functions
$\Lambda (s_{i}, {\sf f}_{q})
 = s_{i} {\sf f}_{q}$, 
for all players $i \in [n]$
and qualities $q \in [Q]$. 
\end{theorem}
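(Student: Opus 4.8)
The plan is to analyze the algorithm that assigns every player to the lowest quality $1$, i.e.\ outputs the profile ${\bf q}^{*} = \langle 1, 1, \ldots, 1 \rangle$; it clearly runs in time $\Theta(1)$ and always returns an answer, so the entire content of the theorem is the claim that ${\bf q}^{*}$ is a pure Nash equilibrium whenever the hypotheses hold. Since $1$ is the smallest available quality, no player can deviate downward; hence it suffices to show that, for every player $i \in [n]$ and every quality $q \in \{2, \ldots, Q\}$, switching from $1$ to $q$ is not an improvement.

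First I would write down the two relevant utilities. At ${\bf q}^{*}$ the load on quality $1$ is $n$, so proportional allocation gives ${\sf P}_{i}({\bf q}^{*}) = {\sf f}_{1}/(n\,{\sf f}_{1}) = 1/n$ and ${\sf U}_{i}({\bf q}^{*}) = 1/n - s_{i}\,{\sf f}_{1}$. After a unilateral deviation of player $i$ to quality $q$, the remaining $n-1$ players stay on quality $1$, so the denominator of the proportional share becomes $(n-1)\,{\sf f}_{1} + {\sf f}_{q}$ and ${\sf U}_{i}(q, {\bf q}^{*}_{-i}) = {\sf f}_{q}/((n-1)\,{\sf f}_{1} + {\sf f}_{q}) - s_{i}\,{\sf f}_{q}$. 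The no-deviation condition ${\sf U}_{i}({\bf q}^{*}) \geq {\sf U}_{i}(q, {\bf q}^{*}_{-i})$, after collecting the payment terms over a common denominator (the payment gain equals $\frac{(n-1)({\sf f}_{q} - {\sf f}_{1})}{n\,((n-1)\,{\sf f}_{1} + {\sf f}_{q})}$) and dividing the cost side $s_{i}({\sf f}_{q} - {\sf f}_{1})$ by the strictly positive factor ${\sf f}_{q} - {\sf f}_{1}$, is equivalent to the scalar inequality $s_{i} \geq \frac{n-1}{n\,((n-1)\,{\sf f}_{1} + {\sf f}_{q})}$.

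Next I would observe that the right-hand side is strictly decreasing in ${\sf f}_{q}$, so over all $q \in \{2, \ldots, Q\}$ the binding deviation is the one to $q = 2$; it therefore suffices to establish $s_{i} \geq \frac{n-1}{n\,((n-1)\,{\sf f}_{1} + {\sf f}_{2})}$. This is where the skill lower bound enters: I would show that the hypothesis $\min_{i} s_{i} \geq \frac{{\sf f}_{2}}{{\sf f}_{2} - {\sf f}_{1}}$ already implies it, namely that $\frac{{\sf f}_{2}}{{\sf f}_{2} - {\sf f}_{1}} \geq \frac{n-1}{n\,((n-1)\,{\sf f}_{1} + {\sf f}_{2})}$. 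Cross-multiplying (all quantities are positive) turns this into $n\,{\sf f}_{2}\,((n-1)\,{\sf f}_{1} + {\sf f}_{2}) \geq (n-1)({\sf f}_{2} - {\sf f}_{1})$, and the left-hand side dominates since $n\,{\sf f}_{2}\,((n-1)\,{\sf f}_{1} + {\sf f}_{2}) \geq n\,{\sf f}_{2}^{2} \geq (n-1)\,{\sf f}_{2} > (n-1)({\sf f}_{2} - {\sf f}_{1})$, where the middle step uses that efforts are discrete units (so ${\sf f}_{2} \geq 1$, whence $n\,{\sf f}_{2} \geq n-1$) and the last uses ${\sf f}_{1} > 0$.

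The main obstacle is this final algebraic comparison together with the accompanying bookkeeping: one must verify that it is $q=2$, not an arbitrary $q$, that yields the tightest constraint, and that the mandatory-participation effort regime (discrete efforts, ${\sf f}_{2} \geq 1$) is exactly what lets the chosen skill bound absorb the proportional-allocation gain. Indeed, without such a lower bound on efforts the proportional share can grow fast enough relative to the product cost to make an upward deviation strictly profitable, so this regime cannot be dispensed with. Everything else is routine substitution into the proportional-allocation payment and the product cost $\Lambda(s_{i}, {\sf f}_{q}) = s_{i}\,{\sf f}_{q}$, and the $\Theta(1)$ running time of the algorithm is immediate.
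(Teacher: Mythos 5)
Your algorithm and overall strategy coincide with the paper's: output the all-ones profile $\langle 1,\ldots,1\rangle$ and show the skill lower bound kills every upward deviation. Your analysis is in fact sharper than the paper's. The paper never computes the exact proportional-allocation payments; it lower-bounds the stay utility by $-s_i{\sf f}_1$ (discarding the $1/n$ payment), upper-bounds the deviation utility by ${\sf f}_q - s_i{\sf f}_q$, and handles $q>2$ by monotonicity of efforts together with $s_i>1$. You instead derive the exact no-deviation condition $s_i \geq \frac{n-1}{n((n-1){\sf f}_1+{\sf f}_q)}$ — your computation of the payment gain is correct — and correctly identify $q=2$ as the binding constraint. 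Up to that point everything checks out.

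The gap is your final step: the inference ``efforts are discrete units, so ${\sf f}_2 \geq 1$'' has no basis in the model. In this paper ``discrete'' refers to the finite strategy space $\{1,\ldots,Q\}$; the efforts are arbitrary reals with $0 < {\sf f}_1 < \cdots < {\sf f}_Q$, and no lower bound ${\sf f}_2 \geq 1$ is ever assumed (indeed, elsewhere the authors add explicit effort-scale hypotheses such as ${\sf f}_2 - {\sf f}_1 \geq 1$ when they need them). Without ${\sf f}_2 \geq 1$ your chain $n{\sf f}_2^2 \geq (n-1){\sf f}_2$ fails, and so does the claim itself: take $n=2$, ${\sf f}_1 = 1/100$, ${\sf f}_2 = 1/10$, $s_1 = s_2 = {\sf f}_2/({\sf f}_2-{\sf f}_1) = 10/9$. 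Staying at quality $1$ gives utility $1/2 - 1/90 \approx 0.489$, while deviating to quality $2$ gives payment $\frac{1/10}{11/100} = 10/11$ and utility $10/11 - 1/9 = 79/99 \approx 0.798$, so the all-ones profile is not an equilibrium even though the theorem's stated hypothesis holds. To be fair, the paper's own proof has the identical hidden dependence: its assertion that the deviation utility is less than ${\sf f}_2 - s_i{\sf f}_2$ presupposes the payment $\frac{{\sf f}_2}{(n-1){\sf f}_1+{\sf f}_2}$ is at most ${\sf f}_2$, i.e.\ $(n-1){\sf f}_1 + {\sf f}_2 \geq 1$, which the same example violates. So your derivation is sound whenever ${\sf f}_2 \geq 1$ (in fact ${\sf f}_2 \geq (n-1)/n$ suffices), and your exactness has the virtue of exposing the effort-scale condition the paper silently uses — but as written, the appeal to ``discrete units'' cannot supply that missing hypothesis, so the proof of the theorem as literally stated does not go through (nor does the paper's).
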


\begin{proof} 
By definition of utility
and mandatory participation,
the utility of each player $i \in [n]$ is 
more than $- s_{i} {\sf f}_{1}$.
If player $i$ deviates to $2$,
its utility will be less than ${\sf f}_{2} - {\sf f}_{2} s_{i} 
= - {\sf f}_{2} (s_{i}- 1)$.
The assumption implies that
$- {\sf f}_{2} (s_{i} - 1) 
 \leq - {\sf f}_{1} s_{i}$
for all players $i \in [n]$.
So player $i$ does not want to switch
to quality $2$.
Since efforts are increasing,
for all qualities $q$ with $2 < q \leq Q$,
the utility of player $i$ when she deviates to $q$
will be less than
$ - {\sf f}_{q} (s_{i} - 1)
 < 
 - {\sf f}_{2} (s_{i} - 1)
 \leq
 - {\sf f}_{1} s_{i}$,
by the assumption.
So
player $i$ does not want to switch 
to any quality $q > 2$ either.
Hence,
assigning all players to quality $1$
is a pure Nash equilibrium.
\end{proof}

\noindent
Since 
$\frac{\textstyle {\sf f}_{2}}
      {\textstyle {\sf f}_{2} - {\sf f}_{1}} > 1$,
the assumption
made for Theorem~\ref{natasa}
that
all skills are lower-bounded
by $\frac{\textstyle {\sf f}_{2}}
      {\textstyle {\sf f}_{2} - {\sf f}_{1}}$
in Theorem~\ref{natasa}
cannot hold for
anonymous players
where $s_{i} = 1$
for all players $i \in [n]$.
This assumption is reasonable
for real contests for crowdsourcing reviews
where a minimum skill
is required for reviewers
in order to eliminate the risk
of receiving inferior solutions of low quality.
Indeed, crowdsourcing firms can target crowd contributors 
based on exhibiting skills, 
like performance in prior contests. 
Clearly,
the assumption made for Theorem~\ref{natasa},
enabling the existence of a pure Nash equilibrium,
could {\em not} hold
for the counter-example contest game
in Proposition~\ref{icalp counterexample 2}.

\remove{
\begin{theorem}
\label{natasa 1}
\textcolor{blue}{
There is a $\Theta (1)$ algorithm
that solves 
{\sc $\exists$PNE with Proportional Allocation and Anonymous Players}
with
skill-effort functions
$\Lambda (s_{i}, {\sc f}_{q}) = 
 {\sf f}_{q}$, 
for all players $i \in [n]$ and qualities
$q \in [Q]$, 
and under the assumption
${\sf f}_{2} - {\sf f}_{1} \geq 1$.
}
\end{theorem}

\begin{proof} 
\textcolor{blue}{
The algorithm assigns all players to quality $1$. 
By definition,
the utility of each player $i \in [n]$ is 
less than ${\sf f}_{1}$.
If player $i$ switches to quality $2$,
her utility will be smaller than $- {\sf f}_{2} + 1$.
The assumption implies that
player $i$ does not want to switch to quality $2$.
Since efforts are increasing,
for all qualities $q$ with $2 < q \leq Q$,
the utility of player $i$ when she switches to quality $q$
will be greater than
$- {\sf f}_{q} + 1
 <
 - {\sf f}_{2} + 1
 \leq
 - {\sf f}_{1}$,
by the assumption ${\sf f}_{2} - {\sf f}_{1} \geq 1$.
So
player $i$ does not want to switch to any quality
$q > 2$ either.
Hence,
assigning all players to quality $1$
is a pure Nash equilibrium.
}
\end{proof}

\noindent
\textcolor{blue}{
The assumption that
${\sf f}_{2} > {\sf f}_{1} + 1$
implies the more general assumption
that ${\sf f}_{q} - {\sf f}_{1} \geq 1$
for all qualities $q > 1$.
This is a suitable assumption for crowdsourcing contests
wishing to define and use qualities such that
the highest quality
is well-separated from the rest.}
}

\section{Three-Discrete-Concave Payments and Contiguity}
\label{kalia}

Say that the load vector
${\sf N}_{{\bf q}}$ 
is {\it contiguous}
if players $1$ to ${\sf N}_{{\bf q}}(1)$ 
choose quality $1$,
players 
${\sf N}_{{\bf q}}(1) + 1$ 
to ${\sf N}_{{\bf q}}(1) + {\sf N}_{{\bf q}}(2)$ choose quality $2$,
and so on till
players $\sum_{q \in [Q-1]} {\sf N}_{{\bf q}}(q) + 1$ to $n$ 
choose quality $q_{{\sf last}} \leq Q$
such that
for each quality $\widehat{q} > q_{{\sf last}}$,
${\sf N}_{{\bf q}}(\widehat{q}) = 0$;
so for any players $i$ and $k$, with $i < k$,
choosing distinct qualities
$q$ and $q'$, 
respectively,
we have $q < q'$.
Clearly,
a contiguous load vector determines by itself
which ${\sf N}_{{\bf q}}(q)$ players
choose each quality $q \in [Q]$.
\remove{
Given a contiguous load vector ${\sf N}_{{\bf q}}$,
denote, 
for each quality $q \in [Q]$ 
such that ${\sf Players}_{{\bf q}}(q) \neq \emptyset$,
the minimum and the maximum, respectively, 
player index $i \in {\sf Players}_{{\bf q}}(q)$
as ${\sf first}_{{\bf q}}(q)$ 
and ${\sf last}_{{\bf q}}(q)$,
respectively.
Clearly,
${\sf first}_{{\bf q}}(q) = 
 \sum_{\widehat{q} < q} {\sf N}_{{\bf q}}(\widehat{q}) + 1$
and
${\sf last}_{{\bf q}}(q) = 
 \sum_{\widehat{q} \leq q} {\sf N}_{{\bf x}}(\widehat{q})$;
so
${\sf first}_{{\bf q}}(1) = 1$ 
for ${\sf N}_{{\bf q}}(1) > 0$ 
and ${\sf last}_{{\bf q}}(Q) = n$
for ${\sf N}_{{\bf q}}(Q) > 0$.
}

Say that
an {\it inversion} occurs in a load vector ${\sf N}_{{\bf q}}$
if there are players $i$ and $k$ with $i < k$
choosing qualities $q_{i}$ and $q_{k}$, respectively,
with $q_{i} > q_{k}$;
thus, $s_{i} \geq s_{k}$
while ${\sf f}_{q_{i}} > {\sf f}_{q_{k}}$.
Call $i$ an {\it inversion witness};
call $i$ and $k$ an {\it inversion pair}.
Clearly,
no inversion occurs in a load vector ${\sf N}_{{\bf q}}$
if and only if
${\sf N}_{{\bf q}}$ is contiguous.

Given a contiguous load vector ${\sf N}_{{\bf q}}$,
denote, 
for each quality $q \in [Q]$ 
such that ${\sf Players}_{{\bf q}}(q) \neq \emptyset$,
the minimum and the maximum, respectively, 
player index $i \in {\sf Players}_{{\bf q}}(q)$
as ${\sf first}_{{\bf q}}(q)$ 
and ${\sf last}_{{\bf q}}(q)$,
respectively.
Clearly,
${\sf first}_{{\bf q}}(q) = 
 \sum_{\widehat{q} < q} {\sf N}_{{\bf q}}(\widehat{q}) + 1$
and
${\sf last}_{{\bf q}}(q) = 
 \sum_{\widehat{q} \leq q} {\sf N}_{{\bf x}}(\widehat{q})$;
so
${\sf first}_{{\bf q}}(1) = 1$ 
for ${\sf N}_{{\bf q}}(1) > 0$ 
and ${\sf last}_{{\bf q}}(Q) = n$
for ${\sf N}_{{\bf q}}(Q) > 0$.

Order the players so that
$s_{1} \geq s_{2} \geq \ldots \geq s_{n}$.
Recall that ${\sf f}_{1} < {\sf f}_{2} < \ldots < {\sf f}_{Q}$.
Represent a quality vector ${\bf q}$
as follows:
\begin{itemize}

\item
Use a {\it load vector}
${\sf N}_{{\bf q}}
 =
 \langle {\sf N}_{{\bf q}}(1), 
         {\sf N}_{{\bf q}}(2), 
         \ldots, 
         {\sf N}_{{\bf q}}(Q) 
 \rangle$.

\item
Specify
which ${\sf N}_{{\bf q}}(q)$ players
choose each quality $q \in [Q]$.

\end{itemize}
\noindent
To simplify notation,
we shall often omit to specify
the players choosing each quality $q \in [Q]$.
Thus,
we shall represent a quality vector ${\bf q}$
by the load vector
${\sf N}_{{\bf q}}$.

\subsection{Player-Specific Payments}

\noindent
Recall that
a player-specific payment function ${\sf P}_{i}({\bf q})$
can be represented by
a two-argument payment function ${\sf P}_{i}(i, {\bf q})$,
where $i \in [n]$ and ${\bf q}$ is a quality vector.
We start by defining:

\begin{definition}
\label{combinatorial condition 1}
A player-specific payment function
${\sf P}$
is {\it three-discrete-concave} if
for every player $i \in [n]$,
for every load vector ${\sf N}_{{\bf q}}$
and for every triple of qualities
$q_{i}$, $q_{k}$, $q \in [Q]$,
{
\small
\begin{eqnarray*}
& &
{\sf P}_{i}(i, ({\sf N}_{{\bf q}}(1), \ldots,
                   {\sf N}_{{\bf q}}(q_{i}), \ldots,
                   {\sf N}_{{\bf q}}(q_{k})-1, \ldots,
                   {\sf N}_{{\bf q}}'(q)+1, \ldots,
                   {\sf N}_{{\bf q}}'(Q)))
+ \\
& &
 {\sf P}_{i}(i, ({\sf N}_{{\bf q}}(1), 
             \ldots, 
             {\sf N}_{{\bf q}}(q_{i}) - 1, 
             \ldots, 
             {\sf N}_{{\bf q}}(q_{k}),
             \ldots,
             {\sf N}_{{\bf q}}(q) + 1, 
             \ldots, 
             {\sf N}_{{\bf x}}(Q)))	\\
& \leq & 
2\,
{\sf P}_{i}(i, ({\sf N}_{{\bf q}}(1), \ldots,
                        {\sf N}_{{\bf q}}(q_{i}), \ldots, 
                        {\sf N}_{{\bf q}}(q_{k}), \ldots,
                        {\sf N}_{{\bf q}}(q), \ldots,
                        {\sf N}_{{\bf q}}(Q)))\, .
\end{eqnarray*}
}
\end{definition}
\noindent
The inequality in Definition~\ref{combinatorial condition 1}
may be rewritten as
{
\small
\begin{eqnarray*}
& &
{\sf P}_{i}(i, ({\sf N}_{{\bf q}}(1), \ldots,
                   {\sf N}_{{\bf q}}(q_{i})-1, \ldots,
                   {\sf N}_{{\bf q}}(q_{k}), \ldots,
                   {\sf N}_{{\bf q}}(q)+1, \ldots,
                   {\sf N}_{{\bf q}}(Q)))
- \\
& &
{\sf P}_{i}(i, ({\sf N}_{{\bf q}}(1), \ldots,
                        {\sf N}_{{\bf q}}(q_{i}), \ldots, 
                        {\sf N}_{{\bf q}}(q_{k}), \ldots,
                        {\sf N}_{{\bf q}}(q), \ldots,
                        {\sf N}_{{\bf q}}(Q))) \\
& \leq &
{\sf P}_{i}(i, ({\sf N}_{{\bf q}}(1), \ldots,
                        {\sf N}_{{\bf q}}(q_{i}), \ldots, 
                        {\sf N}_{{\bf q}}(q_{k}), \ldots,
                        {\sf N}_{{\bf q}}(q), \ldots,
                        {\sf N}_{{\bf q}}(Q))) - \\
& &
{\sf P}_{i}(i, ({\sf N}_{{\bf q}}(1), 
             \ldots, 
             {\sf N}_{{\bf q}}(q_{i}), 
             \ldots, 
             {\sf N}_{{\bf q}}(q_{k})-1,
             \ldots,
             {\sf N}_{{\bf q}}(q) + 1, 
             \ldots, 
             {\sf N}_{{\bf q}}(Q)))\, .
\end{eqnarray*}
}

\remove{
A function 
$\Lambda: {\mathbb{R}}^{2}
          \rightarrow
          {\mathbb{R}}$
of two variables $y$ and $z$
{\it has increasing differences}
if for all $y, y'$ with $y \geq y'$,
the difference
$\Lambda (y, z) - \Lambda (y', z)$
is monotonically increasing in $z$;
that is,
for all $z \geq  z'$,
$\Lambda (y, z)
 -
 \Lambda (y', z)
 \geq
 \Lambda (y, z')
 -
 \Lambda (y', z')$.
\textcolor{blue}{A player-specific payment function ${\sf P}$
is permutation-invariant
if for every player $i \in [n]$,
for every permutation $\pi$ on $[Q]$
and for every load vector ${\sf N}_{{\bf q}}$
induced by a quality vector ${\bf q}$,
${\sf P}_{i}(i, {\sf N}_{{\bf q}}=
 {\sf P}_{i}(i, \pi ({\sf N}_{{\bf q}})$.}}
We show:

\begin{theorem}
\label{the most arbitrary}
There is a 
$\Theta \left( 
                n
                \cdot
                Q^{2}\,
                \binom{\textstyle n+Q-1}{\textstyle Q-1}
                \right)$
algorithm that solves 
{\sc $\exists$PNE with Player-Specific Payments}
for arbitrary players
and three-discrete-concave player-specific payments;
for constant $Q$, 
it is a $\Theta (n^{Q})$ 
polynomial algorithm.
\end{theorem}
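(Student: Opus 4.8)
The plan is to reduce the search for a pure Nash equilibrium to an exhaustive scan over the comparatively few \emph{contiguous} load vectors, relying on the Contigufication Lemma (Proposition~\ref{transformation into contiguous}) for soundness. Since that lemma guarantees that every pure Nash equilibrium can be turned into a contiguous one, the game has a pure Nash equilibrium if and only if it has a contiguous one; hence it suffices to test each contiguous candidate. First I would order the players by skill, $s_1 \geq s_2 \geq \cdots \geq s_n$ (a one-time $O(n\log n)$ cost that is dominated by the rest), and recall that a contiguous load vector $\langle {\sf N}_{{\bf q}}(1), \ldots, {\sf N}_{{\bf q}}(Q)\rangle$ determines its assignment completely: the highest-skill players are packed into quality $1$, the next block into quality $2$, and so on.

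Next I would enumerate all contiguous load vectors. Such a vector is precisely a tuple of nonnegative integers summing to $n$, so by stars-and-bars there are exactly $\binom{n+Q-1}{Q-1}$ of them, and they can be generated successively in amortized $O(Q)$ time each. For each candidate ${\sf N}_{{\bf q}}$, with its induced contiguous assignment, I would verify the equilibrium condition directly: for every player $i \in [n]$ and every alternative quality $q \neq q_i$, build the deviated load vector ${\sf N}'$ by decrementing ${\sf N}_{{\bf q}}(q_i)$ and incrementing ${\sf N}_{{\bf q}}(q)$, compute the deviating utility ${\sf P}_i(i, {\sf N}') - \Lambda(s_i, {\sf f}_q)$, and compare it against ${\sf U}_i({\bf q})$. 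What makes this legitimate is that a player-specific payment here depends only on the player's identity, her chosen quality, and the load vector, so the deviated payoff is well-defined and computable even though the deviated profile is itself non-contiguous. Accounting $O(Q)$ for handling the length-$Q$ load vector in each of the $n(Q-1)$ tests yields $O(nQ^2)$ work per candidate. If every test passes for some candidate, it is a genuine pure Nash equilibrium --- all unilateral deviations, contiguity-breaking or not, have been checked --- and I output it; if no candidate survives, then by the Contigufication Lemma no pure Nash equilibrium exists.

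Multiplying the per-candidate cost $O(nQ^2)$ by the number $\binom{n+Q-1}{Q-1}$ of candidates gives the claimed $\Theta\!\left(n\,Q^2\binom{n+Q-1}{Q-1}\right)$ bound; the matching lower bound is forced since merely listing the candidates already costs $\Omega\!\left(\binom{n+Q-1}{Q-1}\right)$. For constant $Q$ we have $\binom{n+Q-1}{Q-1} = \Theta(n^{Q-1})$, so the running time becomes $\Theta(n^Q)$, which is polynomial, as claimed.

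I expect the main obstacle to lie not in the routine enumeration-and-check but in justifying the reduction to contiguous profiles, which is exactly where three-discrete-concavity does its work: it is the monotone-marginal structure encoded by Definition~\ref{combinatorial condition 1} that lets the Contigufication Lemma rearrange an arbitrary equilibrium into skill-sorted form without ever opening up a profitable deviation, and thereby guarantees that the contiguous scan is complete. A secondary subtlety to keep straight is that the per-candidate check must evaluate payments at the correct (and in general non-contiguous) deviated load vectors, so that passing every test certifies a true equilibrium rather than merely a best response among contiguous profiles.
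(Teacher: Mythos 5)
Your algorithmic shell is exactly the paper's: sort players by skill, enumerate the $\binom{n+Q-1}{Q-1}$ contiguous load vectors by stars-and-bars, and for each candidate verify all $n(Q-1)$ unilateral deviations against the true deviated load vectors, for a per-candidate cost of $\Theta(nQ^{2})$ and the $\Theta(n^{Q})$ specialization at constant $Q$. You are also right on the one genuinely subtle point of the verification: since a player-specific payment is evaluated at the player's identity and the load vector, the deviated payoff is well defined even though the deviated profile breaks contiguity, so a candidate passing every test is a genuine pure Nash equilibrium and not merely a best point among contiguous profiles.

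The gap is that you invoke the Contigufication Lemma (Proposition~\ref{transformation into contiguous}) instead of proving it, and in the paper this lemma is not an external resource: it is established \emph{inside} the proof of this very theorem and constitutes essentially all of its mathematical content --- it is the only place where three-discrete-concavity is actually used. You correctly identify this as ``where three-discrete-concavity does its work,'' but identifying the obstacle is not the same as clearing it. The paper's argument takes the earliest inversion witness $i$ together with the earliest player $k$ forming an inversion pair with $i$, swaps their qualities (leaving the load vector unchanged), shows via Lemma~\ref{no earlier inversion} that the earliest inversion witness can only move weakly later, and then proves that the swapped profile is a pure Nash equilibrium if and only if the original is: writing out the no-deviation conditions {\sf (C.1)}--{\sf (C.6)} before and after the swap, the rewritten form of Definition~\ref{combinatorial condition 1} sandwiches the cost differences $\Lambda(s_{i},{\sf f}_{q_{i}})-\Lambda(s_{i},{\sf f}_{q})$ (and symmetrically for $k$) between the relevant payment differences, so that the pre-swap and post-swap conditions for $i$ stand or fall together (likewise for $k$), while the conditions {\sf (C.3)} and {\sf (C.6)} for every third player are literally identical; induction on inversions then finishes the lemma. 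Without some such argument your proposal is an enumeration procedure whose completeness is asserted rather than established, and in particular the hypothesis of three-discrete-concavity in the theorem statement is never actually used.
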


\begin{proof}
\noindent
We start by proving:

\begin{proposition}[{\bf Contigufication Lemma for Player-Specific Payments}]
\label{transformation into contiguous}
For three-discrete-concave player-specific payments,  
any pair of
{\it (i)}
a pure Nash equilibrium 
${\sf N}_{{\bf q}}
 = \langle {\sf N}_{{\bf q}}(1),
           \ldots, 
           {\sf N}_{{\bf q}}(Q)
   \rangle$
and 
{\it (ii)}
player sets ${\sf Players}_{{\bf q}}(q)$
for each quality $q \in [Q]$,
can be transformed into a contiguous pure Nash equilibrium.
\end{proposition}

\begin{proof}
If no inversion occurs in ${\sf N}_{{\bf q}}$,
then ${\sf N}_{{\bf q}}$ is contiguous
and we are done.
Else take the earliest inversion witness $i$,
together with the earliest player $k$
such that $i$ and $k$ make an inversion.
We shall also consider a player
$\iota \in [n] \setminus \{ i, k \}$.
Since payments are player-specific,
{
\small
\begin{eqnarray*}
{\sf U}_{i}({\sf N}_{{\bf q}}) 
& = & 
 {\sf P}_{i}(i, {\sf N}_{{\bf q}})
 - 
 \Lambda (s_{i}, {\sf f}_{q_{i}})
\end{eqnarray*}
} 
and
{
\small
\begin{eqnarray*}
{\sf U}_{k}({\sf N}_{{\bf q}}) 
& = & {\sf P}_{k}(k, {\sf N}_{{\bf q}})
      - 
      \Lambda (s_{k}, {\sf f}_{q_{k}})\, .
\end{eqnarray*}
}
\begin{enumerate}

\item
Player $i$ does not want to switch to quality 
$q \neq q_{i}$
if and only if
{
\small
\begin{eqnarray*}
&      & {\sf P}_{i}(i, ({\sf N}_{{\bf q}}(1),
                \ldots
                {\sf N}_{{\bf q}}(q_{k}),
                \ldots,
                {\sf N}_{{\bf q}}(q_{i}),
                \ldots,
                {\sf N}_{{\bf q}}(Q)))
         - \Lambda (s_{i}, {\sf f}_{q_{i}}) \\
& \geq &
 {\sf P}_{i}(i, ({\sf N}_{{\bf q}}(1),
                \ldots
                {\sf N}_{{\bf q}}(q) + 1,
                \ldots,
                {\sf N}_{{\bf q}}(q_{i}) - 1,
             \ldots,
             {\sf N}_{{\bf q}}(Q)))
 - \Lambda (s_{i}, {\sf f}_{q})\, ,
\end{eqnarray*}
} 
or
{
\small
\begin{eqnarray*}
         \Lambda (s_{i}, {\sf f}_{q_{i}}) 
         - 
         \Lambda (s_{i}, {\sf f}_{q}) 
& \leq &
   {\sf P}_{i}(i, ({\sf N}_{{\bf q}}(1),
                \ldots
                {\sf N}_{{\bf q}}(q),
                \ldots,
                {\sf N}_{{\bf q}}(q_{i}),
                \ldots,
                {\sf N}_{{\bf q}}(Q))) - \\
&      &
 {\sf P}_{i}(i, ({\sf N}_{{\bf q}}(1), 
             \ldots, 
             {\sf N}_{{\bf q}}(q) + 1, 
             \ldots, 
             {\sf N}_{{\bf q}}(q_{i}) - 1, 
             \ldots, 
             {\sf N}_{{\bf q}}(Q)))
\hfill~~~~~
{\sf (C.1)}\, .
\end{eqnarray*}
}

\item
Player $k$ does not want to switch to quality $q \neq q_{k}$
if and only if
{
\small
\begin{eqnarray*}
&      & {\sf P}_{k}(k, ({\sf N}_{{\bf q}}(1),
                \ldots
                {\sf N}_{{\bf q}}(q_{k}),
                \ldots,
                {\sf N}_{{\bf q}}(q),
                \ldots,
                {\sf N}_{{\bf q}}(Q)))
 - \Lambda (s_{k}, {\sf f}_{q_{k}}) \\
& \geq &
 {\sf P}_{k}(k, ({\sf N}_{{\bf q}}(1),
                \ldots
                {\sf N}_{{\bf q}}(q_{k}) - 1,
                \ldots,
                {\sf N}_{{\bf q}}(q) + 1,
             \ldots,
             {\sf N}_{{\bf q}}(Q)))
 - \Lambda (s_{k}, {\sf f}_{q})\, ,
\end{eqnarray*}
} 
or
{
\small
\begin{eqnarray*}
         \Lambda (s_{k}, {\sf f}_{q_{k}}) 
         - 
         \Lambda (s_{k}, {\sf f}_{q})
& \leq &
   {\sf P}_{k}(k, ({\sf N}_{{\bf q}}(1),
                \ldots
                {\sf N}_{{\bf q}}(q_{k}),
                \ldots,
                {\sf N}_{{\bf q}}(q),
                \ldots,
                {\sf N}_{{\bf q}}(Q))) - \\
&      &
 {\sf P}_{k}(k, ({\sf N}_{{\bf q}}(1), 
             \ldots, 
             {\sf N}_{{\bf q}}(q_{k}) - 1, 
             \ldots, 
             {\sf N}_{{\bf q}}(q) + 1, 
             \ldots, 
             {\sf N}_{{\bf x}}(Q)))
\hfill~~~~~
{\sf (C.2)}\, .
\end{eqnarray*}
}

\item
Player $\iota$ does not want to switch to quality $q \neq q_{\iota}$
if and only if
{
\small
\begin{eqnarray*}
&      & {\sf P}_{\iota}(\iota, ({\sf N}_{{\bf q}}(1),
                \ldots
                {\sf N}_{{\bf q}}(q),
                \ldots,
                {\sf N}_{{\bf q}}(q_{\iota}),
                \ldots,
                {\sf N}_{{\bf q}}(Q)))
         - \Lambda (s_{\iota}, {\sf f}_{q_{\iota}}) \\
& \geq &
 {\sf P}_{\iota}(\iota, ({\sf N}_{{\bf q}}(1),
                \ldots
                {\sf N}_{{\bf q}}(q) + 1,
                \ldots,
                {\sf N}_{{\bf q}}(q_{\iota}) - 1,
             \ldots,
             {\sf N}_{{\bf q}}(Q)))
 - \Lambda (s_{\iota}, {\sf f}_{q})\, ,
\end{eqnarray*}
} 
or
{
\small
\begin{eqnarray*}
         \Lambda (s_{\iota}, {\sf f}_{q_{\iota}}) 
         - 
         \Lambda (s_{\iota}, {\sf f}_{q}) 
& \leq &
     {\sf P}_{\iota}(\iota, ({\sf N}_{{\bf q}}(1),
                \ldots
                {\sf N}_{{\bf q}}(q),
                \ldots,
                {\sf N}_{{\bf q}}(q_{\iota}),
                \ldots,
                {\sf N}_{{\bf q}}(Q))) - \\
&      &
  {\sf P}_{\iota}(\iota, ({\sf N}_{{\bf q}}(1), 
             \ldots, 
             {\sf N}_{{\bf q}}(q) + 1, 
             \ldots, 
             {\sf N}_{{\bf q}}(q_{\iota}) - 1, 
             \ldots, 
             {\sf N}_{{\bf q}}(Q)))
\hfill~~~~~~~
{\sf (C.3)}\, .
\end{eqnarray*}
}

\end{enumerate}

\remove{
\item
\underline{Payments are player-invariant:}
Then,
{
\small
\begin{eqnarray*}
{\sf U}_{i}({\sf N}_{{\bf q}}) 
& = & 
 {\sf P}_{i}(q_{i}, {\sf N}_{{\bf q}})
 - 
 \Lambda (s_{i}, {\sf f}_{q_{i}})
\end{eqnarray*}
} 
and
{
\small
\begin{eqnarray*}
{\sf U}_{k}({\sf N}_{{\bf q}}) 
& = & {\sf P}_{k}(q_{k}, {\sf N}_{{\bf q}})
      - 
      \Lambda (s_{k}, {\sf f}_{q_{k}})\, .
\end{eqnarray*}
}
\noindent
Similarly to player-specific payments,
we obtain:
\begin{enumerate}

\item
Player $i$ does not want to switch to quality $q \neq q_{i}$
if and only if
{
\small
\begin{eqnarray*}
         \Lambda (s_{i}, {\sf f}_{q_{i}}) 
         - 
         \Lambda (s_{i}, {\sf f}_{q}) 
& \leq &
   {\sf P}_{i}(q_{i}, 
               ({\sf N}_{{\bf q}}(1),
                \ldots
                {\sf N}_{{\bf q}}(q),
                \ldots,
                {\sf N}_{{\bf q}}(q_{i}),
                \ldots,
                {\sf N}_{{\bf q}}(Q))) - \\
&      &
 {\sf P}_{i}(q_{k}, 
            ({\sf N}_{{\bf q}}(1), 
             \ldots, 
             {\sf N}_{{\bf q}}(q) + 1, 
             \ldots, 
             {\sf N}_{{\bf q}}(q_{i}) - 1, 
             \ldots, 
             {\sf N}_{{\bf q}}(Q)))
\hfill~~~~
{\sf (C.1)'}\, .
\end{eqnarray*}
}

\item
Player $k$ does not want to switch to quality $q \neq q_{k}$
if and only if
{
\small
\begin{eqnarray*}
         \Lambda (s_{k}, {\sf f}_{q_{k}}) 
         - 
         \Lambda (s_{k}, {\sf f}_{q})
& \leq &
   {\sf P}_{k}(q_{k}, 
                ({\sf N}_{{\bf q}}(1),
                \ldots
                {\sf N}_{{\bf q}}(q_{k}),
                \ldots,
                {\sf N}_{{\bf q}}(q),
                \ldots,
                {\sf N}_{{\bf q}}(Q))) - \\
&      &
 {\sf P}_{k}(q_{i}, 
            ({\sf N}_{{\bf q}}(1), 
             \ldots, 
             {\sf N}_{{\bf q}}(q_{k}) - 1, 
             \ldots, 
             {\sf N}_{{\bf q}}(q) + 1, 
             \ldots, 
             {\sf N}_{{\bf q}}(Q)))
\hfill~~~
{\sf (C.2)'}\, .
\end{eqnarray*}
}

\item
Player $\iota$ does not want to switch to quality $q \neq q_{\iota}$
if and only if
\remove{
{
\small
\begin{eqnarray*}
&      & {\sf P}_{\iota}(q_{\iota}, ({\sf N}_{{\bf q}}(1),
                \ldots
                {\sf N}_{{\bf q}}(q),
                \ldots,
                {\sf N}_{{\bf q}}(q_{\iota}),
                \ldots,
                {\sf N}_{{\bf q}}(Q)))
         - \Lambda (s_{\iota}, {\sf f}_{q_{\iota}}) \\
& \geq &
 {\sf P}_{\iota}(q_{\kappa}, ({\sf N}_{{\bf q}}(1),
                \ldots
                {\sf N}_{{\bf q}}(q) + 1,
                \ldots,
                {\sf N}_{{\bf q}}(q_{\iota}) - 1,
             \ldots,
             {\sf N}_{{\bf q}}(Q)))
 - \Lambda (s_{\iota}, {\sf f}_{q})\, ,
\end{eqnarray*}
} 
or
}
{
\small
\begin{eqnarray*}
         \Lambda (s_{\iota}, {\sf f}_{q_{\iota}}) 
         - 
         \Lambda (s_{\iota}, {\sf f}_{q}) 
& \leq &
     {\sf P}_{\iota}(q_{\iota}, ({\sf N}_{{\bf q}}(1),
                \ldots
                {\sf N}_{{\bf q}}(q),
                \ldots,
                {\sf N}_{{\bf q}}(q_{\iota}),
                \ldots,
                {\sf N}_{{\bf q}}(Q))) - \\
&      &
  {\sf P}_{\iota}(q_{\kappa}, ({\sf N}_{{\bf q}}(1), 
             \ldots, 
             {\sf N}_{{\bf q}}(q) + 1, 
             \ldots, 
             {\sf N}_{{\bf q}}(q_{\iota}) - 1, 
             \ldots, 
             {\sf N}_{{\bf q}}(Q)))
\hfill~~~~
{\sf (C.3)'}\, .
\end{eqnarray*}
}

\end{enumerate}
}

\remove{
\begin{eqnarray}
         s_{k} ( {\sf f}_{q} - {\sf f}_{q'})
& \geq & - \frac{\textstyle {\sf f}_{q'}}
       {\textstyle A + x_{q'} {\sf f}_{q'} + x_{q} {\sf f}_{q}} + \frac{\textstyle {\sf f}_{q}}
     {\textstyle A + (x_{q'}-1) {\sf f}_{q'} + (x_{q} + 1) {\sf f}_{q}}\, ;
\end{eqnarray}
\textcolor{red}{
player $i$ does not want to move to quality $q'$ if and only if
\begin{eqnarray*}
s_{i} {\sf f}_{q}
- \frac{\textstyle {\sf f}_{q}}
       {\textstyle A + x_{q'} {\sf f}_{q'} + x_{q} {\sf f}_{q}}
& \leq & s_{i} {\sf f}_{q'}
         - \frac{\textstyle {\sf f}_{q'}}
                {\textstyle A + (x_{q'}+1) {\sf f}_{q'} + (x_{q}-1) {\sf f}_{q}}                	
\end{eqnarray*}
or
\begin{eqnarray}
         s_{i} ( {\sf f}_{q} - {\sf f}_{q'})
& \leq &  \frac{\textstyle {\sf f}_{q}}
               {\textstyle A + x_{q'} {\sf f}_{q'} + x_{q} {\sf f}_{q}} - \frac{\textstyle {\sf f}_{q'}}
     {\textstyle A + (x_{q'}+1) {\sf f}_{q'} + (x_{q} - 1) {\sf f}_{q}}\, .
\end{eqnarray}
We eliminate this inversion
while preserving equilibrium
and without creating an earlier inversion.}}
\noindent
\begin{framed}
\noindent
Swap the qualities chosen by players $i$ and
$k$;
so they now choose $q_{k}$
and $q_{i}$, respectively.
Choices of other players are preserved. 
\end{framed}
\noindent
Denote as ${\sf N}_{{\bf q}'}$ the resulting load vector;
clearly,
for each $\widehat{q} \in [Q]$,
${\sf N}_{{\bf q}'}(\widehat{q})
 = 
 {\sf N}_{{\bf q}}(\widehat{q})$.
We prove: 

\begin{lemma} 
\label{no earlier inversion}
The earliest inversion witness 
in ${\bf q}'$ is either $i$
or some player $\widehat{i} > i$.
\end{lemma}

\begin{proof}
Assume, by way of contradiction,
that the earliest inversion witness in ${\bf q}'$
is a player $j < i$.
Since the earliest inversion witness in ${\bf q}$ is $i$,
$j$ is not an inversion witness in ${\bf q}$.
Let $\widehat{q}$ be the quality chosen by $j$
in ${\bf q}$ and ${\bf q}'$.
Since players other than $i$ and $k$
do not change qualities in ${\bf q}'$,
$j$
makes an inversion pair with either $i$ or $k$
in ${\bf q}'$. 
There are two cases.
\begin{itemize}

\item
\underline{
$j$ makes an inversion pair with $i$
in ${\bf q}'$:}
Since $i$ chooses quality $q_{k}$ in ${\bf q}'$, 
it follows that $\widehat{q} > q'$.
Since $k > j$ and $k$ chooses quality $q_{k}$ in 
${\bf q}$,
this implies that
$j$ and $k$
make an inversion pair in ${\bf q}$.

\item
\underline{$j$ makes an inversion pair with $k$
in ${\bf q}'$:}
Since $k$ chooses quality $q_{i}$ in ${\bf q}'$,
it follows that $\widehat{q} > q_{i}$.
Since $i > j$ and $i$ chooses quality $q_{i}$
in ${\bf q}$,
this implies that $j$ and $i$
make an inversion pair in ${\bf q}$.
\end{itemize}
In either case,
since $i > j$,
$i$ is not the earliest witness of inversion
in ${\bf q}$.
A contradiction.
\end{proof}

\noindent
\remove{Since $\Lambda$ is supermodular,
$s_{i} \geq s_{k}$ and 
${\sf f}_{q} > {\sf f}_{q'}$,
\begin{eqnarray*}
\Lambda (s_{i}, {\sf f}_{q'})
 +
 \Lambda (s_{k}, {\sf f}_{q})
& \leq &
 \Lambda (\max \{ s_{i}, s_{k} \},
          \max \{ {\sf f}_{q}, {\sf f}_{q'} \})
 +
 \Lambda (\min \{ s_{i}, s_{k} \},
          \min \{ {\sf f}_{q}, {\sf f}_{q'} \}) \\
& = &
 \Lambda (s_{i}, {\sf f}_{q})
 +
 \Lambda (s_{k}, {\sf f}_{q'})\, ,\ \mbox{or} \\
         \Lambda (s_{i}, {\sf f}_{q})
         -
         \Lambda (s_{i}, {\sf f}_{q'})
& \geq & \Lambda (s_{k}, {\sf f}_{q})
         -
         \Lambda (s_{k}, {\sf f}_{q'})\, . 	      \hspace{6.0cm}
         \mbox{{\sf (C.3)}}
\end{eqnarray*}}
We continue to prove:

\begin{lemma}
${\sf N}_{{\bf q}'}$ is a pure Nash equilibrium
if and only if
${\sf N}_{{\bf q}}$ is.
\end{lemma}

\begin{proof}
We consider the following cases:
\begin{enumerate}

\item
Player $i$ does not want to switch to quality $q \neq q_{k}$
if and only if
{
\small
\begin{eqnarray*}
&  & {\sf P}_{i}(i, ({\sf N}_{{\bf q}}(1), \ldots,
                     {\sf N}_{{\bf q}}(q_{i}), \ldots,
                   {\sf N}_{{\bf q}}(q_{k}), \ldots,
                   {\sf N}_{{\bf q}}(q), \ldots,
                   {\sf N}_{{\bf q}}(Q)))
   -
   \Lambda (s_{i}, {\sf f}_{q_{k}}) \\
& \geq &
  {\sf P}_{i}(i, ({\sf N}_{{\bf q}}(1), \ldots,
                  {\sf N}_{{\bf q}}(q_{i}), \ldots
                  {\sf N}_{{\bf q}}(q_{k}) - 1, \ldots,
                  {\sf N}_{{\bf q}}(q) + 1, \ldots,
                  {\sf N}_{{\bf q}}(Q)))
  -
  \Lambda (s_{i}, {\sf f}_{q})
\end{eqnarray*}           
}
or
{
\small
\begin{eqnarray*}
      \Lambda (s_{i}, {\sf f}_{q_{k}})
      -
      \Lambda (s_{i}, {\sf f}_{q})
&\leq & {\sf P}_{i}(i, ({\sf N}_{{\bf q}}(1), \ldots,
                        {\sf N}_{{\bf q}}(q_{i}), \ldots,
                        {\sf N}_{{\bf q}}(q_{k}), \ldots,
                        {\sf N}_{{\bf q}}(q), \ldots,
                        {\sf N}_{{\bf q}}(Q)))
        - \\
&     & {\sf P}_{i}(i, ({\sf N}_{{\bf q}}(1), \ldots,
                   {\sf N}_{{\bf q}}(q_{i}), \ldots,
                   {\sf N}_{{\bf q}}(q_{k})-1, \ldots,
                   {\sf N}_{{\bf q}}(q)+1, \ldots,
                   {\sf N}_{{\bf q}}(Q)))\, . {\sf (C.4)}
\end{eqnarray*}
}
\remove{which, 
since
$\Lambda (s_{i}, {\sf f}_{q_{k}})
 -
 \Lambda (s_{i}, {\sf f}_{q_{i}})
 \leq
 \Lambda (s_{k}, {\sf f}_{q_{k}})
 -
 \Lambda (s_{k}, {\sf f}_{q_{i}})$,
implies that
{
\small
\begin{eqnarray*}
      \Lambda (s_{i}, {\sf f}_{q_{i}})
      -
      \Lambda (s_{i}, {\sf f}_{q_{k}})
&\leq & \min \left\{ \Lambda (s_{k}, {\sf f}_{q_{k}})
                         -
                     \Lambda (s_{k}, {\sf f}_{q_{i}}), 
             \right. \\
&     &           {\sf P}_{i}(i, ({\sf N}_{{\bf q}'}(1), \ldots,
                        {\sf N}_{{\bf q}'}(q_{k}), \ldots,
                        {\sf N}_{{\bf q}'}(q_{i}), \ldots,
                        {\sf N}_{{\bf q}'}(Q))) - \\
&     &  {\sf P}_{i}(i, ({\sf N}_{{\bf q}}'(1), \ldots,
                   {\sf N}_{{\bf q}}'(q_{k})-1, \ldots,
                   {\sf N}_{{\bf q}}'(q_{i})+1, \ldots,
                   {\sf N}_{{\bf q}}'(Q)))
        \}\, .
\end{eqnarray*}
}
}

\item
Player $k$ does not want to switch to quality $q \neq q_{i}$
if and only if
{
\small
\begin{eqnarray*}
&  & {\sf P}_{k}(k, ({\sf N}_{{\bf q}}(1), \ldots,
                   {\sf N}_{{\bf q}}(q_i), \ldots,
                   {\sf N}_{{\bf q}}(q_{k}), \ldots,
                   {\sf N}_{{\bf q}}(q), \ldots,
                   {\sf N}_{{\bf q}}(Q)))
   -
   \Lambda (s_{k}, {\sf f}_{q_{i}}) \\
& \geq &
  {\sf P}_{k}(k, ({\sf N}_{{\bf q}}(1), \ldots,
                  {\sf N}_{{\bf q}}(q_{i}) -1, \ldots,
                  {\sf N}_{{\bf q}}(q_{k}), \ldots,
                  {\sf N}_{{\bf q}}(q) + 1, \ldots,
                  {\sf N}_{{\bf q}}(Q)))
  -
  \Lambda (s_{k}, {\sf f}_{q})
\end{eqnarray*}           
}
or
{
\small
\begin{eqnarray*}
      \Lambda (s_{k}, {\sf f}_{q_{i}})
      -
      \Lambda (s_{k}, {\sf f}_{q})
&\leq & {\sf P}_{k}(k, ({\sf N}_{{\bf q}}(1), \ldots,
                        {\sf N}_{{\bf q}}(q_{i}), \ldots,
                        {\sf N}_{{\bf q}}(q_{k}), \ldots,
                        {\sf N}_{{\bf q}}(q), \ldots,
                        {\sf N}_{{\bf q}}(Q)))
        - \\
&     & {\sf P}_{k}(k, ({\sf N}_{{\bf q}}(1), \ldots,
                   {\sf N}_{{\bf q}}(q_{i})-1, \ldots,
                   {\sf N}_{{\bf q}}(q_{k}), \ldots,
                   {\sf N}_{{\bf q}}(q)+1, \ldots,
                   {\sf N}_{{\bf q}}(Q)))\, .~~{\sf (C.5)}
\end{eqnarray*}
}
\remove{which, 
since
$\Lambda (s_{i}, {\sf f}_{q_{k}})
 -
 \Lambda (s_{i}, {\sf f}_{q_{i}})
 \leq
 \Lambda (s_{k}, {\sf f}_{q_{k}})
 -
 \Lambda (s_{k}, {\sf f}_{q_{i}})$,
implies that
{
\small
\begin{eqnarray*}
      \Lambda (s_{k}, {\sf f}_{q_{i}})
      -
      \Lambda (s_{k}, {\sf f}_{q_{k}})
&\leq & \min \left\{
        \Lambda (s_{i}, {\sf f}_{q_{i}})
        -
        \Lambda (s_{i}, {\sf f}_{q_{k}}),
             \right. \\
&     & {\sf P}_{k}(k, ({\sf N}_{{\bf q}'}(1), \ldots,
                        {\sf N}_{{\bf q}'}(q_{k}), \ldots,
                        {\sf N}_{{\bf q}'}(q_{i}), \ldots,
                        {\sf N}_{{\bf q}'}(Q)))
        - \\
&     & {\sf P}_{k}(k, ({\sf N}_{{\bf q}}'(1), \ldots,
                   {\sf N}_{{\bf q}}'(q_{k})+1, \ldots,
                   {\sf N}_{{\bf q}}'(q_{i})-1, \ldots,
                   {\sf N}_{{\bf q}}'(Q)))\, .
\end{eqnarray*}
}
}

\remove{$\Lambda (s_{k}, {\sf f}_{q})
 -
 {\sf P}_{k}({\bf x}')
 \leq
 \Lambda (s_{k}, {\sf f}_{q'})
 - 
 {\sf P}_{k}({\sf N}_{{\bf x}'}(1), \ldots,
             {\sf N}_{{\bf x}'}(q) - 1, \ldots,
             {\sf N}_{{\bf x}'}(q') + 1, \ldots,
             {\sf N}_{{\bf x}'}(Q))$
or
\begin{eqnarray*}
         \Lambda (s_{k}, {\sf f}_{q})
         -
         \Lambda (s_{k}, {\sf f}_{q'})
& \leq & {\sf P}_{k}({\bf x}')	
         -
         {\sf P}_{k}({\sf N}_{{\bf x}'}(1), \ldots,
                     {\sf N}_{{\bf x}'}(q') + 1, \ldots,
                     {\sf N}_{{\bf x}'}(q) - 1, \ldots,
                     {\sf N}_{{\bf x}'}(Q)) 
\end{eqnarray*}
which, since
$\Lambda (s_{i}, {\sf f}_{q_{i}})
 -
 \Lambda (s_{i}, {\sf f}_{q_{k}})
 \geq
 \Lambda (s_{k}, {\sf f}_{q_{i}})
 -
 \Lambda (s_{k}, {\sf f}_{q_{k}})$,
holds if an only if {\sf (C.2)} holds.
}

\item
Player $\iota$ does not want to switch to quality
$q_{\kappa} \in [Q] \setminus \{ q_{\iota} \}$
in ${\bf q}'$
if and only if
{
\small
\begin{eqnarray*}
&  & {\sf P}_{\iota}(\iota, ({\sf N}_{{\bf q}}(1), \ldots,
                   {\sf N}_{{\bf q}}(q_{\iota}), \ldots,
                   {\sf N}_{{\bf q}}(q_{\kappa}), \ldots,
                   {\sf N}_{{\bf q}}(Q)))
   -
   \Lambda (s_{\iota}, {\sf f}_{q_{\iota}}) \\
& \geq &
  {\sf P}_{\iota}(\iota, ({\sf N}_{{\bf q}}(1), \ldots,
                  {\sf N}_{{\bf q}}(q_{\iota}) - 1, \ldots,
                  {\sf N}_{{\bf q}}(q_{\kappa}) + 1, \ldots,
                  {\sf N}_{{\bf q}}(Q)))
   -
  \Lambda (s_{\iota}, {\sf f}_{q_{\kappa}})
\end{eqnarray*}           
}
or
{
\small
\begin{eqnarray*}
      \Lambda (s_{\iota}, {\sf f}_{q_{\iota}})
      -
      \Lambda (s_{\iota}, {\sf f}_{q_{\kappa}})
&\leq & {\sf P}_{\iota}(\iota, ({\sf N}_{{\bf q}}(1), \ldots,
                        {\sf N}_{{\bf q}}(q_{\iota}), \ldots,
                        {\sf N}_{{\bf q}}(q_{\kappa}), \ldots,
                        {\sf N}_{{\bf q}}(Q)))
        - \\
&     & {\sf P}_{\iota}(\iota, ({\sf N}_{{\bf q}}(1), \ldots,
                   {\sf N}_{{\bf q}}(q_{\iota}) - 1, \ldots,
                   {\sf N}_{{\bf q}}(q_{\kappa}) + 1, \ldots,
                   {\sf N}_{{\bf q}}(Q)))\, .~~{\sf (C.6)}
\end{eqnarray*}
}

\end{enumerate}

\remove{
\item
\underline{Payments are player-invariant:}
Player $i$ does not want to switch to quality $q_{i}$
if and only if
{
\small
\begin{eqnarray*}
&  & {\sf P}_{i}(q_{k}, ({\sf N}_{{\bf q}}'(1), \ldots,
                   {\sf N}_{{\bf q}}'(q_{k}), \ldots,
                   {\sf N}_{{\bf q}}'(q_{i}), \ldots,
                   {\sf N}_{{\bf q}}'(Q)))
   -
   \Lambda (s_{i}, {\sf f}_{q_{k}}) \\
& \geq &
  {\sf P}_{i}(q_{i}, ({\sf N}_{{\bf q}'}(1), \ldots,
                  {\sf N}_{{\bf q}'}(q_{k}) - 1, \ldots,
                  {\sf N}_{{\bf q}'}(q_{i}) + 1, \ldots,
                  {\sf N}_{{\bf q}'}(Q)))
  -
  \Lambda (s_{i}, {\sf f}_{q_{i}})
\end{eqnarray*}           
}
or
{
\small
\begin{eqnarray*}
      \Lambda (s_{i}, {\sf f}_{q_{k}})
      -
      \Lambda (s_{i}, {\sf f}_{q_{i}})
&\leq &   {\sf P}_{i}(q_{k}, ({\sf N}_{{\bf q}'}(1), \ldots,
                        {\sf N}_{{\bf q}'}(q_{k}), \ldots,
                        {\sf N}_{{\bf q}'}(q_{i}), \ldots,
                        {\sf N}_{{\bf q}'}(Q)))
        - \\
&     & {\sf P}_{i}(q_{i}, ({\sf N}_{{\bf q}}'(1), \ldots,
                   {\sf N}_{{\bf q}}'(q_{k})-1, \ldots,
                   {\sf N}_{{\bf q}}'(q_{i})+1, \ldots,
                   {\sf N}_{{\bf q}}'(Q)))\, ;~~{\sf (C.4)'}
\end{eqnarray*}
}
player $k$ does not want to switch to quality $q_{k}$
if and only if
{
\small
\begin{eqnarray*}
&  & {\sf P}_{k}(q_{i}, ({\sf N}_{{\bf q}}'(1), \ldots,
                   {\sf N}_{{\bf q}}'(q_{k}), \ldots,
                   {\sf N}_{{\bf q}}'(q_{i}), \ldots,
                   {\sf N}_{{\bf q}}'(Q)))
   -
   \Lambda (s_{k}, {\sf f}_{q_{i}}) \\
& \geq &
  {\sf P}_{k}(q_{k}, ({\sf N}_{{\bf q}'}(1), \ldots,
                  {\sf N}_{{\bf q}'}(q_{k}) + 1, \ldots,
                  {\sf N}_{{\bf q}'}(q_{i}) - 1, \ldots,
                  {\sf N}_{{\bf q}'}(Q)))
  -
  \Lambda (s_{k}, {\sf f}_{q_{k}})
\end{eqnarray*}           
}
or
{
\small
\begin{eqnarray*}
      \Lambda (s_{k}, {\sf f}_{q_{i}})
      -
      \Lambda (s_{k}, {\sf f}_{q_{k}})
&\leq & {\sf P}_{k}(q_{i}, ({\sf N}_{{\bf q}'}(1), \ldots,
                        {\sf N}_{{\bf q}'}(q_{k}), \ldots,
                        {\sf N}_{{\bf q}'}(q_{i}), \ldots,
                        {\sf N}_{{\bf q}'}(Q)))
        - \\
&     & {\sf P}_{k}(q_{k}, ({\sf N}_{{\bf q}}'(1), \ldots,
                   {\sf N}_{{\bf q}}'(q_{k})+1, \ldots,
                   {\sf N}_{{\bf q}}'(q_{i})-1, \ldots,
                   {\sf N}_{{\bf q}}'(Q)))\, ;~~{\sf (C.5)'}
\end{eqnarray*}
}
Player $\iota$ does not want to switch to quality
$q_{\kappa} \in [Q] \setminus \{ q_{\iota} \}$
in ${\bf q}'$
if and only if
{
\small
\begin{eqnarray*}
&  & {\sf P}_{\iota}(q_{\iota}, ({\sf N}_{{\bf q}}'(1), \ldots,
                   {\sf N}_{{\bf q}}'(q_{\iota}), \ldots,
                   {\sf N}_{{\bf q}}'(q_{\kappa}), \ldots,
                   {\sf N}_{{\bf q}}'(Q)))
   -
   \Lambda (s_{\iota}, {\sf f}_{q_{\iota}}) \\
& \geq &
  {\sf P}_{\iota}(q_{\iota}, ({\sf N}_{{\bf q}'}(1), \ldots,
                  {\sf N}_{{\bf q}}'(q_{\iota}) - 1, \ldots,
                  {\sf N}_{{\bf q}}'(q_{\kappa}) + 1, \ldots,
                  {\sf N}_{{\bf q}'}(Q))) \\
&  & -
  \Lambda (s_{\iota}, {\sf f}_{q_{\kappa}})
\end{eqnarray*}           
}
or
{
\small
\begin{eqnarray*}
      \Lambda (s_{\iota}, {\sf f}_{q_{\iota}})
      -
      \Lambda (s_{\iota}, {\sf f}_{q_{\kappa}})
&\leq & {\sf P}_{\iota}(q_{\iota}, ({\sf N}_{{\bf q}'}(1), \ldots,
                        {\sf N}_{{\bf q}'}(q_{\iota}), \ldots,
                        {\sf N}_{{\bf q}'}(q_{\kappa}), \ldots,
                        {\sf N}_{{\bf q}'}(Q)))
        - \\
&     & {\sf P}_{\iota}(q_{\iota}, ({\sf N}_{{\bf q}}'(1), \ldots,
                   {\sf N}_{{\bf q}}'(q_{\iota}) - 1, \ldots,
                   {\sf N}_{{\bf q}}'(q_{\kappa}) + 1, \ldots,
                   {\sf N}_{{\bf q}}'(Q)))\, .~~{\sf (C.6)'}
\end{eqnarray*}
}

\end{enumerate}
}
\noindent
Hence, we conclude:
\begin{enumerate}

\item
From the rewriting of the inequality for player $i$
in Definition~\ref{combinatorial condition 1},
\remove{{
\small
\begin{eqnarray*}
& &
{\sf P}_{i}(i, ({\sf N}_{{\bf q}}(1), \ldots,
                   {\sf N}_{{\bf q}}(q_{i})-1, \ldots,
                   {\sf N}_{{\bf q}}'(q_{k}), \ldots,
                   {\sf N}_{{\bf q}}'(q)+1, \ldots,
                   {\sf N}_{{\bf q}}'(Q)))
- \\
& &
 {\sf P}_{i}(i, ({\sf N}_{{\bf q}}(1), 
             \ldots, 
             {\sf N}_{{\bf q}}(q_{i}), 
             \ldots, 
             {\sf N}_{{\bf q}}(q_{k}), \ldots,
             {\sf N}_{{\bf q}}(q), 
             \ldots, 
             {\sf N}_{{\bf q}}(Q)))	\\
& \leq & 
{\sf P}_{i}(i, ({\sf N}_{{\bf q}'}(1), \ldots,
                        {\sf N}_{{\bf q}}(q_{i}), \ldots,
                        {\sf N}_{{\bf q}'}(q_{k}), \ldots,
                        {\sf N}_{{\bf q}'}(q), \ldots,
                        {\sf N}_{{\bf q}'}(Q)))
- \\
& & {\sf P}_{i}(i, ({\sf N}_{{\bf q}}(1), \ldots,
                {\sf N}_{{\bf q}}(q_{i}), \ldots,
                {\sf N}_{{\bf q}}(q_{k})-1, \ldots,
                {\sf N}_{{\bf q}}(q)+1, \ldots,
                {\sf N}_{{\bf q}}(Q)))
\end{eqnarray*}
}
if and only if}
{
\small
\begin{eqnarray*}
& &
  {\sf P}_{i}(i, ({\sf N}_{{\bf q}'}(1), \ldots,
                        {\sf N}_{{\bf q}}(q_{i})-1, \ldots, 
                        {\sf N}_{{\bf q}}(q_{k}), \ldots,
                        {\sf N}_{{\bf q}}(q)+1, \ldots,
                        {\sf N}_{{\bf q}}(Q)))
        - \\
& &        
   {\sf P}_{i}(i, ({\sf N}_{{\bf q}}(1), \ldots,
                   {\sf N}_{{\bf q}}(q_{i}), \ldots,
                   {\sf N}_{{\bf q}}(q_{k}), \ldots,
                   {\sf N}_{{\bf q}}(q), \ldots,
                   {\sf N}_{{\bf q}}(Q)))\, \\
& \leq &
{\sf P}_{i}(i, ({\sf N}_{{\bf q}}(1),
                \ldots
                {\sf N}_{{\bf q}}(q_{i}),
                \ldots,
                {\sf N}_{{\bf q}}(q_{k}), \ldots,
                {\sf N}_{{\bf q}}(q),
                \ldots,
                {\sf N}_{{\bf q}}(Q))) - \\
& &
{\sf P}_{i}(i, ({\sf N}_{{\bf q}}(1), 
             \ldots, 
             {\sf N}_{{\bf q}}(q_{i}), 
             \ldots,
             {\sf N}_{{\bf q}}(q_{k})-1, 
             \ldots, 
             {\sf N}_{{\bf q}}(q)+1, 
             \ldots, 
             {\sf N}_{{\bf q}}(Q)))\, .
\end{eqnarray*}
}
Hence,
{
\small
\begin{eqnarray*}
& &
  {\sf P}_{i}(i, ({\sf N}_{{\bf q}}(1), \ldots,
                        {\sf N}_{{\bf q}}(q_{i})-1, \ldots,
                        {\sf N}_{{\bf q}}(q_{k}), \ldots,
                        {\sf N}_{{\bf q}}(q) + 1, \ldots,
                        {\sf N}_{{\bf q}}(Q)))
        - \\
& &        
   {\sf P}_{i}(i, ({\sf N}_{{\bf q}}(1), \ldots,
                   {\sf N}_{{\bf q}}(q_i), \ldots,
                   {\sf N}_{{\bf q}}(q_{k}), \ldots,
                   {\sf N}_{{\bf q}}(q), \ldots,
                   {\sf N}_{{\bf q}}(Q)))\, \\
& \leq &
\Lambda (s_{i}, {\sf f}_{q})
-
\Lambda (s_{i}, {\sf f}_{q_{i}}),
\Lambda (s_{i}, {\sf f}_{q_{k}})
-
\Lambda (s_{i}, {\sf f}_{q})\\
& \leq &
{\sf P}_{i}(i, ({\sf N}_{{\bf q}}(1),
                \ldots
                {\sf N}_{{\bf q}}(q_{i}),
                \ldots,
                {\sf N}_{{\bf q}}(q_{k}),
                \ldots,
                {\sf N}_{{\bf q}}(q),
                \ldots,
                {\sf N}_{{\bf q}}(Q))) - \\
& &
{\sf P}_{i}(i, ({\sf N}_{{\bf q}}(1), 
             \ldots, 
             {\sf N}_{{\bf q}}(q_{i}), 
             \ldots, 
             {\sf N}_{{\bf q}}(q_{k})-1,
             \ldots,
             {\sf N}_{{\bf q}}(q) + 1, 
             \ldots, 
             {\sf N}_{{\bf q}}(Q)))	
\end{eqnarray*}
}
if and only if
both {\sf (C.1)} and {\sf (C.4)} hold.

\item
From the rewriting of the inequality for player $k$
in Definition~\ref{combinatorial condition 1},
\remove{{
\small
\begin{eqnarray*}
& &
{\sf P}_{k}(k, ({\sf N}_{{\bf q}}(1), \ldots,
                   {\sf N}_{{\bf q}}(q_{i}), \ldots,
                   {\sf N}_{{\bf q}}'(q_{k}), \ldots,
                   {\sf N}_{{\bf q}}'(q)+1, \ldots,
                   {\sf N}_{{\bf q}}'(Q)))
- \\
& &
 {\sf P}_{i}(i, ({\sf N}_{{\bf q}}(1), 
             \ldots, 
             {\sf N}_{{\bf q}}(q_{i}), 
             \ldots, 
             {\sf N}_{{\bf q}}(q_{k}), \ldots,
             {\sf N}_{{\bf q}}(q), 
             \ldots, 
             {\sf N}_{{\bf q}}(Q)))	\\
& \leq & 
{\sf P}_{i}(i, ({\sf N}_{{\bf q}'}(1), \ldots,
                        {\sf N}_{{\bf q}}(q_{i}), \ldots,
                        {\sf N}_{{\bf q}'}(q_{k}), \ldots,
                        {\sf N}_{{\bf q}'}(q), \ldots,
                        {\sf N}_{{\bf q}'}(Q)))
- \\
& & {\sf P}_{i}(i, ({\sf N}_{{\bf q}}(1), \ldots,
                {\sf N}_{{\bf q}}(q_{i}), \ldots,
                {\sf N}_{{\bf q}}(q_{k})-1, \ldots,
                {\sf N}_{{\bf q}}(q)+1, \ldots,
                {\sf N}_{{\bf q}}(Q)))
\end{eqnarray*}
}
if and only if}
{
\small
\begin{eqnarray*}
& &
  {\sf P}_{k}(k, ({\sf N}_{{\bf q}'}(1), \ldots,
                        {\sf N}_{{\bf q}}(q_{i}), \ldots, 
                        {\sf N}_{{\bf q}}(q_{k})-1, \ldots,
                        {\sf N}_{{\bf q}}(q)+1, \ldots,
                        {\sf N}_{{\bf q}}(Q)))
        - \\
& &        
   {\sf P}_{k}(k, ({\sf N}_{{\bf q}}(1), \ldots,
                   {\sf N}_{{\bf q}}(q_{i}), \ldots,
                   {\sf N}_{{\bf q}}(q_{k}), \ldots,
                   {\sf N}_{{\bf q}}(q), \ldots,
                   {\sf N}_{{\bf q}}(Q)))\, \\
& \leq &
{\sf P}_{k}(k, ({\sf N}_{{\bf q}}(1),
                \ldots
                {\sf N}_{{\bf q}}(q_{i}),
                \ldots,
                {\sf N}_{{\bf q}}(q_{k}), \ldots,
                {\sf N}_{{\bf q}}(q),
                \ldots,
                {\sf N}_{{\bf q}}(Q))) - \\
& &
{\sf P}_{k}(k, ({\sf N}_{{\bf q}}(1), 
             \ldots, 
             {\sf N}_{{\bf q}}(q_{i})-1, 
             \ldots,
             {\sf N}_{{\bf q}}(q_{k}), 
             \ldots, 
             {\sf N}_{{\bf q}}(q)+1, 
             \ldots, 
             {\sf N}_{{\bf q}}(Q)))\, .
\end{eqnarray*}
}
Hence,
{
\small
\begin{eqnarray*}
& &
  {\sf P}_{k}(k, ({\sf N}_{{\bf q}}(1), \ldots,
                        {\sf N}_{{\bf q}}(q_{i}), \ldots,
                        {\sf N}_{{\bf q}}(q_{k})-1, \ldots,
                        {\sf N}_{{\bf q}}(q) + 1, \ldots,
                        {\sf N}_{{\bf q}}(Q)))
        - \\
& &        
   {\sf P}_{i}(i, ({\sf N}_{{\bf q}}(1), \ldots,
                   {\sf N}_{{\bf q}}(q_i), \ldots,
                   {\sf N}_{{\bf q}}(q_{k}), \ldots,
                   {\sf N}_{{\bf q}}(q), \ldots,
                   {\sf N}_{{\bf q}}(Q)))\, \\
& \leq &
\Lambda (s_{k}, {\sf f}_{q})
-
\Lambda (s_{k}, {\sf f}_{q_{i}}),
\Lambda (s_{k}, {\sf f}_{q_{k}})
-
\Lambda (s_{k}, {\sf f}_{q})\\
& \leq &
{\sf P}_{k}(k, ({\sf N}_{{\bf q}}(1),
                \ldots
                {\sf N}_{{\bf q}}(q_{i}),
                \ldots,
                {\sf N}_{{\bf q}}(q_{k}),
                \ldots,
                {\sf N}_{{\bf q}}(q),
                \ldots,
                {\sf N}_{{\bf q}}(Q))) - \\
& &
{\sf P}_{k}(k, ({\sf N}_{{\bf q}}(1), 
             \ldots, 
             {\sf N}_{{\bf q}}(q_{i})-1, 
             \ldots, 
             {\sf N}_{{\bf q}}(q_{k}),
             \ldots,
             {\sf N}_{{\bf q}}(q) + 1, 
             \ldots, 
             {\sf N}_{{\bf q}}(Q)))	
\end{eqnarray*}
}
if and only if
both {\sf (C.2)} and {\sf (C.5)} hold.

\item
Since {\sf (C.3)} and {\sf (C.6)} are identical,
it follows that
player $\iota$ does not want to switch
to a quality $q_{\kappa} \neq q_{\iota}$
in ${\bf q}$ 
if and only if
she does not want to switch
to the quality $q_{\kappa}$ in ${\bf q}'$.

\end{enumerate}
The conclusions imply that no player
wants to switch qualities in ${\bf q}$
if and only if
she does not want to switch qualities in ${\bf q}'$.
The claim follows.
\remove{\textcolor{red}{do not want to switch in ${\bf q}'$
since 
{\it (i)}
they did not want to switch in ${\bf x}$,
{\it (ii)}
they choose the same qualities as in ${\bf x}$
and {\it (iii)}
${\sf N}_{{\bf x}'}(\widehat{q}) = 
 {\sf N}_{{\bf x}}(\widehat{q})$ 
for all qualities $\widehat{q} \in [Q]$.
Hence,
${\bf x}'$ is a pure Nash equilibrium.}}
\end{proof}

\noindent
Now the earliest inversion witness,
if any,
in ${\bf q}'$ is either $i$,
the earliest witness of inversion in ${\bf q}$,
making an inversion pair with a player $\widehat{k} > k$,
or greater than $i$. 
It follows inductively that
a pure Nash equilibrium exists
if and only if
a contiguous pure Nash equilibrium exists.
\end{proof}

\noindent
By Proposition~\ref{transformation into contiguous},
it suffices to search over contiguous load vectors.
\remove{
Recall that for any player $i$ assigned to quality $q \in [Q]$,
$c_{i} 
 = s_{i} {\sf f}_{q} - \beta\, 
   \frac{\textstyle {\sf f}_{q}}
        {\textstyle \sum_{\check{q} \in [Q]} 
        x_{\check{q}} {\sf f}_{\check{q}}}$.                            
}
Fix a load vector ${\sf N}_{{\bf q}}$
and a quality $q \in [Q]$
such that ${\sf Players}_{{\bf q}}(q) \neq \emptyset$. 
No player choosing quality $q$ wants to switch 
to the quality $q' \neq q$
if and only if
for all players $i \in {\sf Players}_{{\bf q}}(q)$,
\remove{\begin{eqnarray*}
         s_{i} {\sf f}_{q} - \beta\, 
         \frac{\textstyle {\sf f}_{q}}
              {\textstyle \sum_{\check{q} \in [Q]}
                          x_{\check{q}} {\sf f}_{\check{q}}}
& \leq & s_{i} {\sf f}_{q'} 
         - \beta\, \frac{\textstyle {\sf f}_{q'}}
                        {\textstyle \sum_{\check{q} \in [Q] \mid \check{q} \neq q, q'} {\sf f}_{\check{q}} x_{\check{q}} 
            + {\sf f}_{q} (x_{q} -1) 
            + {\sf f}_{q'} (x_{q'} +1)}
\end{eqnarray*}}
\begin{eqnarray*}
         {\sf P}_{i}(i, {\sf N}_{{\bf q}}) -
         \Lambda (s_{i}, {\sf f}_{q})
& \geq & {\sf P}_{i}(i,
                     ({\sf N}_{{\bf q}}(1),
                     \ldots,
                     {\sf N}_{{\bf q}}(q) - 1,
                     \ldots,
                     {\sf N}_{{\bf q}}(q') + 1,  	                 \ldots,
                     {\sf N}_{{\bf q}}(Q))) -
         \Lambda (s_{i}, {\sf f}_{q'})\, 
\end{eqnarray*}
or
\begin{eqnarray*}
         \Lambda (s_{i}, {\sf f}_{q})
         -
         \Lambda (s_{i}, {\sf f}_{q'})
& \leq & {\sf P}_{i}(i, {\sf N}_{{\bf q}})
         -
         {\sf P}_{i}(i,
                     ({\sf N}_{{\bf q}}(1),
                     \ldots,
                     {\sf N}_{{\bf q}}(q) - 1,
                     \ldots,
                     {\sf N}_{{\bf q}}(q') + 1,  	                 \ldots,
                     {\sf N}_{{\bf q}}(Q)))\, . {\sf (C.7)}
\end{eqnarray*}
\remove{\begin{eqnarray*}
         s_{i} 
& \geq & \beta \cdot	
         \frac{\textstyle 1}
              {\textstyle {\sf f}_{q'} - {\sf f}_{q}}
         \left(  \frac{\textstyle {\sf f}_{q'}}
                     {\textstyle \sum_{\check{q} \in [Q] \mid \check{q} \neq q, q'} {\sf f}_{\check{q}} x_{\check{q}} 
                     + {\sf f}_{q} (x_{q} - 1)
                     + {\sf f}_{q'} (x_{q'} + 1)} 
                - \frac{\textstyle {\sf f}_{q}}
                      {\textstyle \sum_{\check{q} \in [Q]}
                                   x_{\check{q}} {\sf f}_{\check{q}}}                 
         \right)
\end{eqnarray*}}
\remove{For a player-invariant payment function ${\sf P}$,
both ${\sf P}_{i}({{\bf x}})$
and ${\sf P}_{i}({\sf N}_{{\bf x}}(1),
                 \ldots,
                 {\sf N}_{{\bf x}}(q) - 1,
                 \ldots,
                 {\sf N}_{{\bf x}}(q') + 1,
                 \ldots,
                 {\sf N}_{{\bf x}}(Q))$
are constant
over all players $i$
choosing quality $q$ in ${\bf x}$
and switching to quality $q'$ in
$({\sf N}_{{\bf x}}(1),
  \ldots,
  {\sf N}_{{\bf x}}(q) - 1,
  \ldots,
  {\sf N}_{{\bf x}}(q') + 1,
  \ldots,
  {\sf N}_{{\bf x}}(Q))$;
so fix any such player $\widehat{i}$.
Then,
{\sf (C.7)} holds for all players 
$i \in {\sf Players}_{{\bf x}}(q)$
if and only if
{
\small 
\begin{eqnarray*}
         \min_{i \in {\sf Players}_{{\bf x}}(q)}
           (\Lambda (s_{i}, {\sf f}_{q'})
            -
            \Lambda (s_{i}, {\sf f}_{q}))
& \geq & - {\sf P}_{\widehat{i}}({\bf x})
         +
         {\sf P}_{\widehat{i}}({\sf N}_{{\bf x}}(1),
                     \ldots,
                     {\sf N}_{{\bf x}}(q) - 1,
                     \ldots,
                     {\sf N}_{{\bf x}}(q') + 1,  	                 \ldots,
                     {\sf N}_{{\bf x}}(Q))\, .{\sf (C.5)}                     
\end{eqnarray*}
}
\remove{\begin{eqnarray*}
         s_{{\sf last}(q)} 
& \geq & \beta \cdot	
         \frac{\textstyle 1}
              {\textstyle {\sf f}_{q'} - {\sf f}_{q}}
         \left( \frac{\textstyle {\sf f}_{q'}}
                     {\textstyle \sum_{\check{q} \in [q] \mid
                     \check{q} \neq q, q'}
                     {\sf f}_{\check{q}} x_{\check{q}} 
                     + {\sf f}_{q} (x_{q}-1)
                     + {\sf f}_{q'} (x_{q'} + 1)}
                -
                \frac{\textstyle {\sf f}_{q}}
                      {\textstyle \sum_{\check{q} \in [Q]}
                      {\sf f}_{\check{q}} x_{\check{q}}}                 
         \right)\, .
\end{eqnarray*}}
\remove{and     
\begin{eqnarray*}
         s_{\widehat{i}(q)} 
& \geq & \beta \cdot	
         \frac{\textstyle 1}
              {\textstyle {\sf f}_{q'} - {\sf f}_{q}}
         \left( \frac{\textstyle {\sf f}_{q'}}
                     {\textstyle \sum_{\check{q} \in [q] \mid
                     \check{q} \neq q, q'}
                     {\sf f}_{\check{q}} x_{\check{q}} 
                     + {\sf f}_{q} (x_{q}-1)
                     + {\sf f}_{q'} (x_{q'} + 1)}
                -
                \frac{\textstyle {\sf f}_{q}}
                      {\textstyle \sum_{\check{q} \in [Q]}
                      {\sf f}_{\check{q}} x_{\check{q}}}                 
         \right)\, .
\end{eqnarray*}}
\noindent
Hence, 
no player choosing quality $q \in [Q]$
wants to switch to a different quality
if and only if
{\sf (C.5)} holds
for each quality $q' \neq q$,
where 
$\widehat{i} \in {\sf Players}_{{\bf x}}(q)$
is arbitrarily chosen.
\remove{\begin{eqnarray*}
         s_{{\sf last}(q)} 
& \geq & \beta \cdot \max_{q' > q}	
         \left\{ \frac{\textstyle 1}
                      {\textstyle {\sf f}_{q'} - {\sf f}_{q}}
                 \left(     \frac{\textstyle {\sf f}_{q'}}
                     {\textstyle \sum_{\check{q} \in [q] \mid
                     \check{q} \neq q, q'}
                     {\sf f}_{\check{q}} x_{\check{q}} 
                     + {\sf f}_{q} (x_{q}-1)
                     + {\sf f}_{q'} (x_{q'} + 1)}
                -
                \frac{\textstyle {\sf f}_{q}}
                      {\textstyle \sum_{\check{q} \in [Q]}
                      {\sf f}_{\check{q}} x_{\check{q}}}                 
         \right) \right\}\, .
\end{eqnarray*}}
\remove{and
\begin{eqnarray*}
         s_{\widehat{i}(q)} 
& \geq & \beta \cdot \max_{q' > q}	
         \left\{ \frac{\textstyle 1}
                      {\textstyle {\sf f}_{q'} - {\sf f}_{q}}
                 \left(     \frac{\textstyle {\sf f}_{q'}}
                     {\textstyle \sum_{\check{q} \in [q] \mid
                     \check{q} \neq q, q'}
                     {\sf f}_{\check{q}} x_{\check{q}} 
                     + {\sf f}_{q} (x_{q}-1)
                     + {\sf f}_{q'} (x_{q'} + 1)}
                -
                \frac{\textstyle {\sf f}_{q}}
                      {\textstyle \sum_{\check{q} \in [Q]}
                      {\sf f}_{\check{q}} x_{\check{q}}}                 
         \right) \right\}\, .
\end{eqnarray*}}
\remove{
No player assigned to quality $q$ 
wants to move to the lower quality $q' < q$
if and only if
for all players $i$,
where ${\sf first}(q) \leq i \leq {\sf last}(q)$,
assigned to quality $q$
(since ${\sf f}_{q} > {\sf f}_{q'}$),
\begin{eqnarray*}
         s_{i}
& \leq & \beta \cdot
         \frac{\textstyle 1}
              {\textstyle {\sf f}_{q} - {\sf f}_{q'}}\,
         \left(  - \frac{\textstyle {\sf f}_{q'}}
                     {\textstyle \sum_{\check{q} \in [Q] \mid \check{q} \neq q, q'} {\sf f}_{\check{q}} x_{\check{q}} 
                     + {\sf f}_{q} (x_{q} - 1)
                     + {\sf f}_{q'} (x_{q'} + 1)} 
                + \frac{\textstyle {\sf f}_{q}}
                      {\textstyle \sum_{\check{q} \in [Q]}
                                   x_{\check{q}} {\sf f}_{\check{q}}}                 
         \right)
\end{eqnarray*}
if and only if (since the sequence $s_{1}, s_{2}, \ldots, s_{n}$ is non--increasing)
\begin{eqnarray*}
         s_{{\sf first}(q)} 
& \leq & \beta \cdot
         \frac{\textstyle 1}
              {\textstyle {\sf f}_{q} - {\sf f}_{q'}} 	
         \left( - \frac{\textstyle {\sf f}_{q'}}
                     {\textstyle \sum_{\check{q} \in [q] \mid
                     \check{q} \neq q, q'}
                     {\sf f}_{\check{q}} x_{\check{q}} 
                     + {\sf f}_{q} (x_{q}-1)
                     + {\sf f}_{q'} (x_{q'} + 1)}
                +
                \frac{\textstyle {\sf f}_{q}}
                      {\textstyle \sum_{\check{q} \in [Q]}
                      {\sf f}_{\check{q}} x_{\check{q}}}                 
         \right)\, .
\end{eqnarray*}     
Hence, 
no player assigned to quality $q \in [Q]$ 
wants to switch to a quality lower than $q$
if and only if
\begin{eqnarray*}
         s_{{\sf first}(q)} 
& \leq & \beta \cdot \min_{q' < q}	
         \left\{ \frac{\textstyle 1}
                      {\textstyle {\sf f}_{q} - {\sf f}_{q'}}\left( - \frac{\textstyle {\sf f}_{q'}}
                     {\textstyle \sum_{\check{q} \in [q] \mid
                     \check{q} \neq q, q'}
                     {\sf f}_{\check{q}} x_{\check{q}} 
                     + {\sf f}_{q} (x_{q}-1)
                     + {\sf f}_{q'} (x_{q'} + 1)}
                +
                \frac{\textstyle {\sf f}_{q}}
                      {\textstyle \sum_{\check{q} \in [Q]}
                      {\sf f}_{\check{q}} x_{\check{q}}} \right)                
         \right\}\, .
\end{eqnarray*}
}
\remove{So consider 
the contiguous profile
$\langle x_{1}, x_{2}, \ldots, x_{Q} \rangle$,
which defines the indices
${\sf first}(q)$ and ${\sf last}(q)$ for each quality $q \in [Q]$.
Necessary and sufficient conditions
for the load vector
to be a pure Nash equilibrium
are that for each quality $q \in [Q]$,
a player assigned to quality $q$
does not want to move to a quality either higher or lower than $q$.
Hence,}
}
Since ${\sf P}$ is player-specific,
${\sf P}_{i}(i, {\sf N}_{{\bf q}})$
and ${\sf P}_{i}(i, ({\sf N}_{{\bf q}}(1),
                 \ldots,
                 {\sf N}_{{\bf q}}(q) - 1,
                 \ldots,
                 {\sf N}_{{\bf q}}(q') + 1,
                 \ldots,
                 {\sf N}_{{\bf q}}(Q)))$
are not constant
over all players
choosing quality $q$ in ${\sf N}_{{\bf q}}$
and switching to quality $q'$ in
$({\sf N}_{{\bf q}}(1),
  \ldots,
  {\sf N}_{{\bf q}}(q) - 1,
  \ldots,
  {\sf N}_{{\bf q}}(q') + 1,
  \ldots,
  {\sf N}_{{\bf q}}(Q))$,
respectively.  
Hence,
no player choosing quality $q \in [Q]$
wants to switch to a quality $q' \neq q$
if and only if
{\sf (C.4)} holds for all players
$i \in {\sf Players}_{{\bf q}}(q)$.

To compute a pure Nash equilibrium,
we enumerate all contiguous load vectors
${\sf N}_{{\bf q}} = \langle {\sf N}_{{\bf q}}(1), 
                   {\sf N}_{{\bf q}}(2), 
                   \ldots,
                   {\sf N}_{{\bf q}}(Q)
           \rangle$,
searching for one that satisfies 
{\sf (C.7)},
for each quality $q \in [Q]$
and 
for all players
$i \in {\sf Players}_{{\bf q}}(q)$;
clearly,
there are $\binom{\textstyle n+Q-1}{\textstyle Q-1}$
contiguous load vectors (cf.~\cite[Section 2.6]{C02}). 
\remove{
\begin{itemize}
	
\item
For player-invariant payments,
checking {\sf (C.5)} 
for a quality $q \in [Q]$
entails the computation of the minimum of a function
on a set of size ${\sf N}_{{\bf x}}(q)$;
computation of the minima for all qualities $q \in [Q]$
takes time $\sum_{q \in [Q]}
             \Theta ({\sf N}_{{\bf x}}(q))
           =
           \Theta \left( \sum_{q \in [Q]}
                          {\sf N}_{{\bf x}}(q)
                  \right)
           =
           \Theta (n)$.
Thus, 
the total time is
$\binom{\textstyle n+Q-1}{\textstyle Q-1}
 \cdot
 \left( \Theta(n)
        +
        \Theta (Q^{2})
 \right)
 =
 \Theta \left( \max \{ n, Q^{2} \}
               \cdot
               \binom{\textstyle n+Q-1}{\textstyle Q-1}
        \right)$.
           
\item           
\textcolor{violet}
}
For a player-specific payment function,
checking {\sf (C.7)}
for a quality $q \in [Q]$
entails no minimum computation
but must be repeated $n$ times
for all players $i \in [n]$;            
checking that the inequality holds
for a particular $q' \neq q$ takes time $\Theta (1)$,
so checking that it holds for all qualities $q' \neq q$
takes time $\Theta (Q)$,
and checking that it holds for all $q \in [Q]$
takes time $\Theta (Q^{2})$.
Thus, 
the total time is
$\Theta \left( n
               \cdot
               Q^{2}
               \cdot
               \binom{\textstyle n+Q-1}{\textstyle Q-1}
        \right)$. 
For constant $Q$,
this is a polynomial $\Theta \left( n^{Q} \right)$ 
algorithm.

\remove{Clearly,
there are 
$\binom{\textstyle n}{Q-1}$ 
contiguous profiles to enumerate,
and for each we have to check $2Q-2$ conditions,
involving together the computation of a minimum and a maximum
over two sets of total size $Q-1$.
(For all qualities $q \in [Q]$ except $1$ and $Q$,
for which we have to check only one,
we have to check two conditions.)
Hence,
we have a $\Theta \left( Q^{2} \binom{\textstyle n}{Q-1} \right)$ algorithm.}
By Proposition~\ref{transformation into contiguous},
a contiguous load vector satisfying {\sf (C.7)}
for each quality $q \in [Q]$ exists
if and only if
it will be found by the algorithm
enumerating all contiguous load vectors.
Hence,
the algorithm solves
{\sc $\exists$PNE with Player-Specific Payments}.
\end{proof}

\subsection{Player-Invariant Payments}

Recall that
a player-invariant payment function ${\sf P}_{i}({\bf q})$
can be represented by a two-argument payment function
${\sf P}_{i}(q, {\bf q}_{-i})$,
where $q \in [Q]$
and ${\bf q}_{-i}$
is a partial quality vector,
for some player $i \in [n]$.
In correspondence to three-discrete-concave player-specific payments,
we define:

\begin{definition}
\label{combinatorial condition 2}
A player-invariant payment function
${\sf P}$
is {\it three-discrete-concave} if
for every player $i \in [n]$,
for every load vector ${\sf N}_{{\bf q}}$
and for every triple of qualities
$q_{i}$, $q_{k}$, $q \in [Q]$,
{
\small
\begin{eqnarray*}
& &
{\sf P}_{i}(q, ({\sf N}_{{\bf q}}(1), \ldots,
                   {\sf N}_{{\bf q}}(q_{i}), \ldots,
                   {\sf N}_{{\bf q}}(q_{k})-1, \ldots,
                   {\sf N}_{{\bf q}}(q)+1, \ldots,
                   {\sf N}_{{\bf q}}(Q)))
+ \\
& &
 {\sf P}_{i}(q, ({\sf N}_{{\bf q}}(1), 
             \ldots, 
             {\sf N}_{{\bf q}}(q_{i}) - 1, 
             \ldots, 
             {\sf N}_{{\bf q}}(q_{k}),
             \ldots,
             {\sf N}_{{\bf q}}(q) + 1, 
             \ldots, 
             {\sf N}_{{\bf q}}(Q)))	\\
& \leq & 
2\,
{\sf P}_{i}(q_{i}, ({\sf N}_{{\bf q}}(1), \ldots,
                        {\sf N}_{{\bf q}}(q_{i}), \ldots, 
                        {\sf N}_{{\bf q}}(q_{k}), \ldots,
                        {\sf N}_{{\bf q}}(q), \ldots,
                        {\sf N}_{{\bf q}}(Q)))\, .
\end{eqnarray*}
}
\end{definition}

\noindent
In correspondence to Proposition~\ref{transformation into contiguous},
we prove a Contigufication Lemma
for three-discrete-concave player-invariant payment functions:

\begin{proposition}[{\bf Contigufication Lemma for Player-Invariant Payments}]
\label{transformation into contiguous 2}
For three-discrete-concave player-invariant payments,  
any pair of
{\it (i)}
a pure Nash equilibrium 
${\sf N}_{{\bf q}}
 = \langle {\sf N}_{{\bf q}}(1),
           \ldots, 
           {\sf N}_{{\bf q}}(Q)
   \rangle$
and 
{\it (ii)}
player sets ${\sf Players}_{{\bf q}}(q)$
for each quality $q \in [Q]$,
can be transformed into a contiguous pure Nash equilibrium.
\end{proposition}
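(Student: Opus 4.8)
The plan is to follow the proof of Proposition~\ref{transformation into contiguous} almost verbatim, replacing the player-specific three-discrete-concavity (Definition~\ref{combinatorial condition 1}) by its player-invariant counterpart (Definition~\ref{combinatorial condition 2}) and exploiting the additional leverage that player-invariance provides. As before, if the load vector ${\sf N}_{{\bf q}}$ of the given pure Nash equilibrium has no inversion, it is already contiguous and we are done; otherwise I would take the earliest inversion witness $i$ together with the earliest player $k$ such that $i$ and $k$ form an inversion pair, so that $i < k$ while $q_{i} > q_{k}$, and swap the qualities chosen by $i$ and $k$. The resulting load vector ${\sf N}_{{\bf q}'}$ coincides componentwise with ${\sf N}_{{\bf q}}$, and Lemma~\ref{no earlier inversion} carries over unchanged, since its proof is purely combinatorial in the player indices and chosen qualities and never refers to the payment function; hence the earliest inversion witness in ${\bf q}'$ is either $i$ or some player $\widehat{i} > i$, which drives the outer induction to termination exactly as in Proposition~\ref{transformation into contiguous}.

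The core of the argument is again to show that ${\sf N}_{{\bf q}'}$ is a pure Nash equilibrium if and only if ${\sf N}_{{\bf q}}$ is. Here I would first record the simplification offered by player-invariance: because a player-invariant payment depends only on the chosen quality and the load vector, and ${\sf N}_{{\bf q}'} = {\sf N}_{{\bf q}}$, the payment to $i$ after the swap (now choosing $q_{k}$) equals the payment that $k$ received at $q_{k}$ in ${\bf q}$, and symmetrically the payment to $k$ after the swap (now choosing $q_{i}$) equals the payment that $i$ received at $q_{i}$ in ${\bf q}$. This lets me work with a single representative payment per quality rather than one payment per player, which is precisely the source of the improved running time claimed in Theorem~\ref{the most most arbitrary}.

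With this in hand, I would write out, for each quality $q$, the ``no profitable deviation'' inequalities for players $i$, $k$ and an arbitrary third player $\iota$ in both ${\bf q}$ and ${\bf q}'$, keeping each player's own skill-effort term $\Lambda(s_{\cdot}, {\sf f}_{\cdot})$ attached to that player while transferring the payment values across the swap as above. As in the player-specific case, the conditions for $\iota$ are literally identical in ${\bf q}$ and ${\bf q}'$, since both its chosen quality and the load vector are unchanged. For $i$ and $k$, the player-invariant three-discrete-concavity of Definition~\ref{combinatorial condition 2} supplies the inequality that chains the deviation condition at the original quality with the deviation condition at the swapped quality: applying it with the triple $(q_{i}, q_{k}, q)$ sandwiches the two relevant payment differences around the cost differences $\Lambda(s_{i}, {\sf f}_{q}) - \Lambda(s_{i}, {\sf f}_{q_{i}})$ and $\Lambda(s_{i}, {\sf f}_{q_{k}}) - \Lambda(s_{i}, {\sf f}_{q})$ (and likewise with $s_{k}$), so that player $i$'s equilibrium condition in ${\bf q}'$ holds if and only if its conditions in ${\bf q}$ do, and symmetrically for $k$.

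The main obstacle I anticipate is bookkeeping rather than conceptual: one must track that the first argument of ${\sf P}$ changes from the original quality ($q_{i}$ or $q_{k}$) to the deviation target $q$ as mandated by Definition~\ref{combinatorial condition 2}, and verify that the concavity inequality is invoked with the correct quality in that slot so that the sandwich is valid for every deviation target $q$ simultaneously. Once these inequalities are assembled, I would conclude that no player wants to switch in ${\bf q}$ if and only if no player wants to switch in ${\bf q}'$, so equilibrium is preserved; since each swap strictly advances the earliest inversion witness, iterating the construction yields a contiguous pure Nash equilibrium, and conversely a pure Nash equilibrium exists if and only if a contiguous one does, which completes the proof.
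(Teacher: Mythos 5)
Your proposal matches the paper's approach exactly: the paper gives no separate proof of Proposition~\ref{transformation into contiguous 2}, stating it ``in correspondence to'' Proposition~\ref{transformation into contiguous}, i.e., the intended argument is precisely your swap of the earliest inversion pair, the unchanged load vector, the combinatorial Lemma~\ref{no earlier inversion}, and the sandwich via Definition~\ref{combinatorial condition 2} with the deviation target in the quality slot of ${\sf P}$. Your additional observation that player-invariance lets the swapped players inherit each other's payments (since payments depend only on the chosen quality and the load vector, which is preserved) is exactly the simplification the paper exploits for the improved complexity in Theorem~\ref{the most most arbitrary}, so the proposal is correct and essentially identical to the paper's proof.
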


\noindent
By Proposition~\ref{transformation into contiguous 2},
it suffices to search over contiguous load vectors.
Fix a load vector
${\sf N}_{{\bf q}}$
and a quality $q \in [Q]$
such that
${\sf Players}_{{\bf q}}(q) \neq \emptyset$.
No player choosing quality $q$
wants to switch to the quality $q' \neq q$
if and only if
for all players $i \in {\sf Players}_{{\bf q}}(q)$,
\begin{eqnarray*}
         {\sf P}_{i}(q, {\sf N}_{{\bf q}}) -
         \Lambda (s_{i}, {\sf f}_{q})
& \geq & {\sf P}_{i}(q',
                     ({\sf N}_{{\bf q}}(1),
                     \ldots,
                     {\sf N}_{{\bf q}}(q) - 1,
                     \ldots,
                     {\sf N}_{{\bf q}}(q') + 1,  	                 \ldots,
                     {\sf N}_{{\bf q}}(Q))) -
         \Lambda (s_{i}, {\sf f}_{q'})\, 
\end{eqnarray*}
or
\begin{eqnarray*}
         \Lambda (s_{i}, {\sf f}_{q})
         -
         \Lambda (s_{i}, {\sf f}_{q'})
& \leq & {\sf P}_{i}(q, {\sf N}_{{\bf q}})
         -
         {\sf P}_{i}(q',
                     ({\sf N}_{{\bf q}}(1),
                     \ldots,
                     {\sf N}_{{\bf q}}(q) - 1,
                     \ldots,
                     {\sf N}_{{\bf q}}(q') + 1,  	                 \ldots,
                     {\sf N}_{{\bf q}}(Q)))\, . {\sf (C.8)}
\end{eqnarray*}

\noindent
Since ${\sf P}$ is player-invariant,
${\sf P}_{i}(q, {\sf N}_{{\bf q}})$
and ${\sf P}_{i}(q', ({\sf N}_{{\bf q}}(1),
                 \ldots,
                 {\sf N}_{{\bf q}}(q) - 1,
                 \ldots,
                 {\sf N}_{{\bf q}}(q') + 1,
                 \ldots,
                 {\sf N}_{{\bf q}}(Q)))$
are constant
over all players
choosing quality $q$ in ${\sf N}_{{\bf q}}$
and switching to quality $q'$ in
$({\sf N}_{{\bf q}}(1),
  \ldots,
  {\sf N}_{{\bf q}}(q) - 1,
  \ldots,
  {\sf N}_{{\bf q}}(q') + 1,
  \ldots,
  {\sf N}_{{\bf q}}(Q))$,
respectively.
Hence,
no player $\widehat{i}$ choosing quality $q \in [Q]$
wants to switch to a quality $q' \neq q$
if and only if
{\sf (C.8)} holds 
for each quality $q' \neq q$,
where 
$\widehat{i} \in {\sf Players}_{{\bf q}}(q)$
is arbitrarily chosen.

To compute a pure Nash equilibrium,
we enumerate all contiguous load vectors
${\sf N}_{{\bf q}} = \langle {\sf N}_{{\bf q}}(1), 
                   {\sf N}_{{\bf q}}(2), 
                   \ldots,
                   {\sf N}_{{\bf q}}(Q)
           \rangle$,
searching for one that satisfies {\sf (C.8)},
for each quality $q \in [Q]$
and 
for a player
$\widehat{i} \in {\sf Players}_{{\bf q}}(q)$;
clearly,
there are $\binom{\textstyle n+Q-1}{\textstyle Q-1}$
contiguous load vectors (cf.~\cite[Section 2.6]{C02}. 
For player-invariant payments,
checking {\sf (C.8)} 
for a quality $q \in [Q]$
entails the computation of the minimum of a function
on a set of size ${\sf N}_{{\bf q}}(q)$;
computation of the minima for all qualities $q \in [Q]$
takes time $\sum_{q \in [Q]}
             \Theta ({\sf N}_{{\bf q}}(q))
           =
           \Theta \left( \sum_{q \in [Q]}
                          {\sf N}_{{\bf q}}(q)
                  \right)
           =
           \Theta (n)$.
Thus, 
the total time is
$\binom{\textstyle n+Q-1}{\textstyle Q-1}
 \cdot
 \left( \Theta(n)
        +
        \Theta (Q^{2})
 \right)
 =
 \Theta \left( \max \{ n, Q^{2} \}
               \cdot
               \binom{\textstyle n+Q-1}{\textstyle Q-1}
        \right)$.

By Proposition~\ref{transformation into contiguous},
a contiguous load vector satisfying {\sf (C.8)}
for each quality $q \in [Q]$ exists
if and only if
it will be found by the algorithm
enumerating all contiguous load vectors.                
Hence, it follows:

\begin{theorem}
\label{the most most arbitrary}
There is a 
$\Theta \left( 
        \max \{ n, Q^2 \}
         \cdot
         \binom{\textstyle n+Q-1}{\textstyle Q-1} 
         \right)$ 
algorithm that solves 
{\sc $\exists$PNE with Player-Invariant Payments}
for arbitrary players
and three-discrete-concave player-invariant payments;
for constant $Q$, 
it is a $\Theta (n^{Q})$ 
polynomial algorithm.
\end{theorem}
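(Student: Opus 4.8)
The plan is to obtain Theorem~\ref{the most most arbitrary} as a direct algorithmic consequence of the Contigufication Lemma for player-invariant payments (Proposition~\ref{transformation into contiguous 2}), following the same template as Theorem~\ref{the most arbitrary} for the player-specific case but exploiting player-invariance to remove a factor of $n$ from the per-load-vector cost. First I would invoke Proposition~\ref{transformation into contiguous 2} to conclude that a pure Nash equilibrium exists if and only if a contiguous one does, so the algorithm may restrict its search to contiguous load vectors. Since a contiguous load vector is completely determined by the tuple $\langle {\sf N}_{{\bf q}}(1), \ldots, {\sf N}_{{\bf q}}(Q) \rangle$ of non-negative loads summing to $n$, the number of candidates to enumerate is the number of compositions of $n$ into $Q$ parts, namely $\binom{\textstyle n+Q-1}{\textstyle Q-1}$ (cf.~\cite[Section 2.6]{C02}).

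Next I would pin down the equilibrium test per candidate. For a fixed contiguous load vector ${\sf N}_{{\bf q}}$ and a quality $q$ with ${\sf Players}_{{\bf q}}(q) \neq \emptyset$, no player choosing $q$ wants to switch to $q' \neq q$ precisely when condition {\sf (C.8)} holds for every player in ${\sf Players}_{{\bf q}}(q)$. The crucial point, and exactly where player-invariance improves on the player-specific case, is that both payment terms on the right-hand side of {\sf (C.8)} are identical for all players choosing the same quality $q$; only the skill-effort difference $\Lambda(s_{i}, {\sf f}_{q}) - \Lambda(s_{i}, {\sf f}_{q'})$ on the left varies with $i$. Hence {\sf (C.8)} holds for all of ${\sf Players}_{{\bf q}}(q)$ if and only if it holds for the single binding (worst-case) player, which is identified by one minimum computation over that set of size ${\sf N}_{{\bf q}}(q)$. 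In the player-specific case (Theorem~\ref{the most arbitrary}) no such collapse is available, so all $n$ players must be tested individually, which is the source of the extra factor of $n$ there.

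For the running time I would then bound the cost of processing one contiguous load vector. Summed over all qualities, the minimum computations cost $\sum_{q \in [Q]} \Theta({\sf N}_{{\bf q}}(q)) = \Theta(n)$, since the class sizes sum to $n$, while comparing each quality $q$ against each target $q' \neq q$ contributes $\Theta(Q^{2})$. Thus processing one load vector costs $\Theta(n) + \Theta(Q^{2}) = \Theta(\max\{n, Q^{2}\})$, and multiplying by the number $\binom{\textstyle n+Q-1}{\textstyle Q-1}$ of contiguous load vectors yields the claimed $\Theta\!\left(\max\{n, Q^{2}\}\cdot\binom{\textstyle n+Q-1}{\textstyle Q-1}\right)$ bound. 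For constant $Q$ one has $\binom{\textstyle n+Q-1}{\textstyle Q-1} = \Theta(n^{Q-1})$ and $\max\{n, Q^{2}\} = \Theta(n)$, giving the polynomial $\Theta(n^{Q})$ specialization. Correctness follows since, by Proposition~\ref{transformation into contiguous 2}, a contiguous load vector satisfying {\sf (C.8)} at every quality exists if and only if the contest game admits a pure Nash equilibrium, and the enumeration is exhaustive over contiguous load vectors.

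I expect the only genuine subtlety to be the complexity bookkeeping rather than the correctness: specifically, arguing cleanly that the binding-player minimum computations amortize to $\Theta(n)$ across all quality classes (using that the classes partition the $n$ sorted players into contiguous blocks), and isolating precisely where player-invariance eliminates the per-player repetition of the payment evaluation. The substantive correctness content is already carried by Proposition~\ref{transformation into contiguous 2}, whose own proof is a swap / inversion-elimination argument analogous to Proposition~\ref{transformation into contiguous} but driven by Definition~\ref{combinatorial condition 2}; that is the step where the three-discrete-concavity hypothesis is actually consumed.
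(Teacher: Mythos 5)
Your proposal matches the paper's own proof essentially step for step: it invokes the Contigufication Lemma for player-invariant payments (Proposition~\ref{transformation into contiguous 2}) to restrict the search to the $\binom{\textstyle n+Q-1}{\textstyle Q-1}$ contiguous load vectors, tests each via condition {\sf (C.8)}, and exploits player-invariance exactly as the paper does to replace per-player checking by one extremum computation per quality class, yielding the $\Theta(n) + \Theta(Q^{2})$ cost per load vector and the claimed totals. Your closing remarks on where three-discrete-concavity is consumed and on the amortization of the minima are consistent with the paper's treatment, so there is nothing to correct.
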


\remove{
\noindent
Say that
$\Lambda$ is {\it separable}
if
$\Lambda ({\bf x}, {\bf y})
 =
 \Lambda_{1}({\bf x}) \Lambda_{2}({\bf y})$
for all ${\bf x}, {\bf y} \in {\mathbb{R}}^{2}$,
where 
$\Lambda_{1}: {\mathbb{R}}_{\geq 1}
              \rightarrow
              {\mathbb{R}}_{> 0}$
and
$\Lambda_{2}: {\mathbb{R}}_{> 0}
              \rightarrow
              {\mathbb{R}}_{> 0}$
are each monotonically increasing in skill and effort,
respectively.
Consider skills $s_{i}$ and $s_{k}$
and qualities $q$ and $q'$ such that
$s_{i} \geq s_{k}$ and $q > q'$.
Then,              
\begin{eqnarray*}
&   & (\Lambda (s_{i}, {\sf f}_{q})
      -
      \Lambda (s_{k}, {\sf f}_{q}))
      -
     (\Lambda (s_{i}, {\sf f}_{q'})
      -
      \Lambda (s_{k}, {\sf f}_{q'})) \\
& = & \Lambda_{1} (s_{i})\, \Lambda_{2} ({\sf f}_{q})
      -
      \Lambda_{1} (s_{k})\, \Lambda_{2} ({\sf f}_{q})
      -
      (\Lambda_{1} (s_{i})\, \Lambda_{2} ({\sf f}_{q'})
      +
      \Lambda_{1} (s_{k})\, \Lambda_{2} ({\sf f}_{q'})) \\
& = & (\Lambda_{1}(s_{i}) - \Lambda_{1}(s_{k}))\,
      (\Lambda_{2}({\sf f}_{q}) - \Lambda_{2}({\sf f}_{q'})) \\       
& \geq & 0\, ,      
\end{eqnarray*}
since $\Lambda_{1}$ is monotonically increasing in skill
and $\Lambda_{2}$ is strictly monotonically increasing in effort.
So any separable skill-effort function $\Lambda$ 
has increasing differences.
Hence,
Theorem~\ref{the most arbitrary}
immediately implies:

\begin{corollary}
\label{the most arbitrary corollary}
There is a 
$\Theta \left( 
        \max \{ Q^2, n \}
        \cdot
        \binom{\textstyle n}{\textstyle Q-1} 
        \right)$ 
(resp., $\Theta \left( 
                n
                \cdot
                Q^{2}
                \cdot
                \binom{\textstyle n}{\textstyle Q-1}
                \right)$)
algorithm that solves 
{\sc $\exists$PNE with Player-Invariant Payments}
(resp.~{\sc $\exists$PNE with Player-Specific Payments})
for arbitrary players
and a separable skill-effort function;
for constant $Q$, 
it is a $\Theta (n^{Q})$
polynomial algorithm.
\end{corollary}
}

\begin{framed}
\begin{open problem}
Investigate the possibility of improving the time complexities
of the algorithms in Theorems~\ref{the most arbitrary}
and~\ref{the most most arbitrary}.
For constant $Q$,
this means reducing the exponent $Q$ of $n$.
Assumptions stronger than three-discrete-concavity
on the payments
might be required.
\end{open problem}
\end{framed}


\remove{
\noindent
Clearly,
when the players are proposal-indifferent and anonymous,
player-specific payment functions
collapse on player-invariant payment functions
and no minimum computation in {\sf (C.5)} is needed.
This leads to a slight simplification of the algorithm
with a slightly improved running time
for a constant $Q$:

\begin{corollary}
\label{anonymous players and arbitrary efforts}
Consider the model of proposal-indifferent
and anonymous players
with $m=1$ and 
a skill-effort function that has increasing differences.
Then,
there is a 
$\Theta \left( 
         Q^{2}
         \cdot
         \binom{\textstyle n}
               {\textstyle Q-1}
        \right)$ algorithm
that solves
{\sc PNE in Contest Game}
for a player-invariant (resp., player-specific)
payment function;
for constant $Q$,
it is a $\Theta (n^{Q-1})$ polynomial algorithm.
\end{corollary}
}

\remove{
\begin{proof}
Consider a load vector
$\langle {\sf N}(1), 
         {\sf N}(2), 
         \ldots, 
         {\sf N}(Q) 
 \rangle$.
A player $i \in [n]$ assigned to quality $q \in [Q]$
does not want to move to quality $q' \neq q$
if and only if
\begin{eqnarray*}
         \Lambda (s_{i}, {\sf f}_{q}) 
         - \frac{\textstyle {\sf f}_{q}}
                            {\textstyle \sum_{\widehat{q} \in [Q]} 
                            x_{\widehat{q}} {\sf f}_{\widehat{q}}}
& \leq & {\sf f}_{q'} - \frac{\textstyle {\sf f}_{q'}}      
{\textstyle \sum_{\widehat{q} \in [Q] \setminus \{ q, q' \}}
              x_{\widehat{q}} {\sf f}_{\widehat{q}}
              + (x_{q} - 1) {\sf f}_{q}
              + (x_{q'} + 1) {\sf f}_{q'}}\, .	
\end{eqnarray*}
Hence,
player $i$ does not want to move to any quality $q' \neq q$
if and only if
\begin{eqnarray*}
         {\sf f}_{q} - \frac{\textstyle {\sf f}_{q}}
                            {\textstyle \sum_{\widehat{q} \in [Q]} 
                            x_{\widehat{q}} {\sf f}_{\widehat{q}}}
& \leq & \min_{q' \in [Q] \mid q' \neq q} 
\left\{ 
{\sf f}_{q'} - \frac{\textstyle {\sf f}_{q'}}      
{\textstyle \sum_{\widehat{q} \in [Q] \setminus \{ q, q' \}}
              x_{\widehat{q}} {\sf f}_{\widehat{q}}
              + (x_{q} - 1) {\sf f}_{q}
              + (x_{q'} + 1) {\sf f}_{q'}}
        \right \}\, .	
\end{eqnarray*}
Necessary and sufficient conditions
for the profile $\langle x_{1}, x_{2}, \ldots, x_{Q} \rangle$
to be a pure Nash equilibrium are that
for each $q \in [Q]$ such that
$x_{q} > 0$,
any player assigned to quality $q$
does not want to move to any quality $q' \in [Q]$, $q' \neq q$.

Hence,
to compute a pure Nash equilibrium,
we enumerate all contiguous load vectors
${\bf x} = \langle {\sf N}_{{\bf x}}(1),
                   {\sf N}_{{\bf x}}(2),
                   \ldots, 
                   {\sf N}_{{\bf x}}(Q)
           \rangle$,
searching for one that satisfies,
for each quality $q \in [Q]$,
{\sf (C.5)};
clearly,
there are $\binom{\textstyle n}{\textstyle Q-1}$
contiguous load vectors.
Checking {\sf (C.2)} for each quality $q \in [Q]$
entails the computation
of a minimum of a function on a set of size 
${\sf N}_{{\bf x}}(q)$
and checking the inequality in {\sf (C.5)}:
Computation of all minima
takes time $\sum_{q \in [Q]}
             \Theta ({\sf N}_{{\bf x}}(q)
            = \Theta \left( \sum_{q \in [Q]}
                             {\sf N}_{{\bf x}}(q)
                     \right)
            = \Theta (n)$;
checking that the inequality holds
for a particular $q' \neq q$ takes time $\Theta (1)$,
so checking that it holds
for all $q' \neq q$
takes time $\Theta (Q)$,
and checking that it holds for all $q \in [Q]$
takes time $\Theta (Q^2)$. 
Thus, the total time is
$\binom{\textstyle n}{\textstyle Q-1}
 \cdot
 ( {\sf \Theta}(n) + \Theta (Q^2) )
 = \Theta \left( \max \{ Q^{2}, n \}
                 \cdot
                 \binom{\textstyle n}{\textstyle Q-1}
          \right)$
For constant $Q$,
this is a polynomial 
$\Theta (n^{Q+1})$ algorithm.
\end{proof}
}

\remove{
\section{\textcolor{magenta}{Proportional Allocation,
Mandatory Participation and Anonymous Players}}
\label{marina}

\noindent
\textcolor{blue}{We show:}

\begin{theorem}
\label{marina theorem}
\textcolor{blue}{
Consider the contest game
with proportional allocation
and anonymous players.
Assume that 
{\sf (C1)}
${\sf f}_{1} > \frac{\textstyle 1}
                    {\textstyle n}$,
and {\sf (C2)}
for each quality $q \in [Q]$,                    
${\sf f}_{q}
 >
 \frac{\textstyle {\sf f}_{Q}}
      {\textstyle (n-1)^{2}}$.                    
Then,
there is a $\Theta (\max \{ Q, n \})$ algorithm
that solves {\sc $\exists$PNE with Proportional Allocation}.
}
\end{theorem}

\noindent
\textcolor{blue}{Note that 
since ${\sf f}$ is strictly increasing in $q$,
assumption {\sf (C1)} implies that
for any quality vector,
numerators and denominators in
proportional allocation fractions
are strictly positive.
}

\begin{proof}
We present a greedy algorithm,
which is inductive on the number of qualities 
to which the players are assigned.
Roughly speaking,
we start by assigning the $n$ players 
to the two highest qualities $Q$
and $Q-1$ so that
the resulting assignment is a pure Nash equilibrium
{\it as if} only the qualities $Q$ and $Q-1$ 
were available
while the remaining qualities are ignored.
We then proceed by adding lower qualities, 
one at a time
and towards the lowest quality $1$,
so that,
for the addition of quality $q$,
players assigned in the 
immediately earlier addition of the higher quality $q+1$
to quality $q$
are now assigned to qualities $q$ and $q-1$; 
the players assigned in other earlier additions
to qualities higher than $q$ are retained.
We will prove that
the number of players that had been assigned
to quality $q$ 
in the immediately previous iteration $q+1$
is non-zero
(Proposition~\ref{dangerous}).
It turns out that,
after the addition of each quality $q$,
the resulting assignment of the $n$ players
is a pure Nash equilibrium
{\it as if}
only the qualities added so far {\em and} 
the newly added quality $q-1$
were available
(Proposition~\ref{skeleton});
it will be called a
{\it pure Nash equilibrium
restricted to qualities $Q, Q-1, \ldots, q, q-1$}.
So, at termination,
a pure Nash equilibrium will have been reached.

\noindent
We continue with the formal details of the algorithm
and its proof. 
Henceforth,
drop, for simplicity,
the index ${\bf q}^{1}$
from ${\sf N}_{{\bf q}^{1}}(q)$.
For any pair of distinct qualities
$q, q'$,
with $Q \geq q' > q \geq 1$,
denote as 
$X (q, q')
 := \sum_{q' \geq k \geq q}
     {\sf N}(k)$
and
$\Psi (q, q')
 := \sum_{q' \geq k \geq q}
     {\sf N}(k)\, {\sf f}_{k}$
the total number of players
assigned to
and the total load on
qualities $q, \ldots, q'$.

\noindent
\rule{\textwidth}{0.8pt}

\noindent
For $q := Q$ down to $2$, do:

\begin{itemize}

\item
Retain the 
$X(q+1, Q)$ players 
already assigned to qualities 
$Q, Q-1, \ldots, q+1$.

\item
If ${\sf N}(q)=0$
right after iteration $q+1$,
then {\bf exit}.
Else do:

\begin{itemize}

\item 
Find the smallest $x$,
where 
$0 \leq x \leq n - \sum_{q+1 \leq k \leq Q} {\sf N}(k)$,
so that assigning $x$ players to $q$
and $n - X(q+1, Q) - x$ players to $q-1$
is a pure Nash equilibrium restricted
to the qualities $q$ and $q-1$.
\\
/* 
The assignment is such a pure Nash equilibrium
if and only if
{\small
\begin{equation}
\label{characterization of pne 1}	
\begin{aligned}
&~~~~ {\sf f}_{q} -
         \frac{\textstyle {\sf f}_{q}}
              {\textstyle 
               \Psi (q+1, Q) 
              + x {\sf f}_{q} 
              + (n- X (q+1, Q) 
                 - x) 
                {\sf f}_{q-1}}  \\
&\leq~~~ {\sf f}_{q-1} -
         \frac{\textstyle {\sf f}_{q-1}}
              {\textstyle
               \Psi (q+1, Q) 
              + (x-1) {\sf f}_{q} 
              + (n -
                 X (q+1, Q) 
              - x
              +1) {\sf f}_{q-1}}
\end{aligned}
\end{equation}
}
for $x > 0$,
and
{\small
\begin{equation}
\label{characterization of pne 2}	
\begin{aligned}	
&~~~~   {\sf f}_{q-1} -
         \frac{\textstyle {\sf f}_{q-1}}
              {\textstyle 
              \Psi (q+1, Q) 
                +x {\sf f}_{q} 
              + (n 
              - X(q+1, Q) 
              - x) {\sf f}_{q-1}}      \\
& \leq~~~ {\sf f}_{q} -
         \frac{\textstyle {\sf f}_{q}}
              {\textstyle 
               \Psi (q+1, Q) 
              + (x+1) {\sf f}_{q} 
              + (n 
                 - X(q+1, Q) 
              - x-1) 
              {\sf f}_{q-1}}
\end{aligned}
\end{equation}
}
for $n - X(q+1, Q) 
       - x > 0$,
respectively. */ \\
/* Since proportional allocation is
player-invariant,
Theorem~\ref{pure existence} implies that
such an $x$ exists; 
it can be found
in time $\Theta (n - X(q+1, Q))$
by exhaustive search.*/

\item 
Assign ${\sf N}(q) := x$ players to $q$ 
and $n - \sum_{q+1 \leq k \leq Q}
           {\sf N}(k)
    - x$ players to $q-1$. 

\end{itemize}

\end{itemize}


\noindent
\rule{\textwidth}{0.8pt}

\noindent
Consider  
an iteration $q$, 
$Q \geq q \geq 3$,
where {\bf exit} is executed.
As a result,
the qualities $q-2, \ldots, 1$
would remain "unseen" by the algorithm
at such an "early" termination;
hence,
a player assigned to quality 
$q' \in \{ Q, Q-1, \ldots, q+1 \}$
in an earlier iteration 
might prefer to move
to some "unseen" quality,
and the final assignment computed
in iteration $q$ would not be a pure Nash equilibrium. 
(Note that iteration $2$
may leave no "unseen" qualities.)
We prove that "early" termination
is impossible: 

\begin{proposition}
\label{dangerous}
{\bf exit} is never executed.
\end{proposition}

\begin{proof}
Assume, by way of contradiction,
that {\bf exit} is executed for the first time
in iteration $q$,
where $Q \geq q \geq 3$,
as a result of having set ${\sf N}(q) = 0$
in iteration $q+1$.
By the algorithm,
the assignment 
computed in iteration $q+1$ is a pure Nash equilibrium
restricted to the qualities $q+1$ and $q$;
so,
a player assigned to quality $q+1$ does not want
to switch to quality $q$,
or
\begin{eqnarray*}
         {\sf f}_{q+1}
         \left( 1 
                - \frac{\textstyle 1}
                       {\textstyle \psi (q+2, Q) 
                                   + {\sf N}(q+1)\, {\sf f}_{q+1}}
         \right)
& \leq & {\sf f}_{q}
         \left( 1
                - \frac{\textstyle 1}
                       {\textstyle \Psi (q+2, Q)
                                   + ({\sf N}(q+1) - 1)\, {\sf f}_{q+1}
                                   + {\sf f}_{q}}
        \right) \\
& =   & {\sf f}_{q}
         \left( 1
                - \frac{\textstyle 1}
                       {\textstyle \Psi (q+1, Q)
                                   + {\sf N}(q+1)\, {\sf f}_{q+1}
                                   + {\sf f}_{q}
                                   - {\sf f}_{q+1}}
        \right)\, .
\end{eqnarray*}
Since ${\sf f}_{q+1} > {\sf f}_{q}$,
\begin{eqnarray*}
       {\sf f}_{q+1}\,
       \left(   1
                - \frac{\textstyle 1}
                       {\textstyle \Psi (q+2, Q) 
                                   + {\sf N}(q+1) {\sf f}_{q+1}}
      \right)
& > & {\sf f}_{q}\,
      \left(    1
                - \frac{\textstyle 1}
                       {\textstyle \Psi (q+2, Q)
                                   + {\sf N}(q+1) {\sf f}_{q+1}
                                   + {\sf f}_{q}
                                   - {\sf f}_{q+1}}
      \right)\, .    
\end{eqnarray*}
A contradiction.
\end{proof}

\noindent
Proposition~\ref{dangerous} implies that
"early" termination is not possible; 
thus,
${\sf N}(q-1) > 0$
in every iteration $q \geq 3$.
We continue with
the {\it Correctness Lemma}; 
the correctness of the algorithm will follow
from the case $q=2$ of {\it Correctness Lemma}.

\begin{proposition}[Correctness Lemma] 
\label{skeleton}
Right after iteration $q$,
where $Q \geq q \geq 2$,
there has been computed 
a pure Nash equilibrium
restricted to the qualities
$Q, Q-1, \ldots, q, q-1$.
\end{proposition}

\noindent
\begin{proof}
By backward induction on $q$.

\noindent
\underline{{\it Basis case}:} When $q = Q$,
the algorithm computes a pure Nash equilibrium
restricted to the qualities $Q$ and $Q-1$,
as the {\sf Correctness Lemma} requires.

\noindent
\underline{{\it Induction hypothesis:}}
Assume inductively that
right after iteration $q$ of the algorithm,
where $Q \geq q > 2$,
there has been computed a pure Nash equilibrium 
restricted to the qualities
$Q, Q-1, \ldots, q, q-1$;
so for all pairs of distinct qualities $q'$ and $q''$,
with $q-1 \leq q', q'' \leq Q$,
a player assigned to $q'$ does not want to switch to $q''$
and vice versa.
So,
\begin{equation}
\label{peirama}
\begin{aligned}
      &~~~  {\sf f}_{q'} -
         \frac{\textstyle {\sf f}_{q'}}
              {\textstyle 
               \psi (q', q'')
              + {\sf N}(q') {\sf f}_{q'} 
              + (n- \chi (q', q'')
                 - {\sf N}(q')) 
                {\sf f}_{q''}} \\
\leq &~~~ {\sf f}_{q''} -
         \frac{\textstyle {\sf f}_{q''}}
              {\textstyle
              \psi (q', q'')
              + ({\sf N}(q') -1) {\sf f}_{q'} 
              + (n -
                 \chi (q', q'')
              - {\sf N}(q')
              +1) {\sf f}_{q''}}\, ,
\end{aligned}
\end{equation}
for ${\sf N}(q') > 0$,
and
\begin{equation}
\label{peirama1}
\begin{aligned}
       &~~~   {\sf f}_{q''} -
         \frac{\textstyle {\sf f}_{q''}}
              {\textstyle 
              \psi (q', q'')
                + {\sf N}(q') {\sf f}_{q'} 
              + (n 
              - \chi (q', q'')
              - {\sf N}(q')) {\sf f}_{q''}} \\
 \leq &~~~ {\sf f}_{q'} -
         \frac{\textstyle {\sf f}_{q'}}
              {\textstyle 
               \psi (q', q'')
              + ({\sf N}(q') +1) {\sf f}_{q'} 
              + (n 
              - \chi (q', q'') 
              - {\sf N}(q')-1) 
              {\sf f}_{q''}}
\end{aligned}              
\end{equation}
for $n - \chi (q', q'') 
     - {\sf N}(q') > 0$,
where 
\begin{eqnarray*}
\chi_{q', q''}
  (q-1) 
& := & \sum_{Q \geq k \geq q-1, k \neq q', q''}
          {\sf N}(k)\,
\end{eqnarray*}           
and 
\begin{eqnarray*}
\psi_{q', q''}
  (q-1)
& = &	\sum_{Q \geq k \geq q-1, k \neq q', q''}
         {\sf N}(k)\ {\sf f}_{k}
\end{eqnarray*}
are the total number of players assigned to qualities
other than $q'$ and $q''$ 
right after iteration $q$ of the algorithm
and the corresponding total load,
respectively.

\noindent
\underline{{\it Inductive Step:}}
We shall prove that right after iteration $q-1$,
where $Q \geq q-1 \geq 2$,
there has been computed
a pure Nash equilibrium
as if only the qualities
$Q, Q-1, \ldots, q-1, q-2$ were available.
By the {\it Induction Hypothesis,}
right after iteration $q$,
for any pair of qualities $q'$ and $q''$,
where $Q \geq q', q'' \geq q-1$,
there has been computed
a pure Nash equilibrium
as if only the qualities
$Q, Q-1, \ldots, q, q-1$ were available.
By the algorithm,
the loads on qualities 
$Q, Q-1, \ldots, q$
after iteration $q$ are preserved
in iteration $q-1$.
So we only have to prove
the following property
for any quality $\widehat{q}$,
where $Q \geq \widehat{q} \geq q$:

\begin{lemma}
\label{first lemma}
Assume that ${\sf N}(\widehat{q}) \geq 1$.
Then,
a player that had been assigned to $\widehat{q}$
before iteration $q$
does not want to switch to either $q-1$ or $q-2$
right after iteration $q-1$.
\end{lemma}

\noindent
For the proofs
of Lemmas~\ref{first lemma} and~\ref{second lemma},
we shall abuse notation
to denote as 
${\sf C}_{\widehat{q}}({\sf N}(\widehat{q}), 
                       {\sf N}(q-1), 
                       {\sf N}(q-2))$,
${\sf C}_{q-1}({\sf N}(\widehat{q}), 
               {\sf N}(q-1), 
               {\sf N}(q-2))$ and 
${\sf C}_{q-2}({\sf N}(\widehat{q}), 
               {\sf N}(q-1), 
               {\sf N}(q-2))$
the costs incurred to
a player assigned to
qualities $\widehat{q}$, $q-1$ and $q-2$,
respectively;
so,
we omit reference
to the loads on qualities other
than $\widehat{q}$,
$q-1$ and $q-2$.
(Since players are anonymous,
all players assigned to a particular quality
incur the same cost.)

\remove{
\begin{lemma}
\label{second lemma}
Assume that ${\sf N}(q-1) \geq 1$
(resp., ${\sf N}(q-2) \geq 1$).
Then,
a player assigned to $q-1$ 
(resp., $q-2$) in iteration $q-1$
does not want to move to $\widehat{q}$.
\end{lemma}}	

\begin{proof}
We have to prove that
\begin{eqnarray*}
\mbox{{\bf (A)}}:\ \
{\sf C}_{\widehat{q}}({\sf N}(\widehat{q}), 
                      {\sf N}(q-1),
                      {\sf N}(q-2)) 
& \leq &  
 {\sf C}_{q-1}({\sf N}(\widehat{q})-1, 
               {\sf N}(q-1) + 1, 
               {\sf N}(q-2))
\end{eqnarray*}
and 
\begin{eqnarray*}
\mbox{{\bf (B)}}:\ \
{\sf C}_{\widehat{q}}({\sf N}(\widehat{q}), 
                      {\sf N}(q-1), 
                      {\sf N}(q-2)) 
& \leq & 
 {\sf C}_{q-2}({\sf N}(\widehat{q})-1,
               {\sf N}(q-1),
               {\sf N}(q-2) +1)
\end{eqnarray*}
where ${\sf N}(\widehat{q}) > 0$.
We start with {\bf (A)},
which is expressed as
\begin{eqnarray*}
&       & {\sf f}_{\widehat{q}} - 
          \frac{\textstyle {\sf f}_{\widehat{q}}}
              {\textstyle
              \psi_{\widehat{q}, q-1}(q-1)  
              + {\sf N}(\widehat{q})\, {\sf f}_{\widehat{q}} 
              + {\sf N}(q-1)\, {\sf f}_{q-1} 
              + (n -
                 \chi (\widehat{q}, q-1) 
                 - {\sf N}(q-1))\, 
                   {\sf f}_{q-2}} \\
& \leq & {\sf f}_{q-1}
         - \frac{\textstyle {\sf f}_{q-1}}
                {\textstyle 
                  \psi_{\widehat{q}, q-1}(q-1) 
                + ({\sf N}(\widehat{q}) -1)\,
                  {\sf f}_{\widehat{q}}
                + ({\sf N}(q-1) + 1)\,
                  {\sf f}_{q-1} 
                + (n 
                   - \chi (\widehat{q}, q-1)
                   - {\sf N}(\widehat{q}) 
                - {\sf N}(q-1))\, 
                  {\sf f}_{q-2}}
\end{eqnarray*}
\remove{or
{\small
\begin{eqnarray*}
         {\sf f}_{3} - {\sf f}_{2}
& \leq & \frac{\textstyle {\sf f}_{3}}
                {\textstyle x {\sf f}_{3} 
                + x_{1} {\sf f}_{2} 
                + (n-x-x_{1}) {\sf f}_{1}}
- \frac{\textstyle {\sf f}_{2}}
       {\textstyle (x-1) {\sf f}_{3} 
       + (x_{1}+1) {\sf f}_{2} 
       + (n-x-x_{1}) {\sf f}_{1}}\, ,
\end{eqnarray*}}}
where 
$\psi_{\widehat{q}, q-1}(q-1) \geq 0$,
${\sf N}(\widehat{q}) > 0$ and 
$0 \leq {\sf N}(q-1)
   \leq n 
        - \chi (\widehat{q}, q-1)
        - {\sf N}(\widehat{q})$.
By setting $\widehat{q}$
and $q-1$ for $q'$ and $q''$,
respectively,
in (\ref{peirama}),
it suffices to prove that
{\small
\begin{eqnarray*}
&      & \frac{\textstyle {\sf f}_{\widehat{q}}}
              {\textstyle 
              \underbrace{\psi_{\widehat{q}, q-1}(q-1) 
                          + {\sf N}(\widehat{q}) {\sf f}_{\widehat{q}} 
                          + (n 
                          - \chi (\widehat{q}, q-1)
                          - {\sf N}(\widehat{q}))
                          {\sf f}_{q-1}}_{\textstyle :={\sf A}}} \\
&     &  -       
         \frac{\textstyle {\sf f}_{q-1}}
              {\textstyle 
              \underbrace{\psi_{\widehat{q}, q-1}(q-1)
                          + ({\sf N}(\widehat{q}) -1) 
                            {\sf f}_{\widehat{q}} 
                          + (n 
                             - \chi_{\widehat{q} ,q-1}(q-1)
                             - {\sf N}(\widehat{q}) 
                             +1) 
                             {\sf f}_{q-1}}_{\textstyle :={\sf B}}}
\\
& \leq &	 \frac{\textstyle {\sf f}_{\widehat{q}}}
                {\textstyle \psi_{\widehat{q}, q-1}(q-1)
                            + {\sf N}(\widehat{q}) 
                              {\sf f}_{\widehat{q}} 
                            + {\sf N}(q-1) {\sf f}_{q-1} 
                            + (n 
                               - \chi_{\widehat{q}, q-1}
                               - {\sf N}(\widehat{q}) 
                               - {\sf N}(q-1)) {\sf f}_{q-2}} \\
&     & - \frac{\textstyle {\sf f}_{q-1}}
               {\textstyle \psi_{\widehat{q}, q-1}
                           + ({\sf N}(\widehat{q}) -1) 
                             {\sf f}_{\widehat{q}} 
                           + ({\sf N}(q-1)+1) 
                             {\sf f}_{q-1} 
                           + (n 
                              - \chi_{\widehat{q}, q-1}(q-1)
                              - {\sf N}(\widehat{q})
                              - {\sf N}(q-1)) 
                                {\sf f}_{q-2}}	\\
& = & \frac{\textstyle {\sf f}_{\widehat{q}}}
           {\textstyle \underbrace{
                        \psi_{\widehat{q}, q-1}(q-1)
                        + {\sf N}(\widehat{q}) 
                          {\sf f}_{\widehat{q}} 
                        + 
                       (n 
                        - \chi_{\widehat{q}, q-1}(q-1)
                        - {\sf N}(\widehat{q})) 
                          {\sf f}_{q-1}}_{\textstyle {\sf A}} +
                       \underbrace{(n 
                        - \chi_{\widehat{q}, q-1}(q-1) 
                        - {\sf N}(\widehat{q}) 
                        - {\sf N}(q-1)) 
                          ({\sf f}_{q-2} - 
                           {\sf f}_{q-1})}_{\textstyle 
                                            := {\sf \Delta} \leq 0}} \\                  
&   & 
      - \frac{\textstyle {\sf f}_{q-1}}
             {\textstyle \underbrace{
                          \psi_{\widehat{q}, q-1}(q-1) 
                          + ({\sf N}(\widehat{q})-1) 
                            {\sf f}_{\widehat{q}} 
                          + (n 
                             - \chi_{\widehat{q}, q-1}(q-1)
                             - {\sf N}(\widehat{q}) +1) 
                             {\sf f}_{q-1}}_{\textstyle {\sf B}} +
                         \underbrace{(n 
                          - \chi_{\widehat{q}, q-1}(q-1)
                          - {\sf N}(\widehat{q}) 
                          - {\sf N}(q-1)) 
                         ({\sf f}_{q-2} -
                         {\sf f}_{q-1})}_{\textstyle {\sf \Delta}}}\\
& = & \frac{\textstyle 
            \overbrace{{\sf f}_{\widehat{q}} {\sf B} 
                       - {\sf f}_{q-1} {\sf A}}^{\textstyle := \lambda_{1}}
                       + \overbrace{({\sf f}_{\widehat{q}} 
                                     - {\sf f}_{q-1}) {\sf \Delta}}^{\textstyle := \mu_{1}}}
          {\textstyle \underbrace{{\sf A} {\sf B}}_{\textstyle := \lambda_{2}} 
          + \underbrace{{\sf \Delta} ({\sf A} + {\sf B} + {\sf \Delta})}_{\textstyle := \mu_{2}}}\, ,                                           
\end{eqnarray*}}
where ${\sf N}(q) > 0$ 
and $0 \leq {\sf N}(q-1) 
       \leq n - 
            \chi_{\widehat{q}, q-1}(q-)
            - {\sf N}(\widehat{q})$.
Note that ${\sf B} = {\sf A} 
           - {\sf f}_{\widehat{q}} + {\sf f}_{q-1}$.
Since
$\frac{\textstyle {\sf f}_{\widehat{q}}}
      {\textstyle {\sf A}}
 -
 \frac{\textstyle {\sf f}_{q-1}}
      {\textstyle {\sf B}}     
 =  \frac{\textstyle {\sf f}_{\widehat{q}} {\sf B} 
                     - {\sf f}_{q-1} {\sf A}}
           {\textstyle {\sf A} {\sf B}}
 = \frac{\textstyle \lambda_{1}}
        {\textstyle \lambda_{2}}$,
we have to prove that 
$\frac{\textstyle \lambda_{1}}
      {\textstyle \lambda_{2}}
 \leq 
 \frac{\textstyle \lambda_{1} + \mu_{1}}
      {\textstyle \lambda_{2} + \mu_{2}}$.
We first examine the signs of the four terms
$\lambda_{1}$, $\lambda_{2}$, $\lambda_{1} + \mu_{1}$
and $\lambda_{2} + \mu_{2}$:
\begin{itemize}      

\item
Since ${\sf A}$ and ${\sf B}$
are denominators of proportional allocation functions,
$\lambda_{2} > 0$.

\item
Since $\lambda_{2} + \mu_{2}
       = ({\sf A} + {\sf \Delta})({\sf B} + {\sf \Delta})$
and ${\sf A} + {\sf \Delta}$ and 
${\sf B} + {\sf \Delta}$
are denominators of proportional allocation functions,
$\lambda_{2} + \mu_{2} > 0$.

\item
Now,
\begin{eqnarray*}
&   & \lambda_{1} \\ 
& = & {\sf f}_{\widehat{q}} {\sf B} - {\sf f}_{q-1} {\sf A} \\
& = & {\sf f}_{\widehat{q}}\, 
      \left[ 
       \psi_{\widehat{q}, q-1}(q-1)
       + ({\sf N}(\widehat{q}) -1) 
         {\sf f}_{\widehat{q}} 
       + (n 
          - \chi_{\widehat{q}, q-1}(q-1) 
          - {\sf N}(\widehat{q}) 
          +1) 
         {\sf f}_{q-1}
      \right] \\
&    &   - {\sf f}_{q-1}\, 
         \left[
          \psi_{\widehat{q}, q-1}(q-1)
          + {\sf N}(\widehat{q}) 
            {\sf f}_{\widehat{q}} 
          + (n 
             - \chi_{\widehat{q}, q-1}(q-1)
             - {\sf N}(\widehat{q})) 
            {\sf f}_{q-1}
        \right]     \\
& = & \psi_{\widehat{q}, q-1}(q-1)\,
      ({\sf f}_{\widehat{q}} - {\sf f}_{q-1}) \\
&   & + \left[
        {\sf N}(\widehat{q}) - 1
        \right]\, 
        {\sf f}_{\widehat{q}}^{2}
      - \left[
         n 
         - \chi_{\widehat{q}, q-1}(q-1)
         - {\sf N}(\widehat{q})
        \right]\, 
        {\sf f}_{q-1}^{2}
      + \left[
         n
         - \chi_{\widehat{q}, q-1}(q-1)
         - 2 {\sf N}(\widehat{q})
         + 1
        \right]
        {\sf f}_{\widehat{q}} {\sf f}_{q-1} \\       
& = & \psi_{\widehat{q}, q-1}(q-1)
      ({\sf f}_{\widehat{q}} - {\sf f}_{q-1}) \\
&   & + {\sf N}(\widehat{q}) {\sf f}_{\widehat{q}}^{2} 
      + x_{\widehat{q}} 
        {\sf f}_{q-1}^{2} 
      - 2 {\sf N}(\widehat{q})
          {\sf f}_{\widehat{q}} {\sf f}_{q-1}
      - {\sf f}_{\widehat{q}}^{2} 
      - (n - \chi_{\widehat{q}, q-1}(q-1)) 
        {\sf f}_{q-1}^{2} 	   + {\sf f}_{\widehat{q}} {\sf f}_{q-1} 
        (n - \chi_{\widehat{q}, q-1}(q-1) +1) \\
& = & \psi_{\widehat{q}, q-1}(q-1)
      ({\sf f}_{\widehat{q}} - {\sf f}_{q-1}) \\       
&   & + {\sf N}(\widehat{q})
        ({\sf f}_{\widehat{q}} - {\sf f}_{q-1})^{2} 
      - ({\sf f}_{\widehat{q}} - {\sf f}_{q-1})^{2} 
      + {\sf f}_{q-1}^{2} 
      - 2 {\sf f}_{\widehat{q}} {\sf f}_{q-1} \\
&   & - (n - \chi_{\widehat{q}, q-1}(q-1)) 
        {\sf f}_{q-1}^{2} 
      + (n - \chi_{\widehat{q}, q-1}(q-1) + 1) 
        {\sf f}_{\widehat{q}} 
        {\sf f}_{q-1}  \\
& = & \psi_{\widehat{q}, q-1}(q-1)
      ({\sf f}_{\widehat{q}} - {\sf f}_{q-1}) \\
&   & + ({\sf N}(\widehat{q}) - 1)
        ({\sf f}_{\widehat{q}} - {\sf f}_{q-1})^{2} 
      - (n - \chi_{\widehat{q}, q-1}(q-1) -1)
        {\sf f}_{q-1}^{2}
      + (n - \chi_{\widehat{q}, q-1}(q-1) - 1)
        {\sf f}_{\widehat{q}} {\sf f}_{q-1} \\
& = & \psi_{\widehat{q}, q-1}(q-1)
      ({\sf f}_{\widehat{q}} - {\sf f}_{q-1}) \\
&   & + ({\sf N}(\widehat{q}) - 1)
        ({\sf f}_{\widehat{q}} - {\sf f}_{q-1})^{2}
      + (n - \chi (\widehat{q}, q-1) - 1)
        {\sf f}_{q-1} 
        ({\sf f}_{\widehat{q}} - {\sf f}_{q-1}) \\
& = & ({\sf f}_{\widehat{q}} - {\sf f}_{q-1})
      \left[
      \psi_{\widehat{q}, q-1}(q-1)
       + ({\sf N}(\widehat{q}) -1)
         ({\sf f}_{\widehat{q}} - {\sf f}_{q-1}) 
       + (n - \chi_{\widehat{q}, q-1}(q-1) - 1)
         {\sf f}_{q-1}
      \right]                         \\      
& > & 0\, ,                        
\end{eqnarray*}
since:
\begin{itemize}

\item 
$\psi_{\widehat{q}, q-1}(q-1) \geq 0$.

\item
${\sf N}(\widehat{q}) \geq 1$.

\item
\underline{Either 
$x_{\widehat{q}} = 1$,}
in which case,
since $n \geq 2$, 
there is at least one player assigned to $q-1$
after iteration $q$,
so that $\chi_{\widehat{q}, q-1}(q-1) \leq n-2$
(otherwise, there would be no iteration $q-1$),
which implies that
$n - \chi_{\widehat{q}, q-1}(q-1) - 1 \geq 1$,
\underline{or
${\sf N}(\widehat{q}) \geq 2$,}
so that
$\chi_{\widehat{q}, q-1}(q-1) \leq n - 2$,
which implies again that
$n - \chi_{\widehat{q}, q- 1}(q-1) - 1 \geq 1$.

\end{itemize}

\item
Finally,
{ 
\begin{eqnarray*}
&   &   \lambda_{1} + \mu_{1} \\
& = & {\sf f}_{\widehat{q}} {\sf B} 
      - {\sf f}_{q-1} {\sf A}
      + ({\sf f}_{q-2} - {\sf f}_{q-1}) {\sf \Delta} \\
& = & {\sf f}_{\widehat{q}} 
      ({\sf A} - {\sf f}_{\widehat{q}} + {\sf f}_{q-1})
      - {\sf f}_{q-1} {\sf A}
      + ({\sf f}_{q-2} - {\sf f}_{q-1}) {\sf \Delta} \\
& = & ({\sf f}_{\widehat{q}} - {\sf f}_{q-1})
      ({\sf A} - {\sf f}_{\widehat{q}} + {\sf \Delta}) \\
& = & ({\sf f}_{\widehat{q}} - {\sf f}_{q-1}) \cdot \\
&   & [\psi_{\widehat{q}, q-1}(q-1)
       + {\sf N}(\widehat{q}) 
         {\sf f}_{\widehat{q}} 
       + (n -
          \chi_{\widehat{q}, q-1}(q-1) 
          - {\sf N}(\widehat{q}))
         {\sf f}_{q-1}
       - {\sf f}_{\widehat{q}}                            \\
&   &  + (n
          - \chi_{\widehat{q}, q-1}(q-1) 
          - {\sf N}(\widehat{q}) 
          - {\sf N}(q-1))
         ({\sf f}_{q-2} - {\sf f}_{q-1})] \\
& = & ({\sf f}_{\widehat{q}} - {\sf f}_{q-1}) \cdot \\
&   & [\psi_{\widehat{q}, q-1}(q-1)
       + {\sf N}(\widehat{q}) 
         {\sf f}_{\widehat{q}} 
       + (n 
          - \chi_{\widehat{q}, q-1}(q-1)
          - {\sf N}(\widehat{q}))
         {\sf f}_{q-1}
      - {\sf f}_{\widehat{q}} \\
&   &   - (n
         - \chi_{\widehat{q}, q-1}(q-1)
         - {\sf N}(\widehat{q})) 
         {\sf f}_{q-1}
       + (n
         - \chi_{\widehat{q}, q-1}(q-1)
         - {\sf N}(\widehat{q})) 
         {\sf f}_{q-2} \\
&   & + \underbrace{{\sf N}(q-1)}_{\textstyle \geq 0} 
        ({\sf f}_{q-1} - {\sf f}_{q-2})] \\
& \geq & ({\sf f}_{\widehat{q}}- {\sf f}_{q-1}) \cdot \\
&      & [\psi_{\widehat{q}, q-1}(q-1)
          + \underbrace{{\sf N}(\widehat{q})}_{\textstyle 
            \geq 1}
            ({\sf f}_{\widehat{q}} - {\sf f}_{q-2}) 
          + (n
             - \chi_{\widehat{q}, q-1}(q-1)) 
            {\sf f}_{q-2} - {\sf f}_{\widehat{q}}] \\
& \geq & ({\sf f}_{\widehat{q}} - {\sf f}_{q-1})
         \left[ 
         \psi_{\widehat{q}, q-1}(q-1)
          + (n - \chi_{\widehat{q}, q-1}(q-1) - 1)
         {\sf f}_{q-2}
         \right] \\
& > & 0\, ,                      
\end{eqnarray*}}
since $\psi_{\widehat{q}, q-1}(q-1) \geq 0$
and $n - \chi_{\widehat{q}, q-1}(q-1) - 1 \geq 1$,
as proved in the previous item.

\end{itemize}             
It follows that
$\frac{\textstyle \lambda_{1}}
      {\textstyle \lambda_{2}}
 \leq
 \frac{\textstyle \lambda_{1} + \mu_{1}}
      {\textstyle \lambda_{2} + \mu_{2}}$
if and only if
$\lambda_{1} \mu_{2} \leq \lambda_{2} \mu_{1}$.
Now
{
\begin{eqnarray*}
      \lambda_{1} \mu_{2}
& = &  ({\sf f}_{\widehat{q}} - {\sf f}_{q-1})
       \left[
       \psi_{\widehat{q}, q-1}(q-1)
       + ({\sf N}(\widehat{q}) -1)
         ({\sf f}_{\widehat{q}} - {\sf f}_{q-1}) 
       + (n - \chi_{\widehat{q}, q-1}(q-1) - 1)
         {\sf f}_{q-1}
       \right]    
      \cdot \\
&   & {\sf \Delta}
      ({\sf A} + {\sf B} + {\sf \Delta})\, ,  
\end{eqnarray*}}
where 
{
\begin{eqnarray*}
      {\sf A} + {\sf B} + {\sf \Delta}
& = & 2 \psi_{\widehat{q}, q-1}(q-1)
      + (2 {\sf N}(\widehat{q}) - 1) 
      {\sf f}_{\widehat{q}} 
      + (2(n
           - \chi_{\widehat{q}, q-1}(q-1)
           - {\sf N}(\widehat{q}) + 1)) 
        {\sf f}_{q-1} \\
&   &  + (n 
         - \chi_{\widehat{q}, q-1}(q-1)
         - {\sf N}(\widehat{q}) 
         - {\sf N}(q-1))
       ({\sf f}_{q-2} - {\sf f}_{q-1}) \\
& = & 2 \psi_{\widehat{q}, q-1}(q-1)
      + (2 {\sf N}(\widehat{q}) -1) 
        {\sf f}_{\widehat{q}} 
      + (n 
         - \chi_{\widehat{q}, q-1}(q-1) 
         - {\sf N}(\widehat{q}) 
         + 1 
         + {\sf N}(q-1)) 
         {\sf f}_{q-1} \\
&   & + (n
         - \chi_{\widehat{q}, q-1}(q-1) 
         - {\sf N}(\widehat{q}) 
         - {\sf N}(q-1)) 
        {\sf f}_{q-2}\, ,
\end{eqnarray*}
and}
{
\begin{eqnarray*}
      \lambda_{2} \mu_{1}
& = & {\sf A}	{\sf B} \cdot
      ({\sf f}_{\widehat{q}} - {\sf f}_{q-1})
      {\sf \Delta}                      \\
& = & \left[
      \psi_{\widehat{q}, q-1}(q-1) 
       + {\sf N}(\widehat{q}) 
         {\sf f}_{\widehat{q}} 
       + (n 
          - \chi_{\widehat{q}, q-1}(q-1)
          - {\sf N}(\widehat{q})) 
         {\sf f}_{q-1}
      \right] 
      \cdot \\
&   & [\psi_{\widehat{q}, q-1}(q-1)
       + ({\sf N}(\widehat{q}) -1) 
         {\sf f}_{\widehat{q}} 
       + (n 
          - \chi_{\widehat{q} ,q-1}(q-1)
          - {\sf N}(\widehat{q}) 
          + 1) 
         {\sf f}_{q-1}] \cdot 
     ({\sf f}_{\widehat{q}} - {\sf f}_{q-1})
      {\sf \Delta}\, .      
\end{eqnarray*}}
Since ${\sf \Delta} \leq 0$
and ${\sf f}_{\widehat{q}} - {\sf f}_{q-1} > 0$,
it follows that
$\lambda_{1} \mu_{2}
 \leq
 \lambda_{2} \mu_{1}$
if and only if
\begin{eqnarray*}
&    & \left[
       \psi_{\widehat{q}, q-1})(q-1)
       + ({\sf N}(\widehat{q}) -1)
         ({\sf f}_{\widehat{q}} - {\sf f}_{q-1}) 
       + (n - \chi_{\widehat{q}, q-1}(q-1) - 1)
         {\sf f}_{q-1}
       \right]     
      \cdot \\
&   & [
      2\, \psi_{\widehat{q}, q-1}(q-1)
      + (2 {\sf N}(\widehat{q}) -1) 
        {\sf f}_{\widehat{q}} 
      + (n 
         - \chi_{\widehat{q}, q-1}(q-1) 
         - {\sf N}(\widehat{q}) 
         + 1 
         + {\sf N}(q-1) 
         {\sf f}_{q-1} \\
&   & + (n
         - \chi_{\widehat{q}, q-1}(q-1) 
         - {\sf N}(\widehat{q}) 
         - {\sf N}(q-1)) 
        {\sf f}_{q-2}] \\ 
& \geq & \left[
         \psi_{\widehat{q}, q-1}(q-1) 
         + {\sf N}(\widehat{q}) 
           {\sf f}_{\widehat{q}} 
       + (n 
          - \chi_{\widehat{q}, q-1}(q-1)
          - {\sf N}(\widehat{q})) 
         {\sf f}_{q-1}
         \right] 
         \cdot \\
&   & \left[
      \psi_{\widehat{q}, q-1}(q-1)
       + ({\sf N}(\widehat{q}) -1) 
         {\sf f}_{\widehat{q}} 
       + (n 
          - \chi_{\widehat{q} ,q-1}(q-1)
          - {\sf N}(\widehat{q}) 
          + 1) 
         {\sf f}_{q-1}
       \right]\, . 
\end{eqnarray*} 
if and only if
\begin{eqnarray*}
&    & \left[
       \underbrace{\psi (\widehat{q}, q-1)
                   + ({\sf N}(\widehat{q}) -1)
                     {\sf f}_{\widehat{q}} 
                   + (n 
                      - \chi_{\widehat{q}, q-1}(q-1) 
                   - {\sf N}(\widehat{q}))
                     {\sf f}_{q-1}}_{\textstyle := {\sf \Gamma}}
      \right]     
      \cdot \\
&   & [
      \underbrace{\psi_{\widehat{q}, q-1}(q-1)
                  + ({\sf N}(\widehat{q}) -1) 
                    {\sf f}_{\widehat{q}} 
                  + (n 
                     - \chi_{\widehat{q}, q-1}(q-1) 
                     - {\sf N}(\widehat{q}))
                       {\sf f}_{q-1}}_{\textstyle {\sf \Gamma}}                   \\
&   & + \underbrace{\psi_{\widehat{q}, q-1}(q-1)        
                    + {\sf N}(\widehat{q}) 
                      {\sf f}_{\widehat{q}}
                    + ({\sf N}(q-1)+1) 
                       {\sf f}_{q-1} 
                    + (n
                    - \chi_{\widehat{q}, q-1}(q-1) 
                    - {\sf N}(\widehat{q}) 
                    - {\sf N}(q-1)) 
                      {\sf f}_{q-2}}_{\textstyle {\sf := \Theta}}] \\ 
& \geq & \left[
         \underbrace{\psi_{\widehat{q}, q-1}(q-1) 
                     + ({\sf N}(\widehat{q}) - 1) 
                       {\sf f}_{\widehat{q}} 
                     + (n 
                        - \chi_{\widehat{q}, q-1}(q-1)
                        - {\sf N}(\widehat{q})) 
                          {\sf f}_{q-1}}_{\textstyle {\sf \Gamma}}
                    + {\sf f}_{\widehat{q}}
       \right] \cdot \\
&   & \left[
      \underbrace{\psi_{\widehat{q}, q-1}(q-1)
                  + ({\sf N}(\widehat{q}) -1) 
                    {\sf f}_{\widehat{q}} 
                  + (n 
                  - \chi_{\widehat{q} ,q-1}(q-1)
                  - {\sf N}(\widehat{q}) 
                    {\sf f}_{q-1}}_{\textstyle {\sf \Gamma}}
      + {\sf f}_{q-1}
      \right]\, . 
\end{eqnarray*} 
\remove{or \begin{eqnarray*}
&      &   [({\sf f}_{3} - {\sf f}_{2}) (x-1) + {\sf f}_{2} (n-1)]
          [(x-1) {\sf f}_{3} + (n-x) {\sf f}_{2}
           + x {\sf f}_{3}
           + (1+x_{1}) {\sf f}_{2}
           + (n-x-x_{1}) {\sf f}_{1}] \\
& \geq & 	[x {\sf f}_{3} + (n-x){\sf f}_{2}]
          [(x-1){\sf f}_{3} + (n-x+1) {\sf f}_{2}]\, .
\end{eqnarray*}
if and only if
\begin{eqnarray*}
&      &   \overbrace{[(x-1) {\sf f}_{3} + (n-x) {\sf f}_{2}]}^{\textstyle := {\sf \Gamma}}
          [\overbrace{(x-1) {\sf f}_{3} + (n-x) {\sf f}_{2}}^{\textstyle {\sf \Gamma}}
           + \overbrace{x {\sf f}_{3}
           + (1+x_{1}) {\sf f}_{2}
           + (n-x-x_{1}) {\sf f}_{1}}^{\textstyle {\sf \Theta}}] \\
& \geq & 	[\underbrace{(x-1) {\sf f}_{3} + (n-x){\sf f}_{2}}_{\textstyle {\sf \Gamma}} + {\sf f}_{3}]
          [\underbrace{(x-1){\sf f}_{3} + (n-x) {\sf f}_{2}}_{\textstyle {\sf \Gamma}} + {\sf f}_{2}]\, .
\end{eqnarray*}}
if and only if
\begin{eqnarray*}
         ({\sf \Theta} 
          - {\sf f}_{\widehat{q}} 
          - {\sf f}_{q-1}){\sf \Gamma}
& \geq & {\sf f}_{\widehat{q}} 
         {\sf f}_{q-1}
\end{eqnarray*}
or
\begin{eqnarray*}
&       & \underbrace{\left[
                      \psi_{\widehat{q}, q-1}(q-1)
                      + ({\sf N}(\widehat{q})-1) 
                        {\sf f}_{\widehat{q}} 
                      + {\sf N}(q-1) 
                        {\sf f}_{q-1} 
                      + (n
                         - \chi_{\widehat{q}, q-1}(q-1)
                         - {\sf N}(\widehat{q}) 
                         - {\sf N}(q-1)) 
                           {\sf f}_{q-2}
                      \right]}_{\textstyle 
                           {\sf \Theta} - 
                           {\sf f}_{\widehat{q}} - 
                           {\sf f}_{q-1}} \cdot \\
&      & \underbrace{\left[
                     \psi_{\widehat{q}, q-1}(q-1)
                     + ({\sf N}(\widehat{q}) - 1) 
                       {\sf f}_{\widehat{q}} 
                     + (n 
                        - \chi_{\widehat{q}, q-1}(q-1)
                        - {\sf N}(\widehat{q}))
                       {\sf f}_{q-1}
                    \right]}_{\textstyle {\sf \Gamma}} \\ 
& \geq & {\sf f}_{\widehat{q}} 
         {\sf f}_{q-1}\, .
\end{eqnarray*}
To prove the last inequality,
we consider two cases
according to the value of
${\sf N}(\widehat{q})$:
\begin{itemize}
	
\item
\underline{${\sf N}(\widehat{q}) \geq 2$:}
Clearly,
$n - \chi_{\widehat{q}, q-1}(q-1) 
   - {\sf N}(\widehat{q})
   - {\sf N}(q-1)
 \geq 0$.
Since
$\psi_{\widehat{q}, q-1}(q-1) \geq 0$
and
${\sf N}(q-1) \geq 0$,
it follows that
$({\sf \Theta} 
    - {\sf f}_{\widehat{q}} 
    - {\sf f}_{q-1})
    \geq 
    {\sf f}_{\widehat{q}}$
and
${\sf \Gamma} \geq {\sf f}_{\widehat{q}}$.
Hence,
$({\sf \Theta} 
  - {\sf f}_{\widehat{q}} 
  - {\sf f}_{q-1}) 
 \cdot
 {\sf \Gamma}
 \geq 
 {\sf f}_{\widehat{q}}^{2}
 > 
 {\sf f}_{\widehat{q}}
 {\sf f}_{q-1}$,
as needed.

\item
\underline{${\sf N}(\widehat{q}) = 1$:}
Then,
\begin{eqnarray*}
&      & ({\sf \Theta}
          - {\sf f}_{\widehat{q}}
          - {\sf f}_{q-1})  
         \cdot
         {\sf \Gamma} \\
& \geq & 	\left[
          \psi_{\widehat{q}, q-1}(q-1)
           + {\sf N}(q-1)
             {\sf f}_{q-1}
           + (n  
           - \chi_{\widehat{q}, q-1}(q-1)
           - 1
           - {\sf N}(q-1)) 
           {\sf f}_{q-2}
          \right] 
          \cdot \\
&      & \left[
          \psi_{\widehat{q}, q-1}(q-1)           
          + (n 
             - \chi_{\widehat{q}, q-1}(q-1)
             - 1)
            {\sf f}_{q-1}
        \right] \\
& =   & \left[
        \psi_{\widehat{q}, q-1}(q-1)
         - \chi_{\widehat{q}, q-1}(q-1)
           {\sf f}_{q-2}
         + \underbrace{{\sf N}(q-1) 
                       ({\sf f}_{q-1}
                        -
                        {\sf f}_{q-2}}_{\textstyle \geq 0})
         + (n-1)
           {\sf f}_{q-2}
        \right] 
        \cdot                               \\
&     & \left[
        \psi_{\widehat{q}, q-1}(q-1)
        - \chi_{\widehat{q}, q-1}(q-1)
          {\sf f}_{q-1}
        + (n-1)
           {\sf f}_{q-1}
       \right]\, .
\end{eqnarray*}
Since $q$ is the lowest quality
that is higher than $q-1$,
it follows that
\begin{eqnarray*}
         \psi_{\widehat{q}, q-1}(q-1)
& \geq & \chi_{\widehat{q}, q-1}(q-1)
         {\sf f}_{q}\ \
         >\ \
         \chi_{\widehat{q}, q-1}(q-1)
         {\sf f}_{q-2}\, ;
\end{eqnarray*}         
similarly,
\begin{eqnarray*}
    \psi_{\widehat{q}, q-1}(q-1)
& > & \chi_{\widehat{q}, q-1}(q-1)
      {\sf f}_{q-1}\, .
\end{eqnarray*}
It follows that
\begin{eqnarray*}
      ({\sf \Theta}
          - {\sf f}_{\widehat{q}}
          - {\sf f}_{q-1})  
         \cdot
         {\sf \Gamma} 
& > & (n-1)^{2}
      {\sf f}_{q-1}
      {\sf f}_{q-2}\ \
      >\ \
      {\sf f}_{\widehat{q}}
      {\sf f}_{q-1}, ,	
\end{eqnarray*}
as needed.
\remove{\textcolor{blue}{$x {\sf f}_{3} + (1+x_{1}) {\sf f}_{2} 
 + (n-x-x_{1}) {\sf f}_{1} > {\sf f}_{3}$
since $x \geq 1$, $x_{1} \geq 0$
and $n-x-x_{1} \geq 0$,
and 
{\sf (ii)}
$(x-1) {\sf f}_{3} + (n-x){\sf f}_{2}
 > (x-1+n-x) {\sf f}_{2}
 \geq {\sf f}_{2}$
since $n \geq 2$.
}}

\end{itemize}

We continue with {\bf (B)},
which is expressed as
\begin{equation}
\label{amara}	
\begin{aligned}
       &  {\sf f}_{\widehat{q}} 
         - \frac{\textstyle {\sf f}_{\widehat{q}}}
                {\textstyle \psi_{\widehat{q}, q-1}(q-1)
                            + {\sf N}(\widehat{q}) 
                              {\sf f}_{\widehat{q}}
                            + {\sf N}(q-1) 
                              {\sf f}_{q-1}  
                            + (n 
                               - \chi_{\widehat{q}, q-1}(q-1)
                               - {\sf N}(\widehat{q})
                               - {\sf N}(q-1)
                                {\sf f}_{q-2}} \\
\leq & {\sf f}_{q-2}
         - \frac{{\sf f}_{q-2}}
                {\textstyle \psi_{\widehat{q}, q-1}(q-1)
                            + ({\sf N}(\widehat{q}) - 1) 
                              {\sf f}_{\widehat{q}} 
                            + {\sf N}(q-1) 
                              {\sf f}_{q-1} 
                            + (n 
                               - \chi_{\widehat{q}, q-1}(q-1)
                               - {\sf N}(\widehat{q})
                               - {\sf N}(q-1)
                               + 1) 
                             {\sf f}_{q-2}}\, ,
\end{aligned}
\end{equation}
where $\psi_{\widehat{q}, q-1}(q-1) \geq 0$,
${\sf N}(\widehat{q}) > 0$ and 
$0 \leq {\sf N}(q-1) \leq 
n - \chi_{\widehat{q}, q-1}(q-1)
  - {\sf N}(\widehat{q})$.
Setting 
{\it (i)}
$q-1$ and $q-2$ for $q$ and $q-1$,
respectively, in {\sf (\ref{characterization of pne 1})}
and {\it (ii)}
$\widehat{q}$ and $q-1$ for $q'$ and $q''$,
respectively, in {\sf (\ref{peirama})},
we get that
\begin{equation}
\label{amara 1}
\begin{aligned}
         {\sf f}_{\widehat{q}} - {\sf f}_{q-2}
 =    & {\sf f}_{\widehat{q}} - {\sf f}_{q-1} 
         + {\sf f}_{q-1} - {\sf f}_{q-2} \\     
 \leq & \frac{\textstyle {\sf f}_{\widehat{q}}}
              {\textstyle \psi_{\widehat{q}, q-1}(q-1) 
                          + {\sf N}(\widehat{q})
                            {\sf f}_{\widehat{q}} 
                          + (n 
                             - \chi_{\widehat{q}, q-1}(q-1) 
                             - {\sf N}(\widehat{q}))
                            {\sf f}_{q-1}} \\
     &  -
         \frac{\textstyle {\sf f}_{q-1}}
              {\textstyle \psi_{\widehat{q}, q-1}(q-1)
                          + ({\sf N}(\widehat{q})-1) 
                          {\sf f}_{\widehat{q}} 
                          + (n 
                             - \chi_{\widehat{q}, q-1}(q-1)
                             - {\sf N}(\widehat{q})
                             + 1)
                            {\sf f}_{q-1}} \\
 &      \frac{\textstyle {\sf f}_{q-1}}
              {\textstyle \psi_{\widehat{q}, q-1}(q-1)
                          + {\sf N}(\widehat{q}) 
                            {\sf f}_{\widehat{q}} 
                          + x_{q-1} 
                            {\sf f}_{q-1} 
                          + (n 
                             - \chi_{\widehat{q}, q-1}(q-1)
                             - {\sf N}(\widehat{q})
                             - {\sf N}(q-1) 
                            {\sf f}_{q-2}}             \\
 &      -  
         \frac{\textstyle {\sf f}_{q-2}}
              {\textstyle \psi_{\widehat{q}, q-1}(q-1)
                          + {\sf N}(\widehat{q}) 
                            {\sf f}_{\widehat{q}} 
                          + ({\sf N}(q-1)-1) 
                            {\sf f}_{q-1} 
                          + (n 
                             - \chi_{\widehat{q}, q-1}(q-1)
                             - {\sf N}(\widehat{q})
                             - {\sf N}(q-1) + 1) 
                           {\sf f}_{q-2}}\, .
\end{aligned}
\end{equation}
By {\sf (\ref{amara 1})},
it suffices, for proving {\sf (\ref{amara})}, 
to prove that
\begin{eqnarray*}
&      & \frac{\textstyle {\sf f}_{\widehat{q}}}
              {\textstyle \psi_{\widehat{q}, q-1}(q-1) 
                          + {\sf N}(\widehat{q})
                            {\sf f}_{\widehat{q}} 
                          + (n 
                             - \chi_{\widehat{q}, q-1}(q-1) 
                             - {\sf N}(\widehat{q}))
                            {\sf f}_{q-1}} \\
&     &  -
         \frac{\textstyle {\sf f}_{q-1}}
              {\textstyle \psi_{\widehat{q}, q-1}(q-1)
                          + ({\sf N}(\widehat{q})-1) 
                          {\sf f}_{\widehat{q}} 
                          + (n 
                             - \chi_{\widehat{q}, q-1}(q-1)
                             - {\sf N}(\widehat{q})
                             + 1)
                            {\sf f}_{q-1}} \\
& &      + \frac{\textstyle {\sf f}_{q-1}}
              {\textstyle \psi_{\widehat{q}, q-1}(q-1)
                          + {\sf N}(\widehat{q}) 
                            {\sf f}_{\widehat{q}} 
                          + {\sf N}(q-1) 
                            {\sf f}_{q-1} 
                          + (n 
                             - \chi_{\widehat{q}, q-1}(q-1)
                             - {\sf N}(\widehat{q})
                             - {\sf N}(q-1)) 
                            {\sf f}_{q-2}}     \\
& &      -  
         \frac{\textstyle {\sf f}_{q-2}}
              {\textstyle \psi_{\widehat{q}, q-1}(q-1)
                          + {\sf N}(\widehat{q}) 
                            {\sf f}_{\widehat{q}} 
                          + ({\sf N}(q-1)-1) 
                            {\sf f}_{q-1} 
                          + (n 
                             - \chi_{\widehat{q}, q-1}(q-1)
                             - {\sf N}(\widehat{q})
                             - {\sf N}(q-1) + 1) 
                           {\sf f}_{q-2}} \\
& \leq & 	 \frac{\textstyle {\sf f}_{\widehat{q}}}
                {\textstyle \psi_{\widehat{q}, q-1}(q-1)
                            + {\sf N}(\widehat{q}) 
                              {\sf f}_{\widehat{q}}
                            + {\sf N}(q-1) 
                              {\sf f}_{q-1}  
                            + (n 
                               - \chi_{\widehat{q}, q-1}(q-1) 
                               - {\sf N}(\widehat{q})
                               - {\sf N}(q-1))
                             {\sf f}_{q-2}} \\
&     & - \frac{{\sf f}_{q-2}}
                {\textstyle \psi_{\widehat{q}, q-1}(q-1)
                            + ({\sf N}(\widehat{q}) - 1) 
                              {\sf f}_{\widehat{q}} 
                            + {\sf N}(q-1) 
                              {\sf f}_{q-1} 
                            + (n 
                               - \chi_{\widehat{q}, q-1}(q-1)
                               - {\sf N}(\widehat{q})
                               - {\sf N}(q-1)
                               + 1) 
                             {\sf f}_{q-2}}
\end{eqnarray*}
or
\begin{eqnarray*}
&      & \frac{\textstyle {\sf f}_{\widehat{q}}}
              {\textstyle \psi_{\widehat{q}, q-1}(q-1) 
                          + {\sf N}(\widehat{q})
                            {\sf f}_{\widehat{q}} 
                          + (n 
                             - \chi_{\widehat{q}, q-1}(q-1) 
                             - {\sf N}(\widehat{q}))
                            {\sf f}_{q-1}} \\
&     &  -
         \frac{\textstyle {\sf f}_{q-1}}
              {\textstyle \psi_{\widehat{q}, q-1}(q-1)
                          + ({\sf N}(\widehat{q})-1) 
                          {\sf f}_{\widehat{q}} 
                          + (n 
                             - \chi_{\widehat{q}, q-1}(q-1)
                             - {\sf N}(\widehat{q})
                             + 1)
                            {\sf f}_{q-1}} \\
& \leq &
         \frac{\textstyle {\sf f}_{q-2}}
              {\textstyle \psi_{\widehat{q}, q-1}(q-1)
                          + {\sf N}(\widehat{q}) 
                            {\sf f}_{\widehat{q}} 
                          + ({\sf N}(q-1)-1) 
                            {\sf f}_{q-1} 
                          + (n 
                             - \chi_{\widehat{q}, q-1}(q-1)
                             - {\sf N}(\widehat{q})
                             - {\sf N}(q-1) + 1) 
                           {\sf f}_{q-2}} \\
&    & - \frac{\textstyle {\sf f}_{q-1}}
              {\textstyle \psi_{\widehat{q}, q-1}(q-1)
                          + {\sf N}(\widehat{q}) 
                            {\sf f}_{\widehat{q}} 
                          + {\sf N}(q-1) 
                            {\sf f}_{q-1} 
                          + (n 
                             - \chi_{\widehat{q}, q-1}(q-1)
                             - {\sf N}(\widehat{q})
                             - {\sf N}(q-1)) 
                            {\sf f}_{q-2}}     \\                                                    
&    &   + \frac{\textstyle {\sf f}_{\widehat{q}}}
                {\textstyle \psi_{\widehat{q}, q-1}(q-1)
                            + {\sf N}(\widehat{q}) 
                              {\sf f}_{\widehat{q}}
                            + {\sf N}(q-1) 
                              {\sf f}_{q-1}  
                            + (n 
                               - \chi_{\widehat{q}, q-1}(q-1) 
                               - {\sf N}(\widehat{q})
                               - {\sf N}(q-1))
                             {\sf f}_{q-2}} \\
&     & - \frac{{\sf f}_{q-2}}
                {\textstyle \psi_{\widehat{q}, q-1}(q-1)
                            + ({\sf N}(\widehat{q}) - 1) 
                              {\sf f}_{\widehat{q}} 
                            + {\sf N}(q-1) 
                              {\sf f}_{q-1} 
                            + (n 
                               - \chi_{\widehat{q}, q-1}(q-1)
                               - {\sf N}(\widehat{q})
                               - {\sf N}(q-1)
                               + 1) 
                             {\sf f}_{q-2}}\, .
\end{eqnarray*}
Note that
\begin{eqnarray*} 
&    & \psi_{\widehat{q}, q-1}(q-1)
        + ({\sf N}(\widehat{q}) - 1) 
           {\sf f}_{\widehat{q}} 
        + {\sf N}(q-1) 
          {\sf f}_{q-1} 
                            + (n 
                               - \chi_{\widehat{q}, q-1}(q-1)
                               - {\sf N}(\widehat{q})
                               - {\sf N}(q-1)
                               + 1) 
                             {\sf f}_{q-2} \\ 
& < & \psi_{\widehat{q}, q-1}(q-1)        
      + {\sf N}(\widehat{q}) 
                              {\sf f}_{\widehat{q}} 
                            + {\sf N}(q-1) {\sf f}_{q-1} 
                            + (n 
                               - \chi_{\widehat{q}, q-1}(q-1)
                               - {\sf N}(\widehat{q})
                               - {\sf N}(q-1)
                               + 1) 
                             {\sf f}_{q-2}\, ,           
\end{eqnarray*}
implying
\begin{eqnarray*} 
&    & \frac{\textstyle {\sf f}_{q-2}}
            {\textstyle \psi (\widehat{q}, q-1)
        + ({\sf N}(\widehat{q} - 1) 
           {\sf f}_{\widehat{q}} 
                            + {\sf N}(q-1) {\sf f}_{q-1} 
                            + (n 
                               - \chi_{\widehat{q}, q-1}(q-1)
                               - {\sf N}(\widehat{q})
                               - {\sf N}(q-1)
                               + 1) 
                             {\sf f}_{q-2}}\\ 
& > & \frac{\textstyle {\sf f}_{q-2}}
           {\textstyle \psi_{\widehat{q}, q-1}(q-1)        
      + {\sf N}(\widehat{q}) 
                              {\sf f}_{\widehat{q}} 
                            + {\sf N}(q-1) {\sf f}_{q-1} 
                            + (n 
                               - \chi_{\widehat{q}, q-1}(q-1)
                               - {\sf N}(\widehat{q})
                               - {\sf N}x(q-1)
                               + 1) 
                             {\sf f}_{q-2}}\, .           
\end{eqnarray*}
Hence,
it suffices to prove that
\begin{eqnarray*}
&      & \frac{\textstyle {\sf f}_{\widehat{q}}}
              {\textstyle \psi_{\widehat{q}, q-1}(q-1) 
                          + {\sf N}(\widehat{q})
                            {\sf f}_{\widehat{q}} 
                          + (n 
                             - \chi_{\widehat{q}, q-1}(q-1) 
                             - {\sf N}(\widehat{q}))                            {\sf f}_{q-1}} \\
&     &  -
         \frac{\textstyle {\sf f}_{q-1}}
              {\textstyle \psi_{\widehat{q}, q-1}(q-1)
                          + ({\sf N}(\widehat{q})-1) 
                          {\sf f}_{\widehat{q}} 
                          + (n 
                             - \chi_{\widehat{q}, q-1}(q-1)
                             - {\sf N}(\widehat{q})
                             + 1)
                            {\sf f}_{q-1}} \\
& \leq & - \frac{\textstyle {\sf f}_{q-1}}
              {\textstyle \psi_{\widehat{q}, q-1}(q-1)
                          + {\sf N}(\widehat{q}) 
                            {\sf f}_{\widehat{q}} 
                          + {\sf N}(q-1) 
                            {\sf f}_{q-1} 
                          + (n 
                             - \chi_{\widehat{q}, q-1}(q-1)
                             - {\sf N}(\widehat{q})
                             - {\sf N}(q-1)) 
                            {\sf f}_{q-2}}     \\                                                    
&    &   + \frac{\textstyle {\sf f}_{\widehat{q}}}
                {\textstyle \psi_{\widehat{q}, q-1}(q-1)
                            + {\sf N}(\widehat{q}) 
                              {\sf f}_{\widehat{q}}
                            + {\sf N}(q-1) {\sf f}_{q-1}  
                            + (n 
                               - \chi_{\widehat{q}, q-1}(q-1) 
                               - {\sf N}(\widehat{q})
                               - {\sf N}(q-1))
                             {\sf f}_{q-2}}\, .
\end{eqnarray*}
Note that
\begin{eqnarray*}
& & \psi_{\widehat{q}, q-1}(q-1)
+ ({\sf N}(\widehat{q})-1) 
  {\sf f}_{\widehat{q}} 
+ ({\sf N}(q-1) + 1)
  {\sf f}_{q-1}
+ (n 
   - \chi_{\widehat{q}, q-1}(q-1)
   - ({\sf N}(\widehat{q}) -1)
     {\sf f}(\widehat{q})
   - {\sf N}(q-1)
     {\sf f}_{q-2} \\
& < &
\psi_{\widehat{q}, q-1}(q-1)
+ {\sf N}(\widehat{q}) 
  {\sf f}_{\widehat{q}} 
+ {\sf N}(q-1)
  {\sf f}_{q-1}
+ (n 
   - \chi_{\widehat{q}, q-1}(q-1)
   - ({\sf N}(\widehat{q}) -1)
     {\sf f}(\widehat{q})
   - {\sf N}(q-1))
     {\sf f}_{q-2}\, , 	
\end{eqnarray*}
implying
\begin{eqnarray*}
& & - \frac{\textstyle {\sf f}_{q-1}}
           {\textstyle \psi_{\widehat{q}, q-1}(q-1)
+ ({\sf N}(\widehat{q})-1) 
  {\sf f}_{\widehat{q}} 
+ ({\sf N}(q-1) + 1)
  {\sf f}_{q-1}
+ (n 
   - \chi_{\widehat{q}, q-1}(q-1)
   - ({\sf N}(\widehat{q}) -1)
     {\sf f}(\widehat{q})
   - {\sf N}(q-1)
     {\sf f}_{q-2}} \\
& < & - \frac{\textstyle {\sf f}_{q-1}}
            {\textstyle \psi_{\widehat{q}, q-1}(q-1)
+ {\sf N}(\widehat{q}) 
  {\sf f}_{\widehat{q}} 
+ {\sf N}(q-1)
  {\sf f}_{q-1}
+ (n 
   - \chi_{\widehat{q}, q-1}(q-1)
   - ({\sf N}(\widehat{q}) -1)
     {\sf f}(\widehat{q})
   - {\sf N}(q-1))
     {\sf f}_{q-2}}\, . 	
\end{eqnarray*}
Hence,
it suffices to prove that
\begin{eqnarray*}
&      & \frac{\textstyle {\sf f}_{\widehat{q}}}
              {\textstyle \psi_{\widehat{q}, q-1}(q-1) 
                          + {\sf N}(\widehat{q})
                            {\sf f}_{\widehat{q}} 
                          + (n 
                             - \chi_{\widehat{q}, q-1}(q-1) 
                             - {\sf N}(\widehat{q}))
                            {\sf f}_{q-1}} \\
&     &  -
         \frac{\textstyle {\sf f}_{q-1}}
              {\textstyle \psi_{\widehat{q}, q-1}(q-1)
                          + ({\sf N}(\widehat{q})-1) 
                          {\sf f}_{\widehat{q}} 
                          + (n 
                             - \chi_{\widehat{q}, q-1}(q-1)
                             - {\sf N}(\widehat{q})
                             + 1)
                            {\sf f}_{q-1}} \\
& \leq & - \frac{\textstyle {\sf f}_{q-1}}
              {\textstyle \psi_{\widehat{q}, q-1}(q-1)
                          + ({\sf N}(\widehat{q})-1) 
                            {\sf f}_{\widehat{q}} 
                          + ({\sf N}(q-1)+1) 
                            {\sf f}_{q-1} 
                          + (n 
                             - \chi_{\widehat{q}, q-1}(q-1)
                             - {\sf N}(\widehat{q})
                             - {\sf N}(q-1)) 
                            {\sf f}_{q-2}}     \\                                                    
&    &   + \frac{\textstyle {\sf f}_{\widehat{q}}}
                {\textstyle \psi_{\widehat{q}, q-1}(q-1)
                            + {\sf N}(\widehat{q}) 
                              {\sf f}_{\widehat{q}}
                            + {\sf N}(q-1) {\sf f}_{q-1}  
                            + (n 
                               - \chi_{\widehat{q}, q-1}(q-1) 
                               - {\sf N}(\widehat{q})
                               - {\sf N}(q-1))
                             {\sf f}_{q-2}}\, .
\end{eqnarray*}
and this has been proved in {\bf (A)}.
\end{proof}

\noindent
We finally prove:

\begin{lemma}
\label{second lemma}
Assume that ${\sf N}(q-1) \geq 1$
(resp., ${\sf N}(q-2) \geq 1$).
Then,
a player assigned to $q-1$ 
(resp., $q-2$) in iteration $q-1$
does not want to switch to $\widehat{q}$.
\end{lemma}

\begin{proof}
We have to prove that
\begin{eqnarray*}
\mbox{{\bf (C)}:}~~~~
         {\sf C}_{q-1}({\sf N}(\widehat{q}), 
                       {\sf N}(q-1), 
                       {\sf N}(q-2))
& \leq & 
        {\sf C}_{\widehat{q}}({\sf N}_{\widehat{q}}+1, 
                              {\sf N}(q-1)-1, 
                              {\sf N}(q-2))
\end{eqnarray*} 
with ${\sf N}(q-1) > 0$
and
\begin{eqnarray*}
\mbox{{\bf (D)}:}~~~~
           {\sf C}_{q-2}({\sf N}(\widehat{q}), 
                         {\sf N}(q-1), 
                         {\sf N}(q-2))
& \leq &
           {\sf C}_{\widehat{q}}({\sf N}(\widehat{q}) +1, 
                                 {\sf N}(q-1), 
                                 {\sf N}(q-2)-1)
\end{eqnarray*}
with ${\sf N}(q-2) > 0$.
{\bf (C)}
is expressed as
\begin{equation}
\label{xina 1}	
\begin{aligned}
       &  {\sf f}_{q-1} 
         - \frac{\textstyle {\sf f}_{q-1}}
                {\textstyle \psi_{\widehat{q}, q-1}(q-1)
                            + {\sf N}(\widehat{q}) 
                              {\sf f}_{\widehat{q}}
                            + {\sf N}(q-1) 
                              {\sf f}_{q-1}  
                            + (n 
                               - \chi_{\widehat{q}, q-1}(q-1) 
                               - {\sf N}(\widehat{q})
                               - {\sf N}(q-1))
                             {\sf f}_{q-2}} \\
\leq & {\sf f}_{\widehat{q}}
         - \frac{{\sf f}_{\widehat{q}}}
                {\textstyle \psi_{\widehat{q}, q-1}(q-1)
                            + ({\sf N}(\widehat{q}) + 1) 
                              {\sf f}_{\widehat{q}} 
                            + ({\sf N}(q-1)-1)
                              {\sf f}_{q-1} 
                            + (n 
                               - \chi_{\widehat{q}, q-1}(q-1)
                               - {\sf N}(\widehat{q})
                               - {\sf N}(q-1))
                             {\sf f}_{q-2}}\, ,
\end{aligned}
\end{equation}
where ${\sf N}(q-1) > 0$ and 
$0 \leq {\sf N}(\widehat{q}) \leq n-1$.
{\bf (D)} is expressed as
\begin{equation}
\label{xina 2}	
\begin{aligned}
       &  {\sf f}_{q-2} 
         - \frac{\textstyle {\sf f}_{q-2}}
                {\textstyle \psi_{\widehat{q}, q-1}(q-1)
                            + {\sf N}(\widehat{q}) 
                              {\sf f}_{\widehat{q}}
                            + {\sf N}(q-1) 
                              {\sf f}_{q-1}  
                            + (n 
                               - \chi_{\widehat{q}, q-1}(q-1) 
                               - {\sf N}(\widehat{q})
                               - {\sf N}(q-1))
                             {\sf f}_{q-2}} \\
\leq & {\sf f}_{\widehat{q}}
         - \frac{{\sf f}_{\widehat{q}}}
                {\textstyle \psi_{\widehat{q}, q-1}(q-1)
                            + ({\sf N}(\widehat{q}) + 1) 
                              {\sf f}_{\widehat{q}} 
                            + {\sf N}(q-1)
                              {\sf f}_{q-1} 
                            + (n 
                               - \chi_{\widehat{q}, q-1}(q-1)
                               - {\sf N}(\widehat{q})
                               - {\sf N}(q-1)
                               -1)
                             {\sf f}_{q-2}}\, ,
\end{aligned}
\end{equation}
where $n 
                               - \chi_{\widehat{q}, q-1}(q-1)
                               - {\sf N}(\widehat{q})
                               - {\sf N}(q-1) > 0$ 
and $0 \leq {\sf N}(\widehat{q}) \leq n-1$.
We now observe a relation
between {\sf (C)} and {\bf (D)}:

\begin{lemma}
\label{sternopaidi}
{\bf (C)} implies {\bf (D)}.
\end{lemma}

\begin{proof}
\begin{itemize}

\item
First,
it holds that
{\small
\begin{eqnarray*}
&      & {\sf f}_{q-1}
         -
         \frac{\textstyle {\sf f}_{q-1}}
              {\textstyle \psi_{\widehat{q}, q-1}(q-1)
                          + {\sf N}(\widehat{q}) 
                            {\sf f}_{\widehat{q}}
                          + {\sf N}(q-1) 
                            {\sf f}_{q-1}
                          + (n 
                             - \chi_{\widehat{q}, q-1}(q-1)
                             - {\sf N}(\widehat{q})
                             - {\sf N}(q-1)) 
                            {\sf f}_{q-2}} \\
& \geq & {\sf f}_{q-2}
         -
         \frac{\textstyle {\sf f}_{q-2}}
              {\textstyle \psi_{\widehat{q}, q-1}(q-1)
                          + {\sf N}(\widehat{q}) 
                            {\sf f}_{\widehat{q}}
                          + {\sf N}(q-1) 
                            {\sf f}_{q-1}
                          + (n 
                             - \chi_{\widehat{q}, q-1}(q-1)
                             - {\sf N}(\widehat{q})
                             - {\sf N}(q-1)) 
                            {\sf f}_{q-2}}
\end{eqnarray*}
}
if and only if
{\small
\begin{eqnarray*}
         {\sf f}_{q-1} - {\sf f}_{q-2}
& \geq &	 
\frac{\textstyle {\sf f}_{q-2} - {\sf f}_{q-1}}
{\textstyle \psi_{\widehat{q}, q-1}(q-1)
            + {\sf N}(\widehat{q}) 
              {\sf f}_{\widehat{q}}
            + {\sf N}(q-1) 
              {\sf f}_{q-1}
            + (n 
               - \chi_{\widehat{q}, q-1}(q-1)
               - {\sf N}(\widehat{q})
               - {\sf N}(q-1))
                 {\sf f}_{q-2}}
\end{eqnarray*}
}
if and only if
{\small
\begin{eqnarray*}
\psi_{\widehat{q}, q-1}(q-1)
+ {\sf N}(\widehat{q}) 
  {\sf f}_{\widehat{q}}
+ {\sf N}(q-1) 
  {\sf f}_{q-1}
+ (n 
   - \chi_{\widehat{q}, q-1}(q-1)
   - {\sf N}(\widehat{q})
   - {\sf N}(q-1) 
     {\sf f}_{q-2}
& \geq & 1\, ,	
\end{eqnarray*}
}
which holds 
due to the assumption 
$n {\sf f}_{1} > 1$.

\item
Second,
it holds that
{\small
\begin{eqnarray*}
&      & {\sf f}_{\widehat{q}}
         -
         \frac{\textstyle {\sf f}_{\widehat{q}}}
              {\textstyle \psi_{\widehat{q}, q-1}(q-1)
                          + ({\sf N}(\widehat{q})+1) 
                            {\sf f}_{\widehat{q}}
                          + ({\sf N}(q-1)-1) 
                            {\sf f}_{q-1}
                          + (n 
                             - \chi_{\widehat{q}, q-1}(q-1)
                             - {\sf N}(\widehat{q})
                             - {\sf N}(q-1)) 
                            {\sf f}_{q-2}} \\
& \leq & {\sf f}_{\widehat{q}}
         -
         \frac{\textstyle {\sf f}_{\widehat{q}}}
              {\textstyle \psi_{\widehat{q}, q-1}(q-1)
                          + ({\sf N}(\widehat{q})+1) 
                            {\sf f}_{\widehat{q}}
                          + {\sf N}(q-1) 
                            {\sf f}_{q-1}
                          + (n 
                             - \chi_{\widehat{q}, q-1}(q-1)
                             - {\sf N}(\widehat{q})
                             - {\sf N}(q-1) 
                             - 1) 
                            {\sf f}_{q-2}}
\end{eqnarray*}
}
if and only if
{\small
\begin{eqnarray*}
& &         \psi_{\widehat{q}, q-1}(q-1)
            + ({\sf N}(\widehat{q})+1) 
              {\sf f}_{\widehat{q}}
            + ({\sf N}(q-1)-1) 
              {\sf f}_{q-1}
            + (n 
               - \chi_{\widehat{q}, q-1}(q-1)
               - {\sf N}(\widehat{q})
               - {\sf N}(q-1) 
                 {\sf f}_{q-2} \\
& \leq & \psi_{\widehat{q}, q-1}(q-1)
         + ({\sf N}(\widehat{q}) +1) 
           {\sf f}_{\widehat{q}}
         + {\sf N}(q-1) 
           {\sf f}_{q-1}
         + (n 
            - \chi_{\widehat{q}, q-1}(q-1)
            - {\sf N}(\widehat{q})
            - {\sf N}(q-1) - 1) 
              {\sf f}_{q-2}
\end{eqnarray*}
}
if and only if ${\sf f}_{q-2} \leq {\sf f}_{q-1}$,
which holds.

\end{itemize}
The claim follows.
\end{proof}

\noindent
By Lemma~\ref{sternopaidi}, 
we only have to prove {\bf (C)}.
{\sf (\ref{xina 1})} is equivalent to
{\small
\begin{equation}
\label{an xreiastei}
\begin{aligned}
       &  {\sf f}_{q-1}\,
         \frac{\textstyle \psi_{\widehat{q}, q-1}(q-1)
                            + {\sf N}(\widehat{q}) 
                              {\sf f}_{\widehat{q}}
                            + {\sf N}(q-1) 
                              {\sf f}_{q-1}  
                            + (n 
                               - \chi_{\widehat{q}, q-1} 
                               - {\sf N}(\widehat{q})
                               - {\sf N}(q-1)
                              {\sf f}_{q-2} 
                            - 1}
                {\textstyle  \psi_{\widehat{q}, q-1}(q-1)
                            + {\sf N}(\widehat{q}) 
                              {\sf f}_{\widehat{q}}
                            + {\sf N}(q-1)
                              {\sf f}_{q-1}  
                            + (n 
                               - \chi_{\widehat{q}, q-1}(q-1) 
                               - {\sf N}(\widehat{q})
                               - {\sf N}(q-1))
                             {\sf f}_{q-2}} \\
\leq & {\sf f}_{\widehat{q}}\,
         \frac{\textstyle \psi_{\widehat{q}, q-1}(q-1)
                            + ({\sf N}(\widehat{q}) + 1) 
                              {\sf f}_{\widehat{q}} 
                            + ({\sf N}(q-1)-1)
                              {\sf f}_{q-1} 
                            + (n 
                               - \chi_{\widehat{q}, q-1}(q-1)
                               - {\sf N}(\widehat{q})
                               - {\sf N}(q-1))
                              {\sf f}_{q-2} 
                            - 1}
             {\textstyle \psi_{\widehat{q}, q-1}(q-1)
                            + ({\sf N}(\widehat{q}) + 1) 
                              {\sf f}_{\widehat{q}} 
                            + ({\sf N}(q-1)-1)
                              {\sf f}_{q-1} 
                            + (n 
                               - \chi_{\widehat{q}, q-1}(q-1)
                               - {\sf N}(\widehat{q})
                               - {\sf N}(q-1))
                             {\sf f}_{q-2}}\, .
\end{aligned}
\end{equation}
}
Note that in {\sf (\ref{an xreiastei})},
each denominator is at least
$n {\sf f}_{1} > 1$
and each numerator is at least $n {\sf f}_{1} - 1 > 0$, 
by assumption {\sf (C1)}.
So, by eliminating fractions,
{\sf (\ref{an xreiastei})}
is equivalent to
{\small
\begin{eqnarray*}
&       &   {\sf f}_{\widehat{q}}\,
            \left[
             \psi_{\widehat{q}, q-1}(q-1)
             + ({\sf N}(\widehat{q}) + 1) 
               {\sf f}_{\widehat{q}} 
             + ({\sf N}(q-1)-1)
               {\sf f}_{q-1} 
             + (n 
                - \chi_{\widehat{q}, q-1}(q-1)
                - {\sf N}(\widehat{q})
                - {\sf N}(q-1))
                  {\sf f}_{q-2} - 1
           \right] \cdot \\
&       &  \left[
           \underbrace{\psi_{\widehat{q}, q-1}(q-1)
                       + {\sf N}(\widehat{q}) 
                         {\sf f}_{\widehat{q}}
                       + {\sf N}(q-1) 
                         {\sf f}_{q-1}  
                       + (n 
                          - \chi_{\widehat{q}, q-1}(q-1) 
                          - {\sf N}(\widehat{q})
                          - {\sf N}(q-1))
                            {\sf f}_{q-2}}_{\textstyle := {\sf M}}
          \right] \cdot \\
& \geq & 	{\sf f}_{q-1}\,
          \left[
          \psi_{\widehat{q}, q-1}(q-1)
          + {\sf N}(\widehat{q}) 
            {\sf f}_{\widehat{q}}
          + {\sf N}(q-1) 
            {\sf f}_{q-1}  
          + (n 
             - \chi_{\widehat{q}, q-1}(q-1) 
             - {\sf N}(\widehat{q})
             - {\sf N}(q-1))
               {\sf f}_{q-2} 
             - 1
          \right] \cdot \\
&     &  \left[
         \psi_{\widehat{q}, q-1}(q-1)
          + ({\sf N}(\widehat{q}) + 1) 
            {\sf f}_{\widehat{q}} 
          + ({\sf N}(q-1)-1)
            {\sf f}_{q-1} 
          + (n 
             - \chi_{\widehat{q}, q-1}(q-1)
             - {\sf N}(\widehat{q})
             - {\sf N}(q-1))
            {\sf f}_{q-2}
         \right]
\end{eqnarray*}
} 
or
\begin{eqnarray*}
         {\sf f}_{\widehat{q}}\,
         [{\sf M} + {\sf f}_{\widehat{q}} 
                  - {\sf f}_{q-1} - 1]
         {\sf M}
& \geq & {\sf f}_{q-1}
         [{\sf M}-1]
         [{\sf M} 
          + {\sf f}_{\widehat{q}} 
          - {\sf f}_{q-1}]
\end{eqnarray*}
or
\begin{eqnarray*}
         {\sf f}_{\widehat{q}} 
         {\sf M}^{2} 
         + {\sf f}_{\widehat{q}} 
           ({\sf f}_{\widehat{q}} 
            - {\sf f}_{q-1} - 1) {\sf M}
& \geq & {\sf f}_{q-1} {\sf M}^{2} 
         + {\sf f}_{q-1} 
           ({\sf f}_{\widehat{q}} 
            - {\sf f}_{q-1}) {\sf M} 
         - {\sf f}_{q-1} {\sf M} 
         - {\sf f}_{q-1} 
           ({\sf f}_{\widehat{q}} 
            - {\sf f}_{q-1})
\end{eqnarray*}
or
\begin{eqnarray*}
         ({\sf f}_{\widehat{q}} 
          - {\sf f}_{q-1}) 
         {\sf M}^{2} 
         + [{\sf f}_{\widehat{q}} 
            ({\sf f}_{\widehat{q}} 
             - {\sf f}_{q-1} 
             -1)
            - {\sf f}_{q-1}
              ({\sf f}_{\widehat{q}} 
               - {\sf f}_{q-1}) 
            + {\sf f}_{q-1}] {\sf M} 
            + {\sf f}_{q-1} 
              ({\sf f}_{\widehat{q}} 
               - {\sf f}_{q-1})
& \geq & 0\, ,	
\end{eqnarray*}
or
\begin{eqnarray*}
         ({\sf f}_{\widehat{q}} 
          - {\sf f}_{q-1}) {\sf M}^{2} 
         + ({\sf f}_{\widehat{q}} 
            - {\sf f}_{q-1})
           ({\sf f}_{\widehat{q}} 
            - {\sf f}_{q-1} - 1) 
           {\sf M}
         + {\sf f}_{q-1} 
           ({\sf f}_{\widehat{q}} 
            - {\sf f}_{q-1}) 
& \geq & 0	
\end{eqnarray*}
or (since ${\sf f}_{\widehat{q}} > {\sf f}_{q-1}$)
\begin{eqnarray*}
         {\sf M}^{2} 
         + ({\sf f}_{\widehat{q}} 
            - {\sf f}_{q-1} - 1) 
           {\sf M} 
         + {\sf f}_{q-1}
& \geq & 0\, .	
\end{eqnarray*}
Now note that
${\sf M} \geq n {\sf f}_{q-2}$. 
Hence,
\begin{eqnarray*}
&    {\sf M}^{2} 
       + ({\sf f}_{\widehat{q}} 
          - {\sf f}_{q-1} 
          - 1) 
         {\sf M} 
       + {\sf f}_{q-1} & \\
\geq & (n {\sf f}_{q-2})^{2} 
       + ({\sf f}_{\widehat{q}} 
          - {\sf f}_{q-1} 
          - 1) 
         n {\sf f}_{q-2} 
       + {\sf f}_{q-1}
     & \\                   
>    & 1 
       + {\sf f}_{\widehat{q}} 
       - {\sf f}_{q-1} - 1 
       + {\sf f}_{q-1}
     & \mbox{(since $n {\sf f}_{q-2} 
                     \geq n {\sf f}_{1}
                     > 1$, by assumption {\sf (C2)})}  \\
=    & {\sf f}_{\widehat{q}}\ \ >\ \ 0\, ,     
\end{eqnarray*}
and the claim follows.
\end{proof}

\end{proof}

\noindent
To prove that the time complexity
is $\Theta (\max \{ Q, n \})$,
we consider two possible cases:
\begin{itemize}

\item
\underline{
Right after each iteration $q$,
where $Q \geq q \geq 2$,
${\sf N}(q) = 0$
and ${\sf N}(q-1) = n$;}
so no player is retained
in an iteration.
This happens if and only if
the smallest integer $x$ 
found in each iteration is $0$.
Since the algorithm searches for $x$
starting from $0$,
each iteration takes time $\Theta (1)$
and the total time is $\Theta (Q)$.

\item
\underline{
There is at least one iteration $q$,
where $Q \geq q \geq 2$,
with ${\sf N}(q) > 0$ right after it:}
We claim that each iteration $q$,
where $Q \geq q \geq 2$,
takes time $\Theta ({\sf N}(q))$:
Recall that, by the algorithm,
iteration $q$ searches, 
starting with $x := 0$,
for an $x$,
with $0 \leq x \leq n - X(Q, q+1)$,
yielding a pure Nash equilibrium 
restricted to qualities $q$ and $q-1$;
it terminates when it finds such an $x$ for a first time
and sets ${\sf N}(q) := x$.
So iteration $q$ takes time $\Theta ({\sf N}(q))$.
The total time is 
$\sum_{Q \geq q \geq 2}
  \Theta ({\sf N}(q))
 = \Theta \left( \sum_{Q \geq q \geq 2}
                  {\sf N}(q)
          \right)$.
By the algorithm,
${\sf N}(q)$ users are retained
right after iteration $q$
and do not participate in future iterations;
since the total number of users is $n$,
$\sum_{Q \geq q \geq 2}
  {\sf N}(q) \leq n$.
(Note that in the last iteration $2$,
it is possible that 
some players are assigned to quality $1$;
this happens exactly when ${\sf N}(2) < n - X(Q, 3)$
and ${\sf N}(1) > 0$.)
So in this case,
the total time is $\Theta (n)$.

\end{itemize}
Hence, the time complexity
is $\Theta (\max \{ Q, n \})$.
\end{proof}
}

\remove{
\begin{proof}
\textcolor{red}{
Consider a profile
$\langle x_{1}, x_{2}, \ldots, x_{Q} \rangle$,
in which $x_q > 0$ players are assigned to quality $q$,
where $q \in [Q]$.
For a player $i$ assigned to quality $q \in [Q]$,
$c_{i} = {\sf f}_{q} - \frac{\textstyle {\sf f}_{q}}
                            {\textstyle \sum_{\widehat{q} \in [Q]} 
                            x_{\widehat{q}} {\sf f}_{\widehat{q}}}$.
Player $i$ does not want to move to quality $q' \neq q$
if and only if
\begin{eqnarray*}
         {\sf f}_{q} - \frac{\textstyle {\sf f}_{q}}
                            {\textstyle \sum_{\widehat{q} \in [Q]} 
                            x_{\widehat{q}} {\sf f}_{\widehat{q}}}
& \leq & {\sf f}_{q'} - \frac{\textstyle {\sf f}_{q'}}      
{\textstyle \sum_{\widehat{q} \in [Q] \setminus \{ q, q' \}}
              x_{\widehat{q}} {\sf f}_{\widehat{q}}
              + (x_{q} - 1) {\sf f}_{q}
              + (x_{q'} + 1) {\sf f}_{q'}}\, .	
\end{eqnarray*}
Hence,
player $i$ does not want to move to any quality $q' \neq q$
if and only if
\begin{eqnarray*}
         {\sf f}_{q} - \frac{\textstyle {\sf f}_{q}}
                            {\textstyle \sum_{\widehat{q} \in [Q]} 
                            x_{\widehat{q}} {\sf f}_{\widehat{q}}}
& \leq & \min_{q' \in [Q] \mid q' \neq q} 
\left\{ 
{\sf f}_{q'} - \frac{\textstyle {\sf f}_{q'}}      
{\textstyle \sum_{\widehat{q} \in [Q] \setminus \{ q, q' \}}
              x_{\widehat{q}} {\sf f}_{\widehat{q}}
              + (x_{q} - 1) {\sf f}_{q}
              + (x_{q'} + 1) {\sf f}_{q'}}
        \right \}\, .	
\end{eqnarray*}
Necessary and sufficient conditions
for the profile $\langle x_{1}, x_{2}, \ldots, x_{Q} \rangle$
to be a pure Nash equilibrium are that
for each $q \in [Q]$ such that
$x_{q} > 0$,
any player assigned to quality $q$
does not want to move to any quality $q' \in [Q]$, $q' \neq q$.}

\textcolor{red}{Hence,
to compute a pure Nash equilibrium,
we enumerate all profiles 
$\langle x_{1}, x_{2}, \ldots, x_{Q} \rangle$,
searching for one that satisfies,
for each quality $q \in [Q]$,
the necessary and sufficient conditions above;
by Theorem~\ref{pure existence},
such a profile exists.
Clearly,
there are $\binom{\textstyle n}{\textstyle Q-1}$ profiles to enumerate,
and for each we have to check,
for each quality $q \in [Q]$,
a condition
requiring the computation of a minimum over a set of size $Q-1$. 
Hence,
we have a $\Theta \left( Q^{2} \binom{\textstyle n}
                                     {\textstyle Q-1} \right)$ algorithm.
For constant $Q$,
this is a polynomial $\Theta (n^{Q-1})$ algorithm.}
\end{proof}
}

\remove{\noindent
\colorbox{pink}{\parbox{\textwidth}{
{\bf Concrete Open Problem 1}:
{\sf Extend the algorithm directly to the generalized award function.
I believe this is possible since the proof
of the contiguous lemma
is not exploiting any special property
of the proportional allocation function.}
}}
}

\remove{
\noindent
\colorbox{pink}{\parbox{\textwidth}{
{\bf Concrete Open Problem 1}:
{\sf Get an FPT algorithm (in $q$? in $m$?)
for the case of $m$ proposals.}
}}
}

\remove{
\noindent
\colorbox{cyan}{\parbox{\textwidth}{
{\bf Concrete Open Problem 1}:
{\sf Can we get similar, fast (perhaps $\Theta (m)$) algorithms
for the case of $m$ proposals?}
}}
}

\remove{
\section{{\sf PLS}-Completeness}

\noindent
$\Phi$ can be computed in polynomial time:
$\Phi$ is the sum of at most $m$ terms. So it suffices to prove that each such term can be computed in polynomial time. For simplicity,
write each such term as ${\sf g}(k) \cdot {\sf H}_{k}$.
As ${\sf g}$ has no precise mathematical meaning and it is only required to satisfy a specific recurrence relation, 
we can choose ${\sf g}(1) = 1$.
Now ${\sf g}(k)$ can be computed
by repeatedly using the recurrence relation
to compute the values 
${\sf g}(2), {\sf g}(3), \ldots, g(k)$ in order. 
Each computation of ${\sf g}(k)$ 
from ${\sf g}(k-1)$ is done in polynomial time since we use the recurrence relation and the values are small:
\begin{eqnarray*}
       {\sf g}(k)
& = &	\frac{\textstyle k {\sf H}_{k}}
            {\textstyle 1 + k {\sf H}_{k}}
       {\sf g}(k-1)
       + \frac{\textstyle kf}
              {(1 + k{\sf H}_{k})({\sf A} + kf)}\ \
       \leq\ \ {\sf g}(k-1) + \frac{\textstyle 1}
                             {\textstyle k}\ \ \leq\ \ \ldots\ \ 
                             \leq\ \ {\sf H}_{k}\ <\ \ 2\, .            
\end{eqnarray*}
Hence,
the problem of computing a pure equilibrium
for the contest game is in {\sf PLS}.

\begin{theorem}
\textcolor{red}{Computing a pure equilibrium
for the contest game is
{\sf PLS}-hard.}
\end{theorem}

\remove{\begin{proposition}
The function 
${\sf d}(x) = \frac{\textstyle x\, {\sf f}(q_{0})}
                   {\textstyle \Theta_{0} + x\, {\sf f}(q_{0})}$
is strictly increasing.                    
\end{proposition}

\begin{proof}
	
\end{proof}}

\noindent
\textcolor{red}{To show {\sf PLS}-hardness of 
{\sc PNE in Contest Game},
there are two possible approaches.}

\subsection{\textcolor{blue}{Reduction from {\sc Max-Cut}}}

\textcolor{blue}{Given a weighted undirected graph
$G = \langle V, E \rangle$
with non-negative edge weights $\{ w_{e} \}_{e \in E}$,
construct the following contest game:}
\begin{enumerate}

\item
\textcolor{blue}{{\it Players} are the vertices in $V$
and {\it proposals} are the edges in $E$.}

\item
\textcolor{blue}{For each proposal $e \in E$, 
there are two {\it qualities,}
$q_{e}$ and $\overline{q}_{e}$.}

\item
\textcolor{blue}{For each proposal $e = (u, v) \in E$,
only the players $u$ and $v$ may write a review for proposal $e$.
The qualities of these reviews are
${\sf f}(q_{e}) = 0$ and
${\sf f}(\overline{q}_{e}) = w_{e}$.}

\textcolor{red}{How do you feel about the constraint that only $u$ and $v$ may write a review for proposal ${\sf P}_{j}$?
This is natural for congestion games but not for contest games.
I have a feeling that this reduction works and the proof
follows Roughgarden, but I am not sure. That's the reason I considered also an alternative reduction from
{\sc PNE in Congestion Games}.}

\end{enumerate}

\subsection{\textcolor{blue}{Reduction from {\sc PNE in Congestion Game}}}

\noindent
\textcolor{blue}{
Fix a player $i \in [n]$ and 
a proposal ${\sf P}_{j}$ for some $j \in [m]$.
Write the utility $u_{i}({\bf q}^{j})$
received by the player $i$
under the quality vector ${\bf q}^{j}$
for proposal ${\sf P}_{j}$
as
\begin{eqnarray*}
      u_{i}({\bf q}^{j}) 
& = & \frac{\textstyle {\sf f}(q_{ij})}
           {\textstyle \sum_{q \in [Q]}
                         n_{q} \cdot {\sf f}(q)
              }
      - s_{ij} {\sf f}(q_{ij})\, 
\end{eqnarray*}
where $n_{q}$ is the number of agents $k \in [n]$
such that $q_{kj} = q$:
$n_{q} := \left| \widehat{i} \in [n] \mid
                 q_{\widehat{i}j} = q
          \right|$;
call $n_{q}$ the {\it congestion on quality $q$}.
Setting $q_{ij} = \widehat{q}$,
we write
\begin{eqnarray*}    
      u_{i}({\bf q}^{j})
& = & \frac{\textstyle {\sf f}(\widehat{q})}
           {\textstyle \sum_{q \in [Q]\setminus \{ \widehat{q} \}}
              n_{q} \cdot {\sf f}(q)
            + n_{\widehat{q}} \cdot {\sf f}(\widehat{q})}
      - s_{ij} {\sf f}(\widehat{q})\, .        
\end{eqnarray*}
Considering the possible qualities 
of reviews written for the proposal
${\sf P}_{j}$ as {\it resources,}
the contest game is a new kind
of a {\it congestion game}~\cite{R73},
where:}
\begin{itemize}

\item
\textcolor{blue}{Players $1, 2, \ldots, n$
are unweighted.}

\item
\textcolor{blue}{The {\it resources} are $0_j, 1_j, 2_j, \ldots, Q_j$.}

\item
\textcolor{blue}{The {\it latency} $d_{\widehat{q}}$
on resource $\widehat{q} \in [Q]$
may depend on the {\it congestions} 
$n_{0_j}, n_{1_j}, \ldots, n_{Q_j}$
on {\it resources} 
$0_{j}, 1_{j}, \ldots, Q_{j}$,
respectively,
associated with the proposal ${\sf P}_{j}$
as
\begin{eqnarray*}
       {\sf d}_{\widehat{q}}(n_{0_j}, n_{1_j}, n_{2_j}, \ldots, n_{Q_j})
& = & - 
      \frac{\textstyle {\sf f}(\widehat{q})}
           {\textstyle \sum_{q \in [Q_j]}
                         n_{q}\, {\sf f}(q)}
\end{eqnarray*}
}

\item
\textcolor{blue}{the {\it cost} of player $i \in [n]$
choosing resource $\widehat{q}$
is given by
\begin{eqnarray*}
       {\sf C}_{\widehat{q}}(n_{0_j}, n_{1_j}, n_{2_j}, \ldots, n_{Q_j})
& = & s_{ij} {\sf f}(\widehat{q}) - 
      \frac{\textstyle {\sf f}(\widehat{q})}    	        {\textstyle \sum_{q \in [Q_j]}
                         n_{q}\, {\sf f}(q)}
\end{eqnarray*}
}

\end{itemize}

\textcolor{blue}{
Congestion games are cost minimization games.
In a {\it congestion game}~\cite{R73},
there is a set of {\it resources} $E$,
each with a {\it cost function} $c_{e}$,
and each player $i \in [n]$
has an arbitrary collection $S_{i} \subseteq 2^{E}$
of {\it strategies,}
each a subset of the resource set,
which is called the {\it strategy set} of player $i$.
A {\it profile} ${\bf s}$
is a tuple of strategies,
one per player.
We denote as $S$ the set of all possible profiles.
For each resource $e \in E$,
the {\it cost function}
(also called {\it latency function}) $c_{e}$ maps
the number of players whose strategy includes the resource $e$
to the integers.  
A {\it profile} is a collection of strategies,
one for each player.
The {\it cost} $C_{i}({\bf s})$ of player $i \in [n]$ 
in the profile ${\bf s} \in S$ 
is 
\begin{eqnarray*}
      C_{i}({\bf s}) 
& = & \sum_{e \in s_{i}} 
           c_{e}(k_{\bf s}(e))\, ,           
\end{eqnarray*}
where $k_{\bf s}(e)$ 
is the number of players whose strategy in the profile ${\bf s}$
includes the resource $e$.           
The profile ${\bf s}$
is a {\it pure Nash equilibrium}
if for each player $i \in [n]$,
$C_{i}({\bf s}) \leq C_{i}({\bf s}_{-i}, s_{i}')$
for every strategy $s_{i}' \in S_{i}$.
{\sc PNE in Congestion Games}
is the problem of computing a pure Nash equilibrium
for a congestion game.
Every congestion game is a {\it potential game}~\cite{MS96}
with potential function
\begin{eqnarray*}
      \Phi^{cg}({\bf s}) 
& = & \sum_{e \in E}
        \sum_{k=1}^{n_{\bf s}(e)}
          c_{e}(k)\, .
\end{eqnarray*}        
So
every congestion game has a pure Nash equilibrium.
We focus on congestion games
with linear latency functions
$c_{e}(x) = a_{e} x + b_{e}$, where $a_{e}, b_{e} \geq 0$.
{\sc PNE in Congestion Games} is {\sf PLS}-complete
for congestion games
with linear latency functions~\cite{FPT04}.
For these games,
\begin{eqnarray*}
      \Phi^{cg}({\bf s})
& = &	\sum_{e \in E}
        \sum_{k = 1}^{k_{\bf s}(e)}
          (a_{e}k + b_{e}) \\
& = & \sum_{e \in E}
        \left( a_{e}
               \sum_{k=1}^{k_{{\bf s}}(e)}
                 k
               +
               b_{e} k_{\bf s}(e)
        \right)                   \\
& = & \sum_{e \in E}
        a_{e}
        \sum_{k=1}^{n_{\bf s}(e)}
           k
      + \sum_{e \in E}
          b_{e}
          k_{\bf s}(e)                         \\
& = & \sum_{e \in E}
        a_{e}
        \frac{\textstyle k_{\bf s}(e)(k_{\bf s}(e)+1)}
              {\textstyle 2}
                  +
                  \sum_{e \in E}
                     b_{e}\, k_{\bf s}(e)\, .                       
\end{eqnarray*}
}

\textcolor{blue}{In a {\it singleton congestion game,}
$S_{i}$ contains only singleton sets;
so each strategy is a single resource.
{\sc PNE in Congestion Game}
is in ${\cal P}$ when restricted to singleton congestion games~\cite{IMNSS2005}.
}

\textcolor{blue}{
Consider a player deviating from resource $e$ to resource $e'$.
Then, 
\begin{eqnarray*}
      \Delta C_{i}
& = & c_{e}(k_{{\bf s}}(e)) - 
      c_{e'} (k_{{\bf s}}(e') + 1) \\
& = & a_{e} k_{{\bf s}}(e) - a_{e'}(k_{{\bf s}}(e') + 1)
\end{eqnarray*}
On the other hand,
\begin{eqnarray*}
      \Delta u_{i}
& = &	s_{i} a_{e}
       - \frac{\textstyle a_{e}}
            {\textstyle \Theta + k_{{\bf s}}(e) a_{e} + 
            k_{{\bf s}}(e') a_{e}'}
      - \left( s_{i} a_{e'}
               -
               \frac{\textstyle a_{e'}}
                    {\textstyle \Theta + (k_{{\bf s}}(e) - 1) a_{e}
                                       + (k_{{\bf s}}(e') + 1) a_{e'}}
      \right)      \\
& = & s_{i} (a_{e} - a_{e'})
      +
      \frac{\textstyle a_{e'}}
           {\textstyle \Theta + (k_{{\bf s}}(e) - 1)a_{e}
                                +(k_{{\bf s}}(e')+1) a_{e'}}
      -
      \frac{\textstyle a_{e}}
           {\textstyle \Theta + k_{{\bf s}}(e) a_{e} 
                                + k_{{\bf s}}(e') a_{e'}}           
\end{eqnarray*}
We have to show that
$\Delta C_i < 0$ implies that
$\Delta u_i < 0$.
}

\noindent
\textcolor{red}{Note that $C_{i}$ is not strictly monotone
in $k_{{\bf s}}(e)$:
When player $i$ deviates from $e$ to $e'$,
$n_{{\bf s}}(e)$ decreases by one but $n_{{\bf s}}(e')$
increases by one. 
So $u_{i}$ may increase or decrease depending
on $a_{e}$ and $a_{e'}$
and the values of $k_{{\bf s}}(e)$ and $k_{{\bf s}}(e')$
in the profile ${\bf s}$.
In singleton congestion games~\cite{IMNSS2005},
it is not assumed either
that the cost functions are monotone.
The only difference is that the cost
depends also on other congestions.
Since a deviation step changes only two congestions,
the change in the costs observes only these changes.
To compute a pure Nash equilibrium,
we attempt to prove that 
a pure Nash equilibrium for the singleton, monotone congestion game~\cite{IMNSS2005}
is also a pure Nash equilibrium for the singleton, non-monotone contest game.
So we have to prove that 
$a_{e} k_{{\bf s}}(e) - a_{e'} (k_{{\bf s}}(e') + 1) \leq 0$ 
implies that $s_{i} (a_{e} - a_{e'})
      +
      \frac{\textstyle a_{e'}}
           {\textstyle \Theta + (k_{{\bf s}}(e) - 1)a_{e}
                                +(k_{{\bf s}}(e')+1) a_{e'}}
      -
      \frac{\textstyle a_{e}}
           {\textstyle \Theta + k_{{\bf s}}(e) a_{e} 
                                + k_{{\bf s}}(e') a_{e'}} \leq 0$.
We can also assume that
$a_{e'} k_{{\bf s}}(e') - a_{e} (k_{{\bf s}}(e)+1) \leq 0$
since we are assuming that ${\bf s}$ is a pure Nash equilibrium.
I have not been successful in proving the implication.
An alternative approach is to work directly
with the costs in the (singleton) contest game and not pass
through the singleton congestion game.
For this approach,
one has to see if the algorithm in~\cite{IMNSS2005}
works with the special cost functions of the contest game.
This is what I started to check. It looks promising.                                 
}

\textcolor{blue}{
Since the contest game is a utility maximization game,
we shall consider the potential function $- \Phi^{{\bf Q}}$
for the corresponding cost minimization game.
It is sufficient to establish {\sf PLS}-hardness
for the case where there is only one proposal.
Then,
eliminating the dependence of
$\Phi ({\bf Q})$ on the proposal ${\sf P}_{j}$,
simplify $- \Phi ({\bf Q})$ as
\textcolor{red}{[I slightly changed the notation for $k_{\ell}$
to $k_{{\bf Q}}(\ell)$
in order to make it look more similar to
$k_{{\bf s}}(e)$.]}
\begin{eqnarray*}
      - \Phi({\bf Q})
& = & \sum_{k \in [n]}
        s_{k}\, {\sf f}(q_{k})	
      -
      \sum_{\ell \in [Q],\ \mbox{where}\ 
            k_{{\bf Q(}}(\ell) := \left| \{ \widehat{k} \in [n]\,
                                \mid\,
                                q_{\widehat{k}} = \ell
                             \}
                      \right|} 
       {\sf g}(k_{{\bf Q}}(\ell), {\sf f}(\ell))
       \cdot
       {\sf H}_{k_{{\bf Q}}(\ell)}\, .
\end{eqnarray*}
For the first step \textcolor{red}{[See the Wikipedia entry 
for {\sf PLS}-completeness.]}
of the {\sf PLS}-reduction,
construct from an instance of {\sc PNE in Congestion Game}
an instance of {\sc PNE in Contest Game}
as follows:}
\begin{itemize}

\item
\textcolor{blue}{For each $k \in [n]$, set $s_{k}:=1$.
Thus, 
all players are identical.}

\item
\textcolor{blue}{For each resource $e \in E$,
define a quality $e \in [|E|]$
with
${\sf f}(e) = a_{e}$.
\textcolor{red}{This is a very minor point:
I don't know how to use the $b_{e}$'s.
In the Roughgarden's proof, they are not used.
Probably
they can go away.
}}

\end{itemize}
\textcolor{red}{There is a model issue here.
The strategy sets of the players in the congestion game
have not been used at all in the first step of the reduction.
{\bf First approach:}
Use them by defining something similar
to the strategy sets of the players in the congestion game
for the contest game.
Perhaps a strategy of a player in the contest game
is a set of qualities that the player must use.
Thus a player assigns many different qualities 
(for the same proposal),
and it no more holds that
the sum, over all qualities,
of the number of players assigning a particular quality
is $n$
(as opposed to the "classical"() contest game of Lazos).
Changing the definition of the instance this way
defines a harder problem.
So, if this succeeds,
we will have shown {\sf PLS}-hardness
for a harder problem.
This will still have a value.
{\bf Second approach.}
Don't use them.
So proceed to the second step of the reduction,
where we construct a solution to {\sc PNE in Congestion Game}
from a solution to {\sc PNE in Contest Game}.
Not having bothered to change the definition of the instance
of {\sc PNE in Contest Game},
a solution to it respects the instance,
so there is a unique quality assigned to the proposal.
But then what would be a solution to {\sc PNE in Congestion Game}
where it may be possible that 
the strategy sets are not singleton?
A possible way out is to consider congestion games
so that the strategy set of each player is the set of all singetons.
But I don't know if {\sc PNE in Congestion Game}
remains {\sc PLS}-hard for this class of congestion games.
I will check the literature.}

\textcolor{blue}{                                   
Thus,
\begin{eqnarray*}
      - \Phi({\bf Q})
& = & \sum_{e \in E}
        a_{e}
        k_{{\bf Q}}(e)       
      -
      \sum_{e \in E,\ \mbox{where}\
            k_{{\bf Q}}(e) := \left| \{ \widehat{k} \in [n]\,
                                \mid\,
                                q_{\widehat{k}} = e
                            \}
                      \right|}
        {\sf g}(k_{{\bf Q}}(e), a_{e})
        \cdot
        {\sf H}_{k_{{\bf Q}}(e)}\, .                          
\end{eqnarray*}
\textcolor{blue}{Now we need a function that maps
a solution of {\sc PNE in Contest Game}
to a solution of {\sc PNE in Congestion Game}.}
Now compare $\Phi ({\bf Q})$ to
$\Phi^{cg}({\bf s})$.
\textcolor{red}{We have to prove that
from a minimizer of $\Phi({\bf Q})$,
we can compute a minimizer of $\Phi^{cg}({\bf s})$.
We prove that the local minima of $\Phi^{cg}({\bf s})$
are in one-to-one correspondence
to the local minima of $\Phi ({\bf Q})$.}
}

\remove{\subsection{\textcolor{red}{Some Leftovers, We Shall Probably Not Need Them}}

\noindent
\textcolor{red}{An instance of {\sc NotAllEqual3SAT}
is a {\sf CNF} formula for which we search
for a truth assignment to its variables satisfying it
and such that each clause in the formula has at least one literal
set to 0 and at least one literal set to 1.
An instance of {\sf PosNAE3Flip}
is an instance of {\sc NotAllEqual3SAT}  
with weights over the clauses
and containing positive literals only;
we are asked to find a truth assignment such that
the total weight of satisfied clauses
cannot be improved by flipping a variable.
{\sc PosNAE3Flip} is {\sf PLS}-complete~\cite{SY91}.}}
}

\section{Open Problems and Directions for Further Research}
\label{epilogue}


This work poses far more challenging 
problems and research directions
about the contest game
than it answers. To close we list a few open
research directions.
\begin{enumerate}

\remove{
\item
Study the {\it uniqueness} of pure Nash equilibria;
non-uniqueness would trigger
interesting decision problems.
}


\remove{
\item
Study the computation of pure
Nash equilibria
for other classes of player-invariant payment functions,
such as those mentioned
in Section~\ref{the contest game for crowdsourcing reviews}.
}

\item
Study the computation of {\it mixed} Nash equilibria.
Work in progress confirms the existence
of contest games with 
$Q=3$ and $n=3$
that have only one mixed Nash equilibrium,
which is irrational.
We conjecture that the problem
is ${\mathcal{PPAD}}$-complete
for $n=2$.
\remove{Assume there is only one proposal.
A {\it mixed strategy} of player $i \in [n]$ 
is a vector of probabilities
$\boldsymbol{\pi}_{i}
 = \langle \pi_{i1}, \pi_{i2}, 1 - \pi_{i1} - \pi_{i2} \rangle$;
$\pi_{iq}$ is the probability that player $i$
chooses quality $q \in \{ 1, 2 \}$.
The matrix $\boldsymbol{\Pi} = 
\langle \boldsymbol{\pi}_{1},
        \boldsymbol{\pi}_{2},
        \boldsymbol{\pi}_{3} \rangle$
of the mixed strategies of the players
is called the {\it mixed strategy matrix};
it corresponds to the usual notion of mixed profile.        
}

\remove{
\item
Investigate conditions
on the payment function
and the skill-effort function
for the contest game for crowdsourcing reviews
to be a {\it valid utility game}~\cite{V02}.
Given the {\it general} existence result for pure Nash equilibria
under a player-invariant payment function
in Theorem~\ref{pure existence},
this may open up the road
to upper-bound the Price of Anarchy 
for arbitrary $Q$.
(The particular upper bounds
for proportional allocation in~\cite{BKLO22}
either do not go beyond the case
$Q=3$ under voluntary participation, 
for which they proved
existence of 
pure Nash equilibria~\cite[Proposition 2]{BKLO22},
or go beyond this case
without proving existence first~\cite[Theorem 4]{BKLO22}.)
}

\item
Determine the complexity of computing {\it best-responses}
for the contest game.
We conjecture ${\cal NP}$-hardness;
techniques similar
to those used in~\cite[Section 3]{EGG22} could be useful.

\item
Formulate incomplete information contest games
with discrete strategy spaces
and study their Bayes-Nash equilibria.
Ideas from Bayesian congestion games~\cite{GMT08}
will very likely be helpful.
Study existence and complexity properties
of pure Bayes-Nash equilibria. 

\item
In analogy to weighted congestion games~\cite{M96,R73},
formulate the {\it weighted} contest game
with discrete strategy spaces,
where reviewers have {\it weights,}
and study its pure Nash equilibria.

\remove{
\item
Formulate and study 
{\it malicious Bayesian contest games,}
extending work on malicious Bayesian congestion games~\cite{G08}.
Study (in)existence and complexity properties
of their Bayes-Nash equilibria.
}

\remove{
\item
Incorporate and study issues of interaction, cooperation
and competition both among reviewers and 
among {\it proposers} (cf.~\cite{DLLQ22}).
}

\remove{
\item
Two natural performance metrics
are studied in~\cite{BKLO22}
and bounds on the {\it Price of Anarchy} 
are presented for them.                           
}

\end{enumerate}

\noindent
{\bf Acknowledgements.}
We would like to thank the anonymous referees
to a previous version of the paper
for some very insightful comments they offered.

\newpage



\remove{
\section{(In)Existence of a Pure Nash Equilibrium}

\noindent
\colorbox{light-gray}{\parbox{0.99 \textwidth}{
{\bf Note for us:}
{\sf Here is an attempt to prove ${\cal PLS}$-hardness
for computing pure Nash equilibria in the game with cost function 
${\sf C}_{i}({\bf Q})$ (p.~4).
Let's call it the general case.
It suffices to prove ${\sf PLS}$-hardness
in the special case
where 
{\it (i)} $m=1$ (so $\ell$ goes away) and
{\sf (ii)}
${\sf P}_{i}({\bf q}) = 
 \frac{\textstyle {\sf N}_{{\bf q}}(q_{i})}
      {\textstyle n}$
(this is a legitimate choice
for a player-invariant payment function since
${\sf P}_{i}({\bf q}) \leq 1$).
For this special case,
${\sf C}_{i}({\bf q})
 = {\sf C}_{i}(q_{i})
 = {\sf T}(s_{i}, {\sf f}_{q_{i}})
   - \frac{\textstyle {\sf N}_{{\bf q}}(q_{i})}
          {\textstyle n}$.
Now this reminds me of
{\it player-specific congestion games with linear latency functions
and player-specific constants,}
where the cost function on resource $e$
is $f_{ie}(x) = \alpha_{e} x + c_{ie}$,
where $x$ is the load on resource $e$.
Qualities in the contest game
correspond to resources in this congestion game;
thus,
$\alpha_{e}$ corresponds to
$\frac{\textstyle 1}
      {\textstyle n}$
and $c_{ie}$ corresponds to
$\Lambda (s_{i}, {\sf f}_{q_{i}})$.
But there is a difference:
$f_{ie}(x)$ is increasing in $x$
while ${\sf C}_{i}(q_{i})$ is decreasing in
${\sf N}_{{\bf q}}(q_{i})$. 
Does ${\mathcal{PLS}}$-hardness hold
for these player-specific congestion games?
There is another, perhaps more significant difference.
The contest game is singleton,
while those congestion games are not.
If we generalize the contest game to be non-singleton,
then we have the ${\mathcal{PLS}}$-completeness
if we can cope with the decreasing property.
I believe the existence result also generalizes to the
non-singleton case;
this would value to the ${\mathcal{PLS}}$-completeness
(assuming we manage to cope with the decreasing property).
But I don't know how to well-motivate the non-singleton contest game. 
}
}}
}

\remove{
\section{Proportional Allocation, Anonymous Players 
and Mandatory Participation}

\noindent
\textcolor{blue}{For the proofs
of Lemmas~\ref{first lemma} and~\ref{second lemma},
we shall abuse notation
and denote as 
${\sf C}_{\widehat{q}}({\sf N}(\widehat{q}), 
                       {\sf N}(q-1),
                       {\sf N}(q-2))$,
${\sf C}_{q-1}({\sf N}(\widehat{q}), 
               {\sf N}(q-1), 
               {\sf N}(q-2))$ and 
${\sf C}_{q-2}({\sf N}(\widehat{q}),
               {\sf N}(q-1), 
               {\sf N}(q-2))$
the costs incurred to
a player assigned to
qualities $\widehat{q}$, $q-1$ and $q-2$,
respectively;
so,
we omit reference
to the loads
on qualities other
than $\widehat{q}$,
$q-1$ and $q-2$.}

\begin{proof}
{\bf \textcolor{blue}{[of Lemma~\ref{first lemma}]}}
\textcolor{blue}{We have to prove:
{\bf (A)} 
$c_{\widehat{q}}(x_{\widehat{q}}, 
                 x_{q-1},
                 x_{q-2}) 
 \leq 
 c_{q-1}(x_{\widehat{q}}-1, 
         x_{q-1} + 1, 
         x_{q-2})$ and
{\bf (B)} 
$c_{\widehat{q}}(x_{\widehat{q}}, x_{q-1}, x_{q-2}) 
 \leq 
 c_{q-2}(x_{\widehat{q}}-1,
 x_{q-1},
 x_{q-2} +1)$,
where $x_{\widehat{q}} > 0$.} 
\textcolor{blue}{We start with {\bf (A)},
which is expressed as
\begin{eqnarray*}
&       & {\sf f}_{\widehat{q}} - 
         \frac{\textstyle {\sf f}_{\widehat{q}}}
              {\textstyle
              \psi (\widehat{q}, q-1)  
              + x_{\widehat{q}} {\sf f}_{\widehat{q}} 
              + x_{q-1} {\sf f}_{q-1} 
              + (n -
                 \chi (\widehat{q}, q-1) 
                 - x_{q-1}) 
                {\sf f}_{q-2}} \\
& \leq & {\sf f}_{q-1}
         - \frac{{\sf f}_{q-1}}
                {\textstyle \psi 
                (\widehat{q}, q-1) 
                + (x_{\widehat{q}} -1) {\sf f}_{\widehat{q}} 
                + (x_{q-1}+1) {\sf f}_{q-1} 
                + (n 
                - \chi (\widehat{q}, q-1)
                - x_{\widehat{q}} 
                - x_{q-1}) {\sf f}_{q-2}}
\end{eqnarray*}
\remove{or
{\small
\begin{eqnarray*}
         {\sf f}_{3} - {\sf f}_{2}
& \leq & \frac{\textstyle {\sf f}_{3}}
                {\textstyle x {\sf f}_{3} 
                + x_{1} {\sf f}_{2} 
                + (n-x-x_{1}) {\sf f}_{1}}
- \frac{\textstyle {\sf f}_{2}}
       {\textstyle (x-1) {\sf f}_{3} 
       + (x_{1}+1) {\sf f}_{2} 
       + (n-x-x_{1}) {\sf f}_{1}}\, ,
\end{eqnarray*}}}
where $\psi (\widehat{q}, q-1) \geq 0$,
$x_{\widehat{q}} > 0$ and 
$0 \leq x_{q-1} \leq n 
                     - \chi (\widehat{q}, q-1)
                     - x_{\widehat{q}}$.
By setting $\widehat{q}$
and $q-1$ for $q'$ and $q''$,
respectively,
in {\sf (5), (6)},
it suffices to prove that
{\small
\begin{eqnarray*}
&      & \frac{\textstyle {\sf f}_{\widehat{q}}}
              {\textstyle 
              \underbrace{\psi (\widehat{q}, q-1) 
                          + x_{\widehat{q}} {\sf f}_{\widehat{q}} 
                          + (n 
                          - \chi (\widehat{q}, q-1)
                          - x_{\widehat{q}}) 
                          {\sf f}_{q-1}}_{\textstyle {\sf A}}} \\
&     &  -       
         \frac{\textstyle {\sf f}_{q-1}}
              {\textstyle 
              \underbrace{\psi (\widehat{q}, q-1)
                          + (x_{\widehat{q}} -1) {\sf f}_{\widehat{q}} 
                          + (n 
                             - \chi (\widehat{q} ,q-1)
                             - x_{\widehat{q}} 
                             +1) 
                             {\sf f}_{q-1}}_{\textstyle {\sf B}}}
\\
& \leq &	 \frac{\textstyle {\sf f}_{\widehat{q}}}
                {\textstyle \psi (\widehat{q}, q-1)
                            + x_{\widehat{q}} {\sf f}_{\widehat{q}} 
                            + x_{q-1} {\sf f}_{q-1} 
                            + (n 
                               - \chi (\widehat{q}, q-1)
                               - x_{\widehat{q}} 
                               - x_{q-1}) {\sf f}_{q-2}} \\
&     & - \frac{\textstyle {\sf f}_{q-1}}
               {\textstyle \psi (\widehat{q}, q-1)
                           + (x_{\widehat{q}} -1) {\sf f}_{\widehat{q}} 
                           + (x_{q-1}+1) {\sf f}_{q-1} 
                           + (n 
                              - \chi (\widehat{q}, q-1)
                              - x_{\widehat{q}}
                              - x_{q-1}) {\sf f}_{q-2}}	\\
& = & \frac{\textstyle {\sf f}_{\widehat{q}}}
           {\textstyle \underbrace{
                        \psi (\widehat{q}, q-1)
                        + x_{\widehat{q}} {\sf f}_{\widehat{q}} 
                        + 
                       (n 
                        - \chi (\widehat{q}, q-1)
                        - x_{\widehat{q}}) {\sf f}_{q-1}}_{\textstyle {\sf A}} +
                       \underbrace{(n 
                                    - \chi (\widehat{q}, q-1) 
                                    - x_{\widehat{q}} 
                                    - x_{q-1}) 
                                    ({\sf f}_{q-2} - {\sf f}_{q-1})}_{\textstyle := {\sf \Delta} \leq 0}} \\                   
&   & 
      - \frac{\textstyle {\sf f}_{q-1}}
             {\textstyle \underbrace{
                          \psi (\widehat{q}, q-1) 
                          + (x_{\widehat{q}}-1) {\sf f}_{\widehat{q}} 
                          + (n 
                             - \chi (\widehat{q}, q-1)
                             - x_{\widehat{q}} +1) 
                             {\sf f}_{q-1}}_{\textstyle {\sf B}} +
                         \underbrace{(n 
                                      - \chi (\widehat{q}, q-1)
                                      - x_{\widehat{q}} 
                                      - x_{q-1}) 
                         ({\sf f}_{q-2} -
                         {\sf f}_{q-1})}_{\textstyle {\sf \Delta}}} \\
& = & \frac{\textstyle 
            \overbrace{{\sf f}_{\widehat{q}} {\sf B} 
                       - {\sf f}_{q-1} {\sf A}}^{\textstyle := \lambda_{1}}
                       + \overbrace{({\sf f}_{\widehat{q}} 
                                     - {\sf f}_{q-1}) {\sf \Delta}}^{\textstyle := \mu_{1}}}
          {\textstyle \underbrace{{\sf A} {\sf B}}_{\textstyle := \lambda_{2}} 
          + \underbrace{{\sf \Delta} ({\sf A} + {\sf B} + {\sf \Delta})}_{\textstyle := \mu_{2}}}\, ,                                           
\end{eqnarray*}}
where $x_{q} > 0$ 
and $0 \leq x_{q-1} 
       \leq n - \chi (\widehat{q}, q-1)
       - x_{\widehat{q}}$.
Note that ${\sf B} = {\sf A} 
           - {\sf f}_{\widehat{q}} + {\sf f}_{q-1}$.
Since
$\frac{\textstyle {\sf f}_{\widehat{q}}}
      {\textstyle {\sf A}}
 -
 \frac{\textstyle {\sf f}_{q-1}}
      {\textstyle {\sf B}}     
 =  \frac{\textstyle {\sf f}_{\widehat{q}} {\sf B} 
                     - {\sf f}_{q-1} {\sf A}}
           {\textstyle {\sf A} {\sf B}}
 = \frac{\textstyle \lambda_{1}}
        {\textstyle \lambda_{2}}$,
we have to prove that 
$\frac{\textstyle \lambda_{1}}
      {\textstyle \lambda_{2}}
 \leq 
 \frac{\textstyle \lambda_{1} + \mu_{1}}
      {\textstyle \lambda_{2} + \mu_{2}}$.
We first examine the signs of the four terms
$\lambda_{1}$, $\lambda_{2}$, $\lambda_{1} + \mu_{1}$
and $\lambda_{2} + \mu_{2}$:}
\begin{itemize}      

\item
\textcolor{blue}{Since ${\sf A}$ and ${\sf B}$
are denominators of payment functions,
$\lambda_{2} > 0$.}

\item
\textcolor{blue}{Since $\lambda_{2} + \mu_{2}
       = ({\sf A} + {\sf \Delta})({\sf B} + {\sf \Delta})$
and ${\sf A} + {\sf \Delta}$ and 
${\sf B} + {\sf \Delta}$
are denominators of payment functions,
$\lambda_{2} + \mu_{2} > 0$.}

\item
\textcolor{blue}{Now,
{
\begin{eqnarray*}
&   & \lambda_{1} \\ 
& = & {\sf f}_{\widehat{q}} {\sf B} - {\sf f}_{q-1} {\sf A} \\
& = & {\sf f}_{\widehat{q}}\, [\psi (\widehat{q}, q-1)
                   + (x_{\widehat{q}} -1) {\sf f}_{\widehat{q}} 
                   + (n 
                      - \chi (\widehat{q}, q-1) 
                      - x_{\widehat{q}} 
                      +1) 
                      {\sf f}_{q-1}] \\
&    &   - {\sf f}_{q-1}\, 
         [\psi (\widehat{q}, q-1)
          + x_{\widehat{q}} {\sf f}_{\widehat{q}} 
          + (n 
             - \chi (\widehat{q}, q-1)
             - x_{\widehat{q}}) {\sf f}_{q-1}]     \\
& = & \psi (\widehat{q}, q-1)\,
      ({\sf f}_{\widehat{q}} - {\sf f}_{q-1}) \\
&   & + (x_{\widehat{q}} - 1)\, {\sf f}_{\widehat{q}}^{2}
      - (n 
         - \chi (\widehat{q}, q-1)
         - x_{\widehat{q}})\, 
        {\sf f}_{q-1}^{2}
      + (n
         - \chi (\widehat{q}, q-1)
         - 2 x_{\widehat{q}}
         + 1)
        {\sf f}_{\widehat{q}} {\sf f}_{q-1} \\       
& = & \psi (\widehat{q}, q-1)
      ({\sf f}_{\widehat{q}} - {\sf f}_{q-1}) \\
&   & + x_{\widehat{q}} {\sf f}_{\widehat{q}}^{2} 
      + x_{\widehat{q}} {\sf f}_{q-1}^{2} 
      - 2 x_{\widehat{q}} {\sf f}_{\widehat{q}} {\sf f}_{q-1}
      - {\sf f}_{\widehat{q}}^{2} 
      - (n - \chi (\widehat{q}, q-1)) {\sf f}_{q-1}^{2} 	   + {\sf f}_{\widehat{q}} {\sf f}_{q-1} 
        (n - \chi (\widehat{q}, q-1) +1) \\
& = & \psi (\widehat{q}, q-1)
      ({\sf f}_{\widehat{q}} - {\sf f}_{q-1}) \\       
&   & + x_{\widehat{q}}
        ({\sf f}_{\widehat{q}} - {\sf f}_{q-1})^{2} 
      - ({\sf f}_{\widehat{q}} - {\sf f}_{q-1})^{2} 
      + {\sf f}_{q-1}^{2} 
      - 2 {\sf f}_{\widehat{q}} {\sf f}_{q-1} \\
&   & - (n - \chi (\widehat{q}, q-1)) {\sf f}_{q-1}^{2} 
      + (n - \chi (\widehat{q}, q-1) + 1) 
        {\sf f}_{\widehat{q}} {\sf f}_{q-1}  \\
& = & \psi (\widehat{q}, q-1)
      ({\sf f}_{\widehat{q}} - {\sf f}_{q-1}) \\
&   & + (x_{\widehat{q}} - 1)
        ({\sf f}_{\widehat{q}} - {\sf f}_{q-1})^{2} 
      - (n - \chi (\widehat{q}, q-1) -1)
        {\sf f}_{q-1}^{2}
      + (n - \chi (\widehat{q}, q-1) - 1)
        {\sf f}_{\widehat{q}} {\sf f}_{q-1} \\
& = & \psi (\widehat{q}, q-1)
      ({\sf f}_{\widehat{q}} - {\sf f}_{q-1}) \\
&   & + (x_{\widehat{q}} - 1)
        ({\sf f}_{\widehat{q}} - {\sf f}_{q-1})^{2}
      + (n - \chi (\widehat{q}, q-1) - 1)
        {\sf f}_{q-1} 
        ({\sf f}_{\widehat{q}} - {\sf f}_{q-1}) \\
& = & ({\sf f}_{\widehat{q}} - {\sf f}_{q-1})
      [\psi (\widehat{q}, q-1)
       + (x_{\widehat{q}} -1)
         ({\sf f}_{\widehat{q}} - {\sf f}_{q-1}) 
       + (n - \chi (\widehat{q}, q-1) - 1)
         {\sf f}_{q-1}]                         \\      
& > & 0\, ,                        
\end{eqnarray*}}
since {\it (i)} 
$\psi (\widehat{q}, q-1) \geq 0$,
{\it (ii)} $x_{\widehat{q}} \geq 1$
and {\it (iii)} 
either 
{\it (iii/1)}
$x_{\widehat{q}} = 1$,
in which case,
since $n \geq 2$, 
there is at least one player assigned to $q-1$
after iteration $q$,
so that $\chi (\widehat{q}, q-1) \leq n-2$
(otherwise, there would be no iteration $q-1$),
which implies that
$n - \chi (\widehat{q}, q-1) - 1 \geq 1$,
or {\it (iii/2)}
$x_{\widehat{q}} \geq 2$,
so that
$\chi (\widehat{q}, q-1) \leq n - 2$,
which implies again that
$n - \chi (\widehat{q}, q- 1) - 1 \geq 1$.
}

\item
\textcolor{blue}{Finally,
{ 
\begin{eqnarray*}
&   &   \lambda_{1} + \mu_{1} \\
& = & {\sf f}_{\widehat{q}} {\sf B} 
      - {\sf f}_{q-1} {\sf A}
      + ({\sf f}_{3} - {\sf f}_{2}) {\sf \Delta} \\
& = & {\sf f}_{\widehat{q}} 
      ({\sf A} - {\sf f}_{\widehat{q}} + {\sf f}_{q-1})
      - {\sf f}_{q-1} {\sf A}
      + ({\sf f}_{q-2} - {\sf f}_{q-1}) {\sf \Delta} \\
& = & ({\sf f}_{\widehat{q}} - {\sf f}_{q-1})
      ({\sf A} - {\sf f}_{\widehat{q}} + {\sf \Delta}) \\
& = & ({\sf f}_{\widehat{q}} - {\sf f}_{q-1}) \cdot \\
&   & [\psi (\widehat{q}, q-1)
       + x_{\widehat{q}} {\sf f}_{\widehat{q}} 
       + (n -
          \chi (\widehat{q}, q-1) 
          - x_{\widehat{q}})
         {\sf f}_{q-1}
       - {\sf f}_{\widehat{q}} \\
&   &  + (n
          - \chi (\widehat{q}, q-1) 
          - x_{\widehat{q}} 
          - x_{q-1})
         ({\sf f}_{q-2} - {\sf f}_{q-1})] \\
& = & ({\sf f}_{\widehat{q}} - {\sf f}_{q-1}) \cdot \\
&   & [\psi (\widehat{q}, q-1)
       + x_{\widehat{q}} {\sf f}_{\widehat{q}} 
       + (n 
          - \chi (\widehat{q}, q-1)
          - x_{\widehat{q}})
         {\sf f}_{q-1}
      - {\sf f}_{\widehat{q}} \\
&   &   - (n
         - \chi (\widehat{q}, q-1)
         - x_{\widehat{q}}) 
         {\sf f}_{q-1}
       + (n
         - \chi (\widehat{q}, q-1)
         - x_{\widehat{q}}) 
         {\sf f}_{q-2} \\
&   & + \underbrace{x_{q-1}}_{\textstyle \geq 0} 
        ({\sf f}_{q-1} - {\sf f}_{q-2})] \\
& \geq & ({\sf f}_{\widehat{q}}- {\sf f}_{q-1}) \cdot \\
&      & [\psi (\widehat{q}, q-1)
          + \underbrace{x_{\widehat{q}}}_{\textstyle 
            \geq 1}
            ({\sf f}_{\widehat{q}} - {\sf f}_{q-2}) 
          + (n
             - \chi (\widehat{q}, q-1)) 
            {\sf f}_{q-2} - {\sf f}_{\widehat{q}}] \\
& \geq & ({\sf f}_{\widehat{q}} - {\sf f}_{q-1})
         [\psi (\widehat{q}, q-1)
          + (n - \chi (\widehat{q}, q-1) - 1)
         {\sf f}_{q-2}] \\
& > & 0\, ,                      
\end{eqnarray*}}
since $\psi (\widehat{q}, q-1) \geq 0$
and $n - \chi (\widehat{q}, q-1) - 1 \geq 1$,
as proved in the previous item.
}

\end{itemize}             
\textcolor{blue}{It follows that
$\frac{\textstyle \lambda_{1}}
      {\textstyle \lambda_{2}}
 \leq
 \frac{\textstyle \lambda_{1} + \mu_{1}}
      {\textstyle \lambda_{2} + \mu_{2}}$
if and only if
$\lambda_{1} \mu_{2} \leq \lambda_{2} \mu_{1}$.}
\textcolor{blue}{Now
{
\begin{eqnarray*}
      \lambda_{1} \mu_{2}
& = &  ({\sf f}_{\widehat{q}} - {\sf f}_{q-1})
      [\psi (\widehat{q}, q-1)
       + (x_{\widehat{q}} -1)
         ({\sf f}_{\widehat{q}} - {\sf f}_{q-1}) 
       + (n - \chi (\widehat{q}, q-1) - 1)
         {\sf f}_{q-1}]    
      \cdot \\
&   & {\sf \Delta}
      ({\sf A} + {\sf B} + {\sf \Delta})\, ,  
\end{eqnarray*}}}
\textcolor{blue}{where 
{
\begin{eqnarray*}
      {\sf A} + {\sf B} + {\sf \Delta}
& = & 2 \psi (\widehat{q}, q-1)
      + (2 x_{\widehat{q}} - 1) 
      {\sf f}_{\widehat{q}} 
      + (2(n
           - \chi (\widehat{q}, q-1)
           - x_{\widehat{q}}) + 1) 
        {\sf f}_{q-1} \\
&   &  + (n 
         - \chi (\widehat{q}, q-1)
         - x_{\widehat{q}} 
         - x_{q-1})
       ({\sf f}_{q-2} - {\sf f}_{q-1}) \\
& = & 2 \psi (\widehat{q}, q-1)
      + (2 x_{\widehat{q}} -1) 
        {\sf f}_{\widehat{q}} 
      + (n 
         - \chi (\widehat{q}, q-1) 
         - x_{\widehat{q}} 
         + 1 
         + x_{q-1}) {\sf f}_{q-1} \\
&   & + (n
         - \chi (\widehat{q}, q-1) 
         - x_{\widehat{q}} 
         - x_{q-1}) 
        {\sf f}_{q-2}\, ,
\end{eqnarray*}
and}}
\textcolor{blue}{
{
\begin{eqnarray*}
      \lambda_{2} \mu_{1}
& = & {\sf A}	{\sf B} \cdot
      ({\sf f}_{\widehat{q}} - {\sf f}_{q-1})
      {\sf \Delta}                      \\
& = & [\psi (\widehat{q}, q-1) 
       + x_{\widehat{q}} {\sf f}_{\widehat{q}} 
       + (n 
          - \chi (\widehat{q}, q-1)
          - x_{\widehat{q}}) 
         {\sf f}_{q-1}] \cdot \\
&   & [\psi (\widehat{q}, q-1)
       + (x_{\widehat{q}} -1) {\sf f}_{\widehat{q}} 
       + (n 
          - \chi (\widehat{q} ,q-1)
          - x_{\widehat{q}} 
          + 1) 
         {\sf f}_{q-1}] \cdot 
     ({\sf f}_{\widehat{q}} - {\sf f}_{q-1})
      {\sf \Delta}\, .      
\end{eqnarray*}}}
\textcolor{blue}{Since ${\sf \Delta} \leq 0$
and ${\sf f}_{\widehat{q}} - {\sf f}_{q-1} > 0$,
it follows that
$\lambda_{1} \mu_{2}
 \leq
 \lambda_{2} \mu_{1}$
if and only if
\begin{eqnarray*}
&    & [\psi (\widehat{q}, q-1)
       + (x_{\widehat{q}} -1)
         ({\sf f}_{\widehat{q}} - {\sf f}_{q-1}) 
       + (n - \chi (\widehat{q}, q-1) - 1)
         {\sf f}_{q-1}]     
      \cdot \\
&   & [2\, \psi (\widehat{q}, q-1)
      + (2 x_{\widehat{q}} -1) 
        {\sf f}_{\widehat{q}} 
      + (n 
         - \chi (\widehat{q}, q-1) 
         - x_{\widehat{q}} 
         + 1 
         + x_{q-1}) {\sf f}_{q-1} \\
&   & + (n
         - \chi (\widehat{q}, q-1) 
         - x_{\widehat{q}} 
         - x_{q-1}) 
        {\sf f}_{q-2}] \\ 
& \geq & [\psi (\widehat{q}, q-1) 
       + x_{\widehat{q}} {\sf f}_{\widehat{q}} 
       + (n 
          - \chi (\widehat{q}, q-1)
          - x_{\widehat{q}}) 
         {\sf f}_{q-1}] \cdot \\
&   & [\psi (\widehat{q}, q-1)
       + (x_{\widehat{q}} -1) {\sf f}_{\widehat{q}} 
       + (n 
          - \chi (\widehat{q} ,q-1)
          - x_{\widehat{q}} 
          + 1) 
         {\sf f}_{q-1}]\, . 
\end{eqnarray*} 
if and only if}
\textcolor{blue}{
\begin{eqnarray*}
&    & [\underbrace{\psi (\widehat{q}, q-1)
       + (x_{\widehat{q}} -1)
         {\sf f}_{\widehat{q}} 
       + (n - \chi (\widehat{q}, q-1) 
            - x_{\widehat{q}})
         {\sf f}_{q-1}}_{\textstyle := {\sf \Gamma}}]     
      \cdot \\
&   & [\underbrace{\psi (\widehat{q}, q-1)
      + (x_{\widehat{q}} -1) 
        {\sf f}_{\widehat{q}} 
      + (n 
         - \chi (\widehat{q}, q-1) 
         - x_{\widehat{q}})
         {\sf f}_{q-1}}_{\textstyle {\sf \Gamma}}                   \\
&   & + \underbrace{\psi (\widehat{q}, q-1)        
      + x_{\widehat{q}} {\sf f}_{\widehat{q}}
      + (1 + x_{q-1}) {\sf f}_{q-1} 
      + (n
         - \chi (\widehat{q}, q-1) 
         - x_{\widehat{q}} 
         - x_{q-1}) 
        {\sf f}_{q-2}}_{\textstyle {\sf := \Theta}}] \\ 
& \geq & [\underbrace{\psi (\widehat{q}, q-1) 
       + (x_{\widehat{q}} - 1) {\sf f}_{\widehat{q}} 
       + (n 
          - \chi (\widehat{q}, q-1)
          - x_{\widehat{q}}) 
         {\sf f}_{q-1}}_{\textstyle {\sf \Gamma}}
       + {\sf f}_{\widehat{q}}] \cdot \\
&   & [\underbrace{\psi (\widehat{q}, q-1)
       + (x_{\widehat{q}} -1) {\sf f}_{\widehat{q}} 
       + (n 
          - \chi (\widehat{q} ,q-1)
          - x_{\widehat{q}}) 
         {\sf f}_{q-1}}_{\textstyle {\sf \Gamma}}
      + {\sf f}_{q-1}]\, . 
\end{eqnarray*} 
\remove{or \begin{eqnarray*}
&      &   [({\sf f}_{3} - {\sf f}_{2}) (x-1) + {\sf f}_{2} (n-1)]
          [(x-1) {\sf f}_{3} + (n-x) {\sf f}_{2}
           + x {\sf f}_{3}
           + (1+x_{1}) {\sf f}_{2}
           + (n-x-x_{1}) {\sf f}_{1}] \\
& \geq & 	[x {\sf f}_{3} + (n-x){\sf f}_{2}]
          [(x-1){\sf f}_{3} + (n-x+1) {\sf f}_{2}]\, .
\end{eqnarray*}
if and only if
\begin{eqnarray*}
&      &   \overbrace{[(x-1) {\sf f}_{3} + (n-x) {\sf f}_{2}]}^{\textstyle := {\sf \Gamma}}
          [\overbrace{(x-1) {\sf f}_{3} + (n-x) {\sf f}_{2}}^{\textstyle {\sf \Gamma}}
           + \overbrace{x {\sf f}_{3}
           + (1+x_{1}) {\sf f}_{2}
           + (n-x-x_{1}) {\sf f}_{1}}^{\textstyle {\sf \Theta}}] \\
& \geq & 	[\underbrace{(x-1) {\sf f}_{3} + (n-x){\sf f}_{2}}_{\textstyle {\sf \Gamma}} + {\sf f}_{3}]
          [\underbrace{(x-1){\sf f}_{3} + (n-x) {\sf f}_{2}}_{\textstyle {\sf \Gamma}} + {\sf f}_{2}]\, .
\end{eqnarray*}}
if and only if
\begin{eqnarray*}
         ({\sf \Theta} 
          - {\sf f}_{\widehat{q}} 
          - {\sf f}_{q-1}){\sf \Gamma}
& \geq & {\sf f}_{\widehat{q}} 
         {\sf f}_{q-1}
\end{eqnarray*}
or
\begin{eqnarray*}
&       & \underbrace{[\psi (\widehat{q}, q-1)
           + (x_{\widehat{q}}-1) 
             {\sf f}_{\widehat{q}} 
           + x_{q-1} {\sf f}_{q-1} 
           + (n
              - \chi (x_{\widehat{q}}, x_{q-1})
              - x_{\widehat{q}} 
              - x_{q-1}) 
             {\sf f}_{q-2}]}_{\textstyle {\sf \Theta} - {\sf f}_{\widehat{q}} - {\sf f}_{q-1}} \cdot \\
&      & [\underbrace{\psi (\widehat{q}, q-1)
          + (x_{\widehat{q}} - 1) 
            {\sf f}_{\widehat{q}} 
          + (n 
             - \chi (\widehat{q}, q-1)
             - x_{\widehat{q}})
            {\sf f}_{q-1}}_{\textstyle {\sf \Gamma}}] \\ 
& \geq & {\sf f}_{\widehat{q}} 
         {\sf f}_{q-1}\, .
\end{eqnarray*}}
\textcolor{blue}{
To prove the last inequality,
we consider two cases:}
\begin{itemize}
	
\item
\textcolor{blue}{\underline{$x_{\widehat{q}} \geq 2$:}
Clearly,
$n - \chi (\widehat{q}, q-1) 
       - x_{\widehat{q}}
       - x_{q-1}
 \geq 0$.
Since
$\psi (\widehat{q}, q-1) \geq 0$
and
$x_{q-1} \geq 0$,
it follows that
$({\sf \Theta} 
    - {\sf f}_{\widehat{q}} 
    - {\sf f}_{q-1})
    \geq 
    {\sf f}_{\widehat{q}}$
and
${\sf \Gamma} \geq {\sf f}_{\widehat{q}}$.
Hence,
$({\sf \Theta} 
  - {\sf f}_{\widehat{q}} 
  - {\sf f}_{q-1}) 
 \cdot
 {\sf \Gamma}
 \geq 
 {\sf f}_{\widehat{q}}^{2}
 > 
 {\sf f}_{\widehat{q}}
 {\sf f}_{q-1}$,
as needed.}

\item
\textcolor{blue}{\underline{$x_{\widehat{q}} = 1$:}}            
\textcolor{blue}{
Then,
\begin{eqnarray*}
&      & ({\sf \Theta}
          - {\sf f}_{\widehat{q}}
          - {\sf f}_{q-1})  
         \cdot
         {\sf \Gamma} \\
& \geq & 	[\psi (\widehat{q}, q-1)
           + x_{q-1}
             {\sf f}_{q-1}
           + (n  
           - \chi (\widehat{q}, q-1)
           - 1
           - x_{q-1}) 
           {\sf f}_{q-2}] \cdot \\
&      & [\psi (\widehat{q}, q-1)           
          + (n 
             - \chi (\widehat{q}, q-1)
             - 1)
            {\sf f}_{q-1}] \\
& =   & [\psi (\widehat{q}, q-1)
         - \chi (\widehat{q}, q-1)
           {\sf f}_{q-2}
         + \underbrace{x_{q-1} ({\sf f}_{q-1}
                    -
                    {\sf f}_{q-2}}_{\textstyle \geq 0})
         + (n-1)
           {\sf f}_{q-2}] \cdot \\
&     & [\psi (\widehat{q}, q-1)
         - \chi (\widehat{q}, q-1)
           {\sf f}_{q-1}
         + (n-1)
           {\sf f}_{q-1}]\, .
\end{eqnarray*}
Since $q$ is the lowest quality
that is higher than $q-1$,
it follows that
\begin{eqnarray*}
         \psi (\widehat{q}, q-1)
& \geq & \chi (\widehat{q}, q-1)
         {\sf f}_{q}\ \
         >\ \
         \chi (\widehat{q}, q-1)
         {\sf f}_{q-2}\, ;
\end{eqnarray*}         
similarly,
\begin{eqnarray*}
    \psi (\widehat{q}, q-1)
& > & \chi (\widehat{q}, q-1)
      {\sf f}_{q-1}\, .
\end{eqnarray*}
It follows that
\begin{eqnarray*}
      ({\sf \Theta}
          - {\sf f}_{\widehat{q}}
          - {\sf f}_{q-1})  
         \cdot
         {\sf \Gamma} 
& > & (n-1)^{2}
      {\sf f}_{q-1}
      {\sf f}_{q-2}\ \
      >\ \
      {\sf f}_{\widehat{q}}
      {\sf f}_{q-1}, ,	
\end{eqnarray*}
due to assumption {\sf (C2)}.
} 
\remove{\textcolor{blue}{$x {\sf f}_{3} + (1+x_{1}) {\sf f}_{2} 
 + (n-x-x_{1}) {\sf f}_{1} > {\sf f}_{3}$
since $x \geq 1$, $x_{1} \geq 0$
and $n-x-x_{1} \geq 0$,
and 
{\sf (ii)}
$(x-1) {\sf f}_{3} + (n-x){\sf f}_{2}
 > (x-1+n-x) {\sf f}_{2}
 \geq {\sf f}_{2}$
since $n \geq 2$.
}}

\end{itemize}

\textcolor{blue}{We continue with {\bf (B)},
which is expressed as
\begin{eqnarray}
&       &  {\sf f}_{\widehat{q}} 
         - \frac{\textstyle {\sf f}_{\widehat{q}}}
                {\textstyle \psi (\widehat{q}, q-1)
                            + x_{\widehat{q}} 
                              {\sf f}_{\widehat{q}}
                            + x_{q-1} {\sf f}_{q-1}  
                            + (n 
                               - \chi (\widehat{q}, q-1) 
                               - x_{\widehat{q}}
                               - x_{q-1})
                             {\sf f}_{q-2}} \\
& \leq & {\sf f}_{q-2}
         - \frac{{\sf f}_{q-2}}
                {\textstyle \psi (\widehat{q}, q-1)
                            + (x_{\widehat{q}} - 1) 
                              {\sf f}_{\widehat{q}} 
                            + x_{q-1} {\sf f}_{q-1} 
                            + (n 
                               - \chi (\widehat{q}, q-1)
                               - x_{\widehat{q}}
                               - x_{q-1}
                               + 1) 
                             {\sf f}_{q-2}}\, ,
\end{eqnarray}
\remove{or
\begin{eqnarray}
         {\sf f}_{3} - {\sf f}_{1}
& \leq & \frac{\textstyle {\sf f}_{3}}
                {\textstyle x {\sf f}_{3} 
                            + x_{1} {\sf f}_{2} 
                            + (n-x-x_{1}) {\sf f}_{1}}
- \frac{\textstyle {\sf f}_{1}}
       {\textstyle (x-1) {\sf f}_{3} 
                    + x_{1} {\sf f}_{2} 
                    + (n-x-x_{1}+1) {\sf f}_{3}}\, ,
\end{eqnarray}}
where $\psi (\widehat{q}, q-1) \geq 0$,
$x_{\widehat{q}} > 0$ and 
$0 \leq x_{q-1} \leq n - \psi(\widehat{q}, q-1)
                       - x_{\widehat{q}}$.
Setting 
{\it (i)}
$q-1$ and $q-2$
for $q$ and $q-1$,
respectively, 
in {\sf (1)}, expressing the {\it Induction Step,}
and
{\it (ii)}
$\widehat{q}$ and $q-1$
for $q'$ and $q''$,
respectively, 
in {\sf (5)}, expressing the {\it Induction Hypothesis,}
we get that
\begin{eqnarray}
         {\sf f}_{\widehat{q}} - {\sf f}_{q-2}
& =    & {\sf f}_{\widehat{q}} - {\sf f}_{q-1} 
         + {\sf f}_{q-1} - {\sf f}_{q-2} \\     
& \leq & \frac{\textstyle {\sf f}_{\widehat{q}}}
              {\textstyle \psi (\widehat{q}, q-1) 
                          + x_{\widehat{q}}
                            {\sf f}_{\widehat{q}} 
                          + (n 
                             - \chi (\widehat{q}, q-1) 
                             - x_{\widehat{q}})
                            {\sf f}_{q-1}} \\
&     &  -
         \frac{\textstyle {\sf f}_{q-1}}
              {\textstyle \psi (\widehat{q}, q-1)
                          + (x_{\widehat{q}}-1) 
                          {\sf f}_{\widehat{q}} 
                          + (n 
                             - \chi (\widehat{q}, q-1)
                             - x_{\widehat{q}}
                             + 1)
                            {\sf f}_{q-1}} \\
& &      \frac{\textstyle {\sf f}_{q-1}}
              {\textstyle \psi (\widehat{q}, q-1)
                          + x_{\widehat{q}} 
                            {\sf f}_{\widehat{q}} 
                          + x_{q-1} 
                            {\sf f}_{q-1} 
                          + (n 
                             - \chi (\widehat{q}, q-1)
                             - x_{\widehat{q}}
                             - x_{q-1}) 
                            {\sf f}_{q-2}}             \\
& &      -  
         \frac{\textstyle {\sf f}_{q-2}}
              {\textstyle \psi (\widehat{q}, q-1)
                          + x_{\widehat{q}} 
                            {\sf f}_{\widehat{q}} 
                          + (x_{q-1}-1) 
                            {\sf f}_{q-1} 
                          + (n 
                             - \chi (\widehat{q}, q-1)
                             - x_{\widehat{q}}
                             - x_{q-1} + 1) 
                           {\sf f}_{q-2}}\, .
\end{eqnarray} 
By {\sf (11)} to {\sf (15)},
it suffices, for proving {\sf (9)} to {\sf (10)}, 
to prove that
\begin{eqnarray*}
&      & \frac{\textstyle {\sf f}_{\widehat{q}}}
              {\textstyle \psi (\widehat{q}, q-1) 
                          + x_{\widehat{q}}
                            {\sf f}_{\widehat{q}} 
                          + (n 
                             - \chi (\widehat{q}, q-1) 
                             - x_{\widehat{q}})
                            {\sf f}_{q-1}} \\
&     &  -
         \frac{\textstyle {\sf f}_{q-1}}
              {\textstyle \psi (\widehat{q}, q-1)
                          + (x_{\widehat{q}}-1) 
                          {\sf f}_{\widehat{q}} 
                          + (n 
                             - \chi (\widehat{q}, q-1)
                             - x_{\widehat{q}}
                             + 1)
                            {\sf f}_{q-1}} \\
& &      + \frac{\textstyle {\sf f}_{q-1}}
              {\textstyle \psi (\widehat{q}, q-1)
                          + x_{\widehat{q}} 
                            {\sf f}_{\widehat{q}} 
                          + x_{q-1} 
                            {\sf f}_{q-1} 
                          + (n 
                             - \chi (\widehat{q}, q-1)
                             - x_{\widehat{q}}
                             - x_{q-1}) 
                            {\sf f}_{q-2}}     \\
& &      -  
         \frac{\textstyle {\sf f}_{q-2}}
              {\textstyle \psi (\widehat{q}, q-1)
                          + x_{\widehat{q}} 
                            {\sf f}_{\widehat{q}} 
                          + (x_{q-1}-1) 
                            {\sf f}_{q-1} 
                          + (n 
                             - \chi (\widehat{q}, q-1)
                             - x_{\widehat{q}}
                             - x_{q-1} + 1) 
                           {\sf f}_{q-2}} \\
& \leq & 	 \frac{\textstyle {\sf f}_{\widehat{q}}}
                {\textstyle \psi (\widehat{q}, q-1)
                            + x_{\widehat{q}} 
                              {\sf f}_{\widehat{q}}
                            + x_{q-1} {\sf f}_{q-1}  
                            + (n 
                               - \chi (\widehat{q}, q-1) 
                               - x_{\widehat{q}}
                               - x_{q-1})
                             {\sf f}_{q-2}} \\
&     & - \frac{{\sf f}_{q-2}}
                {\textstyle \psi (\widehat{q}, q-1)
                            + (x_{\widehat{q}} - 1) 
                              {\sf f}_{\widehat{q}} 
                            + x_{q-1} {\sf f}_{q-1} 
                            + (n 
                               - \chi (\widehat{q}, q-1)
                               - x_{\widehat{q}}
                               - x_{q-1}
                               + 1) 
                             {\sf f}_{q-2}}
\end{eqnarray*}
or
\begin{eqnarray*}
&      & \frac{\textstyle {\sf f}_{\widehat{q}}}
              {\textstyle \psi (\widehat{q}, q-1) 
                          + x_{\widehat{q}}
                            {\sf f}_{\widehat{q}} 
                          + (n 
                             - \chi (\widehat{q}, q-1) 
                             - x_{\widehat{q}})
                            {\sf f}_{q-1}} \\
&     &  -
         \frac{\textstyle {\sf f}_{q-1}}
              {\textstyle \psi (\widehat{q}, q-1)
                          + (x_{\widehat{q}}-1) 
                          {\sf f}_{\widehat{q}} 
                          + (n 
                             - \chi (\widehat{q}, q-1)
                             - x_{\widehat{q}}
                             + 1)
                            {\sf f}_{q-1}} \\
& \leq &
         \frac{\textstyle {\sf f}_{q-2}}
              {\textstyle \psi (\widehat{q}, q-1)
                          + x_{\widehat{q}} 
                            {\sf f}_{\widehat{q}} 
                          + (x_{q-1}-1) 
                            {\sf f}_{q-1} 
                          + (n 
                             - \chi (\widehat{q}, q-1)
                             - x_{\widehat{q}}
                             - x_{q-1} + 1) 
                           {\sf f}_{q-2}} \\
&    & - \frac{\textstyle {\sf f}_{q-1}}
              {\textstyle \psi (\widehat{q}, q-1)
                          + x_{\widehat{q}} 
                            {\sf f}_{\widehat{q}} 
                          + x_{q-1} 
                            {\sf f}_{q-1} 
                          + (n 
                             - \chi (\widehat{q}, q-1)
                             - x_{\widehat{q}}
                             - x_{q-1}) 
                            {\sf f}_{q-2}}     \\                                                    
&    &   + \frac{\textstyle {\sf f}_{\widehat{q}}}
                {\textstyle \psi (\widehat{q}, q-1)
                            + x_{\widehat{q}} 
                              {\sf f}_{\widehat{q}}
                            + x_{q-1} {\sf f}_{q-1}  
                            + (n 
                               - \chi (\widehat{q}, q-1) 
                               - x_{\widehat{q}}
                               - x_{q-1})
                             {\sf f}_{q-2}} \\
&     & - \frac{{\sf f}_{q-2}}
                {\textstyle \psi (\widehat{q}, q-1)
                            + (x_{\widehat{q}} - 1) 
                              {\sf f}_{\widehat{q}} 
                            + x_{q-1} {\sf f}_{q-1} 
                            + (n 
                               - \chi (\widehat{q}, q-1)
                               - x_{\widehat{q}}
                               - x_{q-1}
                               + 1) 
                             {\sf f}_{q-2}}\, .
\end{eqnarray*}
Note that
\begin{eqnarray*} 
&    & \psi (\widehat{q}, q-1)
        + (x_{\widehat{q}} - 1) 
                              {\sf f}_{\widehat{q}} 
                            + x_{q-1} {\sf f}_{q-1} 
                            + (n 
                               - \chi (\widehat{q}, q-1)
                               - x_{\widehat{q}}
                               - x_{q-1}
                               + 1) 
                             {\sf f}_{q-2} \\ 
& < & \psi (\widehat{q}, q-1)        
      + x_{\widehat{q}} 
                              {\sf f}_{\widehat{q}} 
                            + x_{q-1} {\sf f}_{q-1} 
                            + (n 
                               - \chi (\widehat{q}, q-1)
                               - x_{\widehat{q}}
                               - x_{q-1}
                               + 1) 
                             {\sf f}_{q-2}\, ,           
\end{eqnarray*}
implying
\begin{eqnarray*} 
&    & \frac{\textstyle {\sf f}_{q-2}}
            {\textstyle \psi (\widehat{q}, q-1)
        + (x_{\widehat{q}} - 1) 
                              {\sf f}_{\widehat{q}} 
                            + x_{q-1} {\sf f}_{q-1} 
                            + (n 
                               - \chi (\widehat{q}, q-1)
                               - x_{\widehat{q}}
                               - x_{q-1}
                               + 1) 
                             {\sf f}_{q-2}}\\ 
& > & \frac{\textstyle {\sf f}_{q-2}}
           {\textstyle \psi (\widehat{q}, q-1)        
      + x_{\widehat{q}} 
                              {\sf f}_{\widehat{q}} 
                            + x_{q-1} {\sf f}_{q-1} 
                            + (n 
                               - \chi (\widehat{q}, q-1)
                               - x_{\widehat{q}}
                               - x_{q-1}
                               + 1) 
                             {\sf f}_{q-2}}\, .           
\end{eqnarray*}
Hence,
it suffices to prove that
\begin{eqnarray*}
&      & \frac{\textstyle {\sf f}_{\widehat{q}}}
              {\textstyle \psi (\widehat{q}, q-1) 
                          + x_{\widehat{q}}
                            {\sf f}_{\widehat{q}} 
                          + (n 
                             - \chi (\widehat{q}, q-1) 
                             - x_{\widehat{q}})
                            {\sf f}_{q-1}} \\
&     &  -
         \frac{\textstyle {\sf f}_{q-1}}
              {\textstyle \psi (\widehat{q}, q-1)
                          + (x_{\widehat{q}}-1) 
                          {\sf f}_{\widehat{q}} 
                          + (n 
                             - \chi (\widehat{q}, q-1)
                             - x_{\widehat{q}}
                             + 1)
                            {\sf f}_{q-1}} \\
& \leq & - \frac{\textstyle {\sf f}_{q-1}}
              {\textstyle \psi (\widehat{q}, q-1)
                          + x_{\widehat{q}} 
                            {\sf f}_{\widehat{q}} 
                          + x_{q-1} 
                            {\sf f}_{q-1} 
                          + (n 
                             - \chi (\widehat{q}, q-1)
                             - x_{\widehat{q}}
                             - x_{q-1}) 
                            {\sf f}_{q-2}}     \\                                                    
&    &   + \frac{\textstyle {\sf f}_{\widehat{q}}}
                {\textstyle \psi (\widehat{q}, q-1)
                            + x_{\widehat{q}} 
                              {\sf f}_{\widehat{q}}
                            + x_{q-1} {\sf f}_{q-1}  
                            + (n 
                               - \chi (\widehat{q}, q-1) 
                               - x_{\widehat{q}}
                               - x_{q-1})
                             {\sf f}_{q-2}}\, .
\end{eqnarray*}
Note that
\begin{eqnarray*}
& & \psi (\widehat{q}, q-1)
+ (x_{\widehat{q}}-1) 
  {\sf f}_{\widehat{q}} 
+ (x_{q-1} + 1)
  {\sf f}_{q-1}
+ (n 
   - \chi (\widehat{q}, q-1)
   - (x_{\widehat{q}} -1)
     {\sf f}(\widehat{q})
   - x_{q-1})
     {\sf f}_{q-2} \\
& < &
\psi (\widehat{q}, q-1)
+ x_{\widehat{q}} 
  {\sf f}_{\widehat{q}} 
+ x_{q-1}
  {\sf f}_{q-1}
+ (n 
   - \chi (\widehat{q}, q-1)
   - (x_{\widehat{q}} -1)
     {\sf f}(\widehat{q})
   - x_{q-1})
     {\sf f}_{q-2}\, , 	
\end{eqnarray*}
implying
\begin{eqnarray*}
& & - \frac{\textstyle {\sf f}_{q-1}}
           {\textstyle \psi (\widehat{q}, q-1)
+ (x_{\widehat{q}}-1) 
  {\sf f}_{\widehat{q}} 
+ (x_{q-1} + 1)
  {\sf f}_{q-1}
+ (n 
   - \chi (\widehat{q}, q-1)
   - (x_{\widehat{q}} -1)
     {\sf f}(\widehat{q})
   - x_{q-1})
     {\sf f}_{q-2}} \\
& < & - frac{\textstyle {\sf f}_{q-1}}
            {\textstyle \psi (\widehat{q}, q-1)
+ x_{\widehat{q}} 
  {\sf f}_{\widehat{q}} 
+ x_{q-1}
  {\sf f}_{q-1}
+ (n 
   - \chi (\widehat{q}, q-1)
   - (x_{\widehat{q}} -1)
     {\sf f}(\widehat{q})
   - x_{q-1})
     {\sf f}_{q-2}}\, . 	
\end{eqnarray*}
Hence,
it suffices to prove that
\begin{eqnarray*}
&      & \frac{\textstyle {\sf f}_{\widehat{q}}}
              {\textstyle \psi (\widehat{q}, q-1) 
                          + x_{\widehat{q}}
                            {\sf f}_{\widehat{q}} 
                          + (n 
                             - \chi (\widehat{q}, q-1) 
                             - x_{\widehat{q}})
                            {\sf f}_{q-1}} \\
&     &  -
         \frac{\textstyle {\sf f}_{q-1}}
              {\textstyle \psi (\widehat{q}, q-1)
                          + (x_{\widehat{q}}-1) 
                          {\sf f}_{\widehat{q}} 
                          + (n 
                             - \chi (\widehat{q}, q-1)
                             - x_{\widehat{q}}
                             + 1)
                            {\sf f}_{q-1}} \\
& \leq & - \frac{\textstyle {\sf f}_{q-1}}
              {\textstyle \psi (\widehat{q}, q-1)
                          + (x_{\widehat{q}}-1) 
                            {\sf f}_{\widehat{q}} 
                          + (x_{q-1}+1) 
                            {\sf f}_{q-1} 
                          + (n 
                             - \chi (\widehat{q}, q-1)
                             - x_{\widehat{q}}
                             - x_{q-1}) 
                            {\sf f}_{q-2}}     \\                                                    
&    &   + \frac{\textstyle {\sf f}_{\widehat{q}}}
                {\textstyle \psi (\widehat{q}, q-1)
                            + x_{\widehat{q}} 
                              {\sf f}_{\widehat{q}}
                            + x_{q-1} {\sf f}_{q-1}  
                            + (n 
                               - \chi (\widehat{q}, q-1) 
                               - x_{\widehat{q}}
                               - x_{q-1})
                             {\sf f}_{q-2}}\, .
\end{eqnarray*}
and this has been proved in {\bf (A)}.}
\end{proof}

\begin{proof}
\textcolor{blue}{{\bf [of Lemma~\ref{second lemma}]}}
\textcolor{blue}{Recall the notations
$c_{\widehat{q}}(x_{\widehat{q}}, x_{q-1}, x_{q-2})$,
$c_{q-1}(x_{\widehat{q}}, x_{q-1}, x_{q-2})$
and $c_{q-2}(x_{\widehat{q}}, x_{q-1}, x_{q-2})$.
We have to prove that
{\bf (C)}
$c_{q-1}(x_{\widehat{q}}, x_{q-1}, x_{q-2})
 \leq
 c_{\widehat{q}}(x_{\widehat{q}}+1, x_{q-1}-1, x_{q-2})$
with $x_{q-1} > 0$
and
{\bf (D)}
$c_{q-2}(x_{\widehat{q}}, x_{q-1}, x_{q-2})
 \leq
 c_{\widehat{q}}(x_{\widehat{q}}+1, x_{q-1}, x_{q-2}-1)$
with $x_{q-2} > 0$.
{\bf (C)}
is expressed as
\begin{eqnarray}
&       &  {\sf f}_{q-1} 
         - \frac{\textstyle {\sf f}_{q-1}}
                {\textstyle \psi (\widehat{q}, q-1)
                            + x_{\widehat{q}} 
                              {\sf f}_{\widehat{q}}
                            + x_{q-1} 
                              {\sf f}_{q-1}  
                            + (n 
                               - \chi (\widehat{q}, q-1) 
                               - x_{\widehat{q}}
                               - x_{q-1})
                             {\sf f}_{q-2}} \\
& \leq & {\sf f}_{\widehat{q}}
         - \frac{{\sf f}_{\widehat{q}}}
                {\textstyle \psi (\widehat{q}, q-1)
                            + (x_{\widehat{q}} + 1) 
                              {\sf f}_{\widehat{q}} 
                            + (x_{q-1}-1)
                              {\sf f}_{q-1} 
                            + (n 
                               - \chi (\widehat{q}, q-1)
                               - x_{\widehat{q}}
                               - x_{q-1})
                             {\sf f}_{q-2}}\, ,
\end{eqnarray}
where $x_{q-1} > 0$ and 
$0 \leq x_{\widehat{q}} \leq n-1$.
{\bf (D)} is expressed as
\begin{eqnarray}
&       &  {\sf f}_{q-2} 
         - \frac{\textstyle {\sf f}_{q-2}}
                {\textstyle \psi (\widehat{q}, q-1)
                            + x_{\widehat{q}} 
                              {\sf f}_{\widehat{q}}
                            + x_{q-1} 
                              {\sf f}_{q-1}  
                            + (n 
                               - \chi (\widehat{q}, q-1) 
                               - x_{\widehat{q}}
                               - x_{q-1})
                             {\sf f}_{q-2}} \\
& \leq & {\sf f}_{\widehat{q}}
         - \frac{{\sf f}_{\widehat{q}}}
                {\textstyle \psi (\widehat{q}, q-1)
                            + (x_{\widehat{q}} + 1) 
                              {\sf f}_{\widehat{q}} 
                            + x_{q-1}
                              {\sf f}_{q-1} 
                            + (n 
                               - \chi (\widehat{q}, q-1)
                               - x_{\widehat{q}}
                               - x_{q-1}-1)
                             {\sf f}_{q-2}}\, ,
\end{eqnarray}
where $n 
                               - \chi (\widehat{q}, q-1)
                               - x_{\widehat{q}}
                               - x_{q-1} > 0$ 
and $0 \leq x_{\widehat{q}} \leq n-1$.
We observe:}
\begin{itemize}

\item
\textcolor{blue}{First,
\begin{eqnarray*}
&      & {\sf f}_{q-1}
         -
         \frac{\textstyle {\sf f}_{q-1}}
              {\textstyle \psi (\widehat{q}, q-1)
                          + x_{\widehat{q}} 
                            {\sf f}_{\widehat{q}}
                          + x_{q-1} 
                            {\sf f}_{q-1}
                          + (n 
                             - \chi (\widehat{q}, q-1)
                             - x_{\widehat{q}}
                             - x_{q-1}) 
                            {\sf f}_{q-2}} \\
& \geq & {\sf f}_{q-2}
         -
         \frac{\textstyle {\sf f}_{q-2}}
              {\textstyle \psi (\widehat{q}, q-1)
                          + x_{\widehat{q}} 
                            {\sf f}_{\widehat{q}}
                          + x_{q-1} 
                            {\sf f}_{q-1}
                          + (n 
                             - \chi (\widehat{q}, q-1)
                             - x_{\widehat{q}}
                             - x_{q-1}) 
                            {\sf f}_{q-2}}
\end{eqnarray*}
if and only if
\begin{eqnarray*}
         {\sf f}_{q-1} - {\sf f}_{q-2}
& \geq &	 \frac{\textstyle {\sf f}_{2} - {\sf f}_{1}}
                {\textstyle \psi (\widehat{q}, q-1)
                          + x_{\widehat{q}} 
                            {\sf f}_{\widehat{q}}
                          + x_{q-1} 
                            {\sf f}_{q-1}
                          + (n 
                             - \chi (\widehat{q}, q-1)
                             - x_{\widehat{q}}
                             - x_{q-1}) 
                            {\sf f}_{q-2}}
\end{eqnarray*}
if and only if
\begin{eqnarray*}
\psi (\widehat{q}, q-1)
                          + x_{\widehat{q}} 
                            {\sf f}_{\widehat{q}}
                          + x_{q-1} 
                            {\sf f}_{q-1}
                          + (n 
                             - \chi (\widehat{q}, q-1)
                             - x_{\widehat{q}}
                             - x_{q-1}) 
                            {\sf f}_{q-2}
& \geq & 1\, ,	
\end{eqnarray*}
which holds 
due to assumption {\sf (C1)}
($n {\sf f}_{1} > 1$).}

\item
\textcolor{blue}{Second,
\begin{eqnarray*}
&      & {\sf f}_{\widehat{q}}
         -
         \frac{\textstyle {\sf f}_{\widehat{q}}}
              {\textstyle \psi (\widehat{q}, q-1)
                          + (x_{\widehat{q}}+1) 
                            {\sf f}_{\widehat{q}}
                          + (x_{q-1}-1) 
                            {\sf f}_{q-1}
                          + (n 
                             - \chi (\widehat{q}, q-1)
                             - x_{\widehat{q}}
                             - x_{q-1}) 
                            {\sf f}_{q-2}} \\
& \leq & {\sf f}_{\widehat{q}}
         -
         \frac{\textstyle {\sf f}_{\widehat{q}}}
              {\textstyle \psi (\widehat{q}, q-1)
                          + (x_{\widehat{q}}+1) 
                            {\sf f}_{\widehat{q}}
                          + x_{q-1} 
                            {\sf f}_{q-1}
                          + (n 
                             - \chi (\widehat{q}, q-1)
                             - x_{\widehat{q}}
                             - x_{q-1} - 1) 
                            {\sf f}_{q-2}}
\end{eqnarray*}
if and only if
\begin{eqnarray*}
& &         \psi (\widehat{q}, q-1)
                          + (x_{\widehat{q}}+1) 
                            {\sf f}_{\widehat{q}}
                          + (x_{q-1}-1) 
                            {\sf f}_{q-1}
                          + (n 
                             - \chi (\widehat{q}, q-1)
                             - x_{\widehat{q}}
                             - x_{q-1}) 
                            {\sf f}_{q-2} \\
& \leq & \psi (\widehat{q}, q-1)
                          + (x_{\widehat{q}}+1) 
                            {\sf f}_{\widehat{q}}
                          + x_{q-1} 
                            {\sf f}_{q-1}
                          + (n 
                             - \chi (\widehat{q}, q-1)
                             - x_{\widehat{q}}
                             - x_{q-1} - 1) 
                            {\sf f}_{q-2}
\end{eqnarray*}
if and only if ${\sf f}_{q-2} \leq {\sf f}_{q-1}$,
which holds.
}

\end{itemize}
\textcolor{blue}{Hence,
{\bf (C)} implies {\bf (D)}.
So we only have to prove {\bf (C)}.
{\sf (16)-(17)} is equivalent to
\begin{eqnarray}
&       &  {\sf f}_{q-1} 
         \frac{\textstyle \psi (\widehat{q}, q-1)
                            + x_{\widehat{q}} 
                              {\sf f}_{\widehat{q}}
                            + x_{q-1} 
                              {\sf f}_{q-1}  
                            + (n 
                               - \chi (\widehat{q}, q-1) 
                               - x_{\widehat{q}}
                               - x_{q-1})
                             {\sf f}_{q-2} - 1}
                {\textstyle  \psi (\widehat{q}, q-1)
                            + x_{\widehat{q}} 
                              {\sf f}_{\widehat{q}}
                            + x_{q-1} 
                              {\sf f}_{q-1}  
                            + (n 
                               - \chi (\widehat{q}, q-1) 
                               - x_{\widehat{q}}
                               - x_{q-1})
                             {\sf f}_{q-2}} \\
& \leq & {\sf f}_{\widehat{q}}\,
         \frac{\textstyle \psi (\widehat{q}, q-1)
                            + (x_{\widehat{q}} + 1) 
                              {\sf f}_{\widehat{q}} 
                            + (x_{q-1}-1)
                              {\sf f}_{q-1} 
                            + (n 
                               - \chi (\widehat{q}, q-1)
                               - x_{\widehat{q}}
                               - x_{q-1})
                             {\sf f}_{q-2} - 1}
             {\textstyle \psi (\widehat{q}, q-1)
                            + (x_{\widehat{q}} + 1) 
                              {\sf f}_{\widehat{q}} 
                            + (x_{q-1}-1)
                              {\sf f}_{q-1} 
                            + (n 
                               - \chi (\widehat{q}, q-1)
                               - x_{\widehat{q}}
                               - x_{q-1})
                             {\sf f}_{q-2}}\, .
\end{eqnarray}
Note that in {\sf (20)-(21)},
each denominator is at least
$n {\sf f}_{1} > 1$
and each numerator is at least $n {\sf f}_{1} - 1 > 0$, by assumption.
So {\sf (20)-(21)}
is equivalent to}
\textcolor{blue}{
\begin{eqnarray*}
&       &   {\sf f}_{\widehat{q}}\,
            [\psi (\widehat{q}, q-1)
                            + (x_{\widehat{q}} + 1) 
                              {\sf f}_{\widehat{q}} 
                            + (x_{q-1}-1)
                              {\sf f}_{q-1} 
                            + (n 
                               - \chi (\widehat{q}, q-1)
                               - x_{\widehat{q}}
                               - x_{q-1})
                             {\sf f}_{q-2} - 1] \\
&       &  [\underbrace{\psi (\widehat{q}, q-1)
                            + x_{\widehat{q}} 
                              {\sf f}_{\widehat{q}}
                            + x_{q-1} 
                              {\sf f}_{q-1}  
                            + (n 
                               - \chi (\widehat{q}, q-1) 
                               - x_{\widehat{q}}
                               - x_{q-1})
                             {\sf f}_{q-2}}_{\textstyle := {\sf M}}] \\
& \geq & 	{\sf f}_{q-1}\,
          [\psi (\widehat{q}, q-1)
                            + x_{\widehat{q}} 
                              {\sf f}_{\widehat{q}}
                            + x_{q-1} 
                              {\sf f}_{q-1}  
                            + (n 
                               - \chi (\widehat{q}, q-1) 
                               - x_{\widehat{q}}
                               - x_{q-1})
                             {\sf f}_{q-2} - 1] \\
&     &  [\psi (\widehat{q}, q-1)
                            + (x_{\widehat{q}} + 1) 
                              {\sf f}_{\widehat{q}} 
                            + (x_{q-1}-1)
                              {\sf f}_{q-1} 
                            + (n 
                               - \chi (\widehat{q}, q-1)
                               - x_{\widehat{q}}
                               - x_{q-1})
                             {\sf f}_{q-2}]
\end{eqnarray*} 
or}
\textcolor{blue}{
\begin{eqnarray*}
         {\sf f}_{\widehat{q}}\,
         [{\sf M} + {\sf f}_{\widehat{q}} 
                  - {\sf f}_{q-1} - 1]
         {\sf M}
& \geq & {\sf f}_{q-1}
         [{\sf M}-1]
         [{\sf M} 
          + {\sf f}_{\widehat{q}} 
          - {\sf f}_{q-1}]
\end{eqnarray*}
or
\begin{eqnarray*}
         {\sf f}_{\widehat{q}} 
         {\sf M}^{2} 
         + {\sf f}_{\widehat{q}} 
           ({\sf f}_{\widehat{q}} 
            - {\sf f}_{q-1} - 1) {\sf M}
& \geq & {\sf f}_{q-1} {\sf M}^{2} 
         + {\sf f}_{q-1} 
           ({\sf f}_{\widehat{q}} 
            - {\sf f}_{q-1}) {\sf M} 
         - {\sf f}_{q-1} {\sf M} 
         - {\sf f}_{q-1} 
           ({\sf f}_{\widehat{q}} 
            - {\sf f}_{q-1})
\end{eqnarray*}
or
\begin{eqnarray*}
         ({\sf f}_{\widehat{q}} 
          - {\sf f}_{q-1}) 
         {\sf M}^{2} 
         + [{\sf f}_{\widehat{q}} 
            ({\sf f}_{\widehat{q}} 
             - {\sf f}_{q-1} 
             -1)
            - {\sf f}_{q-1}
              ({\sf f}_{\widehat{q}} 
               - {\sf f}_{q-1}) 
            + {\sf f}_{q-1}] {\sf M} 
            + {\sf f}_{q-1} 
              ({\sf f}_{\widehat{q}} 
               - {\sf f}_{q-1})
& \geq & 0\, ,	
\end{eqnarray*}
or
\begin{eqnarray*}
         ({\sf f}_{\widehat{q}} 
          - {\sf f}_{q-1}) {\sf M}^{2} 
         + ({\sf f}_{\widehat{q}} 
            - {\sf f}_{q-1})
           ({\sf f}_{\widehat{q}} 
            - {\sf f}_{q-1} - 1) 
           {\sf M}
         + {\sf f}_{q-1} 
           ({\sf f}_{\widehat{q}} 
            - {\sf f}_{q-1}) 
& \geq & 0	
\end{eqnarray*}
or (since ${\sf f}_{\widehat{q}} > {\sf f}_{q-1}$)
\begin{eqnarray*}
         {\sf M}^{2} 
         + ({\sf f}_{\widehat{q}} 
            - {\sf f}_{q-1} - 1) 
           {\sf M} 
         + {\sf f}_{q-1}
& \geq & 0\, .	
\end{eqnarray*}
Now note that
${\sf M} \geq n {\sf f}_{q-2}$. 
Hence,
\begin{eqnarray*}
&    {\sf M}^{2} 
       + ({\sf f}_{\widehat{q}} 
          - {\sf f}_{q-1} 
          - 1) 
         {\sf M} 
       + {\sf f}_{q-1} & \\
\geq & (n {\sf f}_{q-2})^{2} 
       + ({\sf f}_{\widehat{q}} 
          - {\sf f}_{q-1} 
          - 1) 
         n {\sf f}_{q-2} 
       + {\sf f}_{q-1}
     & \\                   
>    & 1 
       + {\sf f}_{\widehat{q}} 
       - {\sf f}_{q-1} - 1 
       + {\sf f}_{q-1}
     & \mbox{(since $n {\sf f}_{q-2} 
                     \geq n {\sf f}_{1}
                     > 1$, by assumption {\sf (C1)})}  \\
=    & {\sf f}_{\widehat{q}}\ \ >\ \ 0\, ,     
\end{eqnarray*}
and the claim follows.
}
\end{proof}


\remove{
\begin{proof}
\textcolor{blue}{{\bf [of Lemma~\ref{transformation into contiguous}]}}
\textcolor{blue}{Say that in ${\bf x}$
an {\it inversion} occurs
if there are players $i$ and $k$ with $i < k$
that are assigned to qualities $q$ and $q'$, respectively,
with $q > q'$;
thus, $s_{i} \geq s_{k}$
while ${\sf f}_{q} > {\sf f}_{q'}$.
If there is no inversion in ${\bf x}$,
then ${\bf x}$ is a pure contiguous equilibrium and we are done.
Else take the earliest player $i$ for which
there exists a player $k$,
with $s_{i} \geq s_{k}$,
such that players $i$ and $k$ 
are assigned to qualities $q$ and $q'$,
with ${\sf f}_{q} > {\sf f}_{q'}$,
thus, players $i$ and $k$ make an inversion. 
Say that player $i$ is a {\it witness} of inversion.
Take $k$ to be the earliest such player. 
We have that
$c_{i} = s_{i} {\sf f}_{q} 
- \frac{\textstyle {\sf f}_{q}}
       {\textstyle A + x_{q'} {\sf f}_{q'} + x_{q} {\sf f}_{q}}$
and
$c_{k} = s_{k} {\sf f}_{q'}
- \frac{\textstyle {\sf f}_{q'}}
       {\textstyle A + x_{q'} {\sf f}_{q'} + x_{q} {\sf f}_{q}}$,
where $A= \sum_{\widehat{q} \in [Q] \setminus \{ q, q' \}}
            x_{\widehat{q}} {\sf f}_{\widehat{q}}$.}

\textcolor{blue}{Since ${\bf x}$ is a pure equilibrium,
player $k$ does not want to move to quality $q$,
and this holds if and only if
\begin{eqnarray*}
s_{k} {\sf f}_{q'}
- \frac{\textstyle {\sf f}_{q'}}
       {\textstyle A + x_{q'} {\sf f}_{q'} + x_{q} {\sf f}_{q}}
& \leq & s_{k} {\sf f}_{q}
         - \frac{\textstyle {\sf f}_{q}}
                {\textstyle A + (x_{q'}-1) {\sf f}_{q'} + (x_{q} + 1) {\sf f}_{q}}                	
\end{eqnarray*}
or
\begin{eqnarray}
         s_{k} ( {\sf f}_{q} - {\sf f}_{q'})
& \geq & - \frac{\textstyle {\sf f}_{q'}}
       {\textstyle A + x_{q'} {\sf f}_{q'} + x_{q} {\sf f}_{q}} + \frac{\textstyle {\sf f}_{q}}
     {\textstyle A + (x_{q'}-1) {\sf f}_{q'} + (x_{q} + 1) {\sf f}_{q}}\, ;
\end{eqnarray}
player $i$ does not want to move to quality $q'$ if and only if
\begin{eqnarray*}
s_{i} {\sf f}_{q}
- \frac{\textstyle {\sf f}_{q}}
       {\textstyle A + x_{q'} {\sf f}_{q'} + x_{q} {\sf f}_{q}}
& \leq & s_{i} {\sf f}_{q'}
         - \frac{\textstyle {\sf f}_{q'}}
                {\textstyle A + (x_{q'}+1) {\sf f}_{q'} + (x_{q}-1) {\sf f}_{q}}                	
\end{eqnarray*}
or
\begin{eqnarray}
         s_{i} ( {\sf f}_{q} - {\sf f}_{q'})
& \leq &  \frac{\textstyle {\sf f}_{q}}
               {\textstyle A + x_{q'} {\sf f}_{q'} + x_{q} {\sf f}_{q}} - \frac{\textstyle {\sf f}_{q'}}
     {\textstyle A + (x_{q'}+1) {\sf f}_{q'} + (x_{q} - 1) {\sf f}_{q}}\, .
\end{eqnarray}
}

\textcolor{blue}{We eliminate this inversion
while preserving equilibrium
and not creating an earlier inversion.
Swap the qualities to which players $i$ and
$k$ are assigned;
so player $i$ is assigned to quality $q'$
and player $k$ is assigned to quality $q$.
All other players are still assigned to the same qualities as in the
profile ${\bf x}$.
Denote as ${\bf x}'$ the resulting profile
and observe that by swapping,
$x'_{q} = x_{q}$
and $x'_{q'} = x_{q'}$.
The same holds for all other qualities
since all players other than $i$ and $k$
are assigned in ${\bf x}'$ the same qualities as in ${\bf x}$.
Observe also that no earlier inversion is created
since for any player $\widehat{i} < i$ assigned to quality $\widehat{q}$,
{\em (i)} 
player $\widehat{i}$ is not a witness of inversion in ${\bf x}$,
and {\em (ii)}
${\sf f}(\widehat{q}) < {\sf f}(q), {\sf f}(q')$,
as player $i$ is the earliest witness of inversion in ${\bf x}$.
We proceed to prove that ${\bf x}'$ is a pure equilibrium.
}

\textcolor{blue}{For ${\bf x}'$ to be a pure equilibrium,
player $i$ should not want to move to quality $q$,
and this holds if and only if
\begin{eqnarray*}
s_{i} {\sf f}_{q'}
- \frac{\textstyle {\sf f}_{q'}}
       {\textstyle A + x_{q'} {\sf f}_{q'} + x_{q} {\sf f}_{q}}
& \leq & s_{i} {\sf f}_{q}
         - \frac{\textstyle {\sf f}_{q}}
                {\textstyle A + (x_{q'}-1) {\sf f}_{q'} + (x_{q}+1) {\sf f}_{q}}                	
\end{eqnarray*}
or
\begin{eqnarray}
         s_{i} ( {\sf f}_{q} - {\sf f}_{q'})
& \geq &  - \frac{\textstyle {\sf f}_{q'}}
               {\textstyle A + x_{q'} {\sf f}_{q'} + x_{q} {\sf f}_{q}} 
+ \frac{\textstyle {\sf f}_{q}}
     {\textstyle A + (x_{q'}-1) {\sf f}_{q'} + (x_{q}+1) {\sf f}_{q}}\, ,
\end{eqnarray}
and (3) holds due to (1) and the assumption that $s_{i} \geq s_{k}$;
player $k$ should not want to move to quality $q'$,
and this holds if and only if
\begin{eqnarray*}
s_{k} {\sf f}_{q}
- \frac{\textstyle {\sf f}_{q}}
       {\textstyle A + x_{q'} {\sf f}_{q'} + x_{q} {\sf f}_{q}}
& \leq & s_{k} {\sf f}_{q'}
         - \frac{\textstyle {\sf f}_{q'}}
                {\textstyle A + (x_{q'}+1) {\sf f}_{q'} + (x_{q}-1) {\sf f}_{q}}                	
\end{eqnarray*}
or
\begin{eqnarray}
         s_{k} ({\sf f}_{q} - {\sf f}_{q'})
& \leq &  \frac{\textstyle {\sf f}_{q}}
               {\textstyle A + x_{q'} {\sf f}_{q'} + x_{q} {\sf f}_{q}} 
- \frac{\textstyle {\sf f}_{q'}}
       {\textstyle A + (x_{q'}+1) {\sf f}_{q'} + (x_{q}-1) {\sf f}_{q}}\, ,
\end{eqnarray}
and (4) holds due to (2) and the assumption that $s_{i} \geq s_{k}$.
All other players do not want to move in ${\bf x}'$
since they are assigned to the same qualities as in ${\bf x}$,
$x'_{\widehat{q}} = x_{\widehat{q}}$ for all qualities 
$\widehat{q} \in [Q]$
and they did not want to move in ${\bf x}$.
Hence,
${\bf x}'$ is a pure equilibrium.
Now the earliest witness of inversion,
if any,
in ${\bf x}'$ is greater than $i$, 
the earliest witness of inversion in ${\bf x}$. 
It follows inductively that
a contiguous pure equilibrium exists.}       
\end{proof}
}

}

\end{document}